%% file: main.tex
\documentclass[11pt]{article}

\input{./preamble}
\usepackage{tocloft}

\title{On the Hardness of Maximin Share Allocations}
\date{}
\author{
    Sushmita Gupta 
    \thanks{The Institute of Mathematical Sciences, a CI of Homi Bhabha National Institute, Chennai 600113, Tamil Nadu, India. {\tt sushmitagupta@imsc.res.in}} \and 
    Sanjay Seetharaman
    \thanks{The Institute of Mathematical Sciences, a CI of Homi Bhabha National Institute, Chennai 600113, Tamil Nadu, India. {\tt sanjays@imsc.res.in}}
}

\usepackage[natbib,style=alphabetic,maxnames=50]{biblatex}

\begin{document}
\maketitle

\begin{abstract}
The maximin share (MMS) guarantee has become one of the central fairness benchmarks for allocating indivisible items.
Since Kurokawa, Procaccia and Wang~[EC'14, JACM'18] showed that exact MMS allocations need not exist, a substantial literature has developed around the existence and computation of approximate MMS allocations, including the recent work of Heidari, Kaviani, Seddighin, and Shahrezaei [SODA'26].
In contrast, a basic complexity question posed more than a decade ago by Bouveret and Lemaître [JAAMAS'16] has remained unresolved: how hard is it to decide whether an exact MMS allocation exists? 

For additive valuations, Lonc and Truszczynski [JAIR'20] showed that the problem belongs to the class $\DeltaP$ (also known as $\P^\NP$), but no hardness result was known.
For the more general class of 2-additive valuations, Bouveret and Lemaître established NP-hardness, leaving a substantial gap to the \DeltaP upper bound. 
Moreover, the (precise) complexity of MMS existence in additive and $k$-additive settings were posed as open questions.

We make progress on all of these fronts:
(1) For additive goods, we prove that deciding the existence of an MMS allocation is \DP-hard, giving the first hardness result for this longstanding problem.
(2) For 2-additive valuations, we close the complexity gap by proving \DeltaP-completeness on a class of instances of monotone submodular goods. To the best of our knowledge this is the first result of this kind. 

Additionally, we prove weak \coNP-hardness for three agents, thereby establishing a precise dichotomy with the known existence guarantee for two agents; and strong \coNP-hardness when the number of agents is unrestricted.
The latter implies that the hardness is not caused by large binary-encoded numbers; in fact, the problem remains hard even when all numerical values are given in unary.
Moreover, the strong hardness construction produces an inverse-polynomial gap in the optimal MMS approximation ratio, ruling out an FPTAS for approximating this ratio unless $\P=\NP$.
We conclude by showing that all these results for goods extend to the chores setting through a polynomial-time transformation that preserves MMS existence.
\end{abstract}

\thispagestyle{empty}
\newpage
\setcounter{tocdepth}{2}
\tableofcontents
\thispagestyle{empty}
\newpage
\setcounter{page}{1}

\newpage


\input{./1_frontmatter/introduction}
\input{./1_frontmatter/technical-overview}
\input{./1_frontmatter/prelim}
\input{./2_no_instances/no-instances}

\input{./3_np/np-hardness-goods}
\input{./4_dp/dp-hardness-goods}
\input{./5_theta2p/theta2p-hardness-goods}
\input{./6_delta2p/delta2p-hardness-goods}
\input{./7_conp/conp-hardness-goods}
\input{./8_goods_to_chores/goods_to_chores}

\input{./9_backmatter/missing_proofs}

\section*{Declaration of Generative AI use}
We used ChatGPT 5.5 Plus to assist with language editing, typesetting/formatting, and checking/simplifying calculations.
All technical results are due to the authors, and any errors are our own.

\printbibliography

\end{document}

%% file: preamble.tex
\def\ShowComments{false} 

\def\AlgoStyle{algorithm2e} 

\usepackage[margin=1in]{geometry}
\usepackage[T1]{fontenc}
\usepackage[utf8]{inputenc}
\usepackage{charter}
\usepackage[english]{babel}
\usepackage{setspace}
\usepackage{multicol}
\usepackage[inline]{enumitem}
\usepackage{csquotes}

\usepackage{amsmath, amssymb, amstext, amsthm}
\usepackage{mathtools}
\usepackage{mathrsfs}
\usepackage{nicefrac}
\usepackage{mleftright} 
\usepackage{thmtools, thm-restate}
\DeclareMathAlphabet{\mathbbold}{U}{bbold}{m}{n}

\usepackage{graphicx}
\usepackage[dvipsnames, svgnames, x11names]{xcolor}
\usepackage{tcolorbox}
\usepackage{soul} 

\usepackage{ifthen} 
\usepackage{refcount}

\ifthenelse{\equal{\AlgoStyle}{algorithm2e}}{
    \usepackage[noend,ruled,linesnumbered]{algorithm2e}
    
    \SetAlFnt{\small} \SetAlCapFnt{\small} \SetAlCapNameFnt{\small}
    \SetAlCapHSkip{0pt} \IncMargin{-\parindent}
}{}
\ifthenelse{\equal{\AlgoStyle}{algorithmic}}{
    \usepackage{algorithm}
    \usepackage{algpseudocode}

}{}

\usepackage{todonotes}
\newcommand{\hide}[1]{} 
\newcommand{\shortv}[1]{} 

\ifthenelse{\equal{\ShowComments}{true}}{
    \newcommand{\il}[1]{\todo[inline, color=yellow!40]{#1}}

}{
    \presetkeys{todonotes}{disable}{}

    \newcommand{\il}[1]{\hide{#1}}

}

\declaretheorem[numberwithin=section]{theorem}
\declaretheorem[numberlike=theorem,style=definition]{definition}
\declaretheorem[numberlike=theorem]{claim,observation,corollary,proposition,lemma}

\declaretheorem[numberwithin=theorem,name=Claim]{claim-inside-theorem}
\declaretheorem[numberlike=claim-inside-theorem,name=Lemma]{lemma-inside-theorem}
\declaretheorem[numberlike=claim-inside-theorem,name=Observation]{observation-inside-theorem}

\declaretheorem[numberwithin=lemma,name=Claim]{claim-inside-lemma}
\declaretheorem[numberlike=claim-inside-lemma,name=Corollary]{corollary-inside-lemma}
\declaretheorem[numberwithin=lemma,name=Lemma]{lemma-inside-lemma}
\declaretheorem[numberlike=claim-inside-lemma,name=Observation]{observation-inside-lemma}

\declaretheorem[numberwithin=lemma-inside-theorem,name=Claim]{claim-inside-lemma-inside-theorem}
\declaretheorem[numberwithin=lemma-inside-theorem,name=Lemma]{lemma-inside-lemma-inside-theorem}
\declaretheorem[numberwithin=lemma-inside-theorem,name=Observation]{observation-inside-lemma-inside-theorem}

\newcommand{\bigoh}{\mathcal{O}}

\newcommand{\sm}{\setminus\!}
\newcommand{\sse}{\subseteq\!}

\renewcommand{\part}{part\xspace}

\newcommand{\NP}{\textsf{NP}\xspace}
\newcommand{\NPH}{\textsf{NP}-{\sf hard}\xspace}

\newcommand{\DP}{\ensuremath{D^P}\xspace}
\newcommand{\DeltaP}{\ensuremath{\Delta^P_2}\xspace}
\newcommand{\ThetaP}{\ensuremath{\Theta^P_2}\xspace}

\newcommand{\instI}{\mathcal{I}\xspace}

\newcommand{\M}[1]{\ensuremath{M[#1]}}

\newcommand{\threewaypartition}{\textsc{3-way Partition}\xspace}
\newcommand{\threepartition}{\textsc{3-Partition}\xspace}
\newcommand{\partition}{\textsc{Partition}\xspace}

\DeclareMathSymbol{\qm}{\mathalpha}{operators}{"3F}

\newcommand{\mms}{\mathrm{MMS}\xspace}
\newcommand{\coNP}{\ensuremath{\text{co-}\textsf{NP}}\xspace}
\newcommand{\coNPH}{\ensuremath{\text{co-}\textsf{NP}\text{-hard}}\xspace}

\addto\extrasenglish{}

\newcommand{\KPW}{KPW\xspace}

\renewcommand{\special}{\textcolor{black}{$Y$-detector}\xspace}
\newcommand{\trigger}{\textcolor{black}{reserve}\xspace}
\newcommand{\forcing}{\textcolor{black}{cover}\xspace}
\newcommand{\Q}{\textcolor{black}{row-forcing}\xspace}
\newcommand{\anchor}{\textcolor{black}{anchor}\xspace}

\newcommand{\rowsum}{\ensuremath{\alpha}\xspace}

\newcommand{\C}[1]{\ensuremath{\mathcal{#1}}}

\newcommand{\threepartitionnotthreepartition}{(\textsc{3-Partition, Not-3-Partition})\xspace}
\newcommand{\compmis}{\textsc{Compare-MIS}\xspace}
\newcommand{\GX}{G_X}
\newcommand{\GY}{G_Y}

\newcommand{\Morig}{\mathcal M_0}
\renewcommand{\M}{\mathcal M}
\renewcommand{\P}{\textsf{P}\xspace}
\newcommand{\compmwis}{\textsc{Compare-MWIS}\xspace}
\newcommand{\lexmaxsat}{\textsc{LexMaxSAT}\xspace}
\newcommand{\lexmaxsatpromise}{\textsc{LexMaxSAT-Promise}\xspace}

\newcommand{\decision}{decision problem\xspace}
\newcommand{\search}{search problem\xspace}

\usepackage{chngcntr}
\counterwithin{equation}{section}

\newtcolorbox{notebox}[1]{
  colback=teal!4,
  colframe=teal!55!black,
  title=\textbf{#1}
}

\newcommand{\core}{\mathrm{core}}

\usepackage{hyperref}
\usepackage{cleveref} 
\Crefname{observation}{Observation}{Observations}
\Crefname{claim-inside-theorem}{Claim}{Claims}
\Crefname{lemma-inside-theorem}{Lemma}{Lemmas}
\Crefname{claim-inside-lemma}{Claim}{Claims}

%% file: 1_frontmatter/introduction.tex
\section{Introduction}

Fair division concerns the allocation of  resources among agents with possibly different preferences.
A central difficulty in the allocation of indivisible goods is that classical fairness notions such as proportionality or envy-freeness may fail to exist.
The \emph{maximin share} (MMS) guarantee, introduced by \citet{DBLP:conf/bqgt/Budish10}, provides a natural relaxation inspired by the familiar ``cut-and-choose'' principle. 

Consider an instance with a set $N$ of $n$ agents and a set $M$ of indivisible items. 
Each agent $i\in N$ has a valuation function
$v_i\colon 2^M\to \mathbb{R}$.
The {\it maximin share} of agent $i$ is
\[
\mms_i
=
\max_{(P_1, \ldots, P_n) \in \Pi_n(M)}
\min_{j \in [n]} v_i(P_j),
\]
where $\Pi_n(M)$ denotes the set of partitions of $M$ into $n$ bundles. 
Thus,  $\mms_i$ is the largest value that agent $i$ can guarantee by partitioning 
the goods into $n$ bundles and receiving the least valuable bundle according to 
her own valuation. 

When the valuation functions are all nonnegative valued, it is known as the {\it goods} setting; whereas if the valuation functions are all nonpositive it is known as the {\it chores} setting.
We point the reader to \cite{DBLP:journals/ai/AmanatidisABFLMVW23,DBLP:journals/jair/LiuLSW24} to an expansive discussion on these and other related concepts.

An allocation $(A_1,\ldots,A_n)$ is an \emph{MMS allocation} if
\[
v_i(A_i) \geq \mms_i
\qquad\text{for every } i\in N.
\]

The computational complexity of MMS allocation of goods was first raised over a decade ago by \citet{DBLP:conf/atal/BouveretL14}, and is now an established question with a large body of work. 
The paper showed that the {\it \decision}, i.e, deciding whether an MMS allocation exists, is NP-hard for 2-additive valuation functions (defined formally in \Cref{sec:prelim}), a strict superclass of additive valuations. 
They also showed that computing the maximin share of an agent itself is \NP-hard even for additive valuations with two (identical) agents.
Tantalizingly though the computational complexity status of the \decision for additive valuations was left open, as was the matter of identifying {\it any} valuation class for which completeness--both membership and hardness--could be identified. In this paper we achieve both of these objectives: We pinpoint the hardness for additive valuations and identify a valuation class for which the \decision is shown to be complete. 

 Contemporaneously with \cite{DBLP:conf/atal/BouveretL14}, \citet*{DBLP:conf/sigecom/ProcacciaW14}, via an explicit no-instance, showed that an MMS allocation need not always exist, even when all valuation functions are additive. 
Subsequently,~\citet{DBLP:conf/wine/FeigeST21} showed a ``stronger'' no-instance with an exponentially larger gap between the value an agent receives in a partition and their maximin share. 
These results naturally motivate the search for approximation algorithms where each agent receives items of value at least an $\alpha$-fraction of their respective maximin share, for some $\alpha \in (0,1)$. 
Unsurprisingly thus, in this relatively short period of time a long body of work has emerged centered around improving the approximate guarantee of an MMS allocation, \cite{DBLP:journals/jacm/KurokawaPW18,DBLP:conf/soda/AkramiG24,DBLP:journals/corr/abs-2511-13056,DBLP:conf/soda/HeidariKSS26} for the additive setting; and for more general valuation classes such as submodular and subadditive~\cite{DBLP:conf/sigecom/GhodsiHSSY18,DBLP:conf/sigecom/SeddighinS25,DBLP:journals/corr/abs-2605-08859,DBLP:journals/corr/abs-2303-12444}.

\hide{\il{Suppressing: May include in Related Work}
At the time of writing, the best approximation guarantee attainable in polynomial time is $7/9$ shown by \citet*{DBLP:journals/corr/abs-2511-13056}, an improvement over $10/13$, given by \citet*{DBLP:conf/soda/HeidariKSS26} last year. For valuation classes beyond additive, such as subadditive, the best approx guarantee is $\nicefrac{1}{\bigoh((\log \log n)^2)}$ given by \citet*{DBLP:conf/sigecom/SeddighinS25}; for submodular it is 10/27 given by \citet{DBLP:journals/corr/abs-2303-12444}. }

In contrast, the computational complexity of the decision question of MMS allocation is less understood. 
While it was posed by \citet{DBLP:conf/atal/BouveretL14} over a decade ago, it has eluded precise characterization. 
While that paper showed the NP-hardness for the 2-additive setting, it only argued that the existence problem belongs to the class $\Sigma^P_2$. 
Thus, a gap remained between the hardness and the membership of the \decision for even the class of 2-additive valuations.
The open status of the additive case was also explicitly highlighted by \citet*{DBLP:journals/aamas/HeinenNNR18}, who stated that the complexity of deciding whether a maximin share allocation exists was the only missing part in their study of maximin, proportional, and minimax share allocations.
Later, \citet{DBLP:journals/jair/TruszczynskiL20} showed that the additive version belongs to $\Delta_2^P$; we note that their argument can be extended to 2-additive valuations as well. 

\citet*{DBLP:conf/aaai/AzizRSW17} studied the {\it \search}, i.e. computing an MMS allocation if it exists, under additive valuations, and showed strong NP-hardness for arbitrary number of agents and weak NP-hardness when there are only two agents.
We note, however, that this reduction does not immediately convey anything about the complexity of the \decision. 
In fact, in their reduction agents have identical valuations and thus 
an MMS allocation exists. 

One might hope to relate the \search and the \decision through a standard search-to-decision reduction, where an MMS-decision oracle answers whether the current instance admits an allocation satisfying the maximin shares of the current instance. 
We note that if the maximin share values are supplied explicitly as part of the input to a decision oracle, which asks whether there exists an allocation meeting each of these thresholds, the usual search-to-decision self-reduction holds by fixing item assignments one at a time.
This argument, however, does not apply directly to the MMS \decision, because modifying the instance changes the maximin shares themselves.
Hence, the standard self-reducibility argument for recovering an MMS allocation from an MMS-existence oracle does not go through. In general, such a self-reduction is not currently known.

Conversely however, computational hardness of the \decision would imply the same about the \search. 
For general additive valuations, though, the computational hardness of the \decision has completely eluded us thus far: it could well be in P! 

Both the decision and search problems are equally natural in the chores setting,\hide{ where items have nonpositive values, The study of MMS allocations for chores} and was initiated by \citet*{DBLP:conf/aaai/AzizRSW17}, who showed that an MMS allocation may not exist even for additive valuations and studied approximation guarantees.
Later, as for the goods setting~\citet*{DBLP:conf/wine/FeigeST21} strengthened the non-existence result by presenting a non-existence example exhibiting a larger gap.
As in the goods setting, these non-existence results have motivated a growing line of work on approximate MMS allocations for chores~\cite{DBLP:conf/aaai/AzizRSW17,DBLP:journals/teco/BarmanK20,DBLP:conf/sigecom/HuangL21,DBLP:conf/sigecom/HuangS23}.
These works further emphasize the role of MMS as a fairness benchmark across valuation models, while also highlighting the need to understand the computational complexity of exact MMS.



We note that the decision (or even the search) problem is computationally unusual.
Even {\it verifying} that a given allocation\hide{ of goods or chores} is an MMS allocation can be shown to be \coNP-complete, \Cref{prop:verifying-mms-conp-complete}. 
This clearly rules out the trivial NP certificate: guess (an allocation) and verify. 
This computational hardness does not convey much about the hardness of the \decision, for it may be the case that the \decision alone admits a polynomial-time algorithm! 



Thus, the difficulty is not only in searching over allocations, but also in evaluating the fairness benchmark against which an allocation must be tested.
These features suggest that the computational complexity of the \decision (and thus the \search) for the additive setting may extend beyond ordinary NP-hardness. 

In this paper, we resolve this matter: We study the \decision and provide an array of hardness results, providing the first computational hardness result for the additive setting for both goods and chores showing that it is $D^P$-hard.
While the precise completeness status of the additive setting still eludes us--the membership is in $\Delta^P_2$ as shown by \cite{DBLP:journals/jair/TruszczynskiL20}, we are able to show that the 2-additive variant is in fact $\Delta^P_2$-complete. 
This result proves that for a class of submodular functions, MMS allocation is $\Delta^P_2$-complete, thereby settling one of the stated objectives of \citet*{DBLP:journals/aamas/BouveretL16}.



\input{./1_frontmatter/contributions}


\paragraph{Structure of the paper.}
In \Cref{sec:prelim}, we cover the preliminaries, and in \Cref{sec:technical-overview}, we give a technical overview of the results.
In \Cref{sec:no-instances}, we present our analysis of the two no-instance constructions used throughout the paper that culminate in \Cref{prop:KPW-core-construction,prop:FST21-3-agent,prop:FST-core-no-instance}. 
We build towards our result on the \DP-hardness for additive goods, which is presented in \Cref{sec:dp}, by first presenting the \NP-hardness in \Cref{sec:np}. This serves as a simpler version of the main result.
Analogously, we build our result on the \DeltaP-completeness for 2-additive goods, which is presented in \Cref{sec:delta2p}, by first presenting the \ThetaP-hardness in \Cref{sec:theta2p}. This can be viewed as the unweighted analogue of the main result.
We took this approach to simplify exposition and optimally demonstrate the essential ideas underlying the main results.

In \Cref{sec:conp}, we prove the \coNP-hardness results for additive goods.
Finally, in \Cref{sec:goods-to-chores-transfer}, we show how to transfer the hardness results from goods to chores.

\input{./1_frontmatter/related-work}

%% file: 1_frontmatter/contributions.tex
\subsection{Our contributions}\label{contributions}

In our first main result, we establish the hardness of deciding the existence of MMS allocations for goods with additive valuations. All formal definitions are presented in \Cref{sec:prelim}.

\begin{restatable}{theorem}{dphardnessgoods}
\label{thm:DP-hardness-goods}
Deciding whether an additive goods instance admits an MMS allocation is \DP-hard, even for five agent types.
\end{restatable}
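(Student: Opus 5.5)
We reduce from the canonical \DP-complete problem obtained by pairing an \NP-complete problem with a \coNP-complete one. The macro \threepartitionnotthreepartition suggests the pair: given two \threepartition instances $X$ and $Y$, decide whether $X$ is a yes-instance and $Y$ is a no-instance. Since \threepartition is strongly \NP-complete, we may assume all numbers are polynomially bounded. This keeps the gadget arithmetic manageable, even though weak hardness would suffice here.

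We build one additive goods instance $\instI(X,Y)$ with five agent types. The construction combines two ingredients from earlier sections.

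The first ingredient is the \NP-hardness construction of \Cref{sec:np}. It encodes $X$ so that the MMS values of the ``$X$-agents'' are known exactly. These values come from an explicit partition and a matching averaging upper bound, so they do not depend on whether $X$ is solvable. An MMS allocation restricted to the $X$-part then exists iff $X$ has a \threepartition.

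The second ingredient is a core no-instance, following \Cref{prop:KPW-core-construction} or \Cref{prop:FST-core-no-instance}. This is a small block of agents and items for which no MMS allocation exists. The $Y$-side agents (the \special agents) are designed so that their own maximin share rises exactly when $Y$ admits a \threepartition. One way to achieve this is to give them values on the items encoding $Y$, so that a partition of $Y$ lets them form more balanced bundles. The rise is a small, precisely calibrated increment.

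The coupling is arranged as follows. If $Y$ is a yes-instance, the higher thresholds of the \special agents make the core gadget infeasible, and no MMS allocation exists. If $Y$ is a no-instance, the thresholds stay low, and \trigger/\forcing items let us satisfy the core agents. Concretely, we perturb the core's valuations by an $\varepsilon$ tied to the $Y$-encoding so that the no-instance property switches on exactly at the higher threshold. We then need to check that $\instI(X,Y)$ admits an MMS allocation iff $X \in$ \threepartition and $Y \notin$ \threepartition.

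Two lemmas organise the proof. The first is an MMS value lemma: every agent's $\mms_i$ equals a specified value, with the \special agents' value being one of two candidates according to $Y$'s status. The second is a decoupling lemma: in any MMS allocation, items from the $X$-block, the $Y$-block and the core block are essentially assigned within their own groups. We prove decoupling with large scaling factors, so that values in different blocks are separated by orders of magnitude. An agent that takes a foreign item then either leaves another agent below threshold or gains nothing. The \anchor items handle the leftover slack. Sufficiency in both directions is shown by exhibiting explicit allocations; necessity is shown by a counting argument on bundle sums.

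We expect the main obstacle to be the $Y$-side. The \special agents' MMS must depend sharply on $Y$'s solvability, yet an MMS allocation must still exist when $Y$ is unsolvable, using only the weaker threshold. On top of that, the whole construction must embed a no-instance core whose infeasibility is robust to the extra items. We would first try embedding the KPW core (\Cref{prop:KPW-core-construction}) with its $\varepsilon$-perturbation matrix. One of its agent types would be replaced by the \special type, whose MMS gains exactly the KPW gap when $Y$ is solvable, and whose items are chosen so that this gain is impossible otherwise. The five agent types would then be: $X$-agents, \special agents, and three core types.
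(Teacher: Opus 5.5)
Your source problem matches the paper's, but there is a genuine gap at exactly the point you flag as the main obstacle. The $Y$-side switch is left as a placeholder (``an $\varepsilon$ tied to the $Y$-encoding''), and the natural reading of your plan does not work. The $Y$-detector's maximin share can rise when $Y$ is solvable only because some $n$-partition of \emph{all} goods uses the $Y$-goods to top up her bundles. Nothing in your plan stops an allocation from giving her those same goods, so a higher threshold does not by itself make the instance infeasible.

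The missing idea is to make the $Y$-goods provably unavailable to her in every MMS allocation, while her MMS witness uses a partition shape that allocations are forbidden to use. In the paper, all agents value one common $\Gamma$-scaled KPW core. Row-forcing agents rule out the column partition and $\Gamma$ rules out every other partition, so every MMS allocation is by rows.
\begin{itemize}
\item The agent $a^*$ values every column at $\tau$. With the reserve goods ($y_r$ worth $b_r$, plus $n-\ell_Y$ dummies worth $T_Y$), her MMS is therefore $\tau+T_Y$ exactly when $Y$ is a yes-instance, and at most $\tau+T_Y-1$ otherwise. In the latter case the row option fails because $R_h$ is worth $\tau+T_Y-1$ and every reserve good is worth more than $1$.
\item In any MMS allocation, however, she gets a row worth at most $\tau+T_Y-1$ and no reserve good. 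This is because two cover agents $c_1,c_2$ must absorb all of $M_Y\cup M_D$: their non-last rows are deficient by exactly $|\mathcal{Z}_k|$, and they value exactly the goods of $\mathcal{Z}_k$, at $1$ each.
\end{itemize}
You give reserve/cover items the opposite role (rescuing core agents in the no-case), which does not supply this mechanism. For the no-case you also need the sharp bound $\mathrm{MMS}_{a^*}\le\tau+T_Y-1$, so that $R_h$ alone satisfies her.

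The decoupling lemma is also the wrong organising principle. Maximin shares are global: each is computed over $n$-partitions of all goods, with $n$ the total number of agents. So an ``$X$-block'' cannot be analysed in isolation, and ``an MMS allocation restricted to the $X$-part'' has no meaning. The $X$-gadget of the \textsf{NP}-hardness section works only because the $X$-agents value the shared core at the dominant scale and their deficits $T_X$ sit on the forced rows. That construction already contains its own KPW core and row-forcing agents, so it is not a separable block.

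In the paper, this shared row structure is precisely what couples the two sides. There are $\ell_X+1$ $X$-agents and total $X$-value $\ell_X T_X$, so some $X$-agent must take the last row $R_n$. That is what forces both cover agents onto deficient rows, and hence onto all reserve goods. Separating blocks by orders of magnitude would destroy exactly this coupling. The five types in the paper are accordingly the $X$-agents, $c_1$, $c_2$, $a^*$, and the row-forcing agents.

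Relatedly, strong hardness of \textsc{3-Partition} is not optional. The $X$-gadget needs $n>\ell_X T_X$ agents for the averaging bound $\mathrm{MMS}_{p_s}=\tau$, so weak hardness would not give a polynomial reduction.
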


This is the first hardness result known for this problem in the additive goods setting.
Moreover, the result goes beyond $\NP$-hardness: since $\NP, ~\coNP \subseteq \DP$, \DP-hardness implies both \NP-hardness and \coNP-hardness.
Consequently, unless $\NP = \coNP$, there is no polynomial-time many-one reduction from this problem to an $\NP$-verifiable feasibility problem, such as polynomial-size integer-programming feasibility. 
We believe that the techniques developed in this article may be useful for extending the hardness to higher complexity classes.

We next show a stronger hardness for succinctly represented $2$-additive valuations; a class that generalizes additive valuations.
\begin{restatable}{theorem}{deltatwophardnessmonotonesubmodular}
\label{thm:delta2p-hardness-monotone-submodular}
Deciding whether a $2$-additive instance admits an MMS allocation is \DeltaP-complete.
Moreover, the hardness holds even when the valuations are monotone, submodular, and nonnegative.
\end{restatable}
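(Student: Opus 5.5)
We prove membership and hardness separately, and the hardness part carries almost all of the work.

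\emph{Membership in \DeltaP.} We follow the argument of \citet{DBLP:journals/jair/TruszczynskiL20} for additive valuations. A $2$-additive valuation is given explicitly by its weights on singletons and on pairs, so it is evaluated in polynomial time and has values of polynomially bounded bit length. For each agent $i$, we compute $\mms_i$ by binary search over candidate thresholds $t$. Each query asks ``is there a partition of $M$ into $n$ bundles with $v_i(P_j)\ge t$ for all $j$?'', which is an \NP question. Polynomially many such queries suffice. With all values $\mms_i$ in hand, one further \NP query asks whether some allocation gives every agent $i$ value at least $\mms_i$. This is a polynomial-time procedure with an \NP oracle, so the problem lies in $\P^{\NP}=\DeltaP$.

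\emph{Hardness.} We reduce from a \DeltaP-complete problem. The natural choice is the lexicographic maximum-satisfying-assignment problem (\lexmaxsat): decide whether the lexicographically largest satisfying assignment sets its last variable to true. More conveniently, we use a weighted comparison problem such as \compmwis, which the macros suggest. There, given two graphs $\GX,\GY$ with vertex weights, one asks whether the maximum-weight independent set of $\GX$ is at least that of $\GY$. Exponentially large weights are what lift the \ThetaP version, which compares unweighted MIS sizes and uses only logarithmically many oracle bits, to \DeltaP. We first prove the \ThetaP-hardness in the unweighted setting and then lift it to weights.

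The gadget design is as follows. Pairwise $2$-additive terms with \emph{negative} interaction encode graph edges. An agent's value for a bundle is then the total vertex weight minus a penalty for each edge inside the bundle, which keeps the valuation monotone and submodular as long as the penalty does not exceed the smaller incident weight. The best bundle value an agent can form is therefore governed by the maximum-weight independent set. We take one ``$X$-agent'' type whose maximin share is determined by the MWIS of $\GX$, padded with filler goods so that the partition structure is rigid. A second type, a $Y$-detector, has its maximin share tied to the MWIS of $\GY$. Finally, we embed a fixed no-instance core (\Cref{prop:KPW-core-construction,prop:FST-core-no-instance}) together with reserve goods. The intended behavior is this: if $\mathrm{MWIS}(\GX)\ge \mathrm{MWIS}(\GY)$, the reserve goods valued by the $X$-agents are not needed by them. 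These goods can then be handed to a core agent, which breaks the non-existence of the core. In the opposite case the reserve goods are indispensable, and the core remains a no-instance.

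\emph{Main obstacle.} The hardest step is proving the converse direction. If an MMS allocation exists, we must show that the comparison holds, which requires ruling out every unintended allocation. This needs tight control of each agent type's maximin share, including its exact value under perturbations, and a scaling so that the exponentially large weights neither break the additive gaps the core relies on nor violate submodularity. Our first attempt would scale the core's values by a factor larger than the total graph weight, so the two layers do not interfere. We would then prove a ``rigidity'' lemma stating that in any MMS allocation each agent receives exactly its intended bundle type.
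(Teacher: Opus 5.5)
\textbf{Membership.} Your argument is the paper's: binary search for each $\mms_i$ with \NP queries, then one final \NP query. That part is fine.

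\textbf{Hardness: the source problem.} You treat the \DeltaP-hardness of \compmwis as available, but it was not previously known; the paper has to prove it. The remark that ``exponentially large weights lift \ThetaP to \DeltaP'' is a heuristic, not a reduction. The missing piece is an explicit reduction from \lexmaxsat, which goes as follows:
\begin{enumerate}
\item[(i)] Pass to the satisfiable-promise version using a fresh leading variable $x_0$, with clauses $\neg x_0\vee C_\ell$ and $x_0\vee\neg x_i$.
\item[(ii)] For each variable, create an edge $\{T_i,F_i\}$ with weights $w(T_i)=B+2^{n-i}$ and $w(F_i)=B$, where $B$ exceeds the total clause weight.
\item[(iii)] For each clause, create a clique of literal vertices of weight $M=2^n+1$, each adjacent to the assignment vertex that falsifies its literal. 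The weighted independence number is then $nB+\max_a\bigl(M\cdot\mathrm{sat}(a)+\sum_i 2^{n-i}a_i\bigr)$, where $\mathrm{sat}(a)$ is the number of clauses satisfied by $a$.
\item[(iv)] Compare the graphs built in this way from $\varphi\wedge x_n$ and from $\varphi\wedge\neg x_n$.
\end{enumerate}

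\textbf{Hardness: the MMS gadget.} Your gadget never actually compares the two graphs. If the $X$-agents' MMS depends only on $\alpha_{w_X}(\GX)$ and the $Y$-detector's MMS only on $\alpha_{w_Y}(\GY)$, then whether the $X$-agents need the reserve goods depends on $\GX$ alone. Nothing couples the two quantities, and ``breaking'' a no-instance core is never made precise.

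The paper puts the coupling inside a single valuation. The two puzzle agents value $x_i$ at $w_X(i)$ and $y_i$ at $w_Y(i)$, with penalty $-(W+1)$ on the edges of both graphs and on every cross pair $\{x_i,y_j\}$. In the agent's own partition the $Y$-goods are usable, so $\mms_p=\tau+\max\{\alpha_{w_X}(\GX),\alpha_{w_Y}(\GY)\}$. In any MMS allocation, however:
\begin{enumerate}
\item[(a)] the row-forcing agents force the KPW core into rows;
\item[(b)] the anchor agent must take the last row;
\item[(c)] the two cover agents must then absorb all $Y$-goods and all but one dummy.
\end{enumerate}
The puzzle agent without a dummy must therefore reach that maximum from $X$-goods alone. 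The KPW core serves only as a rigidity device, not as a no-instance.

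\textbf{Monotonicity.} Your claim that a per-edge penalty at most the smaller incident weight keeps the valuation monotone is false. With nonpositive pairwise coefficients, monotonicity needs each vertex's weight to dominate the sum of its penalties to all of its neighbours, not each edge separately. A vertex with two neighbours of at least its weight, each edge penalized by the smaller weight, has negative marginal once both neighbours are present.

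Moreover, once penalties are small enough for monotonicity, the full set is the most valuable bundle, so bundle values stop reflecting the maximum-weight independent set. In the other direction, penalties strictly below the smaller weight already let one equal-weight edge beat it.

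The paper avoids this tension in two steps:
\begin{enumerate}
\item[(a)] It builds the instance with large, non-monotone (but submodular) penalties.
\item[(b)] It then pads and shifts. Adding zero-valued goods lets every MMS-witnessing partition be balanced with $\beta=m_0$ goods per bundle, and adding $L|S|$ with $L=2C+1$ makes all valuations monotone and nonnegative. This shifts every MMS by exactly $L\beta$ and forces every MMS allocation to be balanced, so existence is preserved.
\end{enumerate}
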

This result allows us to identify a class of instances of monotone submodular goods for which MMS allocation is \DeltaP-complete; and thus resolving a stated goal of \citet*{DBLP:journals/aamas/BouveretL16}.

Moreover, this improves the previously known \NP-hardness for $2$-additive valuations due to \cite{DBLP:journals/aamas/BouveretL16}.
Since $2$-additive valuations are a special case of $k$-additive valuations for every $k \ge 2$, and since the problem is in $\Delta_2^P$ for every fixed $k$, it is $\Delta_2^P$-complete for every fixed $k \ge 2$.
En route to this result, we prove that \compmwis, 
a generalization of \compmis that is known to be \ThetaP-hard \cite{DBLP:conf/fsttcs/SpakowskiV00}, is in fact \DeltaP-hard. This problem was not previously known to be \DeltaP-hard.


It is known from \cite{DBLP:journals/aamas/BouveretL16} that when there are only two agents, MMS allocations always exist for additive valuations.
The following result demonstrates a sharp dichotomy. 
\begin{restatable}{theorem}{weakconphardness}
    \label{thm:weak-coNP-hardness}
    Deciding whether an additive goods instance admits an MMS allocation is weakly \coNPH, even for three agents.
\end{restatable}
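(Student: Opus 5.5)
We reduce from \partition, which is weakly \NP-complete: given positive integers $a_1,\dots,a_m$ with $\sum_j a_j=2T$, decide whether some subset sums to exactly $T$. We build an instance with three additive agents in which an MMS allocation exists if and only if the \partition instance is a \emph{no}-instance. This gives weak \coNP-hardness. All numbers in the construction are polynomial in $m$ and the $a_j$ in binary, so the hardness is only weak, which is unavoidable given pseudo-polynomial behaviour.

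The backbone is the three-agent no-instance of \citet{DBLP:conf/wine/FeigeST21}, whose analysis \Cref{prop:FST21-3-agent} is stated earlier and which we take as given. We embed the partition numbers into the valuation of one distinguished agent, say agent~1. Agent~1's MMS partition into three bundles is designed to reach a target value $\mu$ exactly when the $a_j$-items can be split evenly. If no even split exists, agent~1's MMS drops strictly below $\mu$, by at least a margin we control through scaling. We will scale the base no-instance values by a factor $K$ much larger than $\sum_j a_j$, and add the partition items with small values $a_j$, plus possibly a dummy item of value $T$, so that agent~1's bundles must combine base items with partition items.

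\textbf{Step 1.} In the yes-case of \partition, agent~1's MMS equals the base MMS scaled, plus a balanced share of the small items. The other two agents value the partition items at zero, or at an equal negligible amount, so their MMS is unaffected. The base no-instance structure then forces some agent below MMS, so no MMS allocation exists.

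\textbf{Step 2.} In the no-case, agent~1's MMS drops by at least $1$ in the scaled units. That slack must be enough to let agent~1 accept a bundle that was ``one unit short'' in the base no-instance. We will choose the FST instance so that its failing configuration misses MMS by a small additive amount. We then need to exhibit an explicit allocation in which every agent meets her (new) MMS. Since \Cref{prop:FST21-3-agent} presumably describes exactly which near-MMS partitions exist, we reuse it to write down this allocation.

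\textbf{Main obstacle.} The main difficulty is calibrating the gadget so that agent~1's MMS value switches by exactly the right amount between the two cases. It must drop enough in the no-case to restore existence, but the drop must not be so large, nor the small items so disruptive, that the yes-case also admits an MMS allocation. A further constraint is that agent~1's optimal partition must not mix base and partition items in unintended ways. To handle this, we will first try valuing the partition items for agent~1 at $a_j$ alongside base items of value $K\cdot v$. We then argue via integrality and $K>2T$ that any partition attaining the maximum must split the base items exactly as in the FST MMS partition, which reduces agent~1's MMS to $K\mu_0+\min(\text{best two-way split of the } a_j\text{'s})$. If this splitting into three bundles does not align, we instead add a third dummy item of value $T$ so the partition items form a three-way balanced split with the dummy.
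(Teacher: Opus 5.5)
Your Step~1 is fine, but Step~2 fails, and the cause is your calibration $K\gg\sum_j a_j$. Whatever the \partition instance is, $\mms_1\ge 40K$: take agent~1's scaled FST maximin partition and place the small items anywhere. All small items together, dummy included, are worth at most $3T$ to her. So once $K>3T$, a bundle whose core part is worth at most $39$ base units gives her at most $39K+3T<40K\le\mms_1$, and she needs core value at least $40$. Agents~2 and~3 value the small items at $0$ and have MMS $40K$, so they also need core value at least $40$. \Cref{prop:FST21-3-agent} forbids all three of these at once. Hence your instance admits no MMS allocation in the no-case either, and the reduction distinguishes nothing.

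The underlying arithmetic is this. The room you can create for agent~1 is her MMS drop plus whatever she gains by hoarding small items, which is at most $3T$ scaled units. The FST failure costs a full base unit, i.e.\ $K$ scaled units. Choosing a different FST instance does not change that. Moreover, \Cref{prop:FST21-3-agent} only asserts that some agent gets at most $39$. It describes no near-MMS allocation, so it cannot supply the allocation you plan to write down in Step~2.

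The missing idea is to reverse the hierarchy, so that in the no-case most agents are satisfied by puzzle value alone. The paper values the puzzle goods identically for \emph{all three} agents at $121w_j$, where $121$ exceeds the total core value $120$, and reduces from \threewaypartition.
\begin{itemize}
\item In the yes-case every $\mms_i=121T+40$. Any allocation either leaves someone with puzzle weight at most $T-1$, which even all core goods cannot repair, or gives everyone exactly $T$, where the FST obstruction applies.
\item In the no-case it takes a max-min three-way split with the fewest bottleneck parts, $r\in\{1,2\}$. It bounds $\mms_i$ by $121\alpha+120$ when $r=1$ and by $121\alpha+\beta_i$ when $r=2$, where $\beta_i$ is the two-agent MMS on the core. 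It satisfies the bottleneck agents by giving one of them all core goods ($r=1$) or by cut-and-choose ($r=2$). The remaining agents are covered by puzzle weight $\alpha+1$ alone.
\end{itemize}

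Your one-agent gadget can also be rescued, but only at the knife-edge $K=2T+1$.
\begin{itemize}
\item In the yes-case agent~1 still needs core value at least $40$, since $39K+3T=40K+T-1<\mms_1$, and the FST obstruction applies.
\item In the no-case, averaging plus integrality give $\mms_1\le 40K+T-1=39K+3T$. So agent~1 can take all small items together with a core bundle worth $39$ to her.
\item This needs an FST allocation with profile $(39,\ge 40,\ge 40)$. For agent~1 $=R$, give columns $1,2,3$ to $R,C,U$, worth $39,40,42$ to them. You would have to verify such an allocation directly from the matrices.
\end{itemize}
In this regime you need only the upper bound on $\mms_1$, not your exact formula for it.
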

The reduction is highly restricted: the agents agree on the values of all but nine goods.
We also establish strong \coNP-hardness for arbitrary number of agents but constant types.
We say that two agents have the same {\it agent type} if they have identical valuation functions; thus, the number of agent types is the number of distinct valuation functions in the input.

\begin{restatable}{theorem}{strongconphardness}
    \label{thm:strong-coNP-hardness}
    Deciding whether an additive goods instance admits an MMS allocation is strongly \coNPH, even when there are only two agent types.
\end{restatable}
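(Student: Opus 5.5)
We reduce from the complement of \threepartition, which is strongly \NP-complete. The reduction therefore maps yes-instances of \threepartition to instances with no MMS allocation, and no-instances to instances that have one. Because \threepartition is strongly \NP-hard, every number we produce must stay polynomially bounded in the unary size of the source instance. The plan is to combine a \emph{detector} gadget, which lets one agent type's maximin share depend on whether a perfect 3-partition exists, with a known no-instance \emph{core} of the kind analysed in \Cref{sec:no-instances}.

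Let the \threepartition instance be $a_1,\dots,a_{3m}$ with target $B$, where $B/4<a_j<B/2$ and $\sum_j a_j=mB$. We use two agent types.

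\textbf{Type 1} (the ``detector'' type) consists of $m$ copies, or $m$ plus a few extra copies to host the core. Its value for the number goods is $a_j$, scaled and shifted by a large multiple of $B$ so that every bundle in a maximin partition must take exactly three number goods. With this design, its MMS equals $B$ (after scaling) exactly when a 3-partition exists, and is at least one unit smaller otherwise. Integrality gives this unit gap, and the shift keeps all values polynomial.

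\textbf{Type 2} values the number goods uniformly or near-uniformly, so its MMS is fixed and computable.

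We then attach core goods, modelled on the KPW/FST no-instance (\Cref{prop:KPW-core-construction}, \Cref{prop:FST-core-no-instance}), with their values blown up by only polynomial factors. They are arranged so that:
\begin{itemize}
\item[(i)] If type 1's MMS is the high value $B$, the type-1 agents essentially have to receive exact 3-partition bundles. This leaves the core goods to be split among the remaining agents, who then behave like the core no-instance, so no MMS allocation exists.
\item[(ii)] If type 1's MMS drops by one unit, the slack lets a type-1 agent absorb a ``reserve'' core good in place of a number good. This breaks the rigidity of the core and yields an MMS allocation, which we can exhibit explicitly.
\end{itemize}

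The key steps, in order, are:
\begin{enumerate}
\item Compute both types' MMS in each case, checking that the core goods do not disturb type 1's partition structure, for example by making them individually worth exactly $B$ to type 1.
\item Prove non-existence in the yes case by a counting and tightness argument that reduces to the core no-instance proposition.
\item Construct an MMS allocation in the no case.
\item Verify that all numbers are polynomially bounded.
\end{enumerate}

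The main obstacle is step 2. The FST/KPW core relies on values whose ratios are carefully tuned, and the agents here are only two types with many copies. The core must therefore be re-encoded so that only two distinct valuation functions appear, yet the no-instance property survives. We must also rule out \emph{mixed} allocations in which type-1 agents take core goods while still reaching $B$. I would first try to make the type-2 valuation of the core goods reproduce the FST pattern on its own, and to use type 1 only as a tight ``filler'' constraint: any deviation by a type-1 agent from an exact 3-partition bundle then costs it strictly more than one unit.
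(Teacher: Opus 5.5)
\textbf{There is a genuine gap.} Your proposal is a plan rather than a proof, and it breaks at exactly the step you flag as the main obstacle. You spend one of the two permitted valuation functions on a ``detector'' type and make type 1 ``only a tight filler''. That leaves a single valuation, type 2, to carry the core. But the no-instance property of the FST core (\Cref{prop:FST-core-no-instance}) needs two distinct valuations, $V_R$ and $V_C$. No single valuation can ``reproduce the FST pattern on its own'': agents with identical valuations can always be given the bundles of a common MMS-witnessing partition. So step (i), where the agents left with the core ``behave like the core no-instance'', has nothing to rest on, and you offer no other source of non-existence.

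Your concrete device for step 1 also works against the plan. If every core good is worth exactly $B$ to type 1, a type-1 agent reaches $B$ with a single core good, which is precisely the mixed allocation you wanted to exclude. Moreover, the FST and KPW cores have about five goods per agent, so there are more such goods than non-detector agents. The detector's MMS then no longer hinges on an exact 3-partition. Step (ii) is never specified. Finally, the KPW core cannot serve a strong-hardness result, since its entries have $\Theta(n^4)$ bits.

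\textbf{The missing idea is that no separate detector is needed: every agent plays both roles.} The paper takes the FST core itself with $n$ agents, where $n$ is the number of triples. The $n-2$ row agents and $2$ column agents are the two types. It adds one puzzle good per number, valued $Lw_j$ by \emph{every} agent, with $L=n\tau+1$, so that one unit of puzzle weight outweighs the whole core.

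\emph{Yes case.} Every agent has $\mms_i=LT+\tau$. Any allocation either gives some agent puzzle weight at most $T-1$, hence value at most $L(T-1)+n\tau=LT-1$, or gives everyone puzzle weight exactly $T$. In the latter case \Cref{prop:FST-core-no-instance} forbids core value $\tau$ for all agents.

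\emph{No case.}
\begin{itemize}
\item Let $\alpha\le T-1$ be the maximin puzzle weight. Among optimal puzzle partitions, fix one minimizing the number $r\in[n-1]$ of parts of weight $\alpha$.
\item Then $\mms_i\le L\alpha+\beta_i^r$, where $\beta_i^r$ is agent $i$'s $r$-agent maximin share on the core.
\item The $r$ bottleneck parts go to $r$ agents who can split the core so that each gets $\beta_i^r$. If $r\le n-2$, take $r$ identical row agents. If $r=n-1$, take all row agents plus one column agent, who picks first from a row agent's $r$-MMS partition.
\item The remaining agents receive parts of weight at least $\alpha+1$, worth at least $L\alpha+L>L\alpha+n\tau$, which meets their MMS.
\end{itemize}
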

This restriction is essentially tight: with a single type, all agents have identical valuations, and one can allocate the bundles of a common MMS-witnessing partition to the agents.

This result also implies that the hardness is not caused by large binary-encoded numbers; and it remains hard even when all numerical values are given in unary. 

The strong \coNP-hardness result also has an approximation consequence.
Approximation algorithms for MMS allocations usually search for allocations that achieve some guaranteed factor, called the \emph{approximation ratio}, of every agent's MMS, and therefore do not necessarily distinguish instances admitting an exact MMS allocation from those that do not. The {\it optimal MMS ratio} of an instance is defined as the maximum approximation ratio attainable by each agent, \cite{DBLP:conf/aaai/AzizRSW17}. 


The following result rules out an algorithm that runs in time polynomial in both the input size and $1/\varepsilon$, and returns a $(1-\varepsilon)$-approximation of the optimal MMS ratio.

\begin{restatable}{corollary}{nofptas}
    \label{cor:no-FPTAS-optimal-MMS-ratio}
    Unless \P = \NP, there is no FPTAS for computing the optimal MMS (approximation) ratio of an additive goods instance.
\end{restatable}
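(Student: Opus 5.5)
We derive \Cref{cor:no-FPTAS-optimal-MMS-ratio} directly from the strong \coNP-hardness construction behind \Cref{thm:strong-coNP-hardness}. The plan has three steps.

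First, revisit that reduction. It starts from a strongly \NP-hard source problem (we expect \threepartition), and all numbers in the produced additive goods instance are bounded by a polynomial in the input size. Let $V$ be this polynomial bound on the largest value, and $m$ the number of goods. We extract a quantitative gap. If the source instance is a no-instance, the constructed instance admits an MMS allocation, so its optimal MMS ratio is at least $1$. If it is a yes-instance, no MMS allocation exists. Since all values are integers, every agent's MMS is an integer at most $mV$. Hence in every allocation some agent $i$ receives $v_i(A_i)\le \mms_i-1$. This gives ratio at most $1-1/\mms_i\le 1-1/(mV)$. Taking the maximum over the finitely many allocations, the optimal MMS ratio is at most $1-1/p(|I|)$ for a polynomial $p$.

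Second, we reduce to an FPTAS. Given a purported FPTAS, run it with $\varepsilon=1/(3p(|I|))$; the running time is polynomial since $1/\varepsilon$ is polynomial. The returned value $r$ satisfies $(1-\varepsilon)\rho^*\le r\le\rho^*$, where $\rho^*$ is the optimal ratio. In the no-MMS case, $r\le 1-1/p$. In the MMS-exists case, $r\ge 1-\varepsilon=1-1/(3p)$. Thresholding $r$ at $1-1/(2p)$ therefore decides the source problem in polynomial time, so $\P=\NP$. If the FPTAS only returns an allocation rather than a value, we compute its ratio; computing MMS values is itself \NP-hard, so we instead compare against the answer guarantee, or note that the corollary concerns computing the ratio value.

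The main obstacle is the first step: ensuring that the no-instance side has an inverse-polynomial gap. This requires that the relevant MMS values are integers of polynomial magnitude, and that no rescaling in the construction (e.g., normalizing MMS to $1$) introduces large denominators. We would check this explicitly against the strong construction's unary-bounded values. This is exactly the property the paper flags when it says the strong construction ``produces an inverse-polynomial gap.''
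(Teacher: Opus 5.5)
Your proposal is correct and follows essentially the same route as the paper. Both use the strong \coNP-hardness instances and get $\lambda_I \ge 1$ in the no-case of \threepartition; in the yes-case, a one-unit integral shortfall gives $\lambda_I \le 1 - 1/\mathrm{poly}$, and running the FPTAS with inverse-polynomial $\varepsilon$ and thresholding then separates the two cases. The only cosmetic difference is that the paper uses the exact common MMS value $LT+\tau$ to write the gap as $1 - 1/(LT+\tau)$, whereas you bound the MMS generically by $mV$; your aside about FPTASs that return allocations is unnecessary.
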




\paragraph{Evidence of additional hardness.}
As a consequence of our constructions showing \NP-hardness and \coNP-hardness, as well as \DP and \DeltaP-hardness, we identify additional, apparently stringent restrictions that do not eliminate the computational hardness of the \decision. 
In the following, we enumerate some of these consequences. 

\hide{We note in the domain of fair division, and more broadly algorithmic game theory, computational social choice, restrictions on valuation functions such as agent type defined as the number of items valued non-zero by an agent; item support, defined as the number of agents that value an item as non-zero; the maximum (absolute) value of assigned to any item, etc finds wide purchase.} 



\begin{enumerate}
    \item Our~\NP-hardness result, \Cref{sec:np}, shows that the \decision is hard even for two agent types. This is a sharp dichotomy with the trivial existence when there is exactly one agent-type.

    \item When any good gives positive value to at most two agents, MMS allocation is always known to exist, due to the recent work of~\citet*{DBLP:conf/aaai/ChristodoulouM26}. 
    
    Our \coNP-hardness proof shows that the problem is hard even when the number of agents is just three. 
    Moreover, the agents agree on the values of all but nine goods.
    \item Our \DP and \DeltaP-hardness proofs show hardness even when there are constant agent types.
\end{enumerate}

Finally, we show that the above results can be transferred, in polynomial time, from goods to chores.
The transfer preserves the existence of MMS allocations, and therefore gives the corresponding hardness results for chore instances.

\begin{restatable}[Informal]{theorem}{goodstochorestransfer}
\label{thm:goods-to-chores-transfer}
The hardness results above extend to chores.
In particular, deciding whether an additive chores instance admits an MMS allocation is \DP-hard and strongly \coNPH.
Moreover, deciding whether a $2$-additive chores instance admits an MMS allocation is $\Delta_2^P$-complete, and the hardness holds even for monotone submodular cost functions that are nonnegative on every bundle.
\end{restatable}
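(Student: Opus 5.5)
The plan is a polynomial-time map $\phi$ from goods instances to chores instances such that the goods instance admits an MMS allocation if and only if $\phi$ of it does. Composing $\phi$ with each goods reduction (\Cref{thm:DP-hardness-goods,thm:strong-coNP-hardness,thm:delta2p-hardness-monotone-submodular}) then transfers every hardness result.

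The natural choice is the \emph{complement} transformation. Given a goods instance with $n$ agents, items $M$ and valuations $v_i$, define chore costs by
\[
c_i(S)=v_i(M\setminus S)\cdot \frac{1}{n-1},
\]
or simply $c_i(S)=v_i(M)-v_i(S)$ in the additive case. The chores instance cannot just reuse the same items for $n$ agents, since complements of a partition are not a partition. I would therefore instead use a chores instance with $n$ agents whose cost for a bundle is the total value of the goods \emph{not} received. Then $-c_i(S)=v_i(S)-v_i(M)$, so maximizing $v_i$ and minimizing $c_i$ coincide up to the additive constant $v_i(M)$. The chores MMS of agent $i$ is $\min_P\max_j c_i(P_j)=v_i(M)-\mms_i$, so an allocation is MMS in the goods instance iff it is MMS in the chores instance.

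Additivity is preserved, since $c_i(S)=\sum_{g\in S}v_i(g)$ is exactly additive if one reads chores as negative values $-v_i(g)$. That is almost too trivial, and it is probably the intended argument only if the paper allows costs to be arbitrary nonnegative numbers with no structural constraint. For the 2-additive claim ("monotone submodular cost functions that are nonnegative on every bundle"), negation breaks submodularity and monotonicity. In that case I would write $c_i(S)=\sum_{g\in S}w_i(g)-\sum_{\{g,h\}\subseteq S}w_i(g,h)$ with nonnegative pairwise weights. I would then check via padding that monotonicity and nonnegativity on every bundle hold, adding large dummy singleton weights $L$ to every item and compensating by adding $n$ dummy chores so that thresholds shift uniformly.

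Key steps, in order:
\begin{enumerate}
\item Define $\phi$ and show that for every agent $i$, $\mms_i^{\text{chores}}$ is an explicit affine function of $\mms_i^{\text{goods}}$.
\item Show that the satisfaction condition for each agent and each bundle is equivalent under this affine map.
\item Verify that $\phi$ preserves encoding size, unary encoding (for strong hardness) and the number of agent types.
\item Verify that it lands in the required valuation class.
\end{enumerate}

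The main obstacle is step 4 for the monotone submodular 2-additive case. There, uniform padding must preserve MMS values exactly, which requires the MMS-witnessing partitions to give each bundle the same number of items (or dummy chores absorbing the shift). I would try first to use the structural fact, presumably guaranteed by the goods constructions, that relevant partitions have equal bundle cardinalities. Failing that, I would add dummy items valued only via the singleton padding term to equalize the counts.
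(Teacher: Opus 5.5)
\textbf{There is a genuine gap already in the additive case.}

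Your complement map $c_i(S)=v_i(M)-v_i(S)$ does make the two MMS conditions equivalent, but it does not produce an additive chores instance. As a utility, $-c_i(S)=v_i(S)-v_i(M)$ has the nonzero constant term $-v_i(M)$, which violates normalization, and every item has nonnegative marginal utility. Once the offset is removed, it is literally the original goods instance, so no hardness for additive chores (normalized valuations, every item of nonpositive value) follows.

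Your fallback $c_i(S)=\sum_{g\in S}v_i(g)$ is additive, but it destroys the equivalence. The chores MMS for this cost is $\min_P\max_j v_i(P_j)$, which is the goods \emph{minimax} share and is unrelated to $\mathrm{MMS}_i$.

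The missing idea is a cardinality-dependent shift that is constant only on balanced bundles:
\begin{itemize}
\item Pad with $(n-1)m$ zero-valued items, so that there are exactly $nq$ items with $q=m$.
\item Set $u_i^C(S)=v_i(S)-\alpha_1|S|$ with $\alpha_1=R+1$ and $R=\max_i v_i(H)$. Every item then has strictly negative utility, so this is a genuine additive chores instance.
\item Any bundle with at least $q+1$ items has utility at most $R-\alpha_1(q+1)=-\alpha_1 q-1<\mathrm{MMS}_i^{\mathrm{goods}}-\alpha_1 q$.
\item Hence both the MMS-witnessing partitions and every MMS allocation of the chores instance are balanced. On balanced bundles $u_i^C=v_i-\alpha_1 q$, which gives $\mathrm{MMS}_i^{\mathrm{chores}}=\mathrm{MMS}_i^{\mathrm{goods}}-\alpha_1 q$ and the equivalence.
\end{itemize}
Since $\alpha_1=R+1$ stays polynomially bounded when the goods values are, strong \coNP-hardness survives.

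\textbf{In the $2$-additive case, singleton padding does not give submodularity.}

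Adding a weight $L$ to every item, i.e.\ taking cost $L|S|-v_i(S)$, fixes monotonicity and nonnegativity. However, the pairwise coefficients of this cost are $-b_{i,g,h}\ge 0$, because submodularity of $v_i$ means $b_{i,g,h}\le 0$. So the cost is supermodular. The form $\sum_g w_i(g)-\sum w_i(g,h)$ with $w_i(g,h)\ge 0$ that you write down does not arise from singleton padding.

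To flip the pairwise sign you need a uniform pairwise term, i.e.\ a concave cardinality term:
\begin{itemize}
\item Take $u_i^C(S)=v_i(S)-\alpha_1|S|+\alpha_2\binom{|S|}{2}$ with $\alpha_2=\max|b_{i,g,h}|$.
\item Take $\alpha_1=\alpha_2 nq+R+1$, so that marginals stay negative and $\Gamma_t=\alpha_1 t-\alpha_2\binom{t}{2}$ is strictly increasing.
\item This term is again constant on balanced bundles.
\item Because $\Gamma_{q+1}-\Gamma_q>R$, the same size-$(q+1)$ argument forces balancedness.
\end{itemize}
Your instinct that balanced witnessing partitions are needed is right. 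Note, though, that you must also rule out unbalanced MMS allocations in the chores instance, not only balance the witnessing partitions; this is exactly what the size of $\alpha_1$ buys.

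\textbf{Membership.} Completeness also needs membership in $\Delta_2^P$. This follows from the $k$-additive membership argument, which makes no sign assumptions.
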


We end our discussion by noting that our hardness reductions depend on a very careful analysis of the no-instances of MMS allocations given by \citet{DBLP:journals/jacm/KurokawaPW18} and \citet*{DBLP:conf/wine/FeigeST21}, that includes making certain parameters explicit in order to work with our gadgets. This culminates in certain structural properties of the KPW core listed in \Cref{prop:KPW-core-construction}. 
This may be of independent interest beyond this work enabling further analysis of the computational complexity of MMS. 

%% file: 1_frontmatter/related-work.tex
\subsection{Other Related Work}
MMS allocations have also been studied beyond the standard goods and chores only settings.
\citet{DBLP:conf/sigecom/KulkarniMT21} initiated the study of MMS together with Pareto optimality for indivisible mixed manna, where an item may be a good for some agents and a chore for others.
They show that, unlike the goods-only setting, for every fixed $\alpha > 0$ an $\alpha$-MMS allocation may fail to exist, and they give a PTAS for finding an approximately optimal MMS+PO allocation under suitable assumptions, together with hardness results when these assumptions are dropped.

Another line of work studies connected allocations, where the goods are vertices of a graph and each allocated bundle is required to induce a connected subgraph.
\citet{DBLP:conf/ijcai/BouveretCEIP17} showed that in the special case of trees, MMS allocations always exist and can be found in polynomial time, while \citet{DBLP:journals/jair/TruszczynskiL20} studied cycles and unicyclic graphs and showed that while the general decision problem is in $\Delta_2^P$, it is in $\NP$ for unicyclic graphs.
In a related direction, \citet{DBLP:conf/aaai/IgarashiP19} studied Pareto optimality under the same connectivity constraints.
They showed that even on paths, finding a Pareto-optimal connected allocation satisfying MMS is NP-hard via Turing reductions.

Finally, \citet{DBLP:conf/aaai/ChristodoulouM26} studied MMS under graphical valuations, where agents are vertices and items are edges, and an edge can have nonzero value only for its endpoints.
For additive valuations on multigraphs, they show that an exact MMS allocation always exists; they also obtain approximation results for more general graphical valuations.

%% file: 1_frontmatter/technical-overview.tex
\section{Technical overview}
\label{sec:technical-overview}

In this section we sketch the main ideas underlying our hardness proofs. 
At the heart of our reductions are rigid MMS no-instance cores given by \cite{DBLP:journals/jacm/KurokawaPW18} and \cite{DBLP:conf/wine/FeigeST21}, formally defined in \Cref{sec:no-instances}, along with gadgets encoding partition and independent-set comparison problems. 
Hence, we start our overview by discussing these cores.
Following that we discuss the \NP-hardness, \DP-hardness \DeltaP-completeness results, and finally the \coNP-hardness constructions. 
We conclude with a discussion on how to reduce the chores setting to the goods setting such that the aforementioned computational hardness for the goods transfers to the chores as well.


\subsection{No-instance cores}
A common ingredient in our reductions is an integer matrix, called a \emph{core}, from which the valuations are defined to construct an instance that does not admit MMS allocations.
The positive matrix entries are interpreted as goods. 
The core is rigid in the sense that any partition of goods that is MMS-fair corresponds to very structured partitions of the matrix.
We use two such cores.

The first is the construction of \citet{DBLP:journals/jacm/KurokawaPW18}; we call it the {\it KPW core}, and define it formally in \Cref{subsec:KPW-core}.
Their presentation involves parameters that are not specified explicitly.
For our purposes, it is important to make the construction parameters explicit and to clear denominators, so that the construction gives us a nonnegative integer matrix $A$ with polynomial bit complexity.
The positive entries of $A$ are the \emph{core goods}.
Every row and every column of $A$ has the same sum, denoted by $\rowsum$.

The key property here is row-or-column rigidity: 
If the positive entries of $A$ are partitioned into $n$ bundles and every bundle has value at least $\rowsum$, then the partition must be either the row (or column) partition, up to relabeling.
Thus, we define the following special partitions, used throughout the article.

\begin{definition}[Row and Column Partitions]
A \emph{row partition} is an allocation in which each bundle consists of the core goods in one row of $A$; a \emph{column partition} is defined analogously. 
\end{definition}

They define two types of agents on top of this core by perturbing the positive entries of the core matrix in two different ways.
The maximin share of every agent is exactly $\rowsum$.
By the row-or-column rigidity, any allocation giving every agent value at least $\rowsum$ must be either the row partition or the column partition.
The valuations are chosen so that one type of agent makes the row partition infeasible, while the other type makes the column partition infeasible.
Since every MMS allocation would have to induce one of these two canonical partitions, and both are made infeasible by the two agent types, the instance does not admit an MMS allocation.

Our reductions use this core in a slightly different way, so as to capture the reduction's source problem.
We add additional goods to the instance, and create appropriate valuation functions for the agents.
The row-or-column rigidity then forces any MMS allocation to follow one of the two canonical partitions, while the additional goods encode the source problem.

For our strong \coNP-hardness result, we use a different core due to \citet{DBLP:conf/wine/FeigeST21}.
We will explain the corresponding structural facts later in \Cref{subsec:overview-conp}, in the overview of the \coNP-hardness reduction.

\subsection{\NP-hardness}
We reduce from \threepartition: we are given $3\ell$ positive numbers with total sum $\ell T$, and the goal is to decide whether they can be partitioned into $\ell$ triples, each of sum exactly $T$.
The construction has two kinds of agents: puzzle-solving agents and row-forcing agents.
There are also two kinds of goods: the KPW core goods, which are valued by all agents, and the number goods corresponding to the \threepartition instance, which are valued positively only by the puzzle-solving agents.
All agents share the same scaled KPW core, but their valuations on the core goods differ by small additive perturbations, as shown in \Cref{fig:np-hardness-core-perturbations}.
The scaling parameter $H$ is chosen large enough so that the row-or-column rigidity of the KPW core is preserved even after the lower-order perturbations and values of the puzzle goods are added.
Consequently, any allocation that gives every agent value at least $\tau$ must allocate the core goods either as rows or as columns.

The role of the row-forcing agents is to rule out the column partition.
For these agents, every row has value exactly $\tau$, but every non-last column has value only $\tau - 1$, while only the last column has value above $\tau$.
Since there are at least two row-forcing agents and they do not value puzzle goods, a column allocation cannot satisfy all of them.
Thus, any MMS allocation must allocate the KPW core by rows.

Once the row structure is forced, the puzzle-solving agents encode the \threepartition instance.
For a puzzle-solving agent, every non-last row has value $\tau - T$, while the last row has value $\tau + (n - 1)T$.
Thus, the agent receiving the last row is automatically satisfied, but each puzzle-solving agent receiving a non-last row needs puzzle goods of total value at least $T$.
There are $\ell + 1$ puzzle-solving agents and only one last row, so at least $\ell$ puzzle-solving agents must receive non-last rows.
This creates $\ell$ deficits, each of value $T$.
Since the total value of all puzzle goods is exactly $\ell T$, an MMS allocation exists exactly when the puzzle goods can be split into $\ell$ bundles of value exactly $T$, which is precisely the \threepartition condition.

\begin{figure}
    \centering
    \includegraphics[width=0.7\linewidth]{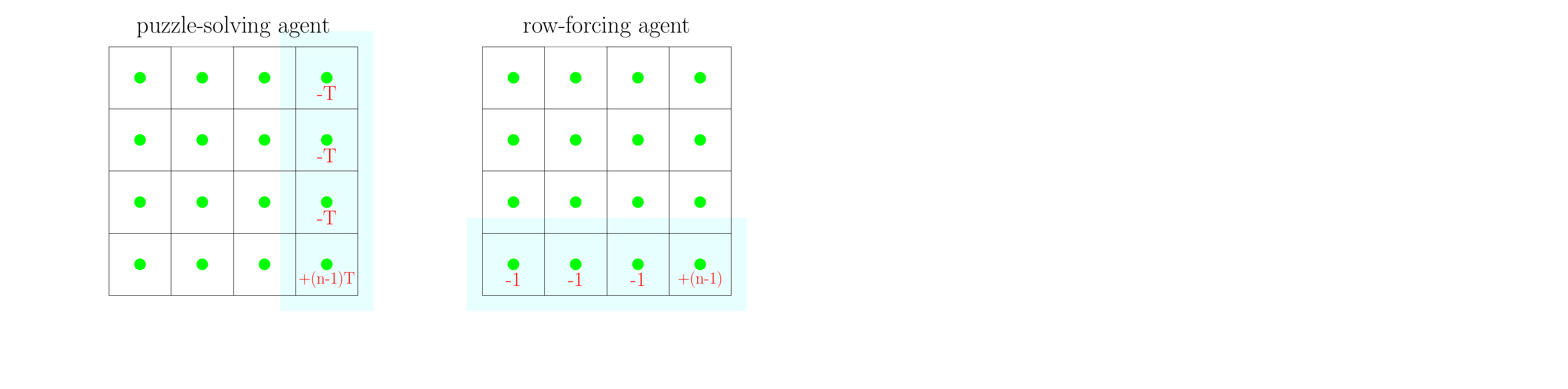}
    \caption{
    The figure shows a $4 \times 4$ schematic diagram of the $n \times n$ KPW core.
    Each \textcolor{green}{green} dot represents the value inherited from the scaled KPW core, and each \textcolor{red}{red} value denotes the perturbation applied to that entry.
    Cells without a red label are unperturbed.
    For puzzle-solving agents, non-last rows are deficient by $T$, while the last row receives a compensating perturbation of $(n - 1)T$.
    For row-forcing agents, non-last columns are deficient by $1$, while the last column receives a compensating perturbation of $n - 1$.
    }
    \label{fig:np-hardness-core-perturbations}
\end{figure}

\subsection{$D^P$-hardness}
We reduce from \threepartitionnotthreepartition, where an instance consists of a pair $(\instI_X,\instI_Y)$ of instances of \threepartition.
The instance is a yes-instance if and only if $\instI_X$ is a yes-instance of \threepartition and $\instI_Y$ is a no-instance of \threepartition.

The reduction builds on the ideas introduced in the \NP-hardness reduction.
The instance has four kinds of agents: $X$-puzzle-solving agents, two \forcing agents, one \special agent, and \Q agents.
There are also four kinds of goods: KPW core goods, $X$-goods corresponding to the numbers in $\instI_X$, $Y$-goods corresponding to the numbers in $\instI_Y$, and dummy goods.
The $Y$-goods together with the dummy goods are called \trigger goods.

As before, the \Q agents rule out the column partition.
The scaling parameter $H$ is chosen large enough so that the row-or-column rigidity of the KPW core is preserved even after the lower-order perturbations and the values of non-core goods are added.
Thus, any MMS allocation must allocate the KPW core by rows, up to relabeling.

Once the row structure is forced, the gadget for $\instI_X$ behaves exactly like in the \NP-hardness reduction.
There are $\ell_X + 1$ many $X$-puzzle-solving agents and only one last row.
Hence, at least $\ell_X$ of them must receive non-last rows, each creating a deficit of value $T_X$.
The total value of the $X$-goods is exactly $\ell_X T_X$.
Therefore, the $X$-puzzle-solving agents can all be satisfied only if the $X$-goods can be split into $\ell_X$ bundles of value exactly $T_X$, which is precisely the \threepartition condition for $\instI_X$.

The gadget for $\instI_Y$ behaves in a different way.
The valuation of the \special agent is designed so that $\instI_Y$ controls her maximin share.
If $\instI_Y$ is a yes-instance, then the \trigger goods can be combined with the columns of the KPW core to certify $\mms_{a^*} = \tau + T_Y$.
If $\instI_Y$ is a no-instance, then this is impossible, and we only have $\mms_{a^*} \le \tau + T_Y - 1$.

The \forcing agents ensure that the \trigger goods are unavailable to the \special agent in any MMS allocation.
After the $X$-side is satisfied, one $X$-puzzle-solving agent must receive the last row, so the \forcing agents receive non-last rows and must repair their deficits using the \trigger goods.
This consumes all \trigger goods.
Therefore, if $\instI_Y$ were a yes-instance, the \special agent would need value $\tau + T_Y$, but without any \trigger goods every row gives her value at most $\tau + T_Y - 1$.
Thus, an MMS allocation can exist only when $\instI_Y$ is a no-instance.

Consequently, the constructed instance admits an MMS allocation if and only if $\instI_X$ is a yes-instance and $\instI_Y$ is a no-instance.

\subsection{$\Theta_2^P$-hardness}
We reduce from \compmis.
An instance consists of two graphs $\GX$ and $\GY$ on the same number of vertices, and the goal is to decide whether $\alpha(\GX) \ge \alpha(\GY)$.

As before, the construction uses a scaled KPW core.
The kinds of agents are: puzzle-solving agents, \Q agents, \forcing agents, and an \anchor agent.
The \Q agents rule out the column partition, while the \anchor agent pins the last row.
Thus, in any MMS allocation, the KPW core must be allocated by rows, with the last row going to the \anchor agent.
The two \forcing agents then consume the \trigger goods, which consist of the $Y$-goods together with all but one dummy good.

The main new gadget is the graph-goods gadget.
For every vertex $i$ of $\GX$, we create an $X$-good $x_i$, and for every vertex $i$ of $\GY$, we create a $Y$-good $y_i$.
The puzzle-solving agents value each graph good at $1$.
However, we add a large negative pairwise penalty for every edge of $\GX$, every edge of $\GY$, and every pair consisting of one $X$-good and one $Y$-good.
Thus, any valuable set of graph goods must come entirely from one side and must correspond to an independent set in the corresponding graph.
Consequently, the maximum value obtainable from graph goods is $\rho = \max\{\alpha(\GX),\alpha(\GY)\}$.
This is why the puzzle-solving agents have maximin share $\tau + \rho$.

The construction makes the $Y$-goods unavailable for the puzzle-solving agents in any MMS allocation.
The \forcing agents consume all \trigger goods, and hence consume all $Y$-goods.
Only one dummy good remains outside the trigger goods.
Since there are two puzzle-solving agents, at least one of them receives no dummy good.
That agent has a core row of value $\tau$, and so she must obtain additional value at least $\rho$ from the remaining graph goods.
But the only remaining graph goods that can help her are the $X$-goods.
Therefore, an MMS allocation can exist only if the $X$-goods contain an independent set of size at least $\rho$, that is, only if $\alpha(\GX) \ge \alpha(\GY)$.

Conversely, if $\alpha(\GX) \ge \alpha(\GY)$, then $\rho = \alpha(\GX)$.
We satisfy one puzzle-solving agent using an independent set of $\GX$ of size $\rho$, and satisfy the other puzzle-solving agent using the remaining dummy good.
The \anchor, \forcing, and \Q-agents are satisfied by the intended row allocation.
Hence, the constructed instance admits an MMS allocation if and only if $\alpha(\GX) \ge \alpha(\GY)$.

Finally, we strengthen the construction to monotone submodular valuations using a padding-and-shifting transformation.
The constructed valuations are already submodular, since all pairwise coefficients are nonpositive.
They may, however, fail to be monotone or nonnegative.
To fix this, we first add zero-valued padding goods so that the relevant MMS-witnessing partitions can be made balanced, with all bundles having the same size.
Then we add a sufficiently large modular term $L|S|$ to every valuation.
This preserves submodularity, makes all valuations monotone and nonnegative, and shifts every balanced bundle value by the same amount.
The choice of $L$ also ensures that every MMS allocation in the transformed instance is balanced.
Therefore, subtracting the common modular shift recovers precisely the MMS condition in the original padded instance, and the transformation preserves the existence of an MMS allocation.

\subsection{$\Delta_2^P$-hardness}
The reduction has two steps.
First, we show that \compmwis is $\Delta_2^P$-hard by reducing from \lexmaxsat.
We then reduce from \compmwis to MMS allocation existence.
This reduction is the weighted analogue of the $\Theta_2^P$-hardness reduction.

The KPW core, the \Q agents, the \anchor agent, and the \forcing agents play the same roles as before: the core is forced into rows, the \anchor agent receives the last row, and the \forcing agents consume the \trigger goods.
The only difference is that graph goods now carry the vertex weights of the input graphs.

For the puzzle-solving agents, the singleton value of $x_i$ is $w_X(i)$ and the singleton value of $y_i$ is $w_Y(i)$.
Large negative pairwise penalties are placed on edges of $\GX$, edges of $\GY$, and all cross pairs between $X$-goods and $Y$-goods.
Therefore, any positively valued set of graph goods corresponds to an independent set in exactly one of the two weighted graphs.
Thus, the maximum value obtainable from graph goods is $\rho = \max\{\alpha_{w_X}(\GX), \alpha_{w_Y}(\GY)\}$, and the puzzle-solving agents have maximin share $\tau + \rho$.

As in the $\Theta_2^P$ reduction, all $Y$-goods are trigger goods and are consumed by the \forcing agents in any MMS allocation.
Only one dummy good remains outside the trigger goods.
Since there are two puzzle-solving agents, at least one of them receives no dummy good and must obtain value at least $\rho$ from the remaining graph goods.
The only useful remaining graph goods are the $X$-goods.
Hence, an MMS allocation can exist only if $(\GX,w_X)$ has an independent set of weight at least $\rho$, which is equivalent to $\alpha_{w_X}(\GX) \ge \alpha_{w_Y}(\GY)$.
Conversely, if this inequality holds, one puzzle-solving agent is satisfied using a maximum-weight independent set of $\GX$, while the other is satisfied using the remaining dummy good, and the rest of the agents are satisfied by the intended row allocation.
Therefore, the constructed MMS instance admits an MMS allocation if and only if $\alpha_{w_X}(\GX) \ge \alpha_{w_Y}(\GY)$.

Finally, we strengthen the construction to monotone submodular valuations using the same padding-and-shifting transformation as in the $\Theta_2^P$ hardness reduction.

\subsection{\coNP-hardness}
\label{subsec:overview-conp}
The \coNP-hardness reductions use FST no-instances along with identical puzzle goods for all agents.
The weak \coNP-hardness reduction starts from the three-agent nine-item no-instance of \citet{DBLP:conf/wine/FeigeST21}.
In this instance, each agent has total value $120$ and maximin share $40$, but no allocation gives all three agents value at least $40$.
Given a \textsc{Three-Way-Partition} instance of total weight $3T$, we add one puzzle good for each number and give it value $121w_j$ for every agent.

If the puzzle items can be split into three parts of weight exactly $T$, then every agent has MMS exactly $121T + 40$.
However, in any allocation, either some agent receives puzzle weight at most $T - 1$, in which case the loss of one unit of puzzle weight cannot be compensated even by all core goods, or every agent receives puzzle weight exactly $T$, in which case the FST no-instance prevents all agents from receiving core value at least $40$.
Thus a yes-instance of \textsc{Three-Way-Partition} maps to a no-instance of MMS existence.

Conversely, if there is no perfect three-way partition, we choose a puzzle partition maximizing the minimum load.
The bundles attaining the minimum load are the bottleneck bundles.
If there is one bottleneck bundle, we give it all core goods.
If there are two bottleneck bundles, we use the two-agent maximin share guarantee on the core goods to support the two corresponding agents.
The remaining agent receives a strictly larger puzzle load, which by the choice of the scaling factor already covers any possible core deficit.
This gives an MMS allocation.

For strong \coNP-hardness, we use the $n$-agent FST no-instance on the core goods and generalize the previous idea.
Here there are two types of agents: $n - 2$ \emph{row} agents and $2$ \emph{column} agents.
Every agent has MMS $\tau$, but no allocation of the core goods gives every agent value at least $\tau$.
We reduce from \threepartition and scale each puzzle item by $L = n\tau + 1$.
If the \threepartition instance is a yes-instance, then every agent has MMS exactly $LT + \tau$.
But any allocation either gives some agent puzzle load at most $T - 1$, which cannot be repaired even with all core goods, or gives every agent puzzle load exactly $T$, in which case the FST no-instance prevents all agents from receiving core value at least $\tau$.

If the \threepartition instance is a no-instance, we again maximize the minimum puzzle load.
Suppose exactly $r$ puzzle bundles attain this minimum value.
The proof then only needs to support these $r$ bottleneck bundles with core goods.
For this, we use a simple feasibility property of the FST core: for every $r \in [n - 1]$, there is a set of $r$ agents among whom the core goods can be allocated so that each receives her $r$-agent maximin share on the core goods.
The non-bottleneck agents receive puzzle load at least one larger, and the scaling factor $L = n\tau + 1$ makes this one additional unit dominate the entire core value.
Thus, a no-instance of \threepartition maps to a yes-instance of MMS existence.

The strong \coNP-hardness reduction also has an approximation consequence.
In the no-case of \threepartition, the constructed instance admits an exact MMS allocation, whereas in the yes-case every allocation leaves some agent short by one unit from her MMS value.
Since all numbers in the strong reduction are polynomially bounded, this gives an inverse-polynomial gap for the optimal MMS approximation ratio: the largest $\alpha$ such that the instance admits an allocation where every agent $i$ receives a bundle of value at least $\alpha \mms_i$.
Thus, an algorithm that computes the optimal ratio could distinguish the two cases of \threepartition.
Consequently, unless $\P = \NP$, there is no FPTAS for computing the optimal MMS approximation ratio.

\subsection{From goods to chores}
\label{subsec:from-goods-to-chores}
Finally, we transfer our hardness results from goods to chores.
The reduction is based on the following trick.
Given a goods instance with $n$ agents and $m$ goods, we first \emph{pad} the instance with zero-valued dummy items so that the total number of items becomes exactly $nq$, where $q = m$.
The dummy items do not change the goods MMS values, but they ensure that size-wise balanced allocations, with exactly $q$ items per agent, are possible.

We then transform the goods valuations into chore utilities by subtracting a large cardinality-dependent term.
In the additive case, the constructed utility has the form
\[
u_i^C(S) = v_i(S) - \alpha_1 |S|.
\]
The parameter $\alpha_1$ is chosen large enough so that a bundle having more than $q$ items is too costly to be part of an MMS allocation.
Thus, every MMS allocation in the chores instance must give exactly $q$ items to every agent.

For monotone submodular nonnegative $2$-additive valuations, we use a slightly more general transformation,
\[
u_i^C(S)
=
v_i(S) - \alpha_1 |S| + \alpha_2 \binom{|S|}{2}.
\]
The parameter $\alpha_2$ is chosen so that, after negating utilities to obtain chore costs, the resulting cost functions remain monotone submodular nonnegative $2$-additive.

The main consequence is that, on balanced bundles of size exactly $q$, the chore utility differs from the original goods value by the same agent-independent constant.
Therefore, each agent's chore MMS is exactly her goods MMS \emph{shifted} by this constant.
It follows that the original goods instance admits an MMS allocation if and only if the constructed chores instance admits an MMS allocation.
Consequently, all hardness results for goods transfer to the corresponding chores. 

%% file: 1_frontmatter/prelim.tex
\section{Preliminaries}
\label{sec:prelim}

\paragraph{Valuation functions.}
A valuation function $v_i : 2^M \to \mathbb{Q}$ is $k$-additive \citep{DBLP:journals/anor/ChevaleyreEEM08} if there is a coefficient function $\sigma_i : \{T \subseteq M : |T| \le k\} \to \mathbb{Q}$ such that for every $S \subseteq M$,
\[
    v_i(S)
    =
    \sum_{\substack{T \subseteq S \\ |T| \le k}}
    \sigma_i(T).
\]
Thus, the value of a bundle is obtained by adding all coefficients corresponding to subsets contained in it.
For an item $g \in M$, we call $\sigma_i(\{g\})$ the \emph{singleton coefficient} of $g$.
For two distinct items $g,h \in M$, we call $\sigma_i(\{g,h\})$ the \emph{pairwise coefficient} of the unordered pair $\{g,h\}$.
The case $k = 1$ is the usual additive case, up to the constant term $\sigma_i(\emptyset)$.
Throughout the paper, valuations are assumed to be normalized, that is, $\sigma_i(\emptyset) = 0$ for every agent $i$.
Note that there are no restrictions on the coefficients, thus a valuation need not be monotone and may take negative values as well. 

In our inputs, the valuations are represented explicitly.
In particular, we do not work in a value-oracle model. 
This is the standard Turing-machine viewpoint for computational hardness involving non-additive valuations: as emphasized by \citet{DBLP:journals/tcs/GoldbergHH25}, the question is how hard the problem is when the valuation functions are given by a succinct representation that permits efficient evaluation, as opposed to value-oracle lower bounds that hold for algorithms accessing valuations only through value queries.
The $k$-additive valuations are represented by the coefficients $\sigma_i(T)$ for all $T \subseteq M$ with $|T| \le k$.
Thus, a $k$-additive function can be represented using $\bigoh(m^k)$ many numbers.

In the following proposition, we show the membership of the \decision for $k$-additive valuations.
Its proof follows the arguments used by \citet{DBLP:journals/jair/TruszczynskiL20} for additive valuations.
Proofs in this section are deferred to \Cref{sec:missingproofs}.

\begin{restatable}{proposition}{kadditivemembership}
\label{prop:k-additive-membership-delta2p}
Deciding whether an MMS allocation exists for instances with $k$-additive valuations is in $\Delta_2^P$.
\end{restatable}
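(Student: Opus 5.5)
The plan is to give a polynomial-time algorithm with access to an \NP oracle. The algorithm first computes every agent's maximin share exactly, and then asks one final \NP query that uses these values as explicit thresholds. Fix $k$ as a constant, so each $v_i$ is given by $\bigoh(m^k)$ rational coefficients. Evaluating $v_i(S)$ for an explicitly given bundle $S$ means summing the coefficients of all subsets of $S$ of size at most $k$. That is $\bigoh(m^k)$ terms, so it takes polynomial time.

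\emph{Step 1: the MMS values have polynomial bit length.} Every bundle value $v_i(S)$ is a sum of at most $\sum_{j\le k}\binom{m}{j}$ input coefficients. Let $D$ be the common denominator of agent $i$'s coefficients. Then $D\cdot v_i(S)$ is an integer whose absolute value is at most $W := D\cdot\sum_T|\sigma_i(T)|$. Both $D$ and $W$ have bit length polynomial in the input size. Since $\mms_i$ equals some bundle value $v_i(P_j)$, the number $D\cdot \mms_i$ is an integer in $[-W,W]$.

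\emph{Step 2: compute each $\mms_i$ by binary search.} Consider the language of tuples $(i,t)$ such that some partition $(P_1,\dots,P_n)\in\Pi_n(M)$ satisfies $v_i(P_j)\ge t/D$ for all $j$. This language is in \NP: the certificate is the partition, and it is verified with $n$ polynomial-time evaluations. Binary search over the integers $t\in[-W,W]$ takes $\bigoh(\log W)$ queries, which is polynomially many. It returns the largest feasible $t$, and $\mms_i=t/D$. We repeat this for each of the $n$ agents.

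\emph{Step 3: the final query.} With every $\mms_i$ now written out explicitly, we ask the \NP oracle whether some allocation $(A_1,\dots,A_n)$ satisfies $v_i(A_i)\ge \mms_i$ for every $i$. The certificate is the allocation, and verification is again polynomial. We accept if and only if the oracle answers yes. The total work is polynomial time plus polynomially many adaptive \NP queries, which places the problem in $\P^{\NP}=\Delta_2^P$.

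The only delicate point, and what I expect to be the main obstacle, is Step 1. Valuations may be non-monotone and take negative values, so the search range must be bounded on both sides. The bound has to come from the explicit coefficient representation rather than from any monotonicity property. The bit-length bound relies on $k$ being fixed, so that the number of coefficients is polynomial. Once this is settled, the rest is the standard argument of \citet{DBLP:journals/jair/TruszczynskiL20}.
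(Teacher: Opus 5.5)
Your proposal is correct and follows essentially the same route as the paper's proof: compute each $\mms_i$ by binary search with polynomially many adaptive \NP queries over a coefficient-derived range $[-B_i,B_i]$, then ask one final \NP query with the explicit thresholds. The only cosmetic difference is that the paper first rescales each agent's coefficients to integers, while you keep rationals and search over thresholds $t/D$; the two are equivalent.
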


\subparagraph{Submodular valuations.}
A valuation function $v_i : 2^M \to \mathbb{Q}$ is submodular if it satisfies the following property: for every $S \subseteq T \subseteq M$ and every $x \in M \setminus T$,
\[
    v_i(S \cup \{x\}) - v_i(S)
    \ge
    v_i(T \cup \{x\}) - v_i(T).
\]

In the following lemma, we characterize $2$-additive functions that are submodular.

\begin{restatable}{lemma}{nonpositivepairssubmodular}
\label{lem:two-additive-submodular}
\label{lem:nonpositive-pairs-submodular}
Let $v_i$ be a $2$-additive valuation of the form
$
    v_i(S)
    =
    \sigma_i(\emptyset)
    +
    \sum_{g \in S} \sigma_i(\{g\})
    +
    \sum_{\{g,h\} \subseteq S} \sigma_i(\{g,h\}).
$
Then, $\sigma_i(\{g,h\}) \le 0$ for all distinct items $g,h \in M$ if and only if $v_i$ is submodular.
\end{restatable}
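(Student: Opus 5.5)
The plan is to compute the marginal values of a $2$-additive function explicitly and read submodularity off them. Fix $S \subseteq M$ and $x \in M \setminus S$. Expanding the definition, every coefficient indexed by a subset of $S$ cancels in $v_i(S \cup \{x\}) - v_i(S)$. The only surviving terms are those whose index set contains $x$, namely the singleton $\{x\}$ and the pairs $\{x,g\}$ with $g \in S$. This gives
\[
    \Delta_x(S) := v_i(S \cup \{x\}) - v_i(S) = \sigma_i(\{x\}) + \sum_{g \in S} \sigma_i(\{x,g\}).
\]
The constant term $\sigma_i(\emptyset)$ cancels, so normalization plays no role.

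For the forward direction, assume every pairwise coefficient is nonpositive. Take $S \subseteq T \subseteq M$ and $x \in M \setminus T$. By the formula,
\[
    \Delta_x(S) - \Delta_x(T) = -\sum_{g \in T \setminus S} \sigma_i(\{x,g\}) \ge 0.
\]
This is exactly the submodularity inequality.

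For the converse, assume $v_i$ is submodular and fix distinct items $g,h \in M$. Apply the definition with $S = \emptyset$, $T = \{g\}$ and $x = h$; this is allowed since $h \notin T$. The formula gives $\Delta_h(\emptyset) = \sigma_i(\{h\})$ and $\Delta_h(\{g\}) = \sigma_i(\{h\}) + \sigma_i(\{g,h\})$. Submodularity requires the first to be at least the second, which is precisely $\sigma_i(\{g,h\}) \le 0$.

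The only step needing any care is the bookkeeping in the marginal-value identity. The pairwise sum ranges over unordered pairs inside the bundle, so each pair $\{x,g\}$ must be counted exactly once. I expect no real obstacle beyond this.
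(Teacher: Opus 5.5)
Your proposal is correct and follows essentially the same route as the paper. You derive the marginal-value identity $\sigma_i(\{x\}) + \sum_{g \in S}\sigma_i(\{x,g\})$ and split $T = S \cup (T \setminus S)$ for the forward direction, and you instantiate submodularity with $S = \emptyset$, $T = \{g\}$, $x = h$ for the converse, exactly as the paper does.
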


\subparagraph{Goods and chores.}
In a goods instance, the valuations are nonnegative, and larger values are preferred.
In a chores instance, the valuations are nonpositive, and larger values are still preferred.
When discussing chores instances, it is often convenient to use costs instead of valuations.
The \emph{cost} of a bundle is defined as the negation of its value:
\[
    c_i(S) = -v_i(S)
    \qquad
    \text{for every agent }i \text{ and } S \subseteq M.
\]
Thus, costs are nonnegative, and smaller costs are preferred.

For a chores instance, the maximin share of an agent $i$ can equivalently be written in cost notation as
\[
    \mms_i
    =
    \max_{(P_1, \dots, P_n)\in \Pi_n(M)}
    \min_{j \in [n]} v_i(P_j)
    =
    -
    \min_{(P_1, \dots, P_n)\in \Pi_n(M)}
    \max_{j \in [n]} c_i(P_j).
\]
Consequently, an allocation $A = (A_1, \dots, A_n)$ is an MMS allocation for a chores instance iff
\[
    c_i(A_i)
    \le
    \min_{(P_1, \dots, P_n)\in \Pi_n(M)}
    \max_{j \in [n]} c_i(P_j)
    \qquad
    \text{for every agent }i.
\]

\paragraph{Complexity classes and source problems.}
We establish our hardness results using standard polynomial-time many-one reductions.
For our \NP and \coNP hardness results, we use \threewaypartition and \threepartition.
The problem \threepartition is strongly NP-hard \citep{DBLP:books/fm/GareyJ79}: given a multiset $W = \{w_1, \ldots, w_{3n}\}$ of positive integers and an integer $T$ such that $\sum_{j = 1}^{3n} w_j = nT$ and $\frac{T}{4} < w_j < \frac{T}{2}$ for every $j \in [3n]$, decide whether $W$ can be partitioned into $n$ triples, each of sum exactly $T$.
The problem \threewaypartition is weakly NP-hard: given a multiset $W = \{w_1, \dots, w_m\}$ of positive integers and an integer $T$ such that $\sum_{j = 1}^m w_j = 3T$, decide whether $W$ can be partitioned into three parts, each of sum exactly $T$.

The class $D^P$, introduced by \citet{DBLP:journals/jcss/PapadimitriouY84}, is the second level of the Boolean hierarchy: a language $L$ is in $D^P$ if there exist languages $L_1 \in \NP$ and $L_2 \in \coNP$ such that $L = L_1 \cap L_2$.
It contains the classes \NP and \coNP.
A canonical complete problem is \textsc{(SAT,UNSAT)} \citep{DBLP:journals/jcss/PapadimitriouY84}: given two Boolean formulae $\varphi$ and $\psi$, decide whether $\varphi$ is satisfiable and $\psi$ is unsatisfiable.
We use the following $D^P$-hard problem: \threepartitionnotthreepartition, where, given two instances $X$ and $Y$ of \threepartition, the question is to decide whether $X$ is a yes-instance and $Y$ is a no-instance.
We show that it is $D^P$-hard in \Cref{sec:dp}.

The class $\Theta_2^P$ (also denoted by $P^{\NP[\log n]}$) is the class of problems decidable in polynomial time with logarithmically many adaptive queries to an $\NP$ oracle.
Equivalently, it is the class of problems decidable using polynomially many nonadaptive queries to an $\NP$ oracle.
It contains the class $D^P$.
We use the following $\Theta_2^P$-complete problem \citep{DBLP:conf/fsttcs/SpakowskiV00}: \compmis, where, given two graphs $G$ and $H$, the question is to decide if the independence number of $G$ is at least as large as that of $H$.

The class $\Delta_2^P$ (also denoted by $P^\NP$) is the class of problems decidable in polynomial time with polynomially many adaptive queries to an $\NP$ oracle.
It contains the class $\Theta_2^P$.
Deciding whether the last variable is set to true in the lexicographically maximum satisfying assignment is a canonical $\Delta_2^P$-complete problem \citep{DBLP:journals/jcss/Krentel88}.
We use the following $\Delta_2^P$-hard problem: \compmwis, where, given two weighted graphs $(G, w_G)$ and $(H, w_H)$, the question is to decide whether the weighted independence number of $G$ is at least as large as that of $H$.
We show that it is $\Delta_2^P$-hard in \Cref{sec:delta2p}.

\paragraph{Verification problem}
We present a simple hardness result for the problem of verifying whether a given allocation is an MMS allocation or not.
We include the proof in \Cref{sec:missingproofs} for completeness, since we are not aware of an explicit proof in the literature.

\begin{restatable}{proposition}{verificationmms}
\label{prop:verifying-mms-conp-complete}
Given an additive goods/chores instance and an allocation, deciding whether the allocation is an MMS allocation or not is \coNP-complete. 
\end{restatable}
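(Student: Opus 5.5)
We prove the two parts separately: membership in \coNP, and \coNP-hardness. The same arguments apply to goods and to chores.

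\emph{Membership.} The complement problem asks whether a given allocation $A=(A_1,\dots,A_n)$ fails to be an MMS allocation. This is in \NP. A certificate is an agent $i$ together with a partition $(P_1,\dots,P_n)\in\Pi_n(M)$ satisfying
\[
\min_{j\in[n]} v_i(P_j) > v_i(A_i).
\]
Any such partition witnesses $\mms_i > v_i(A_i)$, so agent $i$ is not satisfied. Conversely, if the allocation is not MMS, some agent $i$ has $v_i(A_i)<\mms_i$, and any partition attaining $\mms_i$ serves as the witness. Checking the certificate takes polynomial time, since the valuations are additive and given explicitly. The chores case is identical: in cost notation the witness is a partition with $\max_j c_i(P_j) < c_i(A_i)$.

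\emph{Hardness for goods.} We reduce from \partition, the complement of which is \coNP-complete. Let an instance consist of positive integers $w_1,\dots,w_m$ with total $2T$; the question is whether they split into two parts of sum $T$ each. Build an instance with two identical agents, each valuing good $j$ at $2w_j$, plus one extra good $g$ valued $1$ by both agents. As the allocation, give agent~1 a bundle of value exactly $2T-1$ if one is easy to write down; the simplest fixed choice is instead to specify the threshold directly by the following construction.

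Let agent~1 receive $g$ alone and agent~2 receive all number goods, and add a third agent whose valuation we can choose freely. A cleaner route is to use two agents with identical valuation $v(j)=2w_j$ and no extra good, together with the allocation
\[
A_1=\emptyset,\qquad A_2=M.
\]
This fails only through agent~1. Agent~1 has $\mms_1 \ge 2$ whenever $m\ge 2$, and $v_1(\emptyset)=0$, so the allocation is always non-MMS. This choice is therefore too trivial to encode \partition, and the construction needs a calibrated bundle for agent~1.

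The calibrated construction is as follows. Add a good $g$ that agent~1 values $2T-1$ and agent~2 values $0$. Agent~1 also values each number good $j$ at $2w_j$. Both agents are present, so $n=2$. Set $A_1=\{g\}$ and let $A_2$ be all number goods, with agent~2 valuing only the number goods, so agent~2 is trivially satisfied (she receives everything she values). For agent~1, $\mms_1$ is governed by the max--min two-partition of $\{2w_j\}\cup\{2T-1\}$. If a perfect partition exists, put $g$ alone and the number goods as one bundle; this gives min $2T-1$, which is not more than $v_1(A_1)$.

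The delicate step is calibrating agent~1's valuation so that $\mms_1 > v_1(A_1)$ holds exactly when a perfect partition exists. I would therefore use three agents, where the third agent absorbs $g$-type goods and keeps the other agents' bundles fixed. Agent~1's MMS over the number goods with an $n=3$ partition then becomes $\mms_1 = T'$ if and only if the partition exists. The scaling (values $2w_j$, with bundle value $2T-1$ given via a dedicated good) makes the odd threshold distinguish ``exactly $T$ per part'' from ``at most $T-1$''.

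Concretely, the target is a setup in which, given the bundle of value $2T-1$, we have $\mms_1 \ge 2T$ if and only if the number goods split evenly, and $\mms_1 \le 2T-2$ otherwise. This follows from parity, since all relevant bundle values are even. That equivalence gives \coNP-hardness of verification.

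\emph{Chores.} Negate all valuations, using the analogous min--max \partition threshold, or apply the goods-to-chores transfer of \Cref{thm:goods-to-chores-transfer} with the allocation padded correspondingly.
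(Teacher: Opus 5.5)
Your membership argument is correct, and so is the overall strategy of reducing from the complement of \partition. The gap is in the goods hardness: you never commit to a construction, and each one you do write down fails.
\begin{itemize}
\item In your two-agent version (number goods worth $2w_j$, a good $g$ worth $2T-1$, $A_1=\{g\}$), take $w=(1,1,4)$ and $T=3$. Agent~1's values are $2,2,8$ and $5$, and the $2$-partition $\{8\},\{2,2,5\}$ gives $\mms_1\ge 8>5$, even though no even split exists.
\item In the three-agent version with the same goods, the total is $6T-1$. Hence $\mms_1\le 2T-1=v_1(\{g\})$ for every input, and the allocation is always MMS. Letting a third agent ``absorb'' $g$ changes nothing, because $\mms_1$ depends only on $v_1$ and the item set, not on the allocation.
\item Your closing parity claim (``all relevant bundle values are even'') contradicts agent~1 valuing a good at the odd value $2T-1$. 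The dichotomy $\mms_1\ge 2T$ versus $\mms_1\le 2T-2$ is neither proved nor needed.
\end{itemize}

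The missing idea is a second special good that makes the total exactly $3\cdot 2T$ and turns the leftover into a rigid third bundle. Use three agents, with agents $2$ and $3$ valuing everything at $0$. Agent~$1$ values number good $j$ at $2w_j$, a good $a$ at $2T-1$, and a good $b$ at $1$, and she receives only $\{a\}$.
\begin{itemize}
\item If the numbers split evenly, the two halves together with $\{a,b\}$ give $\mms_1\ge 2T>v_1(A_1)$.
\item Conversely, if $\mms_1>2T-1$, then integrality gives $\mms_1\ge 2T$, which is the average. So every bundle of a witnessing $3$-partition has value exactly $2T$. The bundle containing $a$ needs extra value exactly $1$, and every number good has even value at least $2$, so that bundle is exactly $\{a,b\}$. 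The other two bundles then split the numbers evenly.
\end{itemize}
This is the paper's construction.

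For chores, plain negation does not work. Negating this gadget leaves agent~$1$ with cost $2T-1$, below the average cost $2T$, so she is always satisfied; the paper instead uses costs $2T+1$ and $1$ and gives agent~$1$ both $a$ and $b$. Your alternative through \Cref{lem:goods-to-chores-transfer} is valid once the goods case is in place. Pad the given allocation to bundles of exactly $q$ items; then bundle utilities and MMS values both shift by the same $\Gamma_q$.
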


%% file: 2_no_instances/no-instances.tex
\section{No instances for MMS allocations}
\label{sec:no-instances}

In this section, we discuss the details of the MMS no-instance constructions of \citet{DBLP:journals/jacm/KurokawaPW18} and \citet{DBLP:conf/wine/FeigeST21}, which our hardness reductions build upon.
These constructions are based on a small set of goods arranged as selected entries of a matrix, which we call the \emph{core}.
The entries of the matrix correspond to the values of the goods, and the valuation functions of the agents are obtained by perturbing this core in carefully chosen ways.

The constructed cores have strong structural properties: any allocation that gives every agent a certain target value must have a very restricted form.
Our reductions exploit this rigidity by adding further goods and using them to encode computationally hard problems.

The main takeaways from this section are summarized in \Cref{prop:KPW-core-construction,prop:FST21-3-agent,prop:FST-core-no-instance}.
A reader interested primarily in the reductions may skip the details of the constructions and proceed directly to these propositions, which contain all the structural properties used later.

We first revisit the construction of \citet{DBLP:journals/jacm/KurokawaPW18}.
We then revisit two no-instances due to \citet{DBLP:conf/wine/FeigeST21}.
The first is an $n$-agent no-instance, which will be used in our proof of strong \coNP-hardness.
The second is their $3$-agent no-instance, which will be used in our proof of weak \coNP-hardness for $n = 3$.

\input{./2_no_instances/kpw-core}
\input{./2_no_instances/fst-core}

%% file: 2_no_instances/kpw-core.tex
\subsection{The KPW core}
\label{sec:KPW-core}
\label{subsec:KPW-core}

In this subsection, we discuss the core matrix underlying the no-instance of \citet{DBLP:journals/jacm/KurokawaPW18}.
Their construction involves small parameters.
We choose these parameters explicitly and then clear denominators, since our later reductions require an integer-valued core with polynomial bit complexity.

Fix an integer $n \ge 4$.
Let $K = n^3$ and $\varepsilon = 2^{-K}$.
We first define two $n \times n$ matrices $S$ and $T$.
The matrix $S$ gives the unadjusted row-and-column structure, while $T$ is a small adjustment used to enforce the rigidity.

Let $S$ be an $n\times n$ matrix defined as follows. 
\begin{tcolorbox}
\begingroup
\small
\[
\setlength{\arraycolsep}{3pt}
\renewcommand{\arraystretch}{1.25}
\begin{array}{@{}c@{\qquad}c@{}}
\begin{aligned}
S_{i,j}
=
\begin{cases}
    \dfrac{2^{n-i}-1}{2^{n-i}}, & \text{if } i=j<n, \\[6pt]
    \dfrac{1}{2^{n-i}}, & \text{if } j=n \text{ and } i<n, \\[6pt]
    \dfrac{1}{2^{n-j}}, & \text{if } i=n \text{ and } j<n, \\[6pt]
    \dfrac{1}{2^{n-1}}, & \text{if } i=j=n, \\[6pt]
    0, & \text{otherwise.}
\end{cases}
\end{aligned}
&
\begin{aligned}
\text{Thus, }
S=
\begin{pmatrix}
\dfrac{2^{n-1}-1}{2^{n-1}} & 0 & \cdots & 0 & \dfrac{1}{2^{n-1}} \\[6pt]
0 & \dfrac{2^{n-2}-1}{2^{n-2}} & \cdots & 0 & \dfrac{1}{2^{n-2}} \\[6pt]
\vdots & \vdots & \ddots & \vdots & \vdots \\[4pt]
0 & 0 & \cdots & \dfrac{1}{2} & \dfrac{1}{2} \\[6pt]
\dfrac{1}{2^{n-1}} & \dfrac{1}{2^{n-2}} & \cdots & \dfrac{1}{2} & \dfrac{1}{2^{n-1}}
\end{pmatrix}.
\end{aligned}
\end{array}
\]
\endgroup
\end{tcolorbox}
The nonzero entries of $S$ consist of the diagonal entries, the last row, and the last column.
Observe that every row and every column of $S$ sums to $1$.


We now define the adjustment matrix $T$.
For every $i \in [n-2]$, let $r_i$ and $c_i$ be defined as follows
\[
    r_i = \varepsilon^{2n-2i-2}
    \qquad\text{and}\qquad
    c_i = \varepsilon^{2n-2i-3}.
\]
Thus, we have $0<r_1\ll c_1\ll r_2\ll c_2\ll \cdots \ll r_{n-2}\ll c_{n-2}=\varepsilon$.
Next, let
\[
    u_i
    =
    \sum_{\substack{j\le i\\ j\equiv i \pmod 2}} c_j
    -
    \sum_{\substack{j\le i\\ j\not\equiv i \pmod 2}} r_j,
\]
and
\[
    v_i
    =
    \sum_{\substack{j\le i\\ j\equiv i \pmod 2}} r_j
    -
    \sum_{\substack{j\le i\\ j\not\equiv i \pmod 2}} c_j.
\]
Finally, we set
\[
    x=v_{n-2},
    \qquad
    y=u_{n-2}, \qquad
\text{ and }\qquad
    z
    =
    \sum_{\substack{j\le n-2\\ j\equiv n \pmod 2}} (r_j+c_j).
\]

Note that each of the quantities $u_i,v_i,x,y,z$ is positive. 
For $v_i$ and $u_i$, this is true because the leading positive term is larger than the sum of all lower-order terms.

Let $T$ be the matrix defined as follows.
\begin{tcolorbox}[left=0.1mm, right=0mm]
\begingroup
\small
\[
\setlength{\arraycolsep}{2.5pt}
\renewcommand{\arraystretch}{1.25}
\begin{array}{@{}c@{\quad}c@{}}
\begin{aligned}
T_{i,j}
=
\begin{cases}
    u_j, & \text{if } i=j+1 \text{ and } j\le n-2, \\[4pt]
    v_i, & \text{if } j=i+1 \text{ and } i\le n-2, \\[4pt]
    -r_i, & \text{if } j=n \text{ and } i\le n-2, \\[4pt]
    -c_j, & \text{if } i=n \text{ and } j\le n-2, \\[4pt]
    -x, & \text{if } i=n \text{ and } j=n-1, \\[4pt]
    -y, & \text{if } i=n-1 \text{ and } j=n, \\[4pt]
    z, & \text{if } i=j=n, \\[4pt]
    0, & \text{otherwise.}
\end{cases}
\end{aligned}
&
\begin{aligned}
\text{Thus, } T
=
\begin{pmatrix}
0 & v_1 & 0 & 0 & \cdots & 0 & 0 & -r_1 \\
u_1 & 0 & v_2 & 0 & \cdots & 0 & 0 & -r_2 \\
0 & u_2 & 0 & v_3 & \cdots & 0 & 0 & -r_3 \\
0 & 0 & u_3 & 0 & \ddots & 0 & 0 & -r_4 \\
\vdots & \vdots & \vdots & \ddots & \ddots & \ddots & \vdots & \vdots \\
0 & 0 & 0 & \cdots & u_{n-3} & 0 & v_{n-2} & -r_{n-2} \\
0 & 0 & 0 & \cdots & 0 & u_{n-2} & 0 & -y \\
-c_1 & -c_2 & -c_3 & \cdots & -c_{n-3} & -c_{n-2} & -x & z
\end{pmatrix}.
\end{aligned}
\end{array}
\]
\endgroup
\end{tcolorbox}

Let $M=S+T$, and let $M^+=\{(i,j):M_{i,j}>0\}$ be the positive entries of $M$.
We interpret each entry of $M^+$ as a good whose value is the corresponding entry of $M$.
For a set $X\subseteq M^+$, let
\[
    M(X)=\sum_{(i,j)\in X}M_{i,j},
    \qquad
    S(X)=\sum_{(i,j)\in X}S_{i,j},
    \qquad
    T(X)=\sum_{(i,j)\in X}T_{i,j}.
\]

\begin{lemma}
    \label{lem:KPW-eight-properties}
    The matrix $M=S+T$ satisfies the following properties.
    \begin{enumerate}
        \item[\textnormal{[P1]}] 
        For every $i,j$, we have $M_{i,j}\ge 0$. 
        Moreover, if $S_{i,j}\ne 0$ or $T_{i,j}\ne 0$, then $M_{i,j}>0$.

        \item[\textnormal{[P2]}]
        For every $i,j$,
        \[
            |M_{i,j}-S_{i,j}|<\frac{2^{-n}}{4n}.
        \]

        \item[\textnormal{[P3]}] 
        Every row and every column of $M$ sums to $1$.

        \item[\textnormal{[P4]}] 
        Let $i\in[n-1]$. 
        If $X\subseteq M^+$ satisfies $(i,i)\in X$ and $M(X)=1$, then exactly one of the following five alternatives holds:
        \begin{enumerate}
            \item $(i,n)\in X$;
            \item $(n,i)\in X$;
            \item $(1,n),(2,n),\dots,(i-1,n),(n,n)\in X$;
            \item $(n,1),(n,2),\dots,(n,i-1),(n,n)\in X$;
            \item there exist $j,k<i$ such that $(j,n),(n,k)\in X$.
        \end{enumerate}

        \item[\textnormal{[P5]}] 
        For every $i\in[n-2]$, if $X\subseteq M^+$ satisfies $M(X)=r_i$, then
        \[
            X=
            \begin{cases}
                \{(1,2)\}, & \text{if } i=1,\\
                \{(i,i-1),(i,i+1)\}, & \text{if }2\le i\le n-2.
            \end{cases}
        \]

        \item[\textnormal{[P6]}] 
        For every $i\in[n-2]$, if $X\subseteq M^+$ satisfies $M(X)=c_i$, then
        \[
            X=
            \begin{cases}
                \{(2,1)\}, & \text{if } i=1,\\
                \{(i-1,i),(i+1,i)\}, & \text{if }2\le i\le n-2.
            \end{cases}
        \]

        \item[\textnormal{[P7]}] 
        If $X\subseteq M^+$ satisfies $M(X)=x$, then $X=\{(n-2,n-1)\}$.

        \item[\textnormal{[P8]}] 
        If $X\subseteq M^+$ satisfies $M(X)=y$, then $X=\{(n-1,n-2)\}$.
    \end{enumerate}
\end{lemma}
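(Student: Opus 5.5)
The natural approach is to exploit the extreme separation of scales: every nonzero entry of $T$ is a signed sum of the quantities $r_j,c_j$, each a power of $\varepsilon=2^{-K}$. Moreover, the entries of $S$ are dyadic rationals with denominators at most $2^{n-1}$, far coarser than $\varepsilon$. For any $X\subseteq M^+$ we therefore split $M(X)=S(X)+T(X)$. The identity $M(X)=\text{target}$ forces two things: $S(X)$ must equal the $S$-part of the target, since $|T(X)|<n^2\varepsilon\ll 2^{-(n-1)}$ and $S(X)$ lies on the lattice $2^{-(n-1)}\mathbb{Z}$. Then $T(X)$ must equal the $T$-part exactly. The latter is analyzed as a formal signed combination of the monomials $\varepsilon^{k}$, $1\le k\le 2n-6$. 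Since every entry of $T$ has coefficients in $\{-1,0,1\}$ on each monomial and $|X|\le 3n$, a coefficient comparison is legitimate: a sum with integer coefficients bounded by $3n<2^K$ vanishes only if all coefficients vanish.

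\textbf{[P1]--[P3].} For [P3] I would check row and column sums of $T$ directly; by [P2]-type smallness and the row/column sums of $S$, it suffices that each row and column of $T$ sums to $0$. Row $i\le n-2$ gives $u_{i-1}+v_i-r_i$ (with $u_0=0$). Using the recursions $u_i=c_i-v_{i-1}$ and $v_i=r_i-u_{i-1}$, which follow from the parity definitions, this vanishes. Row $n-1$ gives $u_{n-2}-y=0$, and row $n$ gives $-\sum c_j - x + z$. This last sum vanishes by unwinding $x=v_{n-2}$ and the definition of $z$. Columns are symmetric. [P1] and [P2] follow since $|T_{i,j}|\le 2\varepsilon$ while every nonzero $S_{i,j}\ge 2^{-(n-1)}$. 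Where $S_{i,j}=0$ but $T_{i,j}\ne0$, $T_{i,j}$ is some $u_j$ or $v_i$, which is positive. I would also record that the entries $(j,n)$ and $(n,j)$ with $j\le n-2$ stay positive.

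\textbf{[P5]--[P8].} The target has $S$-part $0$, so $X$ consists only of off-diagonal superdiagonal and subdiagonal goods, whose $S$-value is zero. Using the leading monomial of each $u_j$ (which is $c_j$) and $v_i$ (which is $r_i$), I would argue by induction on the largest monomial present. The largest-index good in $X$ determines the leading term, and this must match the target. Then the subtraction recursion pins down the next good. For example, $M(X)=r_i$ forces $(i,i+1)\in X$, contributing $v_i=r_i-u_{i-1}$, and the remainder $u_{i-1}$ must be supplied by $(i,i-1)$. Uniqueness follows from the coefficient comparison, because the monomials $r_j,c_j$ are all distinct powers of $\varepsilon$. [P7] and [P8] are the cases $x=v_{n-2}$ and $y=u_{n-2}$, each of which is a single entry.

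\textbf{[P4].} This is the main obstacle. Given $(i,i)\in X$ and $M(X)=1$, the $S$-part must equal $1$. Since $S_{i,i}=1-2^{-(n-i)}$, $X$ must collect additional $S$-mass exactly $2^{-(n-i)}$. This mass comes from last-row and last-column entries, and possibly other diagonal entries, although a second diagonal entry exceeds the budget once mass $\ge 1/2$ is counted; I would handle this with a binary-expansion argument. The dyadic values available are $2^{-(n-j)}$ from $(j,n)$ and $(n,j)$, plus $2^{-(n-1)}$ from $(n,n)$. Writing $2^{-(n-i)}$ as such a sum gives exactly the alternatives: a single index-$i$ entry, as in (a) and (b); or entries of smaller index. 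Within the smaller-index case, either all of them come from one side together with $(n,n)$, as in (c) and (d), or the sum mixes the two sides, as in (e). The delicate part is showing that the options listed in (c)--(e) are the only way to reach exactly $2^{-(n-i)}$ with lower indices. For this I would use the fact that each dyadic size $2^{-(n-j)}$ occurs at most twice, once per side, with $2^{-(n-1)}$ available three times. The alternatives are mutually exclusive by their explicit content, which I would check case by case. Note that [P4] itself needs only the $S$-equation, not the $T$-equation, which keeps this step combinatorial.
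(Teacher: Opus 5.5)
\textbf{Agreement with the paper.} Your plan for [P1]--[P3] and [P5]--[P8] is sound and essentially follows the paper. The paper uses the numerical dominance $\lambda_t > 16n\sum_{s<t}\lambda_s$ where you use a formal coefficient comparison in powers of $\varepsilon$; these amount to the same thing. Two slips are harmless: the exponents run up to $2n-4$ rather than $2n-6$, and $|X|$ can be as large as $|M^+|=5n-6$ rather than $3n$. Your ``at least one alternative'' argument for [P4], using only the $S$-equation, is also fine.

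\textbf{The gap: ``exactly one'' in [P4] for $i=1$.} You assert that only the $S$-equation is needed and that the alternatives are disjoint ``by their explicit content''. This fails for $i=1$. The lists in (c) and (d) are empty, so both alternatives reduce to the single condition $(n,n)\in X$. The $S$-equation does not exclude this, since
$S(\{(1,1),(n,n)\}) = (1-2^{-(n-1)}) + 2^{-(n-1)} = 1$.
For such an $X$, (c) and (d) would hold simultaneously. To rule it out you must use the $T$-part, as the paper does. Once $(1,1)$ and $(n,n)$ exhaust the $S$-budget, every further good in $X$ has $S$-value $0$. Such a good is therefore a super- or subdiagonal good, with positive value $u_j$ or $v_j$. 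Together with $T_{1,1}=0$ and $T_{n,n}=z>0$, this gives $M(X)\ge 1+z>1$, a contradiction.

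\textbf{Exclusivity for $i\ge 2$.} Here the alternatives are not disjoint by content either. For example, (a) and (b) both hold for any set containing $(i,n)$ and $(n,i)$. Likewise, (c) and (e) both hold for a set containing $(1,n),\dots,(i-1,n),(n,n)$ together with some $(n,k)$, $k<i$. Exclusivity instead comes from the $S$-budget. Each of (a)--(d) already supplies the entire residual mass $2^{-(n-i)}$, so any entry demanded by a second alternative overshoots it. That part is purely an $S$-argument, so once you add the $T$-based step for $i=1$, your plan for [P4] goes through.
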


\begin{proof}
    We first study the sequence $r_1,c_1,r_2,c_2,\dots,r_{n-2},c_{n-2}$, which will help us in subsequent arguments.

    \begin{claim-inside-lemma}  
        \label{clm:dominance}
        Every term in the sequence $r_1,c_1,r_2,c_2,\dots,r_{n-2},c_{n-2}$ is larger than the sum of all preceding terms.
    \end{claim-inside-lemma}
    \begin{proof}
        Let $\lambda_t=\varepsilon^{2n-3-t}$ for each $t \in [2n-4]$.
        Then, we have 
        \[
            \lambda_1=r_1,\quad \lambda_2=c_1,\quad \lambda_3=r_2,\quad \lambda_4=c_2,\quad \dots,\quad \lambda_{2n-5}=r_{n-2},\quad \lambda_{2n-4}=c_{n-2}.
        \]
        Therefore, for every $t\ge 2$,
        \[
            \sum_{s<t}\lambda_s
            =
            \varepsilon^{2n-3-(t-1)}
            +
            \varepsilon^{2n-3-(t-2)}
            +\cdots+
            \varepsilon^{2n-4}
            \le
            \lambda_t(\varepsilon+\varepsilon^2+\cdots)
            =
            \lambda_t\frac{\varepsilon}{1-\varepsilon}.
        \]
        Since $n\ge 4$ and $\varepsilon=2^{-n^3}$, we have
        \[
            \frac{\varepsilon}{1-\varepsilon}<\frac{1}{16n}.
        \]
        Hence,
        \[
            \lambda_t>16n\sum_{s<t}\lambda_s
            \qquad\text{for every }t\ge 2.
        \]
    \end{proof}
    We now prove the first two properties [P1] and [P2]. 
    First observe that every quantity $u_i,v_i,x,y,z$ is positive. 
    The leading term in the expression of $u_i$ is $c_i$, and all negative terms appearing in $u_i$ are among $r_1,c_1,\dots,r_{i-1},c_{i-1},r_i$.
    By \Cref{clm:dominance}, their total cannot cancel $c_i$. 
    Similarly, the leading term in the expression of $v_i$ is $r_i$, and the sum of negative terms appearing in it is smaller than $r_i$. 
    Thus, $v_i>0$. 
    Since $x=v_{n-2}$ and $y=u_{n-2}$, we also have $x,y>0$. 
    Finally, $z$ is a sum of positive terms and thus, $z>0$.

    The only negative entries of $T$ are $-r_i$ ($1 \le i \le n-2$), $-c_j$ ($1 \le j \le n-2$), $-x$, and $-y$.
    Each of these has absolute value at most
    \[
        \sum_{s=1}^{2n-4}\lambda_s
        \le
        \frac{\varepsilon}{1-\varepsilon}
        <
        2^{-n}.
    \]
    These negative entries occur only in the last row or the last column of $T$. 
    Every positive entry of $S$ in the last row or last column has value at least $2^{-(n-1)}$. 
    Hence, whenever such a negative entry is added to a positive entry of $S$, the resulting entry is at least $2^{-(n-1)} - 2^{-n} = 2^{-n} > 0$.

    We now verify property [P1]. 
    If $S_{i,j}=T_{i,j}=0$, then $M_{i,j}=0$. 
    If $S_{i,j}\ne 0$ and $T_{i,j}\ge 0$, then clearly $M_{i,j}=S_{i,j}+T_{i,j}>0$. 
    If $S_{i,j}\ne 0$ and $T_{i,j}<0$, then $(i,j)$ lies in the last row or the last column, and the preceding paragraph shows that $M_{i,j}>0$. 
    Finally, if $S_{i,j}=0$ and $T_{i,j}\ne 0$, then $T_{i,j}$ cannot be negative, because all negative entries of $T$ occur only where the corresponding entry of $S$ is positive. 
    Hence $T_{i,j}>0$, and so $M_{i,j}=T_{i,j}>0$.

    Therefore, $M_{i,j}\ge 0$ for every $i,j$. 
    Moreover, if $S_{i,j}\ne 0$ or $T_{i,j}\ne 0$, then $M_{i,j}>0$. 
    This proves property [P1].

    Every nonzero entry of $T$ is one of $u_i,v_i,-r_i,-c_i,-x,-y,z$.
    Each of these has absolute value at most $\sum_{s=1}^{2n-4}\lambda_s \le \frac{\varepsilon}{1-\varepsilon} < \frac{2^{-n}}{4n}$.
    Therefore, for every $i,j$, $|M_{i,j}-S_{i,j}|=|T_{i,j}|<\frac{2^{-n}}{4n}$.
    This proves property [P2].
    
    We next prove that every row and every column of $M$ sums to $1$ (property [P3]). 
    Since every row and every column of $S$ sums to $1$, it suffices to prove that every row and every column of $T$ sums to $0$.

    We start with the row sums of $T$. 
    Row $1$ has two nonzero entries, namely $v_1$ in column $2$ and $-r_1$ in column $n$. 
    Since $v_1 = r_1$, the sum of row $1$ is $0$.

    Now consider any row $2\le i\le n-2$. 
    The nonzero entries of row $i$ are $u_{i-1}$ (in column $i-1$), $v_i$ (in column $i+1$), and $-r_i$ (in column $n$).
    We claim that $u_{i-1}+v_i=r_i$.
    Expanding $u_{i-1}$,
    \[
        u_{i-1}
        =
        \sum_{\substack{j\le i-1\\ j\equiv i-1 \pmod 2}} c_j
        -
        \sum_{\substack{j\le i-1\\ j\not\equiv i-1 \pmod 2}} r_j.
    \]
    Since $j\equiv i-1 \pmod 2$ is equivalent to $j\not\equiv i \pmod 2$, this can be rewritten as
    \[
        u_{i-1}
        =
        \sum_{\substack{j\le i-1\\ j\not\equiv i \pmod 2}} c_j
        -
        \sum_{\substack{j\le i-1\\ j\equiv i \pmod 2}} r_j.
    \]
    Also,
    \[
        v_i
        =
        \sum_{\substack{j\le i\\ j\equiv i \pmod 2}} r_j
        -
        \sum_{\substack{j\le i\\ j\not\equiv i \pmod 2}} c_j
        =
        r_i+
        \sum_{\substack{j\le i-1\\ j\equiv i \pmod 2}} r_j
        -
        \sum_{\substack{j\le i-1\\ j\not\equiv i \pmod 2}} c_j,
    \]
    because $i\equiv i \pmod 2$. 
    Adding the above two equations gives $u_{i-1}+v_i=r_i$.
    Therefore, row $i$ sums to $u_{i-1}+v_i-r_i=0$.

    Row $n-1$ has two nonzero entries, namely $u_{n-2}$ in column $n-2$ and $-y$ in column $n$. 
    Since $y=u_{n-2}$, the sum of row $n-1$ is $0$.

    Finally, consider row $n$. 
    Its nonzero entries are $-c_1,\dots,-c_{n-2},-x,z$.
    Thus, the row sum is
    \[
        -\sum_{j=1}^{n-2}c_j-x+z.
    \]
    We now verify that this is $0$. 
    Since $x=v_{n-2}$ and $n-2\equiv n\pmod 2$, we have
    \[
        x
        =
        \sum_{\substack{j\le n-2\\ j\equiv n \pmod 2}} r_j
        -
        \sum_{\substack{j\le n-2\\ j\not\equiv n \pmod 2}} c_j.
    \]
    Hence,
    \[
    \begin{aligned}
        \sum_{j=1}^{n-2}c_j+x
        &=
        \sum_{\substack{j\le n-2\\ j\equiv n \pmod 2}} c_j
        +
        \sum_{\substack{j\le n-2\\ j\not\equiv n \pmod 2}} c_j
        +
        \sum_{\substack{j\le n-2\\ j\equiv n \pmod 2}} r_j
        -
        \sum_{\substack{j\le n-2\\ j\not\equiv n \pmod 2}} c_j  \\
        &=
        \sum_{\substack{j\le n-2\\ j\equiv n \pmod 2}} (r_j+c_j)
        =
        z.
    \end{aligned}
    \]
    Therefore, $-\sum_{j=1}^{n-2}c_j-x+z=0$. 
  Consequently, every row of $T$ sums to $0$.

    We now prove the column sums. 
    Column $1$ has two nonzero entries, namely $u_1$ in row $2$ and $-c_1$ in row $n$. 
    Since $u_1=c_1$, the sum of column $1$ is $0$.

    Now consider any column $2\le i\le n-2$. 
    The nonzero entries of column $i$ are $v_{i-1}$ (in row $i-1$), $u_i$ (in row $i+1$), and $-c_i$ (in row $n$).
    We claim that $v_{i-1}+u_i=c_i$.
    Expanding $v_{i-1}$,
    \[
        v_{i-1}
        =
        \sum_{\substack{j\le i-1\\ j\equiv i-1 \pmod 2}} r_j
        -
        \sum_{\substack{j\le i-1\\ j\not\equiv i-1 \pmod 2}} c_j.
    \]
    Since $j\equiv i-1\pmod 2$ is equivalent to $j\not\equiv i\pmod 2$, this can be rewritten as
    \[
        v_{i-1}
        =
        \sum_{\substack{j\le i-1\\ j\not\equiv i \pmod 2}} r_j
        -
        \sum_{\substack{j\le i-1\\ j\equiv i \pmod 2}} c_j.
    \]
    Also,
    \[
        u_i
        =
        \sum_{\substack{j\le i\\ j\equiv i \pmod 2}} c_j
        -
        \sum_{\substack{j\le i\\ j\not\equiv i \pmod 2}} r_j
        =
        c_i+
        \sum_{\substack{j\le i-1\\ j\equiv i \pmod 2}} c_j
        -
        \sum_{\substack{j\le i-1\\ j\not\equiv i \pmod 2}} r_j.
    \]
    Adding the above two equations gives $v_{i-1}+u_i=c_i$.
    Therefore, column $i$ sums to $v_{i-1}+u_i-c_i=0$.

    Column $n-1$ has two nonzero entries, namely $v_{n-2}$ in row $n-2$ and $-x$ in row $n$. 
    Since $x=v_{n-2}$, the sum of column $n-1$ is $0$.

    Finally, consider column $n$. 
    Its nonzero entries are $-r_1,\dots,-r_{n-2},-y,z$.
    Thus, the column sum is
    \[
        -\sum_{j=1}^{n-2}r_j-y+z.
    \]
    We now verify that this is $0$. 
    Since $y=u_{n-2}$ and $n-2\equiv n\pmod 2$, we have
    \[
        y
        =
        \sum_{\substack{j\le n-2\\ j\equiv n \pmod 2}} c_j
        -
        \sum_{\substack{j\le n-2\\ j\not\equiv n \pmod 2}} r_j.
    \]
    Hence,
    \[
    \begin{aligned}
        \sum_{j=1}^{n-2}r_j+y
        &=
        \sum_{\substack{j\le n-2\\ j\equiv n \pmod 2}} r_j
        +
        \sum_{\substack{j\le n-2\\ j\not\equiv n \pmod 2}} r_j
        +
        \sum_{\substack{j\le n-2\\ j\equiv n \pmod 2}} c_j
        -
        \sum_{\substack{j\le n-2\\ j\not\equiv n \pmod 2}} r_j  \\
        &=
        \sum_{\substack{j\le n-2\\ j\equiv n \pmod 2}} (r_j+c_j)
        =
        z.
    \end{aligned}
    \]
    Therefore, $-\sum_{j=1}^{n-2}r_j-y+z=0$.
    This proves that every column of $T$ sums to $0$. 
    Since every row and every column of $S$ sums to $1$, every row and every column of $M=S+T$ sums to $1$.
    Thus, we have proved property [P3].

    Next, we note that $|T(X)|<2^{-n}$ for every $X\subseteq M^+$.
    Indeed, the matrix $T$ has at most $4n-5$ nonzero entries, and each nonzero entry has absolute value at most $\sum_{s=1}^{2n-4}\lambda_s \le \frac{\varepsilon}{1-\varepsilon}$.
    Thus,
    \[
        |T(X)|
        \le
        (4n-5)\frac{\varepsilon}{1-\varepsilon}
        <
        2^{-n},
    \]
    where the last inequality follows from $\varepsilon=2^{-n^3}$ and $n\ge 4$.
    We will use the following consequences of the smallness of $\varepsilon$.
    Every entry of $S$ is an integer multiple of $2^{-(n-1)}$. 
    This is because each nonzero entry of $S$ has denominator $2^{n-i}$ or $2^{n-j}$, and these denominators all divide $2^{n-1}$. 
    Hence, for every $X\subseteq M^+$, the quantity $S(X)$ is an integer multiple of $2^{-(n-1)}$.
    We now show two small consequences that we will use later:
    \begin{claim-inside-lemma}
        \label{clm:two-small-consequences}
        The following statements hold.
        \begin{enumerate}
            \item If $M(X)=1$, then $S(X)=1$.
            \item If $M(X)<2^{-n}$, then $S(X)=0$.
        \end{enumerate}
    \end{claim-inside-lemma}
    \begin{proof}
        First, suppose $M(X)=1$. 
        Since $M(X)=S(X)+T(X)$, we have $|S(X)-1|=|T(X)|<2^{-n}$.
        But $S(X)$ is an integer multiple of $2^{-(n-1)}$, and $1$ is also an integer multiple of $2^{-(n-1)}$. 
        Therefore, if $S(X)\ne 1$, then $|S(X)-1|\ge 2^{-(n-1)} > 2^{-n}$, a contradiction. 
        Hence, $S(X)=1$.
    
        Second, suppose $M(X)<2^{-n}$. 
        We claim that $S(X)=0$. 
        If not, then since $S(X)$ is a positive integer multiple of $2^{-(n-1)}$, we would have $S(X)\ge 2^{-(n-1)}$.
        Therefore
        \[
            M(X)=S(X)+T(X)
            \ge S(X)-|T(X)|
            >
            2^{-(n-1)}-2^{-n}
            =
            2^{-n},
        \]
        contradicting $M(X)<2^{-n}$. 
        Hence, $S(X)=0$.
    \end{proof}

    Next, we prove property [P4].
    First, suppose $i\in[n-1]$, $(i,i)\in X$, and $M(X)=1$. 
    Then, by the first part of \Cref{clm:two-small-consequences}, we have $S(X)=1$. 
    Since $S_{i,i}=1-2^{-(n-i)}$, the other entries of $X$ must have total $S$-value exactly $2^{-(n-i)}$. 
    No entry of $S$-value larger than $2^{-(n-i)}$ can be included. 
    This implies that no other diagonal entry $(k,k)$ with $k<n$ and $k\ne i$ can appear in $X$. 
    After including $(i,i)$, the remaining $S$-budget is $2^{-(n-i)}\le 1/2$, while every such diagonal entry has $S$-value at least $1/2$, with equality only for $(n-1,n-1)$. 
    If $i=n-1$, this equality case is exactly $(i,i)$ itself; otherwise the remaining budget is strictly smaller than $1/2$. 
    Thus every other non-last diagonal entry has $S$-value larger than the remaining budget. 
    The corner entry $(n,n)$ is different, since $S_{n,n}=2^{-(n-1)}$, and it may appear.
    Thus, outside $(i,i)$, the only positive $S$-entries that can appear are
    \[
        (i,n),\quad (n,i),\quad (j,n),(n,j)\text{ for }j<i,\quad\text{and }(n,n).
    \]
    If $(i,n)\in X$, we are in case (a) of [P4]. 
    If $(n,i)\in X$, we are in case (b) of [P4]. 
    If neither is present and $X$ contains both some $(j,n)$ and some $(n,k)$ with $j,k<i$, we are in case (e) of [P4]. 
    Otherwise we claim that the missing $S$-value must be made entirely from one side. 
    The identity
    \[
        2^{-(n-i)}
        =
        \sum_{h=1}^{i-1}2^{-(n-h)}+2^{-(n-1)}
    \]
    says precisely that this requires taking either
    \[
        (1,n),(2,n),\dots,(i-1,n),(n,n)
    \]
    or
    \[
        (n,1),(n,2),\dots,(n,i-1),(n,n).
    \]
    In the above identity, the last term $2^{-(n-1)}$ is the contribution of the entry $(n,n)$.
    The above two alternatives form cases (c) and (d) of property [P4].
    
    We now show that exactly one alternative holds. 
    After including $(i,i)$, which has $S$-value $1-2^{-(n-i)}$, the remaining entries of $X$ must have total $S$-value exactly $2^{-(n-i)}$.
    Cases (a) and (b) are mutually exclusive with all other cases, because $(i,n)$ and $(n,i)$ each have $S$-value exactly $2^{-(n-i)}$, which already exhausts the remaining $S$-budget after $(i,i)$.
    
    We first note that when $i=1$, neither case (c) nor case (d) can occur. 
    When $i=1$, cases (c) and (d) both reduce to the condition $(n,n)\in X$. 
    But if $(n,n)\in X$, then the entries $(1,1)$ and $(n,n)$ already have total $S$-value $1$. 
    Moreover, $T_{1,1}=0$ and $T_{n,n}=z>0$.
    No negative $T$-entry can be added without adding positive $S$-value and exceeding the $S$-budget. 
    Thus $M(X)>1$, contradicting the assumption that $M(X)=1$. 
    
    When $i \ge 2$, we show that cases (c) and (d) cannot occur together. 
    If $i\ge 2$, then case (c) already contributes exactly the full remaining $S$-budget $2^{-(n-i)}$.
    If case (d) also occurred, then $X$ would contain at least one additional positive-$S$ entry, which would make the $S$-value outside $(i,i)$ strictly larger than $2^{-(n-i)}$, a contradiction.
    Therefore, cases (c) and (d) do not occur simultaneously.
    
    Finally, case (e) cannot occur together with any of cases (a)--(d), because cases (a)--(d) already exhaust the remaining $S$-budget, whereas case (e) would add at least one more positive-$S$ entry.
    Therefore, exactly one of the five alternatives holds.
    This proves property [P4].
    
    Next, we prove property [P5].
    Suppose $i\in[n-2]$ and $M(X)=r_i$. 
    Since $r_i<2^{-n}$, the second part of \Cref{clm:two-small-consequences} gives $S(X)=0$. 
    Therefore, $X$ contains only entries whose positive value comes purely from the matrix $T$, namely entries of value $u_j$ and $v_j$.

    We will repeatedly use the following consequence of \Cref{clm:dominance}: when a target value has leading term $\lambda_t$, any subset of positive $T$-entries summing to that target must contain a positive $T$-entry with the same leading term and cannot contain any entry with a larger leading term.

    The leading term of $v_i$ is $r_i$, and this is the unique positive $T$-entry with leading term $r_i$. 
    If $X$ contained a $T$-entry with leading term larger than $r_i$, then by \Cref{clm:dominance} we would have $M(X)>r_i$.
    If it contained no entry with leading term $r_i$, then by \Cref{clm:dominance} the total value of all entries in $X$ would be less than $r_i$.
    Hence $X$ must contain the unique $T$-entry with leading term $r_i$: the entry $(i,i+1)$, whose value is $v_i$. 
    The remaining value is $r_i-v_i=u_{i-1}$, where $u_0=0$ for convenience. 
    If $i=1$, then $u_0=0$, and so $X=\{(1,2)\}$.
    If $i\ge 2$, then the remaining value $u_{i-1}$ forces the unique $T$-entry with leading term $c_{i-1}$, namely $(i,i-1)$. 
    Its value is exactly $u_{i-1}$, so no further entry can be included. 
    Hence, $X=\{(i,i-1),(i,i+1)\}$.
    This is property [P5].

    The proof of [P6] is symmetric. 
    Suppose $i\in[n-2]$ and $M(X)=c_i$. 
    Again, by \Cref{clm:two-small-consequences}, $S(X)=0$, so $X$ contains only $T$-entries. 
    The unique such entry with leading term $c_i$ is $(i+1,i)$, whose value is $u_i$. 
    The remaining value is $c_i-u_i=v_{i-1}$, where $v_0=0$ for convenience. 
    If $i=1$, then $X=\{(2,1)\}$.
    If $i\ge 2$, the remaining value $v_{i-1}$ forces the unique $T$-entry $(i-1,i)$, and hence $X=\{(i-1,i),(i+1,i)\}$.
    This is property [P6].

    For [P7], suppose $M(X)=x$. 
    Since $x=v_{n-2}<2^{-n}$, by \Cref{clm:two-small-consequences}, we have $S(X)=0$. 
    The unique $T$-entry with leading term $r_{n-2}$ is $(n-2,n-1)$, and its value is $v_{n-2}=x$. 
    Therefore, $X=\{(n-2,n-1)\}$.

    For [P8], suppose $M(X)=y$. 
    Since $y=u_{n-2}<2^{-n}$, we have $S(X)=0$. 
    The unique $T$-entry with leading term $c_{n-2}$ is $(n-1,n-2)$, and its value is $u_{n-2}=y$. 
    Therefore, $X=\{(n-1,n-2)\}$.

    We have now verified the KPW properties [P1]--[P8] for our explicit choice $\varepsilon=2^{-n^3}$. 
\end{proof}

\begin{lemma}
    \label{lem:only-rows-or-columns-M}
    If $(X_1,\dots,X_n)$ is a partition of $M^+$ such that
        $
            \sum\limits_{(i,j) \in X_t} M_{i,j}=1
            \qquad\text{for every } t \in [n],
        $
        then $(X_1,\dots,X_n)$ is either the row partition or the column partition, up to relabeling.
\end{lemma}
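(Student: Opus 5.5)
The plan is to reduce everything to the $S$-structure using \Cref{clm:two-small-consequences}, and then use properties [P4]--[P8] in an induction over the diagonal index. Let $(X_1,\dots,X_n)$ be a partition of $M^+$ with $M(X_t)=1$ for all $t$. By the first part of \Cref{clm:two-small-consequences}, $S(X_t)=1$ for every $t$.

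\textbf{Step 1: at most one diagonal entry per bundle.} The non-last diagonal entries $(i,i)$, $i\in[n-1]$, have $S$-value $1-2^{-(n-i)}\ge 1/2$. Any two of them together have $S$-value strictly greater than $1$, except possibly when one of them is $(n-1,n-1)$. Even in that case, $1/2+(1-2^{-(n-i)})>1$ for $i<n-1$. So no bundle contains two of these entries. Hence, after relabeling, bundle $X_i$ contains $(i,i)$ for each $i\in[n-1]$, and $X_n$ contains no non-last diagonal entry.

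For each $X_i$, I split it into its $S$-positive entries and its pure-$T$ entries, that is, the entries valued $u_j$ or $v_j$, which have $S$-value $0$. Once the $S$-positive part is known, the pure-$T$ part must have $M$-value equal to minus the total $T$-value of the $S$-positive part. Properties [P5]--[P8] then identify this part uniquely.

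\textbf{Step 2: the base case $i=1$.} By [P4], $X_1$ contains either $(1,n)$ or $(n,1)$. Cases (c) and (d) collapse to $(n,n)$, which the proof of [P4] excludes, and case (e) is vacuous. If $(1,n)\in X_1$, the $S$-budget is exhausted. The $T$-deficit is $r_1$, so by [P5] the pure-$T$ part is $\{(1,2)\}$, and $X_1$ is exactly row $1$. Symmetrically, if $(n,1)\in X_1$, then [P6] shows $X_1$ is column $1$. By the row/column symmetry of the construction (transpose, swapping $r\leftrightarrow c$ and $x\leftrightarrow y$), it suffices to treat the row case.

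\textbf{Step 3: induction on $i$.} Suppose $X_1,\dots,X_{i-1}$ are rows $1,\dots,i-1$. Then every $(j,n)$ with $j<i$ is already used, so cases (c) and (e) of [P4] are impossible for $X_i$.

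\emph{Ruling out case (b).} Here $(n,i)\in X_i$. Its $T$-entry $-c_i$ must be compensated by pure-$T$ entries of value $c_i$, which by [P6] are $\{(i-1,i),(i+1,i)\}$. But $(i-1,i)$ is the $v_{i-1}$-entry of row $i-1$ and has already been used. For $i=1$ this case was handled in Step 2, and for $i=n-1$ I use [P7] and [P8] analogously.

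\emph{Ruling out case (d).} Here $X_i$ contains $(n,1),\dots,(n,i-1),(n,n)$. The $T$-contribution is $z-\sum_{j<i}c_j$. I need to show that no remaining pure-$T$ entries can bring this total to exactly $0$. I will use the dominance ordering of \Cref{clm:dominance} on the leading terms of $z$ and of the $c_j$'s, together with the entries already consumed. I expect this sign and leading-term bookkeeping to be the main obstacle. My first attempt is to compare the largest surviving $\lambda_t$ on each side, since a nonzero signed combination cannot vanish by \Cref{clm:dominance}.

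\emph{Case (a).} This leaves $(i,n)\in X_i$. Then [P5] (or [P8] when $i=n-1$, where the last-column entry carries $-y$) forces $X_i$ to be exactly row $i$.

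\textbf{Step 4: the last bundle.} After $n-1$ rows are placed, $X_n$ is the remaining set, which is precisely row $n$. So the partition is the row partition. The column case follows by the transposed argument, giving the column partition.
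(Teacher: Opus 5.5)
Your approach is the paper's: take the bundle containing $(i,i)$ and run through the [P4] alternatives. Cases (c) and (e) die because the entries $(j,n)$ with $j<i$ already lie in earlier rows. Case (b) dies via [P6] or [P7], because $(i-1,i)$ (resp.\ $(n-2,n-1)$) lies in an earlier row. Case (a) closes via [P5] or [P8]. Your Step 1 relabeling is harmless but not needed.

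The gap is case (d) of the induction, which you leave as a plan rather than a proof. The plan also points the wrong way. The entries already consumed play no role here. ``A nonzero signed combination cannot vanish'' only helps once you show the combination is nonzero, and that is the whole content of the step.

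The missing idea is positivity. In case (d), the entries $(i,i),(n,1),\dots,(n,i-1),(n,n)$ exhaust the $S$-budget. So every other entry of $X_i$ is a pure-$T$ entry, with value $u_j>0$ or $v_j>0$. In your own bookkeeping, this pure-$T$ part would need total value $\sum_{h<i}c_h - z$. That is impossible as soon as $z>\sum_{h<i}c_h$.

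Equivalently, as in the paper, $M(X_i)\ge M_{i,i}+\sum_{h<i}M_{n,h}+M_{n,n}=1-\sum_{h<i}c_h+z>1$.

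The inequality $z>\sum_{h<i}c_h$ needs some care. Since $n-2\equiv n \pmod 2$, $z$ contains $r_{n-2}+c_{n-2}$.
\begin{itemize}
\item For $i<n-1$, $c_{n-2}$ alone dominates $\sum_{h<i}c_h$ by \Cref{clm:dominance}.
\item For $i=n-1$, the sum also contains $c_{n-2}$. You must cancel it and use $r_{n-2}$ to dominate $\sum_{h\le n-3}c_h$. A naive ``leading term of $z$ wins'' argument fails exactly here.
\end{itemize}

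The same fix is needed for the mirror case (c) in your transposed column argument. There you need $z>\sum_{h<i}r_h$, cancelling $r_{n-2}$ and using $c_{n-2}$ when $i=n-1$. With these filled in, your proof is complete.
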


\begin{proof}
The matrix $M$ satisfies properties [P1]--[P8] stated in~\Cref{lem:KPW-eight-properties}. 
    The proof of Lemma 2.2 in \citet{DBLP:journals/jacm/KurokawaPW18} uses precisely these structural properties.
    Hence, the same induction argument applies to our matrix.
    Since the original presentation contains some minor typographical slips, including in the invocation of the structural properties and in some displayed residual sums, we give the full proof in our notation.
    
    Informally, the argument goes through as follows.
    The bundle containing $(1,1)$ must be either the first row or the first column.
    This is because applying [P4] with $i = 1$ leaves only the possibilities $(1,n) \in X$, $(n,1) \in X$, or $(n,n) \in X$.
    The last possibility is impossible because the bundle would already have value greater than $1$.
    If $(1,n) \in X$, then [P5] forces the remaining entry $(1,2)$, so the bundle is the first row.
    If $(n,1) \in X$, then [P6] forces the remaining entry $(2,1)$, so the bundle is the first column.
    If the first row is a bundle, the induction forces the second row, then the third row, and so on, until the entire partition is the row partition.
    Symmetrically, if the first column is a bundle, the induction forces the column partition.
    Therefore, every partition of $M^+$ into $n$ bundles of value $1$ is either the row partition or the column partition, up to relabeling.

    Let us first consider the bundle in the partition that includes $(1,1)$, say $X$.
    We wish to prove that $X$ is either the first row $\{(1,1),(1,2),(1,n)\}$ or the first column $\{(1,1),(2,1),(n,1)\}$.
    By [P4] applied with $i=1$, one of the five alternatives holds.
    Alternative (e) is impossible since $i=1$.
    Alternatives (c) and (d) involve adding good $(n,n)$.
    Then, $M(X) \ge M_{1,1} + M_{n,n} = 1 + z > 1$, contradicting $M(X)=1$.
    
    In alternative (a), we have $(1,n) \in X$.
    Since $M_{1,1} + M_{1,n} = 1 - r_1$, the remaining entries of $X$ have total value $r_1$.
    By [P5], these remaining entries are exactly $\{(1, 2)\}$.
    Therefore, $X = \{(1,1),(1,2),(1,n)\} = R_1$.

    In alternative (b), we have $(n,1) \in X$.
    Since $M_{1,1} + M_{n,1} = 1 - c_1$, the remaining entries of $X$ have total value $c_1$.
    By [P6], these remaining entries are exactly $\{(2, 1)\}$.
    Therefore, $X = \{(1,1), (2,1), (n,1)\} = C_1$.

    We now show that if the first row is a bundle in the partition, then all rows are bundles.
    Suppose that $R_1, \dots, R_{i - 1}$ are bundles in the partition for some $2 \le i \le n - 1$.
    Let $X$ be the bundle containing $(i, i)$.
    By [P4], one of the five alternatives holds.

    First, if alternative (a) holds, then $X$ contains $(i, n)$.
    If $i \le n - 2$, then since $M_{i,i} + M_{i,n} = 1 - r_i$, the remaining entries of $X$ have total value $r_i$.
    By [P5], these remaining entries are exactly $\{(i,i - 1), (i,i + 1)\}$.
    Therefore, $X=R_i$.
    If $i = n - 1$, then since $M_{n - 1, n - 1} + M_{n - 1, n} = 1 - y$, the remaining entries of $X$ have total value $y$.
    By [P8], these remaining entries are exactly $\{(n - 1,n - 2)\}$.
    Therefore, $X=R_{n - 1}$.

    Next, if alternative (b) holds, then $X$ contains $(n,i)$.
    If $i \le n-2$, then since $M_{i,i} + M_{n,i} = 1 - c_i$, the remaining entries of $X$ have total value $c_i$.
    By [P6], $X$ must contain $(i - 1,i)$.
    But $(i - 1,i)$ lies in row $i - 1$, which is already one of the bundles of the partition.
    This is impossible.
    If $i=n-1$, then since $M_{n-1, n-1}+M_{n,n-1} = 1-x$, the remaining entries of $X$ have total value $x$.
    By [P7], $X$ must contain $(n - 2, n - 1)$.
    But $(n - 2, n - 1)$ lies in row $n - 2$, which is already one of the bundles of the partition.
    This is impossible.

    Next, if alternative (c) holds, then $X$ contains $(1,n), (2,n), \dots, (i - 1,n), (n,n)$.
    In particular, $X$ contains $(1,n)$, which lies in row $1$, already one of the bundles of the partition.
    This is impossible.

    Next, if alternative (d) holds, then $X$ contains $(n,1),(n,2),\dots,(n,i-1), (n,n)$.
    Then, we have that $M(X)$ is at least 
        \[
    \begin{aligned}
        M_{i,i}
        +
        \sum_{h=1}^{i-1}M_{n,h}
        +
        M_{n,n}
        &=
        \left(1-2^{-(n-i)}\right)
        +
        \sum_{h=1}^{i-1}\left(2^{-(n-h)}-c_h\right)
        +
        \left(2^{-(n-1)}+z\right) \\
        &=
        1-\sum_{h=1}^{i-1}c_h+z \\
        &>
        1,
    \end{aligned}
    \]
    The last inequality follows from the following fact.
    If $i<n-1$, then the leading term of $z$ dominates $\sum_{h=1}^{i-1}c_h$, by \Cref{clm:dominance}.
    If $i = n - 1$, then the expansion of $z$ contains $c_{n - 2} + r_{n - 2}$; after canceling the common term $c_{n - 2}$, the term $r_{n - 2}$ dominates $\sum_{h=1}^{n - 3} c_h$, by \Cref{clm:dominance}.
    
    Finally, if alternative (e) of [P4] holds, then $X$ contains some $(j,n)$ with $j<i$.
    But $(j,n)$ lies in row $j$, which is already one of the bundles of the partition.
    This is impossible.

    Therefore, the only possible case is $X=R_i$.
    By induction, if $R_1$ is a bundle, then $R_1,\dots,R_{n-1}$ are all bundles.
    Since the partition has exactly $n$ bundles, the remaining positive entries form the last row $R_n$, which by [P3] has value $1$.
    Hence the partition is the row partition, up to relabeling.

    To complete the proof, we show that if the first column is a bundle in the partition, then all columns are bundles.
    Suppose that $C_1, \dots, C_{i-1}$ are bundles in the partition, for some $2 \le i \le n-1$.
    Let $X$ be the bundle containing $(i,i)$.
    By [P4], one of the five alternatives holds. 

    First, if alternative (a) holds, then $X$ contains $(i, n)$.
    If $i \le n - 2$, then since $M_{i,i} + M_{i,n} = 1 - r_i$, the remaining entries of $X$ have total value $r_i$.
    By [P5], these remaining entries are exactly $\{(i,i - 1), (i,i + 1)\}$.
    But $(i,i-1)$ lies in column $i - 1$, which is already one of the bundles of the partition.
    This is impossible.
    If $i=n-1$, then since $M_{n-1, n-1}+M_{n-1,n} = 1-y$, the remaining entries of $X$ have total value $y$.
    By [P8], $X$ must contain $(n - 1, n - 2)$.
    But $(n - 1, n - 2)$ lies in column $n - 2$, which is already one of the bundles of the partition.
    This is impossible.

    Next, if alternative (b) holds, then $X$ contains $(n,i)$.
    If $i \le n-2$, then since $M_{i,i} + M_{n,i} = 1 - c_i$, the remaining entries of $X$ have total value $c_i$.
    By [P6], these remaining entries are exactly $\{(i - 1,i), (i + 1,i)\}$.
    Therefore, $X=C_i$.
    If $i=n-1$, then since $M_{n-1, n-1}+M_{n,n-1} = 1-x$, the remaining entries of $X$ have total value $x$.
    By [P7], these remaining entries are exactly $\{(n - 2,n - 1)\}$.
    Therefore, $X=C_{n - 1}$.
    
    Next, if alternative (c) holds, then $X$ contains $(1,n), (2,n), \dots, (i - 1,n), (n,n)$.
    Then, we have that $M(X)$ is at least 
        \[
    \begin{aligned}
        M_{i,i}
        +
        \sum_{h=1}^{i-1}M_{h,n}
        +
        M_{n,n}
        &=
        \left(1-2^{-(n-i)}\right)
        +
        \sum_{h=1}^{i-1}\left(2^{-(n-h)}-r_h\right)
        +
        \left(2^{-(n-1)}+z\right) \\
        &=
        1-\sum_{h=1}^{i-1}r_h+z \\
        &>
        1,
    \end{aligned}
    \]
    The last inequality follows from the following fact.
    If $i<n-1$, then the leading term of $z$ dominates $\sum_{h=1}^{i-1}r_h$, by \Cref{clm:dominance}.
    If $i = n - 1$, then the expansion of $z$ contains $c_{n - 2} + r_{n - 2}$; after canceling the common term $r_{n - 2}$, the term $c_{n - 2}$ dominates $\sum_{h=1}^{n - 3} r_h$, by \Cref{clm:dominance}.

    Next, if alternative (d) holds, then $X$ contains $(n,1),(n,2),\dots,(n,i-1), (n,n)$.
    In particular, $X$ contains $(n,i-1)$, which lies in column $i-1$, already one of the bundles of the partition.
    This is impossible.

    Finally, if alternative (e) of [P4] holds, then $X$ contains some $(n,k)$ with $k<i$.
    But $(n,k)$ lies in column $k$, which is already one of the bundles of the partition.
    This is impossible.

    Therefore, the only possible case is $X=C_i$.
    By induction, if $C_1$ is a bundle, then $C_1,\dots,C_{n-1}$ are all bundles.
    Since the partition has exactly $n$ bundles, the remaining positive entries form the last column $C_n$, which by [P3] has value $1$.
    Hence the partition is the column partition, up to relabeling.
\end{proof}

We now clear denominators. 
Since $\varepsilon=2^{-K}$, every term used in $T$ has a denominator that divides $2^{K(2n-4)}$. 
The denominators in $S$ divide $2^{n-1}$, and thus also divide $2^{K(2n-4)}$ for $n\ge 4$. 
Let
\[
    \rowsum=2^{K(2n-4)}.
\]
This value of $\rowsum$ clears all denominators of $M$. 
For every $i,j\in [n]$, we define
\[
    A_{i,j}=\rowsum M_{i,j}.
\]
Then, each $A_{i,j}$ is a nonnegative integer. 
Moreover, $A_{i,j}>0$ if and only if $(i,j)\in M^+$.
Let $A^+=\{(i,j):A_{i,j}>0\}$ be the positive entries in $A$.
Since every row and every column of $M$ sums to $1$, every row and every column of $A$ sums to $\rowsum$.

The following proposition summarizes the properties of the above construction that will be used in our reductions.

\begin{restatable}{proposition}{kpwcoreconstruction}
    \label{prop:KPW-core-construction}
    \label{prop:only-rows-or-columns-A} 
    For every integer $n \ge 4$, one can construct, in time polynomial in $n$, a nonnegative integer matrix $A = (A_{i, j})_{i, j \in [n]}$ and a positive integer $\rowsum = 2^{n^3 (2n - 4)}$ such that the following properties hold.
    \begin{enumerate}
        \item The matrix $A$ has exactly $5n - 6$ positive entries.
        \item Every row and every column of $A$ sums to $\rowsum$.
        \item The integer $\rowsum$ and every entry of $A$ have $\bigoh(n^4)$ bits.
        \item Let $A^+ = \{(i, j) : A_{i, j} > 0\}$ and let $(X_1, \dots, X_n)$ be a partition of $A^+$.
        If $\sum\limits_{(i, j) \in X_t} A_{i,j} \ge \rowsum$ for every $t \in [n]$, then $(X_1, \dots, X_n)$ is either the row partition or the column partition, up to relabeling.
    \end{enumerate}
\end{restatable}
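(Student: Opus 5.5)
We take $A = \rowsum M$, where $M = S + T$ is the matrix built above with $K = n^3$ and $\varepsilon = 2^{-K}$. Each item of the proposition then follows from \Cref{lem:KPW-eight-properties} and \Cref{lem:only-rows-or-columns-M}, together with some bookkeeping.

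\emph{Integrality and positivity.} Every nonzero entry of $T$ is a signed sum of terms $r_j=\varepsilon^{2n-2j-2}$ and $c_j=\varepsilon^{2n-2j-3}$. Each such term is $2^{-K\cdot e}$ with $0\le e\le 2n-4$, so $\rowsum=2^{K(2n-4)}$ clears its denominator. The entries of $S$ have denominators dividing $2^{n-1}$, and $2^{n-1}$ divides $\rowsum$. Hence every $A_{i,j}$ is an integer, and by [P1] it is nonnegative.

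\emph{Item 1.} By [P1], $A_{i,j}>0$ exactly when $S_{i,j}\neq 0$ or $T_{i,j}\neq 0$. The support of $S$ is the diagonal, the last row and the last column, which gives $3n-2$ cells. The support of $T$ adds the cells $(j+1,j)$ and $(j,j+1)$ for $j\le n-2$. These are $2(n-2)$ off-diagonal cells, and none of them lies in the last row or the last column: the largest row or column index used is $n-1$. So there are $3n-2+2n-4=5n-6$ positive entries.

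\emph{Items 2 and 3.} Item 2 is [P3] scaled by $\rowsum$. For item 3, $\rowsum$ has $K(2n-4)+1=\bigoh(n^4)$ bits. Every entry satisfies $0\le A_{i,j}\le\rowsum$, since entries are nonnegative and rows sum to $\rowsum$, so each entry also has $\bigoh(n^4)$ bits.

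\emph{Polynomial-time construction.} Each entry is computed by exact rational arithmetic. Equivalently, $\rowsum\, r_j$ and $\rowsum\, c_j$ are explicit powers of two, and the signed sums $u_i,v_i,x,y,z$ are sums of $\bigoh(n)$ integers with $\bigoh(n^4)$ bits each.

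\emph{Item 4.} This is the only step needing an argument beyond \Cref{lem:only-rows-or-columns-M}, because the hypothesis here is the inequality $\ge\rowsum$ rather than equality. Let $(X_1,\dots,X_n)$ partition $A^+$ with every bundle of value at least $\rowsum$. The total value of $A^+$ is $n\rowsum$, since there are $n$ rows each summing to $\rowsum$. So
\[
\sum_{t}\Bigl(A(X_t)-\rowsum\Bigr)=0,
\]
and every summand is nonnegative, which forces $A(X_t)=\rowsum$ for every $t$. Dividing by $\rowsum$ gives $M(X_t)=1$ for every $t$. Now \Cref{lem:only-rows-or-columns-M} applies directly and shows that the partition is the row partition or the column partition, up to relabeling.

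I expect no real obstacle. The main care points are the exact count of positive entries, in particular checking that the $T$-cells do not overlap the support of $S$, and confirming that all exponents of $\varepsilon$ are at most $2n-4$, so that $\rowsum$ clears every denominator.
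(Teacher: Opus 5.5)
Your proposal is correct and follows essentially the same route as the paper. Like the paper, you scale $M=S+T$ by $\rowsum$ and read off integrality, the support count $3n-2+2(n-2)$, the row and column sums, and the bit bounds from [P1]--[P3]; for item 4 you use the total $n\rowsum$ to turn ``$\ge\rowsum$'' into equality and then invoke \Cref{lem:only-rows-or-columns-M}, with a slightly more explicit check than the paper that the $2(n-2)$ off-diagonal $T$-cells avoid the support of $S$.
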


\begin{proof}
    Let $M$ be the matrix constructed above.
    For every $i, j \in [n]$, we set $A_{i,j} = \rowsum M_{i,j}$.
    As shown above, this choice of $\rowsum$ clears all denominators of $M$.
    Hence, every $A_{i, j}$ is an integer.
    By property [P1] of \Cref{lem:KPW-eight-properties}, every entry of $A$ is nonnegative, and $A_{i,j} > 0$ if and only if $(i,j) \in M^+$.

    The support of $M$ consists of the positive entries of $S$ together with the positive entries of $T$.
    The matrix $S$ has $3n-2$ positive entries, and the positive entries of $T$ contribute $2(n - 2)$ additional entries.
    The remaining nonzero entries of $T$ occur on entries already positive in $S$.
    Therefore, $A$ has exactly $3n - 2 + 2(n - 2) = 5n - 6$ positive entries.

    By property [P3] of \Cref{lem:KPW-eight-properties}, every row and every column of $M$ sums to $1$.
    Therefore, every row and every column of $A$ sums to $\rowsum$.

    Since $\log \rowsum = n^3(2n - 4) = \bigoh(n^4)$, the integer $\rowsum$ can be stored in $\bigoh(n^4)$ bits.
    Moreover, every entry of $A$ is nonnegative and every row of $A$ sums to $\rowsum$, so every entry of $A$ is at most $\rowsum$.
    Thus, every entry of $A$ can also be stored in $\bigoh(n^4)$ bits.
    The construction uses only rational arithmetic with numbers of $\bigoh(n^4)$ bits, and so it can be carried out in time polynomial in $n$.

    Finally, we show the last property.
    Since the total value of all positive entries of $A$ is $n\rowsum$, the assumption implies that every bundle has value exactly $\rowsum$. 
    Dividing by $\rowsum$, every bundle has $M$-value exactly $1$. 
    By \Cref{lem:only-rows-or-columns-M}, the partition is either the row partition or the column partition, up to relabeling.
    Thus, we have proved the proposition.
\end{proof}

%% file: 2_no_instances/fst-core.tex
\subsection{The FST no-instance for $n = 3$}
\label{ssec:Feige}
\label{subsec:FST-no-instance-3-agents}

In this subsection, we discuss the $3$-agent no-instance of \citet{DBLP:conf/wine/FeigeST21} and establish the properties that will be needed for the weak \coNP-hardness reduction.
There are three agents, denoted by $R$, $C$, and $U$, and nine items arranged in a $3 \times 3$ matrix. 
The valuations of the three agents are given by the matrices
\begin{equation}
    \label{eqn:FST-no-instance-3-agents}
    M_R = 
    \begin{pmatrix}
      1 & 16 & 23 \\
      26 & 4 & 10 \\
      12 & 19 & 9
    \end{pmatrix},
    \qquad
    M_C = 
    \begin{pmatrix}
      1 & 16 & 22 \\
      26 & 4 & 9 \\
      13 & 20 & 9
    \end{pmatrix},
    \qquad
    M_U = 
    \begin{pmatrix}
      1 & 15 & 23 \\
      25 & 4 & 10 \\
      13 & 20 & 9
    \end{pmatrix}.
\end{equation}
For each agent, the total value of the nine items is $120$, and the maximin share is $40$. 
Moreover, no allocation gives all three agents value at least $40$, and thus there is no MMS allocation.

\begin{proposition}[{\cite{DBLP:conf/wine/FeigeST21}}]
    \label{prop:FST21-3-agent}
    In the instance described in \eqref{eqn:FST-no-instance-3-agents}, for every allocation of the nine items to $R, C, U$, at least one agent receives a bundle of value at most $39$.
\end{proposition}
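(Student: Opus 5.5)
The plan is to reduce the claim to a finite case analysis organised around the large items, rather than enumerating all $3^9$ allocations blindly. Since all values are integers, ``value at most $39$'' is the same as ``value less than $40$''. So it suffices to show that there is no allocation $(A_R, A_C, A_U)$ with $M_R(A_R) \ge 40$, $M_C(A_C) \ge 40$ and $M_U(A_U) \ge 40$. As a fallback, the statement is a finite check over $3^9 = 19683$ allocations, which could be verified mechanically. I would still aim for a structured human-readable argument, as follows.

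\textbf{Step 1 (large items are separated).} Label the items by matrix positions. The three heavy items are $(2,1)$ (worth $25$ or $26$), $(1,3)$ (worth $22$ or $23$) and $(3,2)$ (worth $19$ or $20$). The items of medium weight are $(1,2)$ (worth $15$ or $16$) and $(3,1)$ (worth $12$ or $13$). The remaining items $(1,1), (2,2), (2,3), (3,3)$ are light, with values $1, 4, 9$ or $10$, and $9$.

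I would first show that in any allocation meeting all three thresholds, each bundle contains exactly one heavy item. Suppose some bundle contains two heavy items. The other two agents' bundles are then disjoint subsets of the remaining seven items, and each must reach $40$ in its owner's valuation. Since the matrices differ entrywise by at most $1$, the sum of these seven items is at most about $120 - 41 + 2$ in every agent's valuation. I would bound this carefully, pair by pair of heavy items and taking the maximum over agents for each item, to show that the two bundles together fall short of $80$. The pair $(3,2),(1,3)$ is the delicate one, since it leaves the most value behind, and it may need an extra sub-case.

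\textbf{Step 2 (reduce to fillers).} Once each agent holds exactly one heavy item, each bundle needs the filler items to supply roughly $40 - 26 = 14$, $40 - 23 = 17$ or $40 - 20 = 20$, respectively. The fillers are the six non-heavy items, with total value about $51$–$53$ depending on the agent. The required total is about $51$, which matches the supply almost exactly, so the slack is at most a couple of units.

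I would then do a small case analysis in two stages. First, I would split on the $3! = 6$ ways to assign the heavy items to the agents $R$, $C$, $U$. Second, I would split on which bundles receive the two medium fillers, $(1,2)$ and $(3,1)$. In each case, the $\pm 1$ perturbations should show that some agent ends at exactly $39$. These perturbations are where $M_R$, $M_C$ and $M_U$ disagree, for instance $(1,3)$, $(2,3)$ and $(3,1)$ for $C$ versus $R$. The design of the FST instance is that each agent loses one unit on exactly the bundles that would complete the other agents' packings.

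\textbf{Main obstacle.} The hard part is Step 2. Because the slack is so tight, the argument is entirely about tracking which item carries which agent's $-1$ perturbation. The first thing I would try is to compare each agent's valuation with a common base matrix, namely the entrywise maximum. I would show that any allocation meeting $40$ for all three agents must meet it in the base matrix with total slack $0$, which forces a specific partition. Then I would check that in this forced partition at least one agent is hit by her own $-1$ perturbation. If that does not close all cases, I would fall back to the explicit enumeration over the six heavy-item assignments.
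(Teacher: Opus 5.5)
The paper gives no proof of this proposition; it imports it from Feige, Sapir and Tauber, so your argument has to stand on its own. Your Step~1 is correct, and it needs no extra sub-case. Let $B$ be the entrywise maximum of $M_R,M_C,M_U$, which has rows $(1,16,23)$, $(26,4,10)$, $(13,20,9)$ and total $122$. Every agent values every bundle at most its $B$-value. Removing any two heavy items leaves $B$-value at most $122-23-20=79<80$ for the other two bundles, so even the pair $(3,2),(1,3)$ is handled directly.

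\textbf{The gap is in Step~2.} The mechanism you propose there is false. Because $B$ totals $122$ rather than $120$, an allocation meeting $40$ for everyone has two units of $B$-slack, not zero, and no unique partition is forced.

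With one heavy item per bundle, the fillers (with $B$-values $1,16,4,10,13,9$, total $53$) must supply at least $14,17,20$ to the bundles of $(2,1),(1,3),(3,2)$ respectively. A short enumeration shows there are exactly three such partitions:
\begin{itemize}
\item the row partition;
\item the column partition;
\item $\{(2,1),(1,2)\}$, $\{(1,3),(2,2),(3,1)\}$, $\{(1,1),(2,3),(3,2),(3,3)\}$.
\end{itemize}
Each of these has $B$-values $40,40,42$.

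What finishes the proof is the exact shape of the perturbations: $v_i=B-\mathbf{1}_{P_i}$, with $P_R=\{(3,1),(3,2)\}$, $P_C=\{(1,3),(2,3)\}$ and $P_U=\{(1,2),(2,1)\}$. In each of the three partitions, the $42$-bundle contains both perturbed items of one agent. Each $40$-bundle contains one perturbed item of each of the other two agents. Hence both $40$-bundles are acceptable only to that same single agent, which is impossible.

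With this observation, splitting over the six heavy-item assignments and the medium fillers is unnecessary. Without it, your Step~2 is only a promise to enumerate. The same goes for the $3^9$ brute-force fallback: it would settle the statement, but it is not an argument as written. Note also that the target is that some agent gets ``at most $39$'', not ``exactly $39$''.
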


\subsection{The FST core for general $n$}
\label{sec:FST-core}

In this subsection, we discuss the $n$-agent no-instance of \citet{DBLP:conf/wine/FeigeST21} and establish the properties that will be needed for the strong \coNP-hardness reduction.

Fix an integer $n \ge 4$.
The items will be represented as selected entries of an $n\times n$ matrix.
Rows are indexed from top to bottom and columns from left to right.
All unspecified entries are treated as $0$, and only the positive entries will correspond to goods.

We first define a base matrix $B$.
For every $j\in\{2, \ldots, n - 1\}$, we set
\[
    B_{1,j} = (n - 2)n,
    \qquad
    B_{j,1} = (n - 2)(n - 1),
    \qquad
    B_{j,j} = (n - 2)(n^2 - 4n + 2),
    \qquad
    B_{j,n} = (n - 2)(n - 1) + 1.
\]
We also set
\[
    B_{1,n} = 1,
    \qquad
    B_{n,1} = B_{n,2} = \cdots = B_{n,n-1} = (n - 2)^2 + 1,
    \qquad
    B_{n,n} = (n - 2)(n - 3).
\]
All other entries are $0$.

Let $E_{\mathrm{core}} = \{(i,j) : B_{i,j} > 0\}$ be the set of core goods.
The number of core goods is $5n - 7$.
Let $t_B = n(n - 2)^2 + 1$.
A direct calculation shows that every row and every column of $B$ has sum exactly $t_B$.

\begin{lemma}[{\cite{DBLP:conf/wine/FeigeST21}}]
    \label{lem:FST-base-good-partitions}
    In every partition of $E_{\mathrm{core}}$ into $n$ bundles where each bundle sums to exactly $t_B$, at least one of the following three conditions holds:
    \begin{enumerate}
        \item the bottom row is split among the $n$ bundles, one bottom-row item in each bundle;
        \item the right column is split among the $n$ bundles, one right-column item in each bundle;
        \item at least one bundle contains at least one item from the bottom row and at least one item from the right column, but does not contain the item $(n,n)$.
    \end{enumerate}
\end{lemma}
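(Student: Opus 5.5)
The plan is to argue by contradiction using a residue argument modulo $n-2$, followed by a size argument. Throughout, consider a partition of $E_{\mathrm{core}}$ into $n$ bundles, each of sum exactly $t_B$.

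\textbf{Residue count.} Reduce all entries of $B$ modulo $n-2$. The entries $B_{1,j}$, $B_{j,1}$, $B_{j,j}$ for $2\le j\le n-1$ and the corner $B_{n,n}$ are all multiples of $n-2$. Every other item is $\equiv 1 \pmod{n-2}$. These are the right-column items $(1,n),\dots,(n-1,n)$ and the bottom-row items $(n,1),\dots,(n,n-1)$; call them \emph{special}, so there are $2n-2$ special items. Since $t_B=n(n-2)^2+1\equiv 1\pmod{n-2}$, every bundle contains a number of special items congruent to $1$ modulo $n-2$. In particular each bundle has at least one special item. The counts sum to $2n-2=n+(n-2)$, and since $n-2\ge 2$ the only possibility is a single bundle $X^\ast$ with $n-1$ special items and every other bundle with exactly one.

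\textbf{Case analysis.} Now assume that none of the three conditions holds.
\begin{itemize}
\item Each bundle other than $X^\ast$ has only one special item, so it cannot mix the bottom row and the right column through special items.
\item If $X^\ast$ mixes the two sides, then, since condition~3 fails, $X^\ast$ must contain $(n,n)$. I then bound $B(X^\ast)$ from below. Apart from $(1,n)$, every special item has value at least $(n-2)^2+1$, so $n-2$ of them plus $B_{n,n}=(n-2)(n-3)$, together with at least one further special item, should already exceed $t_B=n(n-2)^2+1$. I expect the sharpest subcase to be when $(1,n)\in X^\ast$; it needs a small explicit calculation.
\item If $X^\ast$ does not mix, then its $n-1$ special items are exactly the non-corner right-column items or exactly the non-corner bottom-row items; by symmetry of roles, assume the right column. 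Then the remaining $n-1$ bundles each receive exactly one non-corner bottom-row item. Since condition~1 fails, $(n,n)\notin X^\ast$, so some bundle $X'\neq X^\ast$ holds both $(n,n)$ and one bottom item $(n,k)$.
\end{itemize}

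\textbf{Size argument in the non-mixing case.} Here $X^\ast$ has value $t_B-B_{n,n}$ from its special items. It must therefore collect exactly $(n-2)(n-3)$ from the multiples-of-$(n-2)$ items $B_{1,j}$, $B_{j,1}$, $B_{j,j}$. Dividing these entries by $n-2$ gives $n$, $n-1$ and $n^2-4n+2$, and I need $n-3$ to be a sum of such terms. Since $n-3$ is smaller than each of them for $n\ge 4$, this is impossible. This gives the contradiction, and the bottom-row variant is analogous, using the value $(n-2)^2+1$ of the bottom items.

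\textbf{Main obstacle.} The hard part is the mixing case with $(n,n)\in X^\ast$. The small item $(1,n)$ of value $1$ weakens the lower bound there, so I would first check carefully which combinations of $n-1$ special items together with $(n,n)$ can sum to at most $t_B$. After that I would use the residue of the bundle modulo $n-1$ or modulo $n$ to rule out the remaining combinations.
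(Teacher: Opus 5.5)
The paper does not prove this lemma; it cites Feige--Sapir--Tauber. So your argument has to stand on its own, and as written it has a genuine gap in exactly the subcase you flag.

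\textbf{What is already correct.} The residue count modulo $n-2$ is correct: there is one bundle $X^\ast$ with $n-1$ special items, and every other bundle has exactly one. The non-mixing case is also correct, since the missing amount $(n-2)(n-3)=B_{n,n}$ cannot be assembled from items of values $(n-2)n$, $(n-2)(n-1)$, $(n-2)(n^2-4n+2)$.

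The mixing case with $(1,n)\notin X^\ast$ works too. However, the strict inequality does not come from your count: $n-1$ items of value $(n-2)^2+1$ plus $B_{n,n}$ give exactly $t_B$ (this is the bottom row). Strictness comes from the right-column item, whose value $(n-2)(n-1)+1$ exceeds $(n-2)^2+1$.

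\textbf{Where the argument fails.} The subcase $(1,n),(n,n)\in X^\ast$ is left open, and the bound you sketch for it is false: $1+(n-2)\bigl((n-2)^2+1\bigr)+(n-2)(n-3)=(n-1)(n-2)^2+1$, which is $(n-2)^2$ below $t_B$. What is needed is the exact computation. Suppose $X^\ast$ contains $b\ge 1$ non-corner bottom items and $a=n-2-b$ items $(j,n)$ with $2\le j\le n-1$. Then
\[
1+a\bigl((n-2)(n-1)+1\bigr)+b\bigl((n-2)^2+1\bigr)+(n-2)(n-3)=t_B-(n-2)b .
\]
So the remaining items of $X^\ast$, with values $(n-2)n$, $(n-2)(n-1)$ or $(n-2)(n^2-4n+2)$, must sum to exactly $(n-2)b$. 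This amount is positive and at most $(n-2)^2$. For $n\ge 5$ each available value exceeds $(n-2)^2$, so the subcase is impossible by the same size argument you already use. No residues modulo $n-1$ or $n$ are needed.

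\textbf{The case $n=4$.} Here your proposed remedy cannot work. We have $(n-2)(n^2-4n+2)=4=(n-2)\cdot 2$, and the bundle $X^\ast=\{(1,4),(4,1),(4,2),(4,4),(2,2)\}$ has value $1+5+5+2+4=17=t_B$. It has $3=n-1$ special items, mixes, and contains $(4,4)$. It is a perfectly valid bundle, so no argument confined to $X^\ast$ (residue or size) can exclude it.

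You have to use the rest of the partition. The remaining items are $(2,4)$ and $(3,4)$ of value $7$, one bottom item of value $5$, and non-special items of values $8,8,6,6,4$. The same multiset arises if $(3,3)$ or other bottom items are used. Each remaining bundle has exactly one special item. So each of the two bundles containing a $7$ needs exactly $10$ from $\{8,8,6,6,4\}$, which forces the pair $\{6,4\}$ twice. Only one item of value $4$ remains, a contradiction.

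With this global step for $n=4$ and the size argument for $n\ge 5$, your proof is complete.
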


We now define two valuation functions, $V_R$ and $V_C$.
First multiply every entry of $B$ by $n$.
The row valuation $V_R$ is obtained by perturbing the last row by 1:
\[
    v_{R}(i, j) = \begin{cases}
        n B_{i, j}, & \text{if } i < n, \\
        n B_{n, j} - 1, & \text{if } i = n \text{ and } j < n, \\
        n B_{n, n} + (n - 1), & \text{if } i = j = n.
    \end{cases}
\]
Similarly, the column valuation $V_C$ is obtained by perturbing the last column by 1:
\[
    v_{C}(i, j) = \begin{cases}
        n B_{i, j}, & \text{if } j < n, \\
        n B_{i, n} - 1, & \text{if } i < n \text{ and } j = n, \\
        n B_{n, n} + (n - 1), & \text{if } i = j = n.
    \end{cases}
\]

Let $\tau = nt_B = n^2(n - 2)^2 + n$.
Every row has value $\tau$ under $V_R$, and every column has value $\tau$ under $V_C$.
Moreover, for both valuations, the total value of all core goods is $n\tau$.
Indeed, the perturbation has total change $0$: in $V_R$, we subtract $1$ from $n - 1$ bottom-row items and add $n - 1$ to $(n,n)$; the same holds symmetrically for $V_C$.

We now define the agents of the core instance.
Let $R$ and $C$ be disjoint sets of agents such that $|R| + |C| = n$, $|R| \ge 2$, and $|C| \ge 2$.
Every agent in $R$ has valuation $V_R$, and every agent in $C$ has valuation $V_C$.

\begin{proposition}[\cite{DBLP:conf/wine/FeigeST21}]
    \label{prop:FST-core-no-instance}
    In the constructed instance, every agent has maximin share exactly $\tau$.
    Moreover, no allocation of the core goods gives every agent value at least $\tau$.
\end{proposition}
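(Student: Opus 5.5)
The plan is to prove the two claims separately. For the maximin share I will use the canonical partitions together with an averaging bound. For non-existence I will reduce an allocation meeting every agent's threshold to an exact $t_B$-partition of $B$, and then apply \Cref{lem:FST-base-good-partitions}.

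\emph{Maximin shares.} Under $V_R$ every row has value exactly $\tau$, so the row partition shows $\mms \ge \tau$ for each agent in $R$. Symmetrically, the column partition under $V_C$ gives $\mms\ge\tau$ for each agent in $C$. For the upper bound, both valuations give the core goods total value $n\tau$. Hence in any partition into $n$ bundles some bundle has value at most $\tau$, and every agent has maximin share exactly $\tau$.

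\emph{Reduction to exact $t_B$-partitions.} Suppose, for contradiction, that an allocation $(X_a)_{a}$ gives every agent value at least $\tau$. For a bundle $X$, write $V_R(X) = nB(X) + p_R(X)$, where
\[
p_R(X) = -\bigl|X\cap\{(n,1),\dots,(n,n-1)\}\bigr| + (n-1)\,[ (n,n)\in X ],
\]
and define $p_C$ analogously using the right column. Both $p_R(X)$ and $p_C(X)$ are at most $n-1$. So if an $R$-agent receives $X$, then $nB(X)\ge n t_B-(n-1)$, and since $B(X)$ is an integer this forces $B(X)\ge t_B$. The same holds for $C$-agents. The total $B$-value of the core goods is $n t_B$, so every bundle has $B$-value exactly $t_B$. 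Consequently each agent's bundle satisfies $p(X)\ge 0$ for her own perturbation $p$.

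\emph{Reading off the constraints and applying the lemma.} The condition $p_R(X)\ge 0$ says that an $R$-agent's bundle either contains $(n,n)$ or contains no bottom-row item at all. Likewise, a $C$-agent's bundle either contains $(n,n)$ or contains no right-column item. Now apply \Cref{lem:FST-base-good-partitions} to the exact partition and treat its three cases in turn.
\begin{enumerate}
\item In case 1, every bundle contains a bottom-row item, and only one bundle contains $(n,n)$. So at most one bundle is acceptable to an $R$-agent, contradicting $|R|\ge 2$.
\item Case 2 is symmetric, using the right column and $|C|\ge 2$.
\item In case 3, some bundle contains a non-corner bottom-row item and a non-corner right-column item but not $(n,n)$. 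Then both $p_R$ and $p_C$ are negative on it, so no agent can hold it.
\end{enumerate}
Each case is a contradiction.

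The main delicate point is the integrality step: the perturbation magnitude, at most $n-1$, must be strictly less than the scaling factor $n$. This is exactly why $B$ is multiplied by $n$. The rest is case bookkeeping, with the lemma assumed.
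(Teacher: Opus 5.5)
Your proof is correct. The scaling by $n$ against perturbations of magnitude at most $n-1$ forces every bundle in a qualifying allocation to have $B$-value exactly $t_B$ and nonnegative perturbation for its owner. After that, each of the three cases of \Cref{lem:FST-base-good-partitions} gives a contradiction exactly as you describe.

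The paper states this proposition without proof, importing it from \citet{DBLP:conf/wine/FeigeST21} together with that lemma. Your derivation from \Cref{lem:FST-base-good-partitions} is the natural argument behind that citation and correctly fills in the omitted proof.
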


%% file: 3_np/np-hardness-goods.tex
\section{\NP-Hardness}
\label{sec:np}

We prove \NP-hardness by reducing from \threepartition.
Let $I = (X, T)$ be an instance of \threepartition, where $X = \{a_1, \dots, a_{3\ell}\}$, $\sum_{r = 1}^{3\ell} a_r = \ell T$, and $\frac{T}{4} < a_r < \frac{T}{2}$ for every $r \in [3\ell]$.



\hide{Here we are given a multiset of $3\ell$ positive integers $a_1,\dots,a_{3\ell}$ such that $\sum_{r=1}^{3\ell} a_r = \ell T$, and $\frac{T}{4}<a_r<\frac{T}{2}$ for every $r\in[3\ell]$.
The question is to decide whether the numbers can be partitioned into $\ell$ triples, each of sum exactly $T$.}

\subsection{Construction}
We construct a fair division instance with $n=\ell T+\ell+3$ agents.
The instance contains two kinds of agents: puzzle-solving agents and \Q agents.
There are $\ell+1$ many puzzle-solving agents, denoted by $p_0,p_1,\dots,p_\ell$.
The remaining $n-\ell-1$ agents are \Q agents.
Observe that $n>\ell T$ and $n-\ell-1\ge 2$.

Informally, the \Q agents force the KPW core goods to be allocated by rows, while the puzzle-solving agents encode the \threepartition instance.

We first construct the \KPW integer matrix $A$ corresponding to the chosen $n$, as in \Cref{prop:KPW-core-construction}. 
As before, let \rowsum be the common row and column sum of $A$.
Let $R_i$ and $C_j$ denote the set of core goods in row $i$ and column $j$, respectively.

\subsubsection{Goods}
The core goods are defined by the positive entries of $A$. 
For each $(i,j)\in A^+$, let $g_{i,j}$ denote the corresponding core good.
Let $M_{\mathrm{core}} = \{g_{i,j} : (i,j)\in A^+\}$.

We add the puzzle goods next.
For every number $a_r \in X$ in the \threepartition instance $I=(X, T)$, we create one puzzle good $x_r$.
Let $M_X = \{x_r : r \in [3\ell]\}$.
The set of all goods is
\[
    M = M_{\mathrm{core}} \dot\cup M_X.
\]

\subsubsection{Valuations}
We now define the valuations for each agent one type at a time.
We choose
\[
    \Gamma=(n-1)T+\ell T+1
    \qquad\text{and}\qquad
    \tau=\Gamma\rowsum.
\]
The parameter $\Gamma $ is chosen so that we are able to distinguish between row/column partitions and every other partition of the \KPW core, even after adding the lower-order puzzle goods corresponding to the \threepartition instance.

\begin{enumerate}[wide=0pt]
\item~[Puzzle-solving agents.]
The valuation of a puzzle-solving agent on core goods is obtained by perturbing the last column.
For every puzzle-solving agent $p_s$, the valuation on core goods is defined as follows:
\[
    v_{p_s}(g_{i,j})
    =
    \begin{cases}
        \Gamma A_{i,j}, & \text{if } j<n, \\
        \Gamma A_{i,n}-T, & \text{if } j=n \text{ and } i<n, \\
        \Gamma A_{n,n}+(n-1)T, & \text{if } i=j=n.
    \end{cases}
\]

Every puzzle-solving agent values the puzzle goods according to the \threepartition numbers:
\[
    v_{p_s}(x_r)=a_r
    \qquad\text{for every }r\in[3\ell].
\]

\begin{observation}
\label{obs:NP-row-column-set-puzzle-solving}
For the puzzle-solving agents, we can infer the following about the row and column sets.
\begin{align}
    v_{p_s}(C_j) = & \tau, \text{ for every } j\in [n], \nonumber\\
    v_{p_s}(R_i) = & \tau-T, \text{ for every } i<n, \text{ and } \\
    v_{p_s}(R_n) = & \tau+(n-1)T. \nonumber
\end{align}
Moreover, $v_{p_s}(M) = n\tau+\ell T$.
\end{observation}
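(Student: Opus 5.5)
The plan is a direct computation from the definition of $v_{p_s}$ on core goods, combined with item 2 of \Cref{prop:KPW-core-construction}: every row and every column of $A$ sums to $\rowsum$. Recall that $\tau=\Gamma\rowsum$, and that the perturbation touches only the last column: each entry $(i,n)$ with $i<n$ is lowered by $T$, and the corner $(n,n)$ is raised by $(n-1)T$. Zero entries of $A$ are not goods, but the positive entries of a row or column still sum to $\rowsum$, so it is enough to sum the formula over $A^+$. The one point to check is that every perturbed position actually lies in $A^+$. Entries $(i,n)$ and $(n,n)$ are positive entries of $S$, hence positive in $M$ by [P1], so they correspond to genuine goods and the perturbation is applied to each of them.

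For the columns, take first $j<n$. No entry of $C_j$ is perturbed, so $v_{p_s}(C_j)=\Gamma\sum_i A_{i,j}=\Gamma\rowsum=\tau$. For $j=n$, the column receives $n-1$ decrements of $T$ and one increment of $(n-1)T$, so $v_{p_s}(C_n)=\Gamma\rowsum-(n-1)T+(n-1)T=\tau$.

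For the rows, take first $i<n$. Row $R_i$ contains exactly one perturbed good, namely $g_{i,n}$, so $v_{p_s}(R_i)=\Gamma\rowsum-T=\tau-T$. Row $R_n$ contains only the corner perturbation, so $v_{p_s}(R_n)=\tau+(n-1)T$.

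For the total, sum over the columns to get $v_{p_s}(M_{\mathrm{core}})=n\tau$. The puzzle goods contribute $\sum_r a_r=\ell T$, which gives $v_{p_s}(M)=n\tau+\ell T$. No step here is a real obstacle. The only care needed is the support check above, so that no perturbation term is silently dropped because its entry was zero.
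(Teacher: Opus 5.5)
Your proof is correct and is essentially the paper's: the paper states this observation as immediate from the definition of $v_{p_s}$ together with the fact from \Cref{prop:KPW-core-construction} that every row and column of $A$ sums to $\rowsum$. Your explicit check via [P1] that every last-column entry $(i,n)$ lies in $A^+$, so that no perturbation term is lost, is exactly the detail the paper leaves implicit.
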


\item~[\Q agents.]
The valuation of a \Q agent on core goods is obtained by perturbing the last row.
For every \Q agent $q$, the valuation on core goods is defined as follows:
\[
    v_q(g_{i,j})
    =
    \begin{cases}
        \Gamma A_{i,j}, & \text{if } i<n, \\
        \Gamma A_{n,j}-1, & \text{if } i=n \text{ and } j<n, \\
        \Gamma A_{n,n}+(n-1), & \text{if } i=j=n.
    \end{cases}
\]

Every \Q agent values every puzzle good at $0$:
\[
    v_q(x_r)=0,
    \qquad\text{for every }r\in[3\ell].
\]

\begin{observation}
\label{obs:NP-row-column-set-Q-agent}
For every \Q agent $q$, we can infer the following about the row and column sets.
\begin{align}
    v_q(R_i) = & \tau, \text{ for every } i\in[n], \nonumber\\
    v_q(C_j) = & \tau-1, \text{ for every } j<n, \text{ and } \nonumber\\
    v_q(C_n) = & \tau+(n-1). \nonumber
\end{align}
Moreover, $v_q(M) = n\tau$.
\end{observation}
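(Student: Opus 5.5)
We verify the three identities of the observation directly from the definition of $v_q$ and the line sums of $A$ given by \Cref{prop:KPW-core-construction}. Throughout we use $\tau=\Gamma\rowsum$ and the fact that every row and every column of $A$ sums to $\rowsum$. The perturbation defining $v_q$ touches only the last row of the core, so the computation reduces to tracking how the $n-1$ unit deficits on $(n,1),\dots,(n,n-1)$ and the surplus $n-1$ on $(n,n)$ are distributed among rows and columns.

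\emph{Rows.} For $i<n$, every good of $R_i$ is valued at $\Gamma A_{i,j}$, so $v_q(R_i)=\Gamma\sum_j A_{i,j}=\Gamma\rowsum=\tau$. For $i=n$, the goods of $R_n$ correspond to the positive entries of the last row. By the structure of $S$ and $T$ (all entries of the last row are nonzero, cf.\ [P1]), these are exactly $(n,1),\dots,(n,n)$. Hence the perturbations contribute $-(n-1)+(n-1)=0$, and $v_q(R_n)=\Gamma\rowsum=\tau$.

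\emph{Columns.} For $j<n$, the column $C_j$ contains exactly one last-row good, namely $g_{n,j}$, which is indeed a core good since $A_{n,j}>0$. So $v_q(C_j)=\Gamma\rowsum-1=\tau-1$. The column $C_n$ contains only the last-row good $g_{n,n}$, so $v_q(C_n)=\tau+(n-1)$.

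\emph{Total.} The rows partition $M_{\mathrm{core}}$, and $v_q$ is zero on $M_X$. Therefore $v_q(M)=\sum_i v_q(R_i)=n\tau$.

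The only point requiring care is confirming that all $n$ last-row entries of $A$ are positive, so that the $n-1$ deficits and the compensating surplus are actually realized on goods. This follows from the explicit form of $S$: $S_{n,j}>0$ for all $j$, and then [P1] gives $M_{n,j}>0$. Everything else is routine summation.
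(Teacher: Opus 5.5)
Your proposal is correct and is exactly the direct verification the paper leaves implicit: every row and column of $A$ sums to the common line sum, the last-row perturbations $-1,\dots,-1,+(n-1)$ cancel within $R_n$ and fall one per column, and $v_q$ vanishes on $M_X$. Your extra check that all $n$ last-row entries of $A$ are positive (via $S_{n,j}>0$ and property [P1]) is a needed detail the paper does not spell out. One wording fix: in the $C_n$ case, say ``the only last-row good in $C_n$ is $g_{n,n}$'' rather than ``$C_n$ contains only the last-row good $g_{n,n}$'', which reads as if $C_n$ were a singleton.
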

\end{enumerate}

We note that all valuation functions are nonnegative because every positive entry of $A$ is at least $1$ and $\Gamma $ is larger than every subtractive perturbation.
The valuations are additive.

This completes the construction.

\subsection{Analysis}

\begin{lemma}
\label{lem:NP-core-row-forcing}
In any allocation in which every agent receives value at least $\tau$, the induced partition of the core goods is the row partition, up to relabeling.
\end{lemma}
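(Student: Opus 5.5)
Fix an allocation $(A_1,\dots,A_n)$ in which every agent receives value at least $\tau$, and let $X_t = A_t \cap M_{\mathrm{core}}$ be the induced partition of the core goods. The plan has two steps. First, we show that every $X_t$ has $A$-weight at least $\rowsum$, so that \Cref{prop:KPW-core-construction}(4) applies and $(X_1,\dots,X_n)$ is the row or the column partition. Second, we rule out the column partition using the \Q agents.

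For the first step, take any agent $i$ with bundle $A_t$. The valuations of both agent types differ from $\Gamma A$ only by bounded perturbations. For a puzzle-solving agent, the core goods satisfy
\[
v_{p_s}(X_t) \le \Gamma \sum_{(i,j)\in X_t} A_{i,j} + (n-1)T,
\]
because the only positive perturbation is the $+(n-1)T$ on $g_{n,n}$. Her puzzle goods contribute at most $\ell T$. For a \Q agent, the analogous bound has $(n-1)$ in place of $(n-1)T$ and no puzzle contribution.

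Now suppose $\sum_{X_t} A_{i,j} \le \rowsum - 1$. Since all entries are integers, the agent's total value is then at most
\[
\Gamma(\rowsum-1) + (n-1)T + \ell T = \tau - \Gamma + (\Gamma - 1) = \tau - 1,
\]
using $\Gamma = (n-1)T + \ell T + 1$. This contradicts the assumption, so the bundle of every agent contains core goods of $A$-weight at least $\rowsum$. There are exactly $n$ agents and $n$ bundles, so \Cref{prop:KPW-core-construction} applies and gives that the partition is the row partition or the column partition, up to relabeling. The main point to check here is precisely this choice of $\Gamma$: it absorbs the largest positive perturbation together with the total puzzle value.

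For the second step, suppose the partition is the column partition. The \Q agents value puzzle goods at $0$, so each \Q agent's value equals $v_q(C_j)$ for the column $C_j$ she receives. By \Cref{obs:NP-row-column-set-Q-agent}, this is $\tau - 1$ unless $j = n$. There are $n-\ell-1 \ge 2$ \Q agents but only one last column, so some \Q agent receives value $\tau - 1 < \tau$, a contradiction. Hence the induced partition is the row partition.
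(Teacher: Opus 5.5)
Your proposal is correct and follows essentially the same approach as the paper: bound each agent's possible repair by the largest positive perturbation plus all useful non-core value, use the choice of $\Gamma$ to invoke the row-or-column rigidity of \Cref{prop:KPW-core-construction}, and then rule out the column partition using the at least two \Q agents. The only difference is cosmetic. You show directly that every core bundle has $A$-weight at least $\rowsum$ and apply the proposition's hypothesis, whereas the paper argues via the contrapositive, namely that some bundle has $A$-weight at most $\rowsum-1$.
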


\begin{proof}
Consider the allocation of the core goods induced by such an allocation. 
Suppose that this induced core allocation is neither the row partition nor the column partition. 
Then, by \Cref{prop:only-rows-or-columns-A}, some core bundle has $A$-value at most $\rowsum-1$.
After scaling by $\Gamma $, this bundle has core value at most $\Gamma (\rowsum-1)=\tau-\Gamma$.

We now upper bound how much this deficit can be overcome for each type of agent.
\begin{itemize}
\item For a puzzle-solving agent, the positive core perturbation is at most $(n-1)T$, due to the good $g_{n,n}$.
Additionally, the total value of all puzzle goods is exactly $\ell T$.
Hence, even after adding all possible positive core perturbation and the value of all puzzle goods, the value of the bundle is at most
\[
    \tau-\Gamma+(n-1)T+\ell T<\tau,
\]
since $\Gamma =(n-1)T+\ell T+1$.

\item For a \Q agent, the positive core perturbation is at most $n-1$, due to the good $g_{n,n}$.
Moreover, \Q agents value all puzzle goods at $0$.
Since $\Gamma >n-1$, such a bundle has value at most $\tau-\Gamma+(n-1)<\tau$.
\end{itemize}

Thus, if the induced core allocation is neither the row partition nor the column partition, then some agent receives value strictly less than $\tau$, contradicting the assumption.
Consequently, it remains to rule out the column partition.

There are at least two \Q agents.
Under the column partition, by \Cref{obs:NP-row-column-set-Q-agent}, a \Q agent values the last column at $\tau+(n-1)$ and every non-last column at $\tau-1$.
Since \Q agents value all puzzle goods at $0$, at most one \Q agent can receive value at least $\tau$ under a column partition of the core goods.
This contradicts the assumption that every agent receives value at least $\tau$.

Therefore, the core goods must be allocated as rows, up to relabeling.
\end{proof}

In the following, we compute the maximin shares of the agents, where we view the partitions from each agent's valuation.
It will be used to prove the correctness of the reduction in \Cref{lem:NP-reduction-correctness}. 

\begin{lemma}
\label{lem:NP-mms-values}
The following statements hold.
\begin{enumerate}
\item For every puzzle-solving agent $p_s$, we have $\mms_{p_s} = \tau$.
\item For every \Q agent $q$, we have $\mms_q = \tau$.
\end{enumerate}
\end{lemma}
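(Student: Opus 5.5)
For each agent we prove matching lower and upper bounds on $\mms$: a lower bound from an explicit $n$-bundle partition, and an upper bound from an averaging argument on the total value combined with the rigidity of \Cref{prop:only-rows-or-columns-A}.

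\emph{Puzzle-solving agents.} For the lower bound, take the column partition and add all puzzle goods to an arbitrary column. By \Cref{obs:NP-row-column-set-puzzle-solving} every column has value exactly $\tau$, so every bundle has value at least $\tau$ and $\mms_{p_s}\ge\tau$.

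For the upper bound, suppose some partition $(P_1,\dots,P_n)$ gives every bundle value at least $\tau+1$. Since all values are integers, it suffices to rule this out. The total is $v_{p_s}(M)=n\tau+\ell T$, so $n(\tau+1)\le n\tau+\ell T$, i.e.\ $n\le \ell T$. This contradicts $n=\ell T+\ell+3>\ell T$, so $\mms_{p_s}=\tau$. This averaging step uses exactly the choice of $n$ noted right after the construction, and we do not need the rigidity here.

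\emph{\Q agents.} For the lower bound, the row partition gives each bundle value exactly $\tau$ by \Cref{obs:NP-row-column-set-Q-agent}; placing the zero-valued puzzle goods anywhere changes nothing. Hence $\mms_q\ge\tau$. For the upper bound, $v_q(M)=n\tau$, so the minimum bundle in any partition has value at most the average $\tau$, giving $\mms_q\le\tau$.

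Neither step is a real obstacle; the only point needing care is the integrality argument for the puzzle-solving agents. Their total value exceeds $n\tau$ by $\ell T$, so simple averaging only yields $\mms_{p_s}\le \tau+\lfloor \ell T/n\rfloor$. We need $n>\ell T$ to conclude $\lfloor \ell T/n\rfloor=0$. All valuations are integral, which holds because $A$, $\Gamma$, $T$ and the $a_r$ are all integers.
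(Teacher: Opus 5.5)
Your proposal is correct and follows essentially the same route as the paper. Both use the column partition (respectively the row partition) for the lower bounds, and averaging over the total value for the upper bounds, with integrality and $n>\ell T$ giving $n\tau+\ell T<n(\tau+1)$ for the puzzle-solving agents; as you note yourself, the rigidity of \Cref{prop:only-rows-or-columns-A} mentioned in your opening sentence is not actually needed.
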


\begin{proof}
Consider a puzzle-solving agent $p_s$. 
The column partition of the core goods gives value exactly $\tau$ in every bundle, by \Cref{obs:NP-row-column-set-puzzle-solving}.
Distributing the puzzle goods arbitrarily among these bundles can only increase their values.
Therefore, $\mms_{p_s}\ge \tau$.

The total value of all core goods to $p_s$ is $n\tau$, due to each column sum being $\tau$, by \Cref{obs:NP-row-column-set-puzzle-solving}.
Moreover, the total value of all puzzle goods to $p_s$ is $\sum_{r=1}^{3\ell} a_r = \ell T$.
Hence, the total value of all goods to $p_s$ is $n\tau + \ell T$.
Recall that by construction $n>\ell T$ and all values are integral.
Thus, $n\tau+\ell T<n(\tau+1)$.
Therefore, no $n$-partition can give every bundle value at least $\tau+1$.
Thus, $\mms_{p_s}=\tau$ for every puzzle-solving agent $p_s$.

Now consider a \Q agent $q$. 
The row partition of the core goods gives value exactly $\tau$ in every bundle, by \Cref{obs:NP-row-column-set-Q-agent}.
Since \Q agents value all puzzle goods at $0$, distributing the puzzle goods arbitrarily preserves this value.
Therefore, $\mms_q\ge \tau$.

The total value of all goods to $q$ is exactly $n\tau$, by \Cref{obs:NP-row-column-set-Q-agent}.
Therefore, the averaging upper bound gives $\mms_q \le \tau$.
Thus, $\mms_q=\tau$ for every \Q agent $q$.
\end{proof}

\begin{lemma}
\label{lem:NP-reduction-correctness}
The \threepartition instance is a yes-instance if and only if the constructed goods instance admits an MMS allocation.
\end{lemma}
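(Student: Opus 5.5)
We prove the two directions separately. Throughout we use \Cref{lem:NP-mms-values}, so every agent has maximin share exactly $\tau$: an MMS allocation is precisely an allocation giving every agent value at least $\tau$.

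\emph{Forward direction.} Suppose the \threepartition instance has a solution $B_1,\dots,B_\ell$, triples of puzzle goods each of value exactly $T$. We allocate the core goods by rows as follows.
\begin{itemize}
    \item Give the last row $R_n$ to $p_0$. By \Cref{obs:NP-row-column-set-puzzle-solving}, $p_0$ receives value $\tau+(n-1)T\ge\tau$.
    \item For $s\in[\ell]$, give row $R_s$ together with the puzzle goods of $B_s$ to $p_s$. Its value is $(\tau-T)+T=\tau$.
    \item Give the remaining rows $R_{\ell+1},\dots,R_{n-1}$ to the $n-\ell-1$ \Q agents, one row each. By \Cref{obs:NP-row-column-set-Q-agent}, each of these agents receives value exactly $\tau$.
\end{itemize}
The counts match, since $\ell+1$ rows go to puzzle-solving agents and $n-\ell-1$ rows go to \Q agents. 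All goods are allocated and every agent receives at least $\tau$, so this is an MMS allocation.

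\emph{Backward direction.} Suppose an MMS allocation exists, so every agent receives value at least $\tau$. By \Cref{lem:NP-core-row-forcing}, the core goods are partitioned into the rows $R_1,\dots,R_n$, with each agent receiving exactly one row.

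Only one agent receives $R_n$. Hence at least $\ell$ puzzle-solving agents receive non-last rows, each worth $\tau-T$ to them. Each such agent must therefore receive puzzle goods of total value at least $T$, and these bundles are disjoint.

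The total puzzle value is exactly $\ell T$. So there are exactly $\ell$ such agents, each receiving puzzle value exactly $T$, and these $\ell$ bundles use up all the puzzle goods.

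Finally, the condition $\frac{T}{4}<a_r<\frac{T}{2}$ forces every subset of sum exactly $T$ to have exactly three elements. Hence these $\ell$ bundles form a valid \threepartition. (Even without this observation, a partition into $\ell$ parts of sum $T$ is exactly what the problem asks for.)

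The main obstacle is the rigidity step, that every MMS allocation induces the row partition on the core goods. This is already handled by \Cref{lem:NP-core-row-forcing}, so what remains is the counting argument above.
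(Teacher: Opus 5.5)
Your proof is correct and follows the paper's argument essentially step for step. The forward direction gives $R_n$ to $p_0$, the $\ell$ triples together with non-last rows to $p_1,\dots,p_\ell$, and the remaining rows to the row-forcing agents; the converse combines \Cref{lem:NP-mms-values} and \Cref{lem:NP-core-row-forcing} with the same deficit-counting argument against the total puzzle value $\ell T$. One small quibble concerns your closing parenthetical: the problem as defined asks for triples, so the bound $\frac{T}{4}<a_r<\frac{T}{2}$ is genuinely needed to turn $\ell$ parts of sum $T$ into a solution, as you (and the paper) note just before it.
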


\begin{proof}
Suppose first that the \threepartition instance is a yes-instance. 
Let $I_1,\dots,I_\ell$ be a partition of $[3\ell]$ such that $\sum_{r\in I_j} a_r=T$ for every $j \in [\ell]$.
We construct an MMS allocation.

We allocate the core goods by rows.
\begin{itemize}
\item We give the last row $R_n$ to $p_0$.
Then, $v_{p_0}(R_n)=\tau+(n-1)T\ge \tau = \mms_{p_0}$.

\item We give $\ell$ distinct non-last rows to $p_1,\dots,p_\ell$, one row to each.
For every $j\in[\ell]$, we also give $p_j$ the puzzle goods $X_j = \{x_r : r \in I_j\}$.
Each such agent receives value $\tau-T+T=\tau=\mms_{p_j}$.

\item We give all remaining rows to the \Q agents, one row to each.
Every \Q agent values every row at exactly $\tau$, which is her maximin share.
\end{itemize}

Therefore, every agent receives value at least her MMS, and an MMS allocation exists.

Conversely, suppose that the constructed goods instance admits an MMS allocation. 
By \Cref{lem:NP-mms-values}, every agent has maximin share $\tau$.
Hence, every agent receives value at least $\tau$ in the MMS allocation.
By \Cref{lem:NP-core-row-forcing}, the core goods are allocated as rows, up to relabeling.

There are $\ell+1$ many puzzle-solving agents and only one last row.
Therefore, at least $\ell$ puzzle-solving agents receive non-last rows.
A puzzle-solving agent receiving a non-last row gets core value exactly $\tau-T$.
Since her maximin share is $\tau$, she must receive puzzle goods of total value at least $T$.

The total value of all puzzle goods is exactly $\ell T$.
Therefore, exactly $\ell$ many puzzle-solving agents receive non-last rows, each of them receives puzzle goods of total value exactly $T$, and no puzzle value is wasted on any other agent.
Thus, the puzzle goods are partitioned into $\ell$ bundles, each of total value $T$.

Since every number $a_r$ satisfies $\frac{T}{4}<a_r<\frac{T}{2}$, each bundle contains exactly three numbers.
Hence, the original \threepartition instance is a yes-instance.
\end{proof} 

\begin{theorem}
\label{thm:NP-hardness-goods}
Deciding whether an additive goods instance admits an MMS allocation is \NPH.
\end{theorem}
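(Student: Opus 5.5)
The proof assembles the construction of this section into a polynomial-time many-one reduction from \threepartition. Given an instance $I=(X,T)$, we build the goods instance described above. Correctness, namely that $I$ is a yes-instance if and only if the constructed instance admits an MMS allocation, is exactly \Cref{lem:NP-reduction-correctness}. That lemma relies on the MMS values computed in \Cref{lem:NP-mms-values} and on the row-forcing property of \Cref{lem:NP-core-row-forcing}. So the only thing left is to verify that the reduction is polynomial. We also check that the constructed instance is a legitimate additive goods instance, meaning that all values are nonnegative integers.

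For the running time, we use that \threepartition is strongly \NP-hard. We may therefore assume the numbers $a_r$ and $T$ are encoded in unary, so $T$ is polynomial in the input size. It follows that $n=\ell T+\ell+3$ is polynomial in the input size, and also $n\ge 4$ as \Cref{prop:KPW-core-construction} requires.

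By \Cref{prop:KPW-core-construction}, the matrix $A$ can be built in time polynomial in $n$. It has $5n-6$ positive entries, and every entry as well as $\rowsum$ has $\bigoh(n^4)$ bits. The multiplier $\Gamma=(n-1)T+\ell T+1$ is polynomially bounded, so $\tau=\Gamma\rowsum$ and all perturbed core values $\Gamma A_{i,j}\pm(\cdot)$ have polynomially many bits. The instance therefore has $n$ agents and $5n-6+3\ell$ goods, each with a polynomial-size value, and it can be written down in polynomial time.

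Nonnegativity holds because every positive entry of $A$ is at least $1$. Hence $\Gamma A_{i,j}\ge\Gamma>T$ and $\Gamma>1$, so subtracting $T$ or $1$ still leaves a positive value.

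The one subtle point is that the binary numbers in the MMS instance, such as $\rowsum=2^{n^3(2n-4)}$, are exponentially large in value. This is harmless for the hardness claim: their bit-lengths are polynomial, and hardness is asserted only in the weak sense. The strong reduction from \threepartition is used only to keep $n$ polynomial.

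Combining these facts, any polynomial-time decision procedure for MMS existence would decide \threepartition. The problem is therefore \NPH.
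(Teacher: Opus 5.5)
Your proposal is correct and follows essentially the same route as the paper. Correctness is delegated to \Cref{lem:NP-reduction-correctness}, and the remaining work checks three things: that $n=\ell T+\ell+3$ is polynomial by the strong \NP-hardness of \threepartition, that the KPW core, $\rowsum$ and $\Gamma$ have polynomial bit length, and that all values are nonnegative; your remark that the exponentially large values only support non-strong \NP-hardness matches the paper's own caveat.
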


\begin{proof}
By \Cref{lem:NP-reduction-correctness}, the constructed goods instance admits an MMS allocation if and only if the original \threepartition instance is a yes-instance.

Next, we show that the reduction runs in polynomial time.
We use the standard strongly \NPH restriction of \threepartition in which the target $T$ is bounded by a polynomial in the number $\ell$ of triples; that is, $T \le \ell^{O(1)}$ \cite{DBLP:books/fm/GareyJ79}.
Therefore, our choice $n=\ell T+\ell+3$ is polynomially bounded in the size of the \threepartition instance.

By \Cref{prop:KPW-core-construction}, the KPW core matrix $A$ for this value of $n$ can be constructed in time polynomial in $n$, has $\bigoh(n)$ positive entries, and every entry of $A$ as well as $\rowsum$ has $\bigoh(n^4)$ bits.

The scaling factor $\Gamma =(n-1)T+\ell T+1$ is polynomially bounded in $n$ and $T$.
Hence, $\log \Gamma$ is polynomial in the input size.
Every core value in the construction is of the form $\Gamma A_{i,j}$ plus or minus a polynomially bounded perturbation.
Therefore, every core value has polynomial bit length.
The puzzle-good values are the positive numbers $a_r\le T$, so these also have polynomial bit length.
Moreover, all values are nonnegative by construction, and the valuations are additive.

Finally, the number of agents is $n$, and the number of goods is $|A^+|+3\ell$.
Hence, the entire instance can be constructed and encoded in polynomial time.

The reduction therefore proves \NP-hardness of deciding the existence of MMS allocations. 
Since the constructed numerical values are not polynomially bounded in the number of agents and goods, we state the consequence as just \textsf{NP}-hardness rather than strong \textsf{NP}-hardness.
\end{proof}

%% file: 4_dp/dp-hardness-goods.tex
\section{$D^P$-Hardness}
\label{sec:dp}

We reduce from \threepartitionnotthreepartition.
An instance of this problem is a pair $(\instI_X,\instI_Y)$ of instances of \threepartition.
The goal is to decide whether $\instI_X$ is a yes-instance of \threepartition and $\instI_Y$ is a no-instance of \threepartition.
This problem is $D^P$-hard: this follows by applying the standard many-one reduction witnessing the strong \textsf{NP}-hardness of \threepartition \cite{DBLP:books/fm/GareyJ79} independently to the two sides of the canonical $D^P$-complete problem \textsc{SAT-UNSAT} \cite{DBLP:journals/jcss/PapadimitriouY84}.

In the instance $\instI_X$, we are given a multiset $X = \{a_1, \ldots, a_{3\ell_X}\}$ of positive integers such that $\sum_{r = 1}^{3\ell_X} a_r = \ell_X T_X$ and $\frac{T_X}{4} < a_r < \frac{T_X}{2}$ for every $r \in [3 \ell_X]$.
Similarly, in the instance $\instI_Y$, we are given a multiset $Y = \{b_1, \ldots, b_{3\ell_Y}\}$ of positive integers such that $\sum_{r = 1}^{3\ell_Y} b_r = \ell_Y T_Y$ and $\frac{T_Y}{4} < b_r < \frac{T_Y}{2}$ for every $r \in [3\ell_Y]$.
By multiplying all numbers in $\instI_Y$ and the target $T_Y$ by $2$, we may assume without loss of generality that $T_Y\ge 6$.
In particular, every $b_r$ is strictly greater than $1$.

\subsection{Construction}
We construct a fair division instance with $n = \ell_X (T_X + 1) + 2 \ell_Y + 6$ agents.
The instance contains four kinds of agents: $X$-puzzle-solving agents, two \forcing agents, one \special agent, and \Q agents.

There are $\ell_X + 1$ many $X$-puzzle-solving agents, denoted by $p_0, p_1, \ldots, p_{\ell_X}$.
There are two \forcing agents, denoted by $c_1$ and $c_2$.
There is one \special agent, denoted by $a^*$.
The remaining $n - (\ell_X + 1) - 2 - 1 = n - \ell_X - 4$ agents are \Q agents.
Observe that $n > \ell_X T_X$ and $n - \ell_X - 4 \ge 2$.

Informally, the intended roles of the agents are as follows: the \Q agents force the KPW core to be allocated by rows, the \forcing agents absorb the \trigger goods, the $X$-puzzle-solving agents encode $\instI_X$, and the \special agent detects the status of $\instI_Y$.

We first construct the \KPW integer matrix $A$ corresponding to the chosen $n$, as in \Cref{prop:KPW-core-construction}. 
As before, let \rowsum be the common row and column sum of $A$.
Let $R_i$ and $C_j$ denote the set of core goods in row $i$ and column $j$, respectively. 

\subsubsection{Goods} 
The core goods are defined by the positive entries of $A$. 
For each $(i,j)\in A^+$, let $g_{i,j}$ denote the corresponding core good. 
Let $M_{\mathrm{core}} = \{g_{i,j} : (i,j) \in A^+\}$.
The non-core goods are defined using the instances $\instI_X$ and $\instI_Y$. 
For each of the numbers in the instances $\instI_X$ and $\instI_Y$, we create $X$-{\it puzzle goods} and $Y$-{\it puzzle goods}, respectively. 
Specifically, we define the sets
\[
    M_X = \{x_r: r \in [3\ell_X]\}
    \qquad\text{and}\qquad
    M_Y = \{y_r : r \in [3\ell_Y]\}.
\]   

Additionally, we create $n - \ell_Y$ dummy goods, denoted by $M_D = \{d_1, \ldots, d_{n-\ell_Y}\}$.
Together, we define the set of \trigger goods:
\[
    M_{\mathrm{res}}
    =
    M_Y \cup M_D =
    \{y_1, \ldots, y_{3\ell_Y}\}
    \cup
    \{d_1, \ldots, d_{n-\ell_Y}\}.
\]
The set of all goods is $M = M_{\mathrm{core}} \uplus M_X \uplus M_Y \uplus M_D$.

Clearly, $|M_{\mathrm{res}}| = 3\ell_Y + n - \ell_Y = n + 2\ell_Y$.
Since $n = \ell_X(T_X + 1) + 2\ell_Y + 6$, we have $|M_{\mathrm{res}}| \le 2n - 2$.
Hence, we can fix a partition $(\C{Z}_1,\C{Z}_2)$ of $M_{\mathrm{res}}$ such that
\[
    M_{\mathrm{res}} = \C{Z}_1\uplus\C{Z}_2
    \qquad\text{and}\qquad
    1 \le |\C{Z}_1|, |\C{Z}_2| \le n - 1.
\]

\subsubsection{Valuations} 
We now define the valuations for each agent one type at a time.
We choose
\[
    \Gamma
    =
    1 + n + (n - 1)T_X + \ell_X T_X
    + n \max\{|\C{Z}_1|, |\C{Z}_2|\}
    + (T_Y - 1) + nT_Y,
\]
and set $\tau = \Gamma\rowsum$.
The parameter $\Gamma$ is chosen so that we are able to distinguish between row/column partitions and every other partition of the KPW core, even after adding all lower-order perturbations and non-core goods. 

\begin{enumerate}[wide=0pt]
  \item~[$X$-puzzle-solving agents.] The valuation of an $X$-puzzle-solving agent on core goods is obtained by perturbing the last column.
For every $X$-puzzle-solving agent $p_s$, the valuation of the core goods is defined as follows:
\[
    v_{p_s}(g_{i,j})
    =
    \begin{cases}
        \Gamma A_{i,j}, & \text{if } j < n,  \\
        \Gamma A_{i,n} - T_X, & \text{ if } j = n \text{ and } i < n, \text{ and }\\
        \Gamma A_{n,n} + (n - 1)T_X, & \text{ if } i = j = n. 
    \end{cases}
\]
We define the valuation of every $p_s$ towards the non-core goods as follows:
\[
    v_{p_s}(x_r) = a_r
   \text{ for every }r \in [3\ell_X] \text{ and } v_{p_s}(g) = 0, \text{ for all } g\in M_Y \cup M_D. 
\]

\begin{observation}\label{obs:DP-hardness-row-column-set-puzzle-solving}
For the $X$-puzzle-solving agents, we can infer the following about the row and column sets.
    \begin{align*}
\label{eq:DP-row-column-sums}
v_{p_s}(C_j) =& \tau, \text{ for every } j\in [n]\nonumber\\
v_{p_s}(R_i) =& \tau - T_X, \text{ for every } i<n, \text{ and } \\
v_{p_s}(R_n) = & \tau + (n - 1)T_X. \nonumber\\
\end{align*}
Moreover, $v_{p_s}(M) = \tau n + \ell_X T_X$.
\end{observation}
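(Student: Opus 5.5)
This is a direct computation from the definition of $v_{p_s}$ on core goods, combined with the row and column sums of $A$ from \Cref{prop:KPW-core-construction}. Unperturbed, every row and every column of the scaled core has value $\Gamma\rowsum=\tau$. So the only work is to track how the perturbation terms ($-T_X$ on the entries $g_{i,n}$ with $i<n$, and $+(n-1)T_X$ on $g_{n,n}$) are distributed among rows and columns. The non-core goods can then be added separately.

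\textbf{Columns.} For $j<n$, every good of $C_j$ is valued at $\Gamma A_{i,j}$. Hence $v_{p_s}(C_j)=\Gamma\sum_i A_{i,j}=\Gamma\rowsum=\tau$. For $C_n$, the perturbations are $-T_X$ on each entry $(i,n)$ with $i<n$ and $+(n-1)T_X$ on $(n,n)$. We should only count those $(i,n)$ that lie in $A^+$. From the KPW construction, $S_{i,n}>0$ for all $i$, so every $(i,n)$ is a positive entry by [P1], and all $n-1$ of the $-T_X$ terms are present. The net perturbation is therefore $-(n-1)T_X+(n-1)T_X=0$, giving $v_{p_s}(C_n)=\tau$.

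\textbf{Rows.} A non-last row $R_i$ with $i<n$ contains exactly one perturbed good, $g_{i,n}$, which again lies in $A^+$. So $v_{p_s}(R_i)=\Gamma\rowsum-T_X=\tau-T_X$. The last row $R_n$ contains exactly one perturbed good, $g_{n,n}$, so $v_{p_s}(R_n)=\tau+(n-1)T_X$.

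\textbf{Total.} The core goods are partitioned by the columns, so the core contributes $n\tau$. The $X$-goods contribute $\sum_r a_r=\ell_X T_X$, and the goods in $M_Y\cup M_D$ contribute $0$. This gives $v_{p_s}(M)=n\tau+\ell_X T_X$.

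The only point that needs any care is confirming that every last-column entry belongs to the support $A^+$, so that exactly $n-1$ deductions of $T_X$ occur. This follows from [P1] together with the explicit form of $S$. The rest of the argument is the same as \Cref{obs:NP-row-column-set-puzzle-solving}.
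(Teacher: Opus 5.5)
Your proposal is correct and is the same direct computation the paper has in mind. Unperturbed, each row and column has value $\Gamma\rowsum=\tau$; the $n-1$ deductions of $T_X$ cancel against the $+(n-1)T_X$ on $g_{n,n}$ within $C_n$, while each row carries exactly one perturbed good. Your explicit check that every last-column entry lies in $A^+$ (from $S_{i,n}>0$ and property [P1]) is a detail the paper leaves implicit, and it is exactly what guarantees that all $n-1$ deductions are present.
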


\item~[\forcing agents.] For the \forcing agents also, we perturb the last column.
For $k \in [2]$, the valuation of the \forcing agent $c_k$ on the core goods is defined by
\[
    v_{c_k}(g_{i,j})
    =
    \begin{cases}
        \Gamma A_{i,j}, & \text{if } j < n, \\
        \Gamma A_{i,n} - |\C{Z}_k|, & \text{if } j = n \text{ and } i < n, \\
        \Gamma A_{n,n} + (n - 1)|\C{Z}_k|, & \text{if } i = j = n.
    \end{cases}
\]

Next, we define the valuation of the \forcing agents for the non-core goods as follows:
\[
v_{c_k}(g) = \begin{cases}
1, \text{ if } g \in \C{Z}_k,\\ 
0, \text{ if } g\in M_{\mathrm{res}} \sm \C{Z}_k; 
\end{cases}
\qquad \text{ and }\qquad v_{c_k}(g) = 0, \text{ for all } g\in M_X.
\]

\begin{observation}\label{obs:DP-hardness-row-column-set-forcing}
For each \forcing agent $c_k$, $k\in [2]$, we can infer the following about row and column sets.
\begin{align*}
    v_{c_k}(C_j) = & \tau, \text{ for every } j \in [n], \nonumber\\
    v_{c_k}(R_i) = & \tau - |\C{Z}_k|, \text{ for every } i < n, \text{ and } \\
    v_{c_k}(R_n) = & \tau + (n - 1)|\C{Z}_k| \nonumber
\end{align*}
Moreover, $v_{c_k}(M) = n\tau + |\C{Z}_k|$.
\end{observation}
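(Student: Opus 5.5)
The statement is the observation giving the row and column sums of a \forcing agent $c_k$, together with her total value $v_{c_k}(M)$. I will verify it by direct computation from the valuation definition, in the same way as \Cref{obs:NP-row-column-set-puzzle-solving}. The only input needed is \Cref{prop:KPW-core-construction}: every row and every column of $A$ sums to $\rowsum$. Scaling by $\Gamma$, every row and column of $\Gamma A$ therefore sums to $\Gamma\rowsum=\tau$. It then remains to track where the perturbations fall: $-|\C{Z}_k|$ on each good $g_{i,n}$ with $i<n$, and $+(n-1)|\C{Z}_k|$ on $g_{n,n}$.

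For columns, a column $C_j$ with $j<n$ contains no perturbed entry, so $v_{c_k}(C_j)=\Gamma\rowsum=\tau$. The last column $C_n$ contains all $n-1$ negative perturbations and the single positive one, and these cancel: $-(n-1)|\C{Z}_k|+(n-1)|\C{Z}_k|=0$. Hence $v_{c_k}(C_n)=\tau$ as well.

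For rows, a row $R_i$ with $i<n$ meets the last column exactly once, at $g_{i,n}$, and gets $v_{c_k}(R_i)=\tau-|\C{Z}_k|$. One small point needs checking: $(i,n)$ must actually lie in $A^+$, so that the perturbation is attached to an existing good. This holds because $S_{i,n}>0$ for all $i$, and by [P1] the entry is then positive in $M$, hence in $A$. The last row meets the last column only at $g_{n,n}$, so $v_{c_k}(R_n)=\tau+(n-1)|\C{Z}_k|$.

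Finally, the core goods are the disjoint union of the columns, so their total value is $n\tau$. Among the non-core goods, $c_k$ values exactly the goods of $\C{Z}_k$, each at $1$, and everything in $M_X$ and $M_{\mathrm{res}}\setminus\C{Z}_k$ at $0$. Adding these gives $v_{c_k}(M)=n\tau+|\C{Z}_k|$. No step is a real obstacle; the only care needed is confirming that the perturbed positions are genuine core goods, which the support description of $A$ supplies.
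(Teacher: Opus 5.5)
Your proof is correct and is the same direct computation the paper relies on; the paper states this observation without a separate proof. It uses that every row and column of $A$ sums to the common value, scales by $\Gamma$, and tracks the $-|\mathcal{Z}_k|$ perturbations on $g_{i,n}$ for $i<n$ and the $+(n-1)|\mathcal{Z}_k|$ perturbation on $g_{n,n}$. Your extra check that every last-column position $(i,n)$ lies in $A^+$ (via $S_{i,n}>0$ and [P1]) is a worthwhile detail the paper leaves implicit.
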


\item~[\special agent.] 
We first fix two distinct non-last rows $h, \ell \in [n - 1]$.
The row $R_h$ will be the high-value row of $a^*$, and the row $R_\ell$ will be the low-value row of $a^*$.
The valuation of $a^*$ on core goods is defined by perturbing the last column as follows:
\[
    v_{a^*}(g_{i,j})
    =
    \begin{cases}
        \Gamma A_{i,j} + (T_Y - 1), & \text{if } i = h \text{ and } j = n, \\
        \Gamma A_{i,j} - (T_Y - 1), & \text{if } i = \ell \text{ and } j = n, \\
        \Gamma A_{i,j}, & \text{otherwise.}
    \end{cases}
\]

We define the valuation of the non-core goods as follows.
\begin{align*}
    v_{a^*}(x_r)=0, \text{ for every } r \in [3\ell_X],\\
     v_{a^*}(y_r) = b_r, \text{ for every }r \in [3\ell_Y]; \text{ and }\\
v_{a^*}(d_t) = T_Y, \text{for every }t \in [n - \ell_Y].
\end{align*}

\begin{observation}\label{obs:DP-hardness-row-column-set-special}
For the \special agent, we can infer the following about the row and column sets
    \begin{align*}
    v_{a^*}(R_h) = &\tau + T_Y - 1, \nonumber\\
    v_{a^*}(R_\ell) = & \tau - (T_Y - 1), \nonumber \\
    v_{a^*}(R_i) = & \tau, \text{ for every }i \notin \{h, \ell\}.\\
    v_{a^*}(C_j) = & \tau, \text{ for every } j \in [n] \nonumber
\end{align*}
Moreover, $v_{a^*}(M) = n(\tau + T_Y)$.
\end{observation}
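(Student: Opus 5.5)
We only need to verify the row and column values of the \special agent's core valuation and then add the non-core goods. All of this is bookkeeping on top of \Cref{prop:KPW-core-construction}. Unperturbed, every row and every column of $A$ sums to $\rowsum$. After scaling by $\Gamma$, every row set $R_i$ and every column set $C_j$ therefore has value $\Gamma\rowsum=\tau$. The perturbation for $a^*$ touches only two goods, $g_{h,n}$ and $g_{\ell,n}$, with $h\neq\ell$ and $h,\ell\in[n-1]$.

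First we check that these two goods exist and that the values stay nonnegative. By the structure of $S$, the last column is positive in every row, so $(h,n),(\ell,n)\in A^+$. The decreased entry $\Gamma A_{\ell,n}-(T_Y-1)$ is positive because $A_{\ell,n}\ge 1$ and $\Gamma>T_Y-1$; this is the same nonnegativity remark made for the other agent types. Next, the rows. Row $R_h$ contains $g_{h,n}$ but not $g_{\ell,n}$, so $v_{a^*}(R_h)=\tau+(T_Y-1)$. Symmetrically, $v_{a^*}(R_\ell)=\tau-(T_Y-1)$. Every other row, including $R_n$, contains neither perturbed good, so its value is $\tau$.

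For the columns, only $C_n$ contains perturbed goods. It contains both of them, so the two changes $+(T_Y-1)$ and $-(T_Y-1)$ cancel and $v_{a^*}(C_n)=\tau$. Every other column is unperturbed and has value $\tau$. This choice of a balanced perturbation in the last column is what makes all column values equal.

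Finally, the total value. The core goods are partitioned by the columns, so they contribute $n\tau$. The $X$-goods contribute $0$. The $Y$-goods contribute $\sum_r b_r=\ell_YT_Y$. The $n-\ell_Y$ dummy goods contribute $(n-\ell_Y)T_Y$. Adding these gives $n\tau+\ell_YT_Y+(n-\ell_Y)T_Y=n(\tau+T_Y)$, as claimed. None of these steps is a real obstacle. The only places to be careful are confirming that $(h,n)$ and $(\ell,n)$ are genuine core goods and that $R_n$ and $C_n$ pick up the perturbations correctly.
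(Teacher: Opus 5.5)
Your proposal is correct and is the same direct verification from the definitions that the paper leaves implicit. The paper relies on exactly this cancellation of the $\pm(T_Y-1)$ perturbations in the last column, and on the count $\ell_Y T_Y + (n-\ell_Y)T_Y = nT_Y$, in the proof of \Cref{lem:DP-mms-values}; your extra checks that $(h,n),(\ell,n)\in A^+$ and that $\Gamma A_{\ell,n}-(T_Y-1)>0$ are details the paper only mentions in passing.
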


\item~[\Q agents.] 
The valuation of a \Q agent on core goods is obtained by perturbing the last row.
For every \Q agent $q$, the valuation function is defined as follows:
\[
    v_q(g_{i,j})
    =
    \begin{cases}
        \Gamma A_{i,j}, & \text{if } i < n, \\
        \Gamma A_{n,j} - 1, & \text{if } i = n \text{ and } j < n, \\
        \Gamma A_{n,n} + (n - 1), & \text{if } i = j = n.
    \end{cases}
\]

Every \Q agent values every non-core good at $0$.

\[
    v_q(g) = 0, \text{ for all } g \in M_X \cup M_Y \cup M_D.
\]

\begin{observation}\label{obs:DP-hardness-row-column-set-Q-agent}
    For every \Q agent $q$, we can infer the following about the row and column sets.
    \begin{align}
        v_q(R_i) = & \tau, \text{ for every } i \in [n], \nonumber\\
        v_q(C_j) = & \tau - 1, \text{ for every } j < n,  \text{ and } \nonumber\\
        v_q(C_n) = & \tau + (n - 1). \nonumber \\
    \end{align}
    Moreover, $v_q(M) = n \tau$.
\end{observation}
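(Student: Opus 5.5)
The plan is to compute each row and column sum directly from the definition of $v_q$ on core goods, using only two facts from \Cref{prop:KPW-core-construction}: every row and every column of $A$ sums to $\rowsum$, and $\tau=\Gamma\rowsum$. The one structural fact we need about the support is that every entry of the last row of $A$ is positive. This holds because $S_{n,j}>0$ for all $j$, so by [P1] of \Cref{lem:KPW-eight-properties} each $(n,j)\in A^+$. Hence each perturbed cell really is a good of the instance.

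First the rows. For $i<n$, no good of $R_i$ is perturbed, so $v_q(R_i)=\Gamma\sum_j A_{i,j}=\Gamma\rowsum=\tau$. For $R_n$, the $n-1$ goods $g_{n,j}$ with $j<n$ each lose $1$, while $g_{n,n}$ gains $n-1$. The net perturbation is therefore $0$, and $v_q(R_n)=\Gamma\rowsum=\tau$.

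Next the columns. For $j<n$, the only perturbed good of $C_j$ is $g_{n,j}$, which is present by the remark above and loses $1$. Thus $v_q(C_j)=\Gamma\rowsum-1=\tau-1$. For $C_n$, the only perturbed good is $g_{n,n}$, so $v_q(C_n)=\tau+(n-1)$.

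Finally, $v_q(M)$ is obtained by summing the row values over the partition $R_1,\dots,R_n$ of $M_{\mathrm{core}}$, which gives $n\tau$. All non-core goods in $M_X\cup M_Y\cup M_D$ contribute $0$ to $v_q$, so $v_q(M)=n\tau$.

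There is no real obstacle here. The only point needing care is confirming that every last-row cell lies in $A^+$, so that the $-1$ perturbations are actually applied and the row-$n$ cancellation is exact.
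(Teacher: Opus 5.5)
Your proof is correct and matches what the paper intends. The paper states this as an observation that follows directly from the definitions: every row and column of $A$ sums to $\rowsum$, and the last-row perturbations cancel in $R_n$. Your check via [P1] that every last-row cell lies in $A^+$, so the $-1$ perturbations are actually applied, is a detail the paper leaves implicit.
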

\end{enumerate}


These observations will be used in the upcoming analysis to rule out all but the row partition of the core goods and to compute the maximin shares.

\subsection{Analysis}

We note that all valuation functions are non-negative because every positive entry of $A$ is at least $1$ and $\Gamma$ is larger than every subtractive perturbation.

\begin{lemma}
\label{lem:DP-core-row-forcing}
In any allocation in which every agent receives value at least $\tau$, the induced partition of the core goods is the row partition, up to relabeling.
\end{lemma}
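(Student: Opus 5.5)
The plan is to mirror the proof of \Cref{lem:NP-core-row-forcing}, with the bounds recalibrated for the four agent types of this construction. We fix an allocation in which every agent receives value at least $\tau$, and look at the partition of $M_{\mathrm{core}}$ it induces. Suppose this partition is neither the row partition nor the column partition. Then, by \Cref{prop:only-rows-or-columns-A}, some core bundle has $A$-value at most $\rowsum-1$. Hence its unperturbed scaled value is at most $\Gamma(\rowsum-1)=\tau-\Gamma$.

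Next we bound, for each agent type, the largest amount by which positive core perturbations and non-core goods could raise that bundle. The key point is that all of these amounts are dominated by $\Gamma$.
\begin{itemize}
\item For an $X$-puzzle-solving agent, the positive perturbation is at most $(n-1)T_X$, coming from $g_{n,n}$. Together with all $X$-goods (worth $\ell_X T_X$), the gain is at most $(n-1)T_X+\ell_X T_X$.
\item For a \forcing agent $c_k$, the positive perturbation is at most $(n-1)|\C{Z}_k|$. Adding all reserve goods valued by $c_k$ (worth $|\C{Z}_k|$), the gain is at most $n\max\{|\C{Z}_1|,|\C{Z}_2|\}$.
\item For the \special agent, the positive perturbation is at most $T_Y-1$. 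The non-core goods she values total $\ell_Y T_Y+(n-\ell_Y)T_Y=nT_Y$.
\item For a \Q agent, the gain is at most $n-1$.
\end{itemize}
Each of these quantities is strictly less than $\Gamma$, which is defined as $1$ plus a sum containing every one of them as a summand. So the agent holding the deficient bundle receives less than $\tau$, a contradiction. The partition must therefore be the row or the column partition, up to relabeling.

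It remains to exclude the column partition, and this is done exactly as in the \NP-hardness proof. Every \Q agent values every non-core good at $0$. By \Cref{obs:DP-hardness-row-column-set-Q-agent}, under the column partition each \Q agent values every non-last column at $\tau-1$, so only the agent holding $C_n$ can reach $\tau$. Since there are at least $n-\ell_X-4\ge 2$ \Q agents, at least one of them falls below $\tau$, a contradiction.

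The only real obstacle is bookkeeping: we must check that every possible gain, for every agent type, is dominated by $\Gamma$. In particular, the \special agent's values on the dummy goods are large ($T_Y$ each), and we must confirm that the term $nT_Y$ in $\Gamma$ covers her entire non-core value, which is exactly $n T_Y$ by \Cref{obs:DP-hardness-row-column-set-special}. The \forcing agents' combined term must also be covered, which $n\max\{|\C{Z}_1|,|\C{Z}_2|\}$ does. No other new ingredient is needed.
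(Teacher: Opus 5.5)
Your proposal is correct and follows the paper's proof essentially step for step. It uses the same per-type repair bounds, namely $(n-1)T_X+\ell_X T_X$, $n|\mathcal{Z}_k|$, $(T_Y-1)+nT_Y$ and $n-1$, each dominated by $\Gamma$, to rule out every partition other than rows or columns, and then the same at-least-two row-forcing-agents argument to exclude the column partition.
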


\begin{proof}
Consider the allocation of the core goods induced by such an allocation.
Suppose that this induced core partition is neither the row partition nor the column partition.

In that case, by \Cref{prop:only-rows-or-columns-A}, some core bundle has $A$-value at most $\rowsum - 1$.
After multiplying by $\Gamma$, that bundle has unperturbed core value at most $\Gamma(\rowsum - 1) = \tau - \Gamma$.

In what follows, we upper bound how much of this deficit can be repaired for each kind of agent.
\begin{itemize}
\item For an $X$-puzzle-solving agent, the positive core perturbation is at most $(n - 1)T_X$, due to the good $g_{n,n}$.
Additionally, the total value of all non-core goods valued by such an agent comes from the $X$-puzzle goods, and is equal to $\sum_{r \in [3\ell_X]} a_r= \ell_X T_X$.
Hence, even after adding all possible positive perturbation and all positively valued non-core goods, the value of the bundle is at most $(\tau - \Gamma) + (n - 1)T_X + \ell_X T_X < \tau$.

\item For a \forcing agent $c_k$, $k\in[2]$, the positive core perturbation is at most $(n - 1)|\C{Z}_k|$, due to the good $g_{n,n}$. 
Additionally, the total value of all non-core goods valued by $c_k$ is $|\C{Z}_k|$.
Hence, such a bundle has value at most $(\tau - \Gamma) + n|\C{Z}_k| < \tau$.

\item For the \special agent $a^*$, the positive core perturbation is at most $T_Y - 1$, due to the good $g_{h,n}$.
Additionally, the total value of all non-core goods valued by $a^*$ is due to the $Y$-puzzle goods and the dummy goods, and is equal to $\sum\limits_{r\in [3\ell_Y]} b_r + \sum\limits_{t\in [n-\ell_Y]} T_Y = \ell_Y T_Y + (n-\ell_Y) T_Y = nT_Y$.
Hence, such a bundle has value at most $(\tau - \Gamma) + (T_Y - 1) + nT_Y < \tau$.

\item Finally, for a \Q agent, the positive core perturbation is at most $n - 1$, due to the good $g_{n,n}$. 
Additionally, \Q agents value all non-core goods at $0$.
Hence, such a bundle has value at most $(\tau - \Gamma) + (n - 1) < \tau$.
\end{itemize}

Thus, if the induced core partition is neither the row partition nor the column partition, then some agent receives value less than $\tau$, contradicting the assumption.
Consequently, the induced core partition must be either the row partition or the column partition.

Next, we rule out the column partition.
There are at least two \Q agents.
Under the column partition, a \Q agent values the last column at $\tau + (n - 1)$ and every non-last column at $\tau - 1$, by \Cref{obs:DP-hardness-row-column-set-Q-agent}.
Moreover, since \Q agents value all non-core goods at $0$, at most one \Q agent can receive value at least $\tau$ (the one who receives the last column) under a column allocation of the core goods.
This contradicts the assumption that every agent receives value at least $\tau$.

Therefore, the core goods must be allocated as rows, up to relabeling.
\end{proof}  

In the following, we compute the maximin shares of the agents, where we view the partitions from each agent's valuation.
\begin{lemma}
\label{lem:DP-mms-values}
The following statements hold.
\begin{enumerate}
\item For every $X$-puzzle-solving agent $p_s$, we have $\mms_{p_s} = \tau$.
\item For every \forcing agent $c_k$, we have $\mms_{c_k} = \tau$.
\item For every \Q agent $q$, we have $\mms_q = \tau$.
\item For the \special agent $a^*$, we always have $\mms_{a^*} \ge \tau$.
Moreover, if $\instI_Y$ is a yes-instance of \threepartition, then $\mms_{a^*} = \tau + T_Y$, while if $\instI_Y$ is a no-instance of \threepartition, then $\mms_{a^*} \le  \tau + T_Y - 1$.
\end{enumerate}
\end{lemma}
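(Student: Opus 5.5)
Items 1--3 follow the pattern of \Cref{lem:NP-mms-values}. For the lower bounds, the column partition of the core goods gives every bundle value exactly $\tau$ to each $X$-puzzle-solving agent and each \forcing agent (\Cref{obs:DP-hardness-row-column-set-puzzle-solving,obs:DP-hardness-row-column-set-forcing}). The row partition does the same for every \Q agent (\Cref{obs:DP-hardness-row-column-set-Q-agent}). Distributing the non-core goods arbitrarily over these bundles only increases values, since all values are nonnegative, so $\mms\ge\tau$ in each case.

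For the upper bounds I use averaging with integrality. For $p_s$ the total value is $n\tau+\ell_X T_X<n(\tau+1)$ because $n>\ell_X T_X$. For $c_k$ the total is $n\tau+|\C{Z}_k|<n(\tau+1)$ because $|\C{Z}_k|\le n-1$. For $q$ the total is exactly $n\tau$. Since all values are integers, no $n$-partition can give every bundle at least $\tau+1$, so each of these maximin shares equals $\tau$.

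For the \special agent, the column partition again gives $\mms_{a^*}\ge\tau$, since by \Cref{obs:DP-hardness-row-column-set-special} every column is worth $\tau$ to $a^*$.
\begin{itemize}
\item \emph{Yes case.} Let $J_1,\dots,J_{\ell_Y}$ be a 3-partition of $\instI_Y$. Pair the first $\ell_Y$ columns with the triples $\{y_r:r\in J_k\}$, each worth $T_Y$. Pair the remaining $n-\ell_Y$ columns with one dummy good each, also worth $T_Y$. Every bundle then has value $\tau+T_Y$. Since $v_{a^*}(M)=n(\tau+T_Y)$, averaging gives $\mms_{a^*}=\tau+T_Y$.
\item \emph{No case.} Suppose some partition $(P_1,\dots,P_n)$ gives every bundle value at least $\tau+T_Y$. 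The total is exactly $n(\tau+T_Y)$, so every bundle has value exactly $\tau+T_Y$. I then argue in three steps.
\begin{enumerate}
\item \emph{The core is split by rows or columns.} Suppose the core part is neither the row nor the column partition. By \Cref{prop:only-rows-or-columns-A}, some bundle has unperturbed core value at most $\tau-\Gamma$. Adding the maximum perturbation $T_Y-1$ and all non-core value $nT_Y$ still leaves it below $\tau$, exactly as in \Cref{lem:DP-core-row-forcing}. So each bundle holds exactly one row or exactly one column.
\item \emph{Column case.} Every column is worth $\tau$ to $a^*$, so every bundle needs non-core value exactly $T_Y$. There are $n-\ell_Y$ dummy goods, each worth $T_Y$. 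Every $b_r$ is positive, so a bundle containing a dummy contains no $Y$-good and no second dummy. Hence exactly $n-\ell_Y$ bundles hold one dummy each. The other $\ell_Y$ bundles then split $M_Y$ into parts of sum exactly $T_Y$. Because $T_Y/4<b_r<T_Y/2$, each part is a triple, which contradicts that $\instI_Y$ is a no-instance.
\item \emph{Row case.} The bundles need non-core values $1$ (for $R_h$), $2T_Y-1$ (for $R_\ell$), and $T_Y$ for every other row. A bundle with non-core value $1$ must consist of $Y$-goods and dummies, each worth either a $b_r>1$ or $T_Y\ge 6$. No nonempty subset of these sums to $1$, and the empty set sums to $0$, a contradiction.
\end{enumerate}
\end{itemize}
Since all values are integers, the no case gives $\mms_{a^*}\le\tau+T_Y-1$.

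The main obstacle is the row case in the no-instance direction. It relies entirely on the normalization $T_Y\ge 6$ and $b_r>1$, which is exactly why the section imposes it. I expect this to be the delicate point, and I would double-check it carefully, together with the dummy-counting argument in the column case.
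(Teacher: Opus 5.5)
Your proposal is correct and follows essentially the same argument as the paper. For items 1--3 you use the column/row witness partitions together with integral averaging. For $a^*$ you give the explicit columns-plus-triples-plus-dummies witness in the yes case; in the no case you rule out non-row/column core splits via $\Gamma$, then observe that the $R_h$ bundle would need non-core value exactly $1$, and that the column case forces one dummy per bundle and hence a $3$-partition of $\mathcal{I}_Y$. The only differences are cosmetic: you treat the column case before the row case and list the deficits of all rows, whereas only $R_h$ is needed.
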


\begin{proof}
Consider an $X$-puzzle-solving agent $p_s$.
The column partition of the core goods gives value exactly $\tau$ in every bundle, by \Cref{obs:DP-hardness-row-column-set-puzzle-solving}.
Therefore, $\mms_{p_s} \ge \tau$.
The total value of all core goods to $p_s$ is $n\tau$, due to each column sum being $\tau$, by \Cref{obs:DP-hardness-row-column-set-puzzle-solving}.

Moreover, the total value of all non-core goods to $p_s$ comes from the $X$-puzzle goods only, and it is $\sum_{r \in [3\ell_X]} a_r = \ell_X T_X$. 
Hence, the total value of all goods to $p_s$ is $n\tau + \ell_X T_X$.
Recall that by construction $n > \ell_XT_X$ and all values are integral.
Thus, it follows that $n\tau + \ell_X T_X < n(\tau + 1)$.
That is, the total value of all goods to $p_s$ is strictly less than $n(\tau+1)$.
Hence, no $n$-partition can give every bundle value at least $\tau + 1$.
Thus, $\mms_{p_s} = \tau$ for every $X$-puzzle-solving agent $p_s$.

Now consider a \forcing agent $c_k$, $k\in [2]$. 
The column partition of the core goods gives value exactly $\tau$ in every bundle, by \Cref{obs:DP-hardness-row-column-set-forcing}, so $\mms_{c_k} \ge \tau$.
The total value of all core goods to $c_k$ is $n\tau$, due to each column sum being $\tau$, by \Cref{obs:DP-hardness-row-column-set-forcing}.

Moreover, the total value of all non-core goods to $c_k$ is due to the goods in $\C{Z}_k$, each at value $1$.
Since $|\C{Z}_k| \le n - 1$, the total value of all goods to $c_k$ is strictly less than $n(\tau + 1)$.
Therefore, no $n$-partition can give every bundle value at least $\tau + 1$.
Hence, $\mms_{c_k} = \tau$.

Next, we consider a \Q agent $q$.
The row partition of the core goods gives value exactly $\tau$ in every bundle, by \Cref{obs:DP-hardness-row-column-set-Q-agent}, 
so $\mms_q\ge \tau$.

The total value of all core goods to $q$ is $n\tau$, due to each row sum being $\tau$, by \Cref{obs:DP-hardness-row-column-set-Q-agent}.
Moreover, the total value of all non-core goods to $q$ is $0$. 
Hence, the total value of all goods to $q$ is exactly $n\tau$.
Therefore, $\mms_q=\tau$.

This completes the proof of the first three items. 

Next, we consider the \special agent $a^*$.
The column partition of the core goods gives value exactly $\tau$ in every bundle, by \Cref{obs:DP-hardness-row-column-set-special}, 
so $\mms_{a^*} \ge \tau$.
The total value of the core goods to $a^*$ is exactly $n\tau$, since the positive and negative perturbations cancel across all rows, by \Cref{obs:DP-hardness-row-column-set-special}.

Moreover, the total value of all non-core goods to $a^{*}$ is due to the \trigger goods and is equal to $\sum\limits_{r\in [3\ell_Y]} b_r + \sum_{t\in [n-\ell_Y]} T_Y = \ell_Y T_Y + (n-\ell_Y)T_Y = nT_Y$.
Hence, the total value of all goods to $a^*$ is $n(\tau + T_Y)$.
Therefore, $\mms_{a^*} \le \tau + T_Y$.

To prove the additional statement of the fourth item, we argue as follows.

Suppose that $\instI_Y$ is a yes-instance.
Let $I_1, \ldots, I_{\ell_Y}$ be a partition of $[3\ell_Y]$ such that $\sum_{r \in I_t}b_r = T_Y$ for every $t \in [\ell_Y]$.

The total value of all goods, core and non-core, to $a^*$ is $n(\tau + T_Y)$, by \Cref{obs:DP-hardness-row-column-set-special}.
Thus, for any $n$-partition, there must be at least one bundle that has value at most $\tau+ T_Y$ for $a^*$.
Hence, $\mms_{a^*} \leq \tau+T_Y$. 
We will complete the proof by analysing the column partition of the core goods.
Using the solution of $I_Y$, we will exhibit a partition of the core and non-core goods in which every bundle is of value at least $\tau + T_Y$ for $a^*$.

We begin by noting that every column of core goods has value exactly $\tau$ to $a^*$, by \Cref{obs:DP-hardness-row-column-set-special}. 
We add one dummy good to each of the last $n - \ell_Y$ columns.
For every $t \in [\ell_Y]$, we add the bundle of $Y$-puzzle goods $Y_t=\{y_r : r \in I_t\}$ to column $t$.
Thus, every resulting bundle has value $\tau + T_Y$ for $a^*$.
Therefore, $\mms_{a^*} = \tau + T_Y$.

Conversely, suppose that $\instI_Y$ is a no-instance of \threepartition.
We show that $\mms_{a^*} \le \tau + T_Y - 1$.
Let $T=\tau+T_Y$.
Assume for contradiction that $\mms_{a^*} \ge T$.
Since the total value of all goods to $a^*$ is exactly $nT$, every bundle in any $n$-partition witnessing value at least $T$ must have value exactly $T$.

Consider such a partition and look only at the induced partition of the core goods.
Suppose that this induced core partition is neither the row partition nor the column partition.
Then, by \Cref{prop:KPW-core-construction}, some core bundle has $A$-value at most $\rowsum - 1$.
After scaling by $\Gamma$, its unperturbed value is at most $\Gamma(\rowsum -1) =\tau - \Gamma$.
The total positive core perturbation for $a^*$ is at most $T_Y - 1$, due to the good $g_{h,n}$. 
Moreover, the total value of all \trigger goods to $a^*$ is $nT_Y$, by \Cref{obs:DP-hardness-row-column-set-special}.
Hence, the value of this bundle, even after adding all \trigger goods, is at most $\tau - \Gamma+ (T_Y - 1) + nT_Y < \tau < T$, by the choice of $\Gamma$.
Thus, the value cannot be attained by an induced core partition that is neither the row partition nor the column partition.


Suppose first that the induced core partition is the row partition.
The high-value row $R_h$ has core value $v_{a^*}(R_h) = \tau + T_Y - 1 = T - 1$, by \Cref{obs:DP-hardness-row-column-set-special}. 
Since every bundle must have value exactly $T$, the bundle containing $R_h$ must receive \trigger goods of total value exactly $1$.
But every \trigger good has value strictly greater than $1$ to $a^*$: every $Y$-puzzle good has value $b_r > 1$, and every dummy good has value $T_Y > 1$.
Thus, no subset of \trigger goods has value exactly $1$, a contradiction.
Therefore, the induced core partition cannot be the row partition.

Hence, the induced core partition must be the column partition.
Every column has core value exactly $\tau$ to $a^*$, by \Cref{obs:DP-hardness-row-column-set-special}. 
Since every bundle must have value exactly $T = \tau + T_Y$, each column bundle must receive \trigger goods of total value exactly $T_Y$.
Each dummy good already has value exactly $T_Y$, so a bundle containing a dummy good cannot contain any other \trigger good.
Thus, the $n - \ell_Y$ dummy goods must be distributed to the $n - \ell_Y$ of the column bundles.
The remaining $\ell_Y$ column bundles must be filled only by the $Y$-puzzle goods, each to total value exactly $T_Y$.
Hence, the $Y$-puzzle goods can be partitioned into $\ell_Y$ bundles, each of value $T_Y$.
Since every $b_r$ satisfies $T_Y/4 < b_r < T_Y/2$, each such bundle contains exactly three numbers.
Consequently, $\instI_Y$ is a yes-instance of \threepartition, contradicting our premise.

Thus, it follows that $\mms_{a^*} < T$.
Since all valuations are integral, we have $\mms_{a^*} \le T - 1 = \tau + T_Y - 1$.
\end{proof}

\begin{lemma}
\label{lem:DP-reduction-correctness}
The pair $(\instI_X, \instI_Y)$ is a yes-instance of \threepartitionnotthreepartition if and only if the constructed goods instance admits an MMS allocation.
\end{lemma}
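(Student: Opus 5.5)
We prove the two directions separately, relying on \Cref{lem:DP-core-row-forcing} for the structure of the core and on \Cref{lem:DP-mms-values} for the maximin shares.

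\emph{($\Rightarrow$)} Suppose $\instI_X$ is a yes-instance with triples $I_1,\dots,I_{\ell_X}$ and $\instI_Y$ is a no-instance. Then $\mms_{a^*}\le \tau+T_Y-1$, and every other agent has MMS $\tau$. We allocate the core goods by rows as follows.
\begin{itemize}
\item Agent $p_0$ receives $R_n$, worth $\tau+(n-1)T_X$ to her.
\item Each $p_j$ with $j\ge 1$ receives a non-last row other than $R_h$, together with the $X$-goods indexed by $I_j$, giving her value exactly $\tau$.
\item Each cover agent $c_k$ receives a non-last row other than $R_h$, together with $\C{Z}_k$, giving her value $\tau-|\C{Z}_k|+|\C{Z}_k|=\tau$.
\item The $Y$-detector $a^*$ receives $R_h$, worth $\tau+T_Y-1\ge \mms_{a^*}$ to her.
\item The remaining rows go one each to the row-forcing agents, who value every row at $\tau$.
\end{itemize}
The count works out: there are $n-1$ non-last rows and $n-1$ agents other than $p_0$. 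Since $h,\ell$ are fixed non-last rows and the non-last rows are interchangeable for all agents except $a^*$, we may ensure that $a^*$ gets $R_h$ while some other agent gets $R_\ell$.

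\emph{($\Leftarrow$)} Take an MMS allocation. Every agent gets at least $\tau$ (for $a^*$ this follows from $\mms_{a^*}\ge\tau$), so by \Cref{lem:DP-core-row-forcing} the core is allocated by rows. As in \Cref{lem:NP-reduction-correctness}, at least $\ell_X$ of the $p_s$ receive non-last rows. Each of them needs $X$-value at least $T_X$, and the total $X$-value is $\ell_X T_X$. Hence exactly $\ell_X$ of them receive non-last rows, the $X$-goods split exactly into bundles of value $T_X$, and the size bounds force triples. So $\instI_X$ is a yes-instance. Moreover, the remaining $p_s$ must hold $R_n$.

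This is the main step. Both cover agents then hold non-last rows, so each $c_k$ needs all of $\C{Z}_k$: she values only $\C{Z}_k$, at $1$ per good, and has a deficit of exactly $|\C{Z}_k|$. Therefore all reserve goods go to the cover agents. Now $a^*$ receives only a core row plus possibly $X$-goods, which she values at $0$, so her value is at most $\tau+T_Y-1$. If $\instI_Y$ were a yes-instance, then $\mms_{a^*}=\tau+T_Y$, a contradiction. Hence $\instI_Y$ is a no-instance.

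The delicate points are to check that $a^*$ cannot gain from the $X$-goods (she values them at $0$), and that the dummy and $Y$-goods all lie in $\C{Z}_1\cup\C{Z}_2=M_{\mathrm{res}}$.
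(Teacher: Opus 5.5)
Your proposal is correct and follows essentially the same route as the paper's proof. The forward direction uses the same row allocation, with $R_h$ going to $a^*$ and every other agent receiving a row worth exactly her maximin share. The converse uses the same chain: the core is forced into rows, the $X$-goods split exactly into bundles of value $T_X$, $R_n$ goes to a puzzle-solving agent, the cover agents must absorb all of $M_{\mathrm{res}}$, and $a^*$ is then capped at $\tau+T_Y-1<\mms_{a^*}$ whenever $\instI_Y$ is a yes-instance.
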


\begin{proof}
Suppose first that $(\instI_X,\instI_Y)$ is a yes-instance of \threepartitionnotthreepartition.
Thus, $\instI_X$ is a yes-instance of \threepartition and $\instI_Y$ is a no-instance of \threepartition.
Let $I_1, \ldots, I_{\ell_X}$ be a partition of $[3\ell_X]$ such that $\sum_{r \in I_t} a_r = T_X$ for every $t \in [\ell_X]$.
We construct an MMS allocation as follows.

We allocate the core goods by rows. Recall that there are $n$ rows and $n$ agents. We consider the row partition and give one row to each agent as follows. 

\begin{itemize}
\item We give the last row $R_n$ to $p_0$.
Then, $v_{p_0}(R_n) = \tau + (n - 1) T_X \ge \tau = \mms_{p_0}$, by \Cref{obs:DP-hardness-row-column-set-puzzle-solving}.

\item We give $\ell_X$ distinct non-last rows, none of which is $R_h$, to $p_1, \ldots, p_{\ell_X}$, one row to each, resulting in $\tau - T_X$, by \Cref{obs:DP-hardness-row-column-set-puzzle-solving}.
For every $t \in [\ell_X]$, we also give $p_t$ the $X$-puzzle goods $X_t=\{x_r : r \in I_t\}$.
Each such agent receives value $\tau - T_X + T_X = \tau = \mms_{p_t}$.

\item We give two further distinct non-last rows, neither of which is $R_h$, to $c_1$ and $c_2$, one row to each.
We also give every \trigger good in $\C{Z}_k$ to $c_k$, for $k \in [2]$.
Then, each \forcing agent receives value $\tau - |\C{Z}_k| + |\C{Z}_k| = \tau = \mms_{c_k}$.

\item We give the high row $R_h$ to $a^*$.
Since $\instI_Y$ is a no-instance, by \Cref{lem:DP-mms-values} we have $\mms_{a^*} \le \tau + T_Y - 1$.
The row $R_h$ has value exactly $\tau + T_Y - 1$ to $a^*$, and so $a^*$ is satisfied.

\item We give all remaining rows to the \Q agents, one row to each.
Every \Q agent values every row at exactly $\tau$, which is her maximin share.

\end{itemize}
Therefore, every agent receives value at least their MMS, and an MMS allocation exists.

Conversely, suppose that the constructed goods instance admits an MMS allocation.
By \Cref{lem:DP-mms-values}, every agent has maximin share at least $\tau$.
Hence every agent receives value at least $\tau$ in the MMS allocation.
By \Cref{lem:DP-core-row-forcing}, the core goods are allocated as rows, up to relabeling.

We first show that $\instI_X$ is a yes-instance.
There are $\ell_X+1$ many $X$-puzzle-solving agents and only one last row.
Therefore, at least $\ell_X$ of the $X$-puzzle-solving agents receive non-last rows.
An $X$-puzzle-solving agent receiving a non-last row gets core value exactly $\tau - T_X$.
Since her maximin share is $\tau$, she must receive $X$-puzzle goods of total value at least $T_X$.

The total value of all $X$-puzzle goods is exactly $\ell_X T_X$.
Therefore, exactly $\ell_X$ many $X$-puzzle-solving agents receive non-last rows, each of them receives $X$-puzzle goods of total value exactly $T_X$, and no $X$-puzzle value is wasted on any other agent.
Thus, the $X$-puzzle goods are partitioned into $\ell_X$ bundles, each of total value $T_X$.
Since every $a_r$ satisfies $\frac{T_X}{4} < a_r < \frac{T_X}{2}$, each bundle contains exactly three numbers.
Hence, $\instI_X$ is a yes-instance of \threepartition.

In particular, one of the $X$-puzzle-solving agents receives the last row.
Therefore, neither \forcing agent receives the last row.
Each \forcing agent $c_k$ receives a non-last row, whose value to $c_k$ is $\tau-|\C{Z}_k|$.
Since $\mms_{c_k}=\tau$, the agent $c_k$ must receive \trigger goods of total value at least $|\C{Z}_k|$.
But the total value of all goods in $\C{Z}_k$ to $c_k$ is exactly $|\C{Z}_k|$, and $c_k$ values no other non-core goods.
Hence, $c_k$ must receive every \trigger good in $\C{Z}_k$.
This holds for both $k=1$ and $k=2$.
Therefore, all \trigger goods are allocated to the \forcing agents, and $a^*$ receives no \trigger good.

Now suppose, for contradiction, that $\instI_Y$ is a yes-instance of \threepartition.
By \Cref{lem:DP-mms-values}, we have $\mms_{a^*} = \tau + T_Y$.
However, every row has value at most $\tau + T_Y - 1$ to $a^*$, and $a^*$ receives no \trigger good.
Since $a^*$ values all $X$-puzzle goods at $0$, the agent $a^*$ cannot reach her maximin share.
This contradicts the assumption that the allocation is an MMS allocation.
Hence, $\instI_Y$ is a no-instance of \threepartition.

Overall, $(\instI_X,\instI_Y)$ is a yes-instance of \threepartitionnotthreepartition.
\end{proof}

\dphardnessgoods*

\begin{proof}
By \Cref{lem:DP-reduction-correctness}, the constructed goods instance admits an MMS allocation if and only if $\instI_X$ is a yes-instance of \threepartition and $\instI_Y$ is a no-instance of \threepartition.

Next, we show that the reduction runs in polynomial time.
We use the standard strongly \NP-hard restriction of \threepartition in which the targets $T_X$ and $T_Y$ are bounded by a polynomial in the number of triples \cite{DBLP:books/fm/GareyJ79}.
Therefore, our choice $n = \ell_X(T_X + 1) + 2\ell_Y + 6$ is polynomially bounded in the input size of $(\instI_X, \instI_Y)$.

By \Cref{prop:KPW-core-construction}, the KPW core matrix $A$ for this value of $n$ can be constructed in time polynomial in $n$, has $\bigoh(n)$ positive entries, and every entry of $A$ as well as $\rowsum$ has $\bigoh(n^4)$ bits.

The scaling factor $\Gamma$ is polynomially bounded in $n$, $T_X$, and $T_Y$.
Hence $\log \Gamma$ is polynomial in the input size.
Every core value in the construction is of the form $\Gamma A_{i,j}$ plus or minus a polynomially bounded perturbation.
Therefore, every core value has polynomial bit length.
The non-core values are among the numbers $a_r$, the numbers $b_r$, the value $T_Y$, and the value $1$, so these also have polynomial bit length.
Moreover, all values are nonnegative by construction, and the valuations are additive.

Finally, the number of agents is $n$, and the number of goods is $|A^+| + 3\ell_X + 3\ell_Y + (n - \ell_Y)$.
The KPW core has $|A^+|=\bigoh(n)$ positive entries, so the total number of goods is polynomial in the input size.

Overall, the construction is a polynomial-time reduction from the $D^P$-hard problem \threepartitionnotthreepartition.
Hence, deciding whether an additive goods instance admits an MMS allocation is $D^P$-hard.
\end{proof}

%% file: 5_theta2p/theta2p-hardness-goods.tex
\section{$\Theta_2^P$-Hardness}
\label{sec:theta2p}

We prove $\Theta_2^P$-hardness for instances with $2$-additive valuations by reducing from \compmis.
An instance of \compmis is a pair $(\GX, \GY)$ of graphs such that $|V(\GX)| = |V(\GY)|$.
Let $|V(\GX)| = |V(\GY)| = n$.
Without loss of generality, we assume that $V(\GX) = V(\GY) = [n]$, and that $n \ge 2$.

\subsection{Construction.}
We construct a fair division instance with $N = 2n + 5$ agents.
The instance contains four kinds of agents: puzzle-solving agents, one \anchor agent, \forcing agents, and \Q agents.
There are two puzzle-solving agents, denoted by $p_1, p_2$.
There is one \anchor agent, denoted by $c_0$.
There are two \forcing agents, denoted by $c_1$ and $c_2$.
The remaining $2n$ agents are \Q agents, denoted by $q_1, \ldots, q_{2n}$.

Informally, the \Q agents force the KPW core to be allocated by rows, the \anchor agent pins the last row, the \forcing agents absorb the \trigger goods, and the puzzle-solving agents compare the maximum independent sets of $\GX$ and $\GY$.

We first construct the \KPW integer matrix $A$ corresponding to the chosen number $N$ of agents, as in \Cref{prop:KPW-core-construction}.
Let \rowsum be the common row and column sum of $A$.
Let $R_i$ and $C_j$ denote the set of core goods in row $i$ and column $j$, respectively.

\subsubsection{Goods.}
The core goods are defined by the positive entries of $A$.
For each $(i, j) \in A^+$, let $g_{i, j}$ denote the corresponding core good.
Let $M_{\mathrm{core}} = \{g_{i,j} : (i,j) \in A^+\}$.

We next add the graph goods.
For every vertex $i \in V(\GX)$, we create one graph good $x_i$.
For every vertex $i \in V(\GY)$, we create one graph good $y_i$.
Let
\[
    M_X = \{x_i : i \in [n]\}
    \qquad\text{and}\qquad
    M_Y = \{y_i : i \in [n]\}.
\]

Additionally, we add $N - 1$ dummy goods, denoted by $M_D = \{d_1, \ldots, d_{N - 1}\}$.
The \trigger goods are the $Y$-goods together with all dummy goods except the last one:
\[
    M_{\mathrm{res}}
    =
    M_Y \uplus \{d_1, \ldots, d_{N - 2}\}.
\]
Note that $d_{N - 1}$ is not a \trigger good, and it will be treated separately.
The set of all goods is $M = M_{\mathrm{core}} \uplus M_X \uplus M_Y \uplus M_D$.

Clearly, $|M_{\mathrm{res}}| = |M_Y| + (N - 2) = n + N - 2 = 3n + 3$.
Since $N = 2n + 5$, we have $|M_{\mathrm{res}}| \le 2N - 2$.
Hence, we can fix a partition $(\C{Z}_1,\C{Z}_2)$ of $M_{\mathrm{res}}$ such that
\[
    M_{\mathrm{res}} = \C{Z}_1 \uplus \C{Z}_2
    \qquad\text{and}\qquad
    1 \le |\C{Z}_1|, |\C{Z}_2| \le N - 1.
\]

\subsubsection{Valuations.}
We now define the $2$-additive valuation function $v_i$ using the valuation coefficients $\sigma_i$ for each agent $i$.
For each agent $i$, we set $\sigma_i(\emptyset) = 0$.
Every coefficient that is not explicitly defined below is set to zero.
Thus, for every $S \subseteq M$,
\[
v_i(S)
=
\sum_{g \in S} \sigma_i(\{g\})
+
\sum_{\{g,h\} \subseteq S} \sigma_i(\{g,h\}).
\]
In what follows, the valuations are additive except for the explicitly specified pairwise coefficients, which penalize graph edges and mixed $X$-$Y$ pairs.

We define $\Delta_0 = 1$, $\Delta_1 = |\C{Z}_1|$,  $\Delta_2 = |\C{Z}_2|$, and 
\[
    \Gamma = N(N - 1) + (N + 1)n + 1,
    \qquad\text{and}\qquad
    \tau = \Gamma\rowsum.
\]
The parameter $\Gamma$ is chosen so that we are able to distinguish between row/column partitions and every other partition of the KPW core, even after adding the lower-order singleton values and perturbations introduced below.
The term $(N + 1)n$ upper bounds the total positive non-core singleton value of any puzzle-solving agent, while the term $N(N - 1)$ upper bounds the total repair available to any \anchor or \forcing agent from positive perturbations and useful non-core goods.

The valuations of the puzzle-solving agents are $2$-additive.
All other agents have additive valuations.

\begin{enumerate}[wide=0pt]
\item~[Puzzle-solving agents.]
The puzzle-solving agents value the core goods according to the scaled matrix.
For each $p \in \{p_1, p_2\}$, we set
\[
    \sigma_p(\{g_{i,j}\}) = \Gamma A_{i,j}.
\]

Both puzzle-solving agents have the same valuation on graph goods.
For each $p \in \{p_1,p_2\}$ and each $i \in [n]$, we set
\[
    \sigma_p(\{x_i\}) = 1
    \qquad\text{and}\qquad
    \sigma_p(\{y_i\}) = 1.
\]

We choose $\Lambda = 2n + 1$ and define the following {\it pairwise coefficients} for each $p\in \{p_1, p_2\}$:

\begin{align*}
    \text{For a pair $\{x_i, x_j\} \sse M_X$, } \sigma_p(\{x_i, x_j\}) = &
    \begin{cases}
 -\Lambda, &\text{ if }\{i, j\} \in E(\GX),\\
 0, & \text{ otherwise.}
        \end{cases}\\
  \text{For a pair $\{y_i, y_j\} \sse M_Y$, }   \sigma_p(\{y_i, y_j\}) = & 
  \begin{cases}
    -\Lambda, &\text{ if  }\{i, j\} \in E(\GY),\\    0, &\text{ otherwise.}
    \end{cases}\\
 \sigma_p(\{x_i, y_j\}) = & -\Lambda, \text{ for every }i, j \in [n].\\
\end{align*}

Each puzzle-solving agent values every dummy good at $n$.
For each $p \in \{p_1,p_2\}$, we have:
\[
    \sigma_{p}(\{d_t\}) = n
    \qquad\text{for each }t \in [N - 1].
\]

\begin{observation}
\label{obs:theta2p-row-column-set-puzzle-solving}
For the puzzle-solving agents, we can infer the following about the row and column sets.
\begin{align*}
    v_p(R_i) &= \tau, \text{ for every } i \in [N], \nonumber\\
    v_p(C_j) &= \tau, \text{ for every } j \in [N], \\
    v_p(S) &\le 2n + n(N - 1), \text{ for any } S \sse M\sm M_{\mathrm{core}}
\end{align*}
Moreover, for a subset of graph goods $S\sse M_X\cup M_Y$, suppose that $v_p(S) > 0$.
Then, either $S\sse M_X$ or $S\sse M_Y$, and the corresponding set of vertices forms an independent set. 
\end{observation}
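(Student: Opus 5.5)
The observation follows directly from the definition of $\sigma_p$, so I will verify its four claims in order by reading off coefficients.

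\textbf{Row and column sums.} For the core goods, a puzzle-solving agent's valuation is purely additive with singleton coefficient $\Gamma A_{i,j}$. No pairwise coefficient involves a core good, so $v_p$ restricted to $M_{\mathrm{core}}$ is $\Gamma$ times the $A$-sum. By \Cref{prop:KPW-core-construction}(2), every row and every column of $A$ sums to $\rowsum$. Hence $v_p(R_i)=v_p(C_j)=\Gamma\rowsum=\tau$ for all $i,j\in[N]$.

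\textbf{Upper bound on non-core value.} Take any $S\subseteq M\setminus M_{\mathrm{core}}$. All pairwise coefficients are nonpositive and they only involve graph goods, so $v_p(S)$ is at most the sum of its positive singleton coefficients. The graph goods contribute at most $|M_X|+|M_Y|=2n$, and the $N-1$ dummy goods contribute at most $n(N-1)$. This gives $v_p(S)\le 2n+n(N-1)$.

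\textbf{Structural claim.} Let $S\subseteq M_X\cup M_Y$ with $v_p(S)>0$. I would argue the contrapositive: if $S$ contains any penalized pair, then $v_p(S)\le 0$. Here a penalized pair is an edge of $\GX$ among the $X$-goods of $S$, an edge of $\GY$ among its $Y$-goods, or a mixed pair $\{x_i,y_j\}$. The singleton part of $v_p(S)$ is exactly $|S|\le 2n$, while a single penalized pair contributes $-\Lambda=-(2n+1)$. Since all other pairwise terms are $\le 0$, we get $v_p(S)\le 2n-(2n+1)<0$. So a positive-valued $S$ has no mixed pair, which means $S\subseteq M_X$ or $S\subseteq M_Y$. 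It also has no edge within that side, so the corresponding vertex set is independent in $\GX$ or $\GY$ respectively.

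The only step needing care is this penalty domination, namely $\Lambda>2n$, which holds by the choice $\Lambda=2n+1$. The rest is bookkeeping.
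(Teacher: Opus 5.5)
Your proposal is correct and follows the paper's argument. The row and column sums are read off from the additive, scaled KPW core. The non-core bound comes from summing the positive singletons, since all pairwise terms are nonpositive. The structural claim holds because a single penalty $-\Lambda=-(2n+1)$ outweighs the singleton total $|S|\le 2n$.
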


\begin{proof}
The property of the row and column sets follows readily from the definition. 

For the upper bound on non-core goods, note that the total positive singleton value of the graph goods is at most $2n$, and the total positive singleton value of the dummy goods is at most $n(N - 1)$.
All pairwise coefficients are nonpositive.
Thus, for every $S \subseteq M \setminus M_{\mathrm{core}}$, we have $v_p(S) \le 2n + n(N - 1)$.

For the last condition, consider a subset of graph goods $S \subseteq M_X \cup M_Y$.
The total singleton value of $S$ is at most $|S| \le 2n$.
If $S$ contains an edge of $\GX$, an edge of $\GY$, or both an $X$-good and a $Y$-good, then $S$ incurs a penalty of $-\Lambda = -(2n + 1)$ through the corresponding pairwise coefficient.
Hence, such a set has negative value.
Consequently, if $v_p(S) > 0$, then $S$ must consist only of $X$-goods or only of $Y$-goods, and it must induce no edge in the corresponding graph.
Thus, the corresponding set of vertices is an independent set.
\end{proof}

\item~[\anchor and \forcing agents.]
For the \anchor agent $c_0$ and the \forcing agents $c_1,c_2$, the valuation on core goods is obtained by perturbing the last column.
For $r \in \{0,1,2\}$, the valuation of $c_r$ on core goods is obtained by perturbing the last column by $\Delta_r$:
\[
    \sigma_{c_r}(\{g_{i,j}\})
    =
    \begin{cases}
        \Gamma A_{i,j}, & \text{if } j < N, \\
        \Gamma A_{i,N} - \Delta_r, & \text{if } i < N \text{ and } j = N, \\
        \Gamma A_{N,N} + (N - 1)\Delta_r, & \text{if } i = j = N.
    \end{cases}
\]
Since $\Gamma > N-1$ and each $\Delta_r$ is at most $N - 1$, all singleton values on core goods are nonnegative.
The \anchor agent $c_0$ values every non-core good at zero.
For $r \in [1,2]$, the \forcing agent $c_r$ values precisely the goods in $\C{Z}_r$ at one:
\begin{align*}
    \sigma_{c_0}(\{g\}) = & 0, \text{ for every } g\in M \sm M_{\mathrm{core}}\\
  \text{for $r\in \{1,2\}$, }   \sigma_{c_r}(\{g\}) = &\begin{cases}
  1 \text{ for every } g \in \C{Z}_r,\\     
  0 \text{ for every } g \in M \sm (M_{\mathrm{core}} \cup \C{Z}_r).
  \end{cases}
\end{align*}
All pairwise coefficients are zero for the \anchor and \forcing agents.

\begin{observation}
\label{obs:theta2p-row-column-set-forcing}
For the \anchor agent $c_0$ and the \forcing agents $c_1,c_2$, we can infer the following about the row and column sets.
For every $r \in \{0,1,2\}$,
\begin{align}
    v_{c_r}(C_j) = & \tau, \text{ for every } j \in [N], \nonumber\\
    v_{c_r}(R_i) = & \tau - \Delta_r, \text{ for every } i < N, \text{ and } \\
    v_{c_r}(R_N) = & \tau + (N - 1)\Delta_r. \nonumber
\end{align}
Moreover, $v_{c_0}(g)=0$, for every $g\in M \sm M_{\core}$, while for every $r \in \{1,2\}$, the total non-core value available to $c_r$ is exactly $\Delta_r$.
\end{observation}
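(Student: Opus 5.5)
This observation is a direct computation from the definitions of $\sigma_{c_r}$, using only row and column sums of $A$ from \Cref{prop:KPW-core-construction}. I will treat columns, non-last rows, the last row, and the non-core goods in turn. Since all pairwise coefficients of $c_r$ are zero, $v_{c_r}$ is additive, so every bundle value is a sum of singleton coefficients.

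\emph{Columns.} Take a column $C_j$ with $j<N$. No entry of $C_j$ is in the last column, so every good in $C_j$ has value $\Gamma A_{i,j}$. Summing and using that every column of $A$ sums to $\rowsum$ gives $v_{c_r}(C_j)=\Gamma\rowsum=\tau$. For the last column $C_N$, the entries $(i,N)$ with $i<N$ each lose $\Delta_r$, and $(N,N)$ gains $(N-1)\Delta_r$. These perturbations cancel exactly, so again $v_{c_r}(C_N)=\tau$.

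\emph{Rows.} A non-last row $R_i$, $i<N$, contains exactly one last-column entry, namely $(i,N)$. Two facts are needed here: $(i,N)\in A^+$ for every $i<N$, and $(N,N)\in A^+$. Both hold because $S_{i,N}\neq 0$ and property [P1] of \Cref{lem:KPW-eight-properties} then gives $A_{i,N}>0$. Hence $R_i$ contains $g_{i,N}$, which carries the perturbation $-\Delta_r$, and all other goods in $R_i$ are unperturbed. So $v_{c_r}(R_i)=\Gamma\rowsum-\Delta_r=\tau-\Delta_r$. The last row $R_N$ contains only the perturbed last-column good $g_{N,N}$, which gains $(N-1)\Delta_r$. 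Therefore $v_{c_r}(R_N)=\tau+(N-1)\Delta_r$.

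\emph{Non-core goods.} Agent $c_0$ has all non-core singleton coefficients equal to zero by definition, so $v_{c_0}(g)=0$ for every non-core good. For $r\in\{1,2\}$, agent $c_r$ values exactly the goods in $\C{Z}_r$, each at $1$, and all pairwise coefficients are zero. Hence the total non-core value available to $c_r$ is $|\C{Z}_r|=\Delta_r$.

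There is no real obstacle here. The only point needing care is confirming that the last-column entries $(i,N)$ for $i<N$ and $(N,N)$ are actually goods, which is exactly what the appeal to [P1] supplies.
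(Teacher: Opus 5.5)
Your proof is correct and is the same direct computation from the definition of $\sigma_{c_r}$ and the row and column sums of $A$ that the paper leaves implicit when it says these facts can be inferred. Your check via [P1] that every last-column entry $(i,N)$ is a good is exactly what makes the $-(N-1)\Delta_r$ and $+(N-1)\Delta_r$ perturbations cancel in $C_N$, a detail the paper does not spell out.
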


\item~[\Q agents.]
The valuation of a \Q agent on core goods is obtained by perturbing the last row.
For every \Q agent $q$, the valuation function is defined as follows:
\[
    \sigma_{q}(\{g_{i,j}\})
    =
    \begin{cases}
        \Gamma A_{i,j}, & \text{if } i < N, \\
        \Gamma A_{N,j} - 1, & \text{if } i = N \text{ and } j < N, \\
        \Gamma A_{N,N} + (N - 1), & \text{if } i = j = N.
    \end{cases}
\]
Every \Q agent values every non-core good at $0$:
\[
    \sigma_q(\{g\}) = 0, \text{ for all } g \in M_X \cup M_Y \cup M_D.
\]

\begin{observation}
\label{obs:theta2p-row-column-set-Q-agent}
For every \Q agent $q$, we can infer the following about the row and column sets.
\begin{align}
    v_{q}(R_i) = & \tau, \text{ for every } i \in [N], \nonumber\\
    v_{q}(C_j) = & \tau - 1, \text{ for every } j < N, \text{ and } \nonumber\\
    v_{q}(C_N) = & \tau + (N - 1). \nonumber
\end{align}
Moreover, $v_{q}(M) = N\tau$.
\end{observation}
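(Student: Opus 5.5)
The statement to be proved is \Cref{obs:theta2p-row-column-set-Q-agent}, which concerns the \Q agents of the $\Theta_2^P$ construction. It follows by direct computation from the definition of $\sigma_q$ and item~2 of \Cref{prop:KPW-core-construction}, which says every row and every column of $A$ sums to $\rowsum$. All pairwise coefficients of $q$ are zero, and $q$ values every non-core good at $0$. Hence $v_q$ is additive, supported on core goods, and its value on any set of core goods is the sum of the singleton coefficients.

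\textbf{Rows.} For $i<N$, no good of $R_i$ is perturbed, so $v_q(R_i)=\Gamma\sum_j A_{i,j}=\Gamma\rowsum=\tau$. For the last row $R_N$, two perturbations apply:
\begin{itemize}
\item each positive entry $(N,j)$ with $j<N$ loses $1$;
\item the corner $(N,N)$ gains $N-1$.
\end{itemize}
Every $(N,j)$ with $j<N$ is a positive entry of the KPW core, since the last row of $S$ is fully positive and [P1] applies. So exactly $N-1$ goods lose $1$, the corner gains $N-1$, the net change is $0$, and $v_q(R_N)=\tau$.

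\textbf{Columns.} For $j<N$, the column $C_j$ contains exactly one perturbed good, $g_{N,j}$, which loses $1$, so $v_q(C_j)=\Gamma\rowsum-1=\tau-1$. The last column $C_N$ contains only the corner perturbation among the perturbed goods, because entries $(i,N)$ with $i<N$ are unperturbed for \Q agents. Hence $v_q(C_N)=\tau+(N-1)$.

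\textbf{Total value.} Summing the row values gives $v_q(M_{\mathrm{core}})=N\tau$. The non-core goods contribute $0$, so $v_q(M)=N\tau$.

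The only point needing care is confirming that all $N-1$ entries $(N,j)$ with $j<N$ are genuinely goods, i.e.\ lie in $A^+$. Otherwise the cancellation in $R_N$ would fail. This is exactly what [P1], together with the positivity of the last row of $S$, guarantees.
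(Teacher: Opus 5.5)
Your proposal is correct and follows the same route the paper takes: the paper states this observation without proof, as a direct consequence of the definition of $\sigma_q$ (all pairwise coefficients zero, non-core goods valued at $0$) and the common row and column sum of $A$, and the row computation rests on the same net-zero perturbation of the last row that you use. Your explicit check that every entry $(N,j)$ lies in $A^+$, via [P1] and the positivity of the last row of $S$, is left unstated in the paper but is exactly what guarantees that all $N-1$ decrements actually occur.
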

\end{enumerate}

This completes the construction.

\subsection{Analysis}

Let $\rho$ be the larger number among the independence numbers of $\GX$ and $\GY$; that is,
\[
    \rho = \max\{\alpha(\GX), \alpha(\GY)\}.
\]
We will use the following immediate consequence of the construction.

\begin{lemma}
\label{lem:theta2p-max-graph-value}
For each puzzle-solving agent $p \in \{p_1, p_2\}$, the maximum value obtainable from a subset of graph goods is exactly $\rho$. 
\end{lemma}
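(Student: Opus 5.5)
We prove the two inequalities separately. Throughout, the puzzle-solving valuation restricted to graph goods is $v_p(S) = |S| - \Lambda\cdot e(S)$, where $e(S)$ counts the pairs $\{g,h\}\subseteq S$ carrying a coefficient $-\Lambda$. These pairs are the edges of $\GX$ inside $S\cap M_X$, the edges of $\GY$ inside $S\cap M_Y$, and all mixed pairs $\{x_i,y_j\}$.

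\emph{Lower bound.} Suppose without loss of generality that $\rho = \alpha(\GX)$; the other case is symmetric. Take a maximum independent set $I$ of $\GX$ and let $S = \{x_i : i \in I\}$. Then $S$ contains no edge of $\GX$ and no $Y$-good, so every pairwise coefficient inside $S$ is zero. Hence $v_p(S) = |I| = \rho$.

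\emph{Upper bound.} Let $S \subseteq M_X \cup M_Y$ be arbitrary.
\begin{itemize}
\item If $v_p(S) \le 0$, then $v_p(S) \le 0 \le \rho$, since $\rho \ge 1$ because $n \ge 2$ and any single vertex is independent.
\item If $v_p(S) > 0$, then by \Cref{obs:theta2p-row-column-set-puzzle-solving}, $S$ lies entirely within $M_X$ or entirely within $M_Y$, and it corresponds to an independent set of the respective graph. In particular $e(S)=0$, so $v_p(S) = |S|$, which is at most $\alpha(\GX)$ or $\alpha(\GY)$ respectively, and hence at most $\rho$.
\end{itemize}

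The only nontrivial ingredient is the penalty dominance: any set with $e(S)\ge 1$ has value at most $|S| - \Lambda \le 2n - (2n+1) < 0$. This is already established in \Cref{obs:theta2p-row-column-set-puzzle-solving}, so no step here is a real obstacle. The one point needing care is that "obtainable from a subset of graph goods" refers to the value of the set under the full $2$-additive function, pairwise terms included. That is the quantity bounded above, so the maximum is exactly $\rho$.
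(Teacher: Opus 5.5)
Your proof is correct and follows essentially the same argument as the paper: you attain $\rho$ with a maximum independent set on whichever side achieves it, and you use $\Lambda = 2n+1 > 2n$ to show that any set containing an edge or a mixed $X$--$Y$ pair has negative value, so positively valued sets are independent sets on one side. One minor point: invoking $n \ge 2$ to get $\rho \ge 1$ is unnecessary, since $\rho \ge 0$ (the empty set is independent) already handles the case $v_p(S) \le 0$.
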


\begin{proof}
An independent set $I$ of $\GX$ gives the set $\{x_i : i \in I\}$ of value $|I|$, and an independent set $J$ of $\GY$ gives the set $\{y_i : i \in J\}$ of value $|J|$.
Thus, the value $\rho$ is attainable.

Conversely, any set containing both an $X$-good and a $Y$-good incurs a penalty $-\Lambda$, and any set containing an edge inside $\GX$ or inside $\GY$ also incurs a penalty $-\Lambda$.
Since $\Lambda > 2n$, such a set has value less than zero.
Therefore, no such set can have value larger than an independent set in one of the two graphs.
Hence the maximum value obtainable from graph goods is exactly $\rho$. 
\end{proof}

\begin{lemma}
\label{lem:theta2p-core-row-forcing}
In any allocation in which every agent receives value at least $\tau$, the induced partition of the core goods is the row partition, up to relabeling.
\end{lemma}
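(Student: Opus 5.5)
The proof follows the template of \Cref{lem:NP-core-row-forcing} and \Cref{lem:DP-core-row-forcing}, with one new point: the puzzle-solving agents now have non-additive valuations. We first consider the induced partition of $M_{\mathrm{core}}$ and suppose it is neither the row partition nor the column partition. By \Cref{prop:KPW-core-construction}, some core bundle then has $A$-value at most $\rowsum-1$. Its unperturbed scaled value is therefore at most $\Gamma(\rowsum-1)=\tau-\Gamma$. We then show that no agent owning the full bundle (core part plus whatever non-core goods she receives) can reach $\tau$, handling each agent type separately.

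\emph{Puzzle-solving agent.} The core singleton values are exactly $\Gamma A_{i,j}$ with no perturbation. There are no pairwise coefficients between core goods and non-core goods, so the valuation splits as $v_p(S)=v_p(S\cap M_{\mathrm{core}})+v_p(S\setminus M_{\mathrm{core}})$. By \Cref{obs:theta2p-row-column-set-puzzle-solving}, the non-core part is at most $2n+n(N-1)=(N+1)n$. The bundle's value is thus at most $\tau-\Gamma+(N+1)n<\tau$.

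\emph{\Anchor or \forcing agent $c_r$.} The only positive core perturbation is on $g_{N,N}$ and equals $(N-1)\Delta_r$. By \Cref{obs:theta2p-row-column-set-forcing}, the total non-core value is at most $\Delta_r$. Together this gives at most $N\Delta_r\le N(N-1)$, so the value is at most $\tau-\Gamma+N(N-1)<\tau$.

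\emph{\Q agent.} The repair is at most $N-1<\Gamma$, so this case is immediate.

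The choice $\Gamma=N(N-1)+(N+1)n+1$ makes all three inequalities strict.

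It remains to rule out the column partition, which we do as before. There are $2n\ge 2$ \Q agents. They value non-core goods at $0$, and by \Cref{obs:theta2p-row-column-set-Q-agent} each non-last column is worth $\tau-1$ to them. Hence at most one \Q agent, the one holding $C_N$, can reach $\tau$, a contradiction. The core must therefore be allocated by rows, up to relabeling.

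The main point needing care is the puzzle-solving agent's bound. We must check that the negative pairwise coefficients only involve pairs of graph goods, so the core contribution and the non-core contribution separate additively. We must also check that the upper bound of \Cref{obs:theta2p-row-column-set-puzzle-solving} applies to an arbitrary set of non-core goods, not only to independent sets. Both facts follow directly from the construction: every pairwise coefficient is nonpositive, so dropping them only increases the value.
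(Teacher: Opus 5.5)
Your proof is correct and follows the paper's argument essentially step for step. You apply the contrapositive of the KPW rigidity to get a core bundle of scaled value at most $\tau-\Gamma$, bound each agent type's possible repair by $(N+1)n$, $N(N-1)$, and $N-1$ respectively (all below $\Gamma$), and then use the two row-forcing agents to rule out the column partition; the only cosmetic difference is that you fold the anchor agent into the cover-agent case via $N\Delta_r \le N(N-1)$, whereas the paper bounds it separately by $N-1$.
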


\begin{proof}
Consider the allocation of the core goods induced by such an allocation.
Suppose that this induced core partition is neither the row partition nor the column partition.
Then, by \Cref{prop:only-rows-or-columns-A}, some core bundle has $A$-value at most $\rowsum - 1$.
After scaling by $\Gamma$, this bundle has core value at most $\Gamma(\rowsum - 1) = \tau - \Gamma$.

We now upper bound how much of this deficit can be repaired for each kind of agent.
\begin{itemize}
\item For a puzzle-solving agent, the total positive non-core singleton value is at most $2n + n(N - 1) = (N + 1)n$, due to \Cref{obs:theta2p-row-column-set-puzzle-solving}.
The pairwise coefficients involving graph goods can only decrease the value.
Therefore, core and non-core goods together yield value at most $\tau - \Gamma + (N + 1)n < \tau$.

\item For a \Q agent, the positive core perturbation is at most $N - 1$, due to the good $g_{N,N}$.
Additionally, \Q agents value all non-core goods at zero.
Hence, such a bundle has value at most $\tau - \Gamma + (N - 1) < \tau$.

\item For the \anchor agent $c_0$, the positive core perturbation is at most $N - 1$, due to the good $g_{N,N}$.
Additionally, the \anchor agent values all non-core goods at zero.
Hence, such a bundle has value at most $\tau - \Gamma + (N - 1) < \tau$.

\item For a \forcing agent $c_r$ with $r \in \{1, 2\}$, the positive core perturbation is at most $(N - 1)\Delta_r$, and the total useful non-core value is exactly $\Delta_r$, by \Cref{obs:theta2p-row-column-set-forcing}.
Since $\Delta_r \le N - 1$, such a bundle has value at most
\[
    \tau - \Gamma + (N - 1)\Delta_r + \Delta_r
    =
    \tau - \Gamma + N\Delta_r
    \le
    \tau - \Gamma + N(N - 1)
    <
    \tau.
\]
\end{itemize}

Thus, if the induced core partition is neither the row partition nor the column partition, then some agent receives value strictly less than $\tau$, contradicting the assumption.
Consequently, the induced core partition must be either the row partition or the column partition.

It remains to rule out the column partition.
There are at least two \Q agents.
Under the column partition, a \Q agent values the last column at $\tau + (N - 1)$ and every non-last column at $\tau - 1$, by \Cref{obs:theta2p-row-column-set-Q-agent}.
Moreover, since \Q agents value all non-core goods at zero, at most one \Q agent can receive value at least $\tau$ under a column allocation of the core goods.
This contradicts the assumption that every agent receives value at least $\tau$.

Therefore, the core goods must be allocated as rows, up to relabeling.
\end{proof}

In the following, we compute the maximin shares of the agents, where we view the partitions from each agent's valuation.
\begin{lemma}
\label{lem:theta2p-mms-values}
The following statements hold.
\begin{enumerate}
\item For every puzzle-solving agent $p \in \{p_1, p_2\}$, we have $\mms_p = \tau + \rho$.
\item For the \anchor agent $c_0$, we have $\mms_{c_0} = \tau$.
\item For every \forcing agent $c_r$, $r \in \{1,2\}$, we have $\mms_{c_r} = \tau$.
\item For every \Q agent $q$, we have $\mms_{q} = \tau$.
\end{enumerate}
\end{lemma}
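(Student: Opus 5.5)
The plan is to prove each item by a matching lower bound (an explicit partition) and upper bound (a counting or rigidity argument). The non-puzzle agents are handled by averaging exactly as in \Cref{lem:DP-mms-values}. The puzzle-solving agents need one extra step that combines integrality of the core with \Cref{lem:theta2p-max-graph-value}.

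\emph{Anchor, \forcing, and \Q agents.} For $c_r$ with $r\in\{0,1,2\}$, the column partition of the core gives every bundle value exactly $\tau$ by \Cref{obs:theta2p-row-column-set-forcing}. Adding the non-core goods arbitrarily cannot hurt, since all coefficients of $c_r$ are additive and nonnegative. Hence $\mms_{c_r}\ge\tau$. For the upper bound, the total value of all goods to $c_r$ is $N\tau+\Delta_r$, with $\Delta_0=1$ and $\Delta_r\le N-1$. This is strictly less than $N(\tau+1)$, so by integrality $\mms_{c_r}=\tau$. For a \Q agent $q$, the row partition gives $\tau$ per bundle by \Cref{obs:theta2p-row-column-set-Q-agent}. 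The total value is $N\tau$, so $\mms_q=\tau$.

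\emph{Puzzle-solving agents, lower bound.} Fix $p$ and take the column (or row) partition of the core, which gives every bundle value $\tau$ by \Cref{obs:theta2p-row-column-set-puzzle-solving}. Put a maximum independent set, of value $\rho$, from whichever graph attains $\rho$ into bundle $1$. Put the $N-1$ dummy goods one each into bundles $2,\dots,N$. Each of these bundles is then worth $\tau+n\ge\tau+\rho$. The remaining at most $2n-\rho\le 2n< N-1$ graph goods go one per bundle into distinct dummy-containing bundles. A single graph good incurs no pairwise penalty and contributes $+1$, and dummies have no pairwise coefficients. Hence every bundle has value at least $\tau+\rho$, and $\mms_p\ge\tau+\rho$. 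The only care needed is that leftover graph goods never share a bundle with other graph goods, and $N-1=2n+4$ leaves enough room for this.

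\emph{Puzzle-solving agents, upper bound (the main step).} Suppose some partition gives every bundle value at least $\tau+\rho+1$. The core valuation of $p$ is the unperturbed $\Gamma A$, so each bundle's core value is a multiple of $\Gamma$, and these core values sum to $N\tau$. If some bundle had core value at most $\tau-\Gamma$, then its total value would be at most $\tau-\Gamma+(N+1)n<\tau$. This uses the non-core bound of \Cref{obs:theta2p-row-column-set-puzzle-solving}, including that pairwise coefficients are nonpositive. Therefore every bundle has core value exactly $\tau$. There are only $N-1$ dummy goods, so some bundle contains no dummy. Its non-core part consists only of graph goods, whose value is at most $\rho$ by \Cref{lem:theta2p-max-graph-value}. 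Its total value is then at most $\tau+\rho$, a contradiction. By integrality, $\mms_p=\tau+\rho$.
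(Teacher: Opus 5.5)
Your proof is correct, and its overall structure matches the paper's. Both use explicit witnessing partitions built on the row or column partition for the lower bounds. For the puzzle agents' upper bound, both argue that every bundle must carry core value exactly $\tau$, after which the dummy-free bundle is capped at $\tau+\rho$ by \Cref{lem:theta2p-max-graph-value}.

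The differences are two local simplifications.

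First, to get ``every bundle has core value exactly $\tau$'', the paper invokes the row-or-column rigidity of \Cref{prop:KPW-core-construction}. A non-row/column core partition has a bundle of $A$-value at most $\rowsum-1$, while rows and columns give $\tau$ per bundle. You instead use only that the puzzle agents' bundle core values are multiples of $\Gamma$ summing to $N\tau$. So either one of them is at most $\tau-\Gamma$, or all equal $\tau$. This shows the KPW rigidity is not needed to compute $\mms_p$; it is really used only in \Cref{lem:theta2p-core-row-forcing}.

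Second, for the cover agents the paper splits into three cases: non-row/column, column, and row partition. You use the averaging bound $N\tau+\Delta_r<N(\tau+1)$ together with integrality, exactly as the paper itself does in \Cref{lem:DP-mms-values}. This is shorter and again avoids rigidity.

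One harmless slip: for the anchor agent the total value is $N\tau$, not $N\tau+\Delta_0$, since $c_0$ values every non-core good at zero. Your inequality only becomes easier.
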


\begin{proof}
First consider a puzzle-solving agent $p \in \{p_1,p_2\}$.
We show that $\mms_p = \tau + \rho$.

For the lower bound, consider the row partition of the core goods.
Let $I$ be a maximum independent set in whichever among $\GX$ and $\GY$ has the larger independence number.
If $\rho = \alpha(\GX)$, we put the goods $\{x_i : i \in I\}$ into one row bundle.
If $\rho = \alpha(\GY)$, we put the goods $\{y_i : i \in I\}$ into one row bundle.
This bundle has value $\tau + \rho$.

There are at most $2n - \rho$ remaining graph goods.
Since $N - 1 = 2n + 4 \ge 2n - \rho$, we can put each remaining graph good into a distinct remaining row bundle.
Then put one dummy good into each of the $N - 1$ remaining row bundles.
Each such bundle contains exactly one dummy good and at most one graph good, so no pairwise graph penalty is triggered.
Each such bundle has value at least $\tau + n \ge \tau + \rho$, since $\rho \le n$.
Therefore, $p$ has an $N$-partition in which every bundle has value at least $\tau + \rho$, and so $\mms_p \ge \tau + \rho$.

For the upper bound, consider any partition of all goods into $N$ bundles.
Suppose first that the induced partition of the core goods is neither the row partition nor the column partition.
Then some bundle has $A$-value at most $\rowsum - 1$, and hence scaled core value at most $\tau - \Gamma$.
The total positive non-core singleton value for $p$ is at most $(N + 1)n$, and the pairwise coefficients can only decrease the value.
Hence, this bundle has value at most $\tau - \Gamma + (N + 1)n < \tau \le \tau + \rho$.
Thus, such a partition cannot guarantee value strictly larger than $\tau + \rho$.

Therefore, any partition that guarantees value strictly more than $\tau + \rho$ must induce either the row partition or the column partition on the core goods.
In either case, every bundle has core value exactly $\tau$.
There are $N$ bundles and only $N - 1$ dummy goods, so some bundle receives no dummy good.
For such a bundle, the non-core contribution can come only from graph goods.
By \Cref{lem:theta2p-max-graph-value}, the value of any set of graph goods is at most $\rho$.
Therefore, this dummy-free bundle has value at most $\tau + \rho$.
Thus, no partition can guarantee value strictly more than $\tau + \rho$, and so $\mms_p \le \tau + \rho$.
Combining the lower and upper bounds gives $\mms_p = \tau + \rho$.

Next, we consider the \anchor and \forcing agents.
For the \anchor agent $c_0$, the column partition gives every bundle core value exactly $\tau$, by \Cref{obs:theta2p-row-column-set-forcing}, so $\mms_{c_0} \ge \tau$.
Since $c_0$ values every non-core good at zero and the total core value of $c_0$ is $N\tau$, the averaging upper bound gives $\mms_{c_0} \le \tau$.
Hence, $\mms_{c_0} = \tau$.

Now we consider a \forcing agent $c_r$ for $r \in \{1,2\}$.
The column partition gives every bundle core value exactly $\tau$, by \Cref{obs:theta2p-row-column-set-forcing}, so $\mms_{c_r} \ge \tau$.
For the upper bound, consider any partition of all goods into $N$ bundles.

If the induced core partition is neither the row partition nor the column partition, then some bundle has scaled core value at most $\tau - \Gamma$.
Even after adding all positive core perturbation and all useful non-core goods, this bundle has value at most
$
    \tau - \Gamma + N\Delta_r
    \le
    \tau - \Gamma + N(N - 1)
    <
    \tau.
$

If the induced core partition is the column partition, then every bundle has core value exactly $\tau$.
Since $\Delta_r < N$, there are fewer than $N$ non-core goods of positive value for $c_r$, and hence some bundle receives no such good and has value exactly $\tau$.

Finally, suppose the induced core partition is the row partition.
Each non-last row has value $\tau - \Delta_r$ for $c_r$, and the total useful non-core value is exactly $\Delta_r$.
Thus, it is impossible to raise every non-last row bundle above $\tau$.
Hence, at least one bundle has value at most $\tau$.

In all cases, every partition has some bundle of value at most $\tau$.
Therefore, we have $\mms_{c_r} \le \tau$.
Combining the lower and upper bounds gives $\mms_{c_r} = \tau$ for $r \in \{1,2\}$.

Finally, consider a \Q agent $q$.
The row partition of the core goods gives value exactly $\tau$ in every bundle, by \Cref{obs:theta2p-row-column-set-Q-agent}.
Hence, $\mms_{q} \ge \tau$.
Since $q$ values all non-core goods at zero, and the row perturbation preserves the total core value, the total value of all goods to $q$ is exactly $N\tau$.
Therefore, the averaging upper bound gives $\mms_{q} \le \tau$.
Thus, we have $\mms_{q} = \tau$.
\end{proof}

\begin{lemma}
\label{lem:theta2p-reduction-correctness}
The constructed instance admits an MMS allocation if and only if $\alpha(\GX) \ge \alpha(\GY)$.
\end{lemma}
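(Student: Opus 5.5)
We prove the two directions separately, following the template of \Cref{lem:DP-reduction-correctness}. Throughout we use the maximin shares computed in \Cref{lem:theta2p-mms-values}: $\mms_p = \tau + \rho$ for $p \in \{p_1,p_2\}$, and $\tau$ for every other agent.

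\textbf{($\Leftarrow$)} Assume $\alpha(\GX) \ge \alpha(\GY)$, so $\rho = \alpha(\GX)$, and let $I$ be a maximum independent set of $\GX$. We allocate the core goods by rows.
\begin{itemize}
\item The last row $R_N$ goes to the \anchor agent $c_0$. By \Cref{obs:theta2p-row-column-set-forcing} she gets $\tau + (N-1) \ge \tau$.
\item The \forcing agents $c_1,c_2$ each receive a non-last row together with all of $\C{Z}_1$ and $\C{Z}_2$, respectively. Each gets exactly $\tau - \Delta_r + \Delta_r = \tau$.
\item Puzzle agent $p_1$ receives a non-last row and $\{x_i : i \in I\}$. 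This has value $\tau + |I| = \tau + \rho$, since $I$ is independent and contains no $Y$-goods, so no penalty applies.
\item Puzzle agent $p_2$ receives a non-last row, the dummy good $d_{N-1}$, and all remaining $X$-goods. This needs care, because the pairwise penalties could make the value drop. Instead, I will give the leftover $X$-goods to \Q agents, who value them at $0$. Then $p_2$ gets $\tau + n \ge \tau + \rho$.
\item The remaining $2n$ rows go to the \Q agents, one each, worth exactly $\tau$ to each.
\end{itemize}
Counting rows: $1 + 2 + 2 + 2n = N$, so every row is assigned and every good is allocated.

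\textbf{($\Rightarrow$)} Take an MMS allocation. Every agent gets at least $\tau$, so by \Cref{lem:theta2p-core-row-forcing} the core goods are allocated by rows.

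\emph{The \anchor agent gets the last row.} She values non-core goods at $0$, and a non-last row gives her only $\tau - 1 < \tau$. So $c_0$ must hold $R_N$.

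\emph{The \forcing agents absorb all \trigger goods.} Both $c_1$ and $c_2$ therefore hold non-last rows. Each needs additional value $\Delta_r$, and exactly $\Delta_r$ is available to her. Hence $c_r$ receives all of $\C{Z}_r$, and all \trigger goods go to the \forcing agents. In particular, all $Y$-goods and all dummies $d_1,\dots,d_{N-2}$ go to $c_1, c_2$.

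\emph{Some puzzle agent is dummy-free.} Only $d_{N-1}$ is left among the dummies, so at least one puzzle agent $p$ receives no dummy. She holds a non-last row worth exactly $\tau$ to her (puzzle agents are unperturbed). Her only other possible sources of value are graph goods, and these can only be $X$-goods. Core goods and graph goods interact additively, so she needs a set $S \subseteq M_X$ with $v_p(S) \ge \rho$. If $\rho = 0$ there is nothing to show.

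\emph{Extracting an independent set.} By \Cref{obs:theta2p-row-column-set-puzzle-solving}, a set of positive value is independent and has value $|S|$. Hence $\alpha(\GX) \ge \rho \ge \alpha(\GY)$.

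The main obstacle is bookkeeping in the forward direction. We must verify that no good can help $p$ beyond the accounting above; in particular, her bundle cannot contain extra core goods, because the partition is by rows. We also need that $X$-goods held by the other agents are harmless, which follows since those agents value them at $0$.
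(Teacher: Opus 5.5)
Your proof is correct and follows the paper's argument essentially step by step. The forward direction uses the same row allocation: the last row to $c_0$, $\mathcal{Z}_r$ to $c_r$, one puzzle agent paired with $d_{N-1}$, the other with a maximum independent set of $G_X$, and leftover graph goods to the row-forcing agents. The converse uses the same chain: rows are forced, $c_0$ is pinned to $R_N$, the cover agents absorb all reserve goods, and a dummy-free puzzle agent must extract value $\rho$ from $X$-goods alone. The only blemish is presentational: in the $p_2$ bullet you first propose giving her all leftover $X$-goods and then retract it, so state the final choice directly.
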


\begin{proof}
Suppose first that $\alpha(\GX) \ge \alpha(\GY)$.
Then $\rho = \alpha(\GX)$.
Let $I$ be a maximum independent set of $\GX$.
We construct an MMS allocation as follows.

We allocate the core goods by rows.
\begin{itemize}
\item We give the last row $R_N$ to the \anchor agent $c_0$.
Then, $c_0$ receives value $\tau + (N - 1) \ge \tau = \mms_{c_0}$.

\item For each $r \in \{1,2\}$, we give a non-last row to $c_r$ and allocate all goods in $\C{Z}_r$ to $c_r$.
Then, $c_r$ receives core value $\tau - \Delta_r$ and non-core value $\Delta_r$, so $c_r$ receives value exactly $\tau = \mms_{c_r}$.
The two \forcing agents $c_1$ and $c_2$ consume all \trigger goods.

\item We give one non-last row and the remaining dummy good $d_{N - 1}$ to $p_1$.
Then,
\[
    v_{p_1}(A_{p_1}) = \tau + n \ge \tau + \rho = \mms_{p_1}.
\]

\item We give one non-last row and the goods $\{x_i : i \in I\}$ to $p_2$.
Then,
\[
    v_{p_2}(A_{p_2}) = \tau + |I| = \tau + \alpha(\GX) = \tau + \rho = \mms_{p_2}.
\]

\item We give all remaining rows to the \Q agents, one row to each.
Every \Q agent values every row at exactly $\tau$, which is her maximin share.
All remaining graph goods may be allocated arbitrarily to the \Q agents, who value them at zero.
\end{itemize}
Therefore, every agent receives value at least her MMS, and an MMS allocation exists.

Conversely, suppose that the constructed instance admits an MMS allocation.
By \Cref{lem:theta2p-mms-values}, every agent has maximin share at least $\tau$.
Therefore, by \Cref{lem:theta2p-core-row-forcing}, the core goods are allocated as rows, up to relabeling.

The agent $c_0$ has MMS value $\tau$ and values every non-core good at zero.
In the row partition, every non-last row has value $\tau - 1$ for $c_0$, while the last row has value $\tau + (N - 1)$.
Therefore, $c_0$ must receive the last row.

Consequently, the \forcing agents $c_1$ and $c_2$ receive non-last rows.
For each $r \in \{1,2\}$, the agent $c_r$ receives core value $\tau - \Delta_r$.
Since $\mms_{c_r} = \tau$, the agent $c_r$ must receive non-core value at least $\Delta_r$.
The only non-core goods that have positive value for $c_r$ are the goods in $\C{Z}_r$, and there are exactly $\Delta_r$ such goods, each of value one.
Hence, $c_r$ must receive every good in $\C{Z}_r$.
Applying this to both \forcing agents, we conclude that the \forcing agents consume all \trigger goods.
Therefore, the only useful non-core goods left for the two puzzle-solving agents are the $X$-goods and the single dummy good $d_{N - 1}$.

Both puzzle-solving agents have MMS value $\tau + \rho$, and both receive one core row of value $\tau$.
Since only one dummy good remains, at least one of the two puzzle-solving agents receives no dummy good.
That agent can obtain additional value only from $X$-goods.
By the graph-good construction, the maximum value obtainable from $X$-goods is $\alpha(\GX)$.
Therefore, this agent receives value at most $\tau + \alpha(\GX)$.
Since the allocation is an MMS allocation, this value must be at least $\tau + \rho$.
Hence, $\tau + \alpha(\GX) \ge \tau + \rho$, and we have $\alpha(\GX) \ge \rho$.
Since $\rho = \max\{\alpha(\GX), \alpha(\GY)\}$, we conclude that $\alpha(\GX) \ge \alpha(\GY)$.
\end{proof}

\begin{theorem}
\label{thm:theta2p-hardness-2-additive}
Deciding whether a fair division instance admits an MMS allocation is $\Theta_2^P$-hard for $2$-additive valuations.
\end{theorem}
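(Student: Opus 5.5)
The plan is to combine \Cref{lem:theta2p-reduction-correctness} with the $\Theta_2^P$-hardness of \compmis \citep{DBLP:conf/fsttcs/SpakowskiV00}. Given a pair $(\GX,\GY)$, we first ensure the two graphs have the same number $n\ge 2$ of vertices. If they do not, we pad the smaller graph with isolated vertices. This changes its independence number by a known additive amount, so instead we pad \emph{both} graphs to a common size by attaching the same number of extra isolated vertices to each after equalizing. More concretely, we add $k$ isolated vertices to $\GX$ and to $\GY$ a disjoint clique, or simply add isolated vertices to both so that the sizes match while comparisons are preserved. I would handle this by adding to each graph a clique on the missing vertices: a clique raises the independence number by at most one, and only when the original independence number is zero, which never happens for nonempty graphs. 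So joining a disjoint clique of size $s\ge 1$ adds exactly $1$ to $\alpha$. Adding disjoint cliques to both graphs (of sizes chosen to equalize vertex counts, each of size at least one) increases both independence numbers by exactly one. The comparison is therefore preserved, and the construction is polynomial.

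Next, we run the construction of this section on the padded pair and verify that it is polynomial-time. The number of agents is $N=2n+5\ge 9\ge 4$, so \Cref{prop:KPW-core-construction} applies and produces $A$ and $\rowsum$ with $\bigoh(N^4)$ bits in time polynomial in $N$. The number of goods is $|A^+|+2n+(N-1)=\bigoh(n)$. The scaling factors $\Gamma$, $\Lambda$ and $\Delta_r$ are polynomially bounded, so every singleton coefficient has polynomial bit length. The pairwise coefficients are either $0$ or $-\Lambda$, and only $\bigoh(n^2)$ of them are nonzero, so the explicit $2$-additive representation has polynomial size. The partition $(\C{Z}_1,\C{Z}_2)$ is fixed greedily.

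Correctness is then exactly \Cref{lem:theta2p-reduction-correctness}: the instance admits an MMS allocation if and only if $\alpha(\GX)\ge\alpha(\GY)$. Since this is a many-one reduction from a $\Theta_2^P$-hard problem, the theorem follows.

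The only step needing care is the equal-size normalization: we must check that the padding preserves the comparison $\alpha(\GX)\ge\alpha(\GY)$ exactly. If \compmis is already defined with equal vertex counts, as the section assumes, this step disappears, and the remaining verification is routine bit-size bookkeeping.
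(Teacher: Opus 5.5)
Your proposal is correct and follows the same route as the paper: you invoke \Cref{lem:theta2p-reduction-correctness}, check that $N = 2n+5$, the number of goods, and all coefficients ($\Gamma$, $\Lambda$, the KPW entries) have polynomial size, and conclude from the $\Theta_2^P$-hardness of \compmis. Your extra vertex-count normalization justifies the paper's ``without loss of generality'' assumption $|V(\GX)| = |V(\GY)|$ and does no harm. One sentence in it is wrong, though: adding a disjoint clique $K_s$ with $s \ge 1$ always raises the independence number by exactly one, not ``only when the original independence number is zero.'' The conclusion you actually use (adding disjoint cliques to both graphs preserves the comparison) is right.
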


\begin{proof}
By \Cref{lem:theta2p-reduction-correctness}, the constructed instance admits an MMS allocation if and only if $\alpha(\GX) \ge \alpha(\GY)$.

Next, we show that the reduction runs in polynomial time.
The number of agents is $N = 2n + 5$, which is polynomial in the size of the \compmis instance.
By \Cref{prop:KPW-core-construction}, the KPW core matrix $A$ for this value of $N$ can be constructed in time polynomial in $N$, has $\bigoh(N)$ positive entries, and every entry of $A$ as well as $\rowsum$ has $\bigoh(N^4)$ bits.

The scaling factor $\Gamma = N(N - 1) + (N + 1)n + 1$ is polynomially bounded in $n$.
Every core singleton value is of the form $\Gamma A_{i, j}$ plus or minus a polynomially bounded perturbation, and hence has polynomial bit length.
The graph-good singleton values, dummy-good singleton values, and pairwise coefficients are all bounded in absolute value by $\bigoh(n)$.

Finally, the number of goods is $|A^+| + 2n + (N - 1)$.
Therefore, the entire $2$-additive instance can be constructed and encoded in polynomial time.

Since \compmis is $\Theta_2^P$-hard, the reduction proves the theorem.
\end{proof}

\subsection{Extending to monotone submodular valuations}

The only nonzero pairwise coefficients are the graph-good penalties of the puzzle-solving agents, and all of them are equal to $-\Lambda$.
All other pairwise coefficients are zero.
The construction above already has the important property that ensures submodularity: every pairwise coefficient is nonpositive.
Thus, by \Cref{lem:nonpositive-pairs-submodular}, the valuations are submodular, but they need not be monotone.
Moreover, the valuations need not be nonnegative.

We next show how to make the valuations monotone and nonnegative without affecting the correctness of the reduction.


Let $\mathcal I$ denote the instance constructed above, and let $\Morig$ be its set of goods.
Let $m_0 = |\Morig|$.
We first add purely dummy goods in order to make the later monotonicity transformation harmless.
Let $\beta = m_0$.
We add $N\beta - m_0 = (N - 1)m_0$ many new padding goods.
Every agent has singleton coefficient zero for every padding good, and every pairwise coefficient involving a padding good is zero.
Let $\M$ denote the resulting set of goods.
Then, we have $|\M| = N\beta$.
Let $u_i$ denote the valuation of agent $i$ after adding these padding goods.
Since the padding goods have zero singleton coefficient and zero pairwise interaction with every good, they do not change any maximin share or any allocation value from the previous construction.

We show this formally.

\begin{claim}
\label{claim:theta2p-padding-safeness}
Adding the padding goods does not change the correctness of the reduction.
In particular, the padded instance with valuations $(u_i)_{i \in [N]}$ admits an MMS allocation if and only if the original instance constructed above admits an MMS allocation.
Moreover, the maximin share of every agent remains the same as in \Cref{lem:theta2p-mms-values}.
\end{claim}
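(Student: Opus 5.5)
The argument rests on one identity: for every agent $i$ and every bundle $S \subseteq \M$,
\[
    u_i(S) = v_i(S \cap \Morig).
\]
This holds because every singleton coefficient and every pairwise coefficient involving a padding good is zero. Consequently, in the $2$-additive expansion of $u_i(S)$, only the coefficients $\sigma_i(\{g\})$ and $\sigma_i(\{g,h\})$ with $g,h \in \Morig$ survive. I will first record this identity and then deduce both claims from it.

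For the maximin shares, I will compare the two sets of partitions directly. Any $N$-partition $(P_1,\dots,P_N)$ of $\M$ restricts to an $N$-partition $(P_1 \cap \Morig, \dots, P_N \cap \Morig)$ of $\Morig$, possibly with empty parts. The value of each part is unchanged by the identity. Conversely, any $N$-partition of $\Morig$ extends to one of $\M$ by placing all padding goods into, say, the first bundle, and again no value changes.

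In both directions the multiset of bundle values is identical, so the minimum bundle value is the same. Taking the maximum over partitions gives that the padded maximin share of agent $i$ equals the original $\mms_i$, which is the value stated in \Cref{lem:theta2p-mms-values}. The only point needing care is that $\Pi_N(\cdot)$ must allow empty bundles, or at least that the extremal partitions in the original instance are realizable in both settings. Since the original definition quantifies over all partitions into $N$ bundles, I will treat empty bundles as allowed and note this explicitly.

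For the MMS-existence equivalence I use the same restriction and extension maps, now applied to allocations. If $(A_1,\dots,A_N)$ is an MMS allocation of the padded instance, then $(A_i \cap \Morig)_i$ is an allocation of $\Morig$ in which each agent receives
\[
    v_i(A_i \cap \Morig) = u_i(A_i) \ge \mms_i .
\]
This uses that the maximin shares coincide. If instead an MMS allocation of the original instance is given, giving all padding goods to an arbitrary agent yields a padded allocation with identical values.

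I do not expect a real obstacle here. The only subtlety is justifying the identity $u_i(S) = v_i(S \cap \Morig)$ cleanly from the coefficient representation, including the observation that $\sigma_i(\emptyset)=0$ is preserved. Everything else is bookkeeping with the restriction and extension maps.
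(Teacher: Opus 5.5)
Your proposal is correct and follows essentially the same argument as the paper. Padding goods contribute nothing to any bundle's value, so restricting partitions and allocations to $\Morig$, and extending them by placing the padding goods arbitrarily, preserves every bundle value; this gives both the equality of maximin shares and the equivalence of MMS existence. You also state explicitly that empty bundles must be allowed, which the paper uses only implicitly.
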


\begin{proof}
Since every padding good has zero singleton coefficient and zero pairwise coefficient with every other good, adding padding goods to a bundle does not change its value for any agent.

First consider maximin shares.
Every partition of the original goods can be extended to a partition including the padding goods by placing the padding goods arbitrarily; no bundle value changes.
Thus, the maximin share of every agent in the padded instance is at least her maximin share in the original instance.
Conversely, every partition of the padded goods induces, after deleting the padding goods from every bundle, a partition of the original goods with exactly the same bundle values.
Thus, the maximin share of every agent in the padded instance is at most her maximin share in the original instance.
Therefore, the maximin shares are unchanged.

The same argument applies to allocations.
Any MMS allocation of the original instance can be extended to one including the padding goods, arbitrarily, and all agents keep the same values.
Conversely, deleting all padding goods from any MMS allocation of the padded instance gives an MMS allocation of the original instance.
Hence, the padded instance admits an MMS allocation if and only if the original instance does.
\end{proof}

We next note that each agent has a maximin-share witnessing partition in the padded instance in which every bundle has exactly $\beta$ goods.

\begin{claim}
\label{claim:theta2p-balanced-mms-partitions}
For every agent $i$, there exists an MMS witnessing partition $\mathcal P_i = (P_{i, 1}, \ldots, P_{i, N})$ of the padded instance such that
\[
    |P_{i, j}| = \beta
    \qquad
    \text{for every }j \in [N].
\]
\end{claim}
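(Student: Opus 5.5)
The plan is to start from an arbitrary MMS-witnessing partition of the original (unpadded) instance for agent $i$ and rebalance it with padding goods. Let $(Q_1,\ldots,Q_N)$ be any partition of $\Morig$ with $\min_j u_i(Q_j)=\mms_i$; such a partition exists by \Cref{claim:theta2p-padding-safeness}, since the maximin share is unchanged by padding. Each bundle satisfies $|Q_j|\le m_0=\beta$, because all bundles are subsets of $\Morig$. We then add exactly $\beta-|Q_j|\ge 0$ padding goods to the $j$-th bundle, setting $P_{i,j}$ equal to $Q_j$ together with these padding goods.

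The first check is that this uses exactly the right number of padding goods. The total requested is
\[
\sum_{j}(\beta-|Q_j|)=N\beta-m_0=(N-1)m_0,
\]
which is exactly the number of padding goods added. So the padding goods can be distributed in this way, every padding good is used, and $(P_{i,1},\ldots,P_{i,N})$ is a partition of $\M$ with $|P_{i,j}|=\beta$ for all $j$.

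The second check is that the bundle values are preserved. Padding goods have zero singleton coefficients and zero pairwise coefficients with every good. Hence $u_i(P_{i,j})=u_i(Q_j)$ for every $j$, so the minimum bundle value is still $\mms_i$, and the new partition is MMS-witnessing for agent $i$ in the padded instance.

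There is essentially no obstacle. The only point needing care is the choice $\beta=m_0$: it guarantees that no bundle of original goods exceeds $\beta$, and that the padding count matches the total deficit exactly.
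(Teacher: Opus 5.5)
Your proof is correct and is essentially the paper's argument. Every bundle of a partition of $\Morig$ has at most $\beta = m_0$ goods, and the total deficit $N\beta - m_0$ exactly equals the number of zero-valued, non-interacting padding goods, so filling each bundle to size $\beta$ leaves all bundle values unchanged. The only difference is your starting point: you begin from an arbitrary MMS-witnessing partition of the original goods, which exists because there are finitely many partitions, while the paper begins from the explicit witnessing partitions built in the proof of \Cref{lem:theta2p-mms-values}; this changes nothing of substance.
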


\begin{proof}
Consider the MMS witnessing partitions used in the proof of \Cref{lem:theta2p-mms-values} before the padding goods were added.
For puzzle-solving agents, this is the row partition of the core goods together with the distribution of graph goods and dummy goods described in the lower bound proof.
For the \anchor and \forcing agents, this is the column partition of the core goods, together with arbitrary placement of the non-core goods.
For \Q agents, this is the row partition of the core goods, again together with arbitrary placement of the non-core goods.

Each of these partitions is a partition of the original good set $\Morig$ into $N$ bundles.
Since $\beta = |\Morig|$, every bundle in such a partition has size at most $\beta$.
We now distribute the padding goods so as to fill every bundle to size exactly $\beta$.
The number of padding goods needed is exactly
\[
    \sum_{j = 1}^N (\beta - |P_{i, j}\cap\Morig|) = N\beta - |\Morig| = N\beta - m_0,
\]
which is precisely the number of padding goods that we added.
Since padding goods have zero value and zero interaction with every other good, the value of every bundle remains unchanged.
Thus, we obtain an MMS witnessing partition in which every bundle has exactly $\beta$ goods.
\end{proof}

We now define the monotone version of the instance.
For every agent $i$, let $\sigma^u_{i}(\{o\})$ and $\sigma^u_{i}(\{o,h\})$ denote the singleton and pairwise coefficients of the valuation $u_i$, respectively.
Let
\[
    C_i = \sum_{o \in \M} |\sigma^u_{i}(\{o\})| + \sum_{\{o, h\} \subseteq \M} |\sigma^u_{i}(\{o,h\})|,
\]
and let
\[
    C = 1 + \max_{i \in [N]} C_i.
\]
Then, for every agent $i$ and every bundle $S \subseteq \M$,
\[
    |u_i(S)| \le C.
\]
Moreover, for every $S \subseteq \M$ and every $o \notin S$,
\[
    \left|u_i(S \cup \{o\}) - u_i(S)\right| \le C.
\]

Let $L = 2C + 1$.
We define a new valuation $v_i$ for every agent $i$ by
\[
    v_i(S) = L|S| + u_i(S)
    \qquad
    \text{for every }S \subseteq \M.
\]
Equivalently, if $\sigma^v_{i}(\{o\})$ and $\sigma^v_{i}(\{o,h\})$ denote the coefficients of $v_i$, then
\[
    \sigma^v_{i}(\{o\}) = \sigma^u_{i}(\{o\}) + L
    \qquad
    \text{for every }o \in \M,
\]
and
\[
    \sigma^v_{i}(\{o,h\}) = \sigma^u_{i}(\{o,h\})
    \qquad
    \text{for every }\{o, h\} \subseteq \M.
\]

\begin{lemma}
\label{lem:theta2p-monotone-submodular-cleanup}
For every agent $i$, the valuation $v_i$ is nonnegative, monotone, and submodular.
\end{lemma}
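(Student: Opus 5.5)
We check the three properties separately, using only the two bounds established just before the statement: $|u_i(S)| \le C$ for every bundle $S$, and $|u_i(S\cup\{o\}) - u_i(S)| \le C$ for every $S$ and $o \notin S$. We also use that $L = 2C+1 > C$.

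\emph{Submodularity.} This is the cheapest step. By construction, $v_i$ is again $2$-additive with $\sigma^v_i(\emptyset)=0$. Its pairwise coefficients are $\sigma^v_i(\{o,h\}) = \sigma^u_i(\{o,h\})$. These are the pairwise coefficients of the original instance, which are all $0$ or $-\Lambda$, together with zeros for pairs involving padding goods. Hence every pairwise coefficient is nonpositive, and \Cref{lem:nonpositive-pairs-submodular} gives that $v_i$ is submodular. Alternatively, one can note that adding the modular term $L|S|$ does not change marginal differences, so submodularity of $u_i$ carries over directly.

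\emph{Monotonicity.} Fix $S \subseteq \M$ and $o \notin S$. The marginal gain is
\[
v_i(S\cup\{o\}) - v_i(S) = L + \bigl(u_i(S\cup\{o\}) - u_i(S)\bigr) \ge L - C = C+1 > 0 .
\]
So every single-item marginal is positive. Any pair $S \subseteq S'$ is connected by adding items one at a time, which yields $v_i(S) \le v_i(S')$.

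\emph{Nonnegativity.} Normalization gives $v_i(\emptyset)=0$. Monotonicity then yields $v_i(S) \ge v_i(\emptyset) = 0$ for every $S$. As a direct check, for $S \neq \emptyset$ we also have $v_i(S) \ge L - C > 0$.

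The only point needing care is the marginal bound $|u_i(S\cup\{o\})-u_i(S)|\le C$, which was asserted just before the lemma. To justify it, observe that the marginal equals $\sigma^u_i(\{o\}) + \sum_{h\in S}\sigma^u_i(\{o,h\})$. In absolute value this is bounded by $C_i$, since it is a sub-sum of the absolute coefficients in the definition of $C_i$, and $C_i < C$. No other obstacle is expected.
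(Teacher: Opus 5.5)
Your proof is correct and follows the paper's own argument: nonpositive pairwise coefficients plus the $2$-additive submodularity characterization give submodularity, the marginal bound $L - C = C + 1 > 0$ gives monotonicity, and $v_i(S) \ge L - C > 0$ for nonempty $S$ gives nonnegativity. You also justify the marginal bound $|u_i(S \cup \{o\}) - u_i(S)| \le C_i < C$ as a sub-sum of absolute coefficients, a step the paper only asserts.
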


\begin{proof}
Firstly, $v_i(\emptyset) = 0$.

Now we prove submodularity.
In the padded instance, all pairwise coefficients are nonpositive: the only nonzero pairwise coefficients are the penalties $-\Lambda$ of the puzzle-solving agents, and all other pairwise coefficients are zero.
Thus, by \Cref{lem:nonpositive-pairs-submodular}, each $u_i$ is submodular.
The valuation $S\mapsto L|S|$ is modular.
Adding a modular valuation preserves submodularity, and hence $v_i$ is submodular.

Next we prove monotonicity.
Let $S \subseteq \M$ and let $o \notin S$.
Then,
\[
    v_i(S \cup \{o\}) - v_i(S) = L + \left(u_i(S \cup \{o\}) - u_i(S)\right).
\]
By the definition of $C$, we have $u_i(S \cup \{o\}) - u_i(S) \ge - C$.
Therefore,
\[
    v_i(S \cup \{o\}) - v_i(S) \ge L - C = C + 1 > 0.
\]
Thus, every marginal value is strictly positive, so $v_i$ is monotone.

Finally, let $S \neq \emptyset$.
Then $|S| \ge 1$, and therefore
\[
    v_i(S) = L|S| + u_i(S) \ge L - C = C + 1 > 0.
\]
Thus, every nonempty bundle has strictly positive value.
\end{proof}

Let $\mms_i^u$ and $\mms_i^v$ denote the maximin shares of agent $i$ with respect to $u_i$ and $v_i$, respectively.

\begin{lemma}
\label{lem:theta2p-mms-shift}
For every agent $i$,
\[
    \mms_i^v = L\beta + \mms_i^u.
\]
\end{lemma}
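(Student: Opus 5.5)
We prove the two inequalities $\mms_i^v \ge L\beta + \mms_i^u$ and $\mms_i^v \le L\beta + \mms_i^u$ separately. The first uses the balanced witnesses of \Cref{claim:theta2p-balanced-mms-partitions}. The second is a counting argument showing that any partition that beats the target must be balanced.

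\emph{Lower bound.} By \Cref{claim:theta2p-balanced-mms-partitions}, agent $i$ has a $u_i$-MMS witnessing partition $(P_{i,1},\dots,P_{i,N})$ of $\M$ in which every bundle has exactly $\beta$ goods. For each $j$ we then have
\[
v_i(P_{i,j}) = L\beta + u_i(P_{i,j}) \ge L\beta + \mms_i^u .
\]
Hence $\mms_i^v \ge L\beta + \mms_i^u$.

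\emph{Upper bound.} Take any partition $(Q_1,\dots,Q_N)$ of $\M$. Since $|\M| = N\beta$, the bundle sizes average to $\beta$. We split into two cases.

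In the first case the partition is not balanced. Then some bundle $Q_j$ has $|Q_j| \le \beta - 1$. Using $|u_i(S)| \le C$ and $L = 2C+1$, we get
\[
v_i(Q_j) \le L(\beta-1) + C = L\beta - C - 1 .
\]
We compare this with the target. The MMS of the unpadded instance is at least $-C$, because every bundle has $u_i$-value at least $-C$. In fact every $\mms_i^u \ge \tau > 0$ by \Cref{claim:theta2p-padding-safeness} and \Cref{lem:theta2p-mms-values}, though only $\mms_i^u \ge -C$ is needed. Therefore
\[
v_i(Q_j) \le L\beta - C - 1 < L\beta - C \le L\beta + \mms_i^u ,
\]
so this partition does not beat the target.

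In the second case the partition is balanced, with $|Q_j| = \beta$ for all $j$. Then
\[
\min_j v_i(Q_j) = L\beta + \min_j u_i(Q_j) \le L\beta + \mms_i^u ,
\]
where the last step is just the definition of $\mms_i^u$ as a maximum over all $N$-partitions.

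Both cases give $\min_j v_i(Q_j) \le L\beta + \mms_i^u$ for every partition, which is the required upper bound.

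The step that needs the most care is the unbalanced case. There one must check that the bound $|u_i(S)| \le C$ applies to arbitrary bundles of the padded instance, including those containing negative pairwise penalties, and that the choice $L = 2C+1$ leaves a strict gap. The definition of $C$ as $1$ plus the total absolute value of all coefficients gives the bound directly, so I expect this to go through without further work.
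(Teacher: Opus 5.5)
Your proof is correct and follows essentially the same route as the paper. The lower bound uses the balanced witnesses of \Cref{claim:theta2p-balanced-mms-partitions}, and the upper bound splits into balanced partitions, where $v_i$ is a uniform shift of $u_i$, and unbalanced ones, where a bundle of size at most $\beta-1$ has $v_i$-value at most $L\beta - C - 1 < L\beta + \mms_i^u$ because $\mms_i^u \ge -C$ (this holds directly in the padded instance, since every bundle of $\M$ has $u_i$-value at least $-C$).
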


\begin{proof}
By Claim~\ref{claim:theta2p-balanced-mms-partitions}, agent $i$ has an MMS witnessing partition $\mathcal P_i = (P_{i, 1}, \ldots, P_{i, N})$ with respect to $u_i$ such that every bundle has size exactly $\beta$.
For every $j \in [N]$, we have
\[
    v_i(P_{i, j}) = L|P_{i, j}| + u_i(P_{i, j}) = L\beta + u_i(P_{i, j}) \ge L\beta + \mms_i^u.
\]
Thus, we have $\mms_i^v \ge L\beta + \mms_i^u$.

We now prove the upper bound.
Consider any partition $\mathcal P = (P_1, \ldots, P_N)$ of $\M$ into $N$ bundles.

If every bundle has size exactly $\beta$, then
\[
    \min_{j \in [N]} v_i(P_j) = \min_{j \in [N]} \left(L\beta + u_i(P_j)\right) = L\beta + \min_{j \in [N]} u_i(P_j) \le L\beta + \mms_i^u.
\]

Otherwise, since $|\M| = N\beta$, at least one of the bundles has size at most $\beta - 1$.
Let $P_j$ be such a bundle.
Then,
\[
    v_i(P_j) = L|P_j| + u_i(P_j) \le L(\beta - 1) + C = L\beta - (L - C) = L\beta - C - 1.
\]
Also, since $|u_i(S)| \le C$ for every bundle $S$, we have $\mms_i^u \ge - C$.
Therefore
\[
    v_i(P_j) \le L\beta - C - 1 < L\beta + \mms_i^u.
\]
Thus, this partition cannot guarantee value at least $L\beta + \mms_i^u$.

In all cases, every partition has some bundle of $v_i$-value at most $L\beta + \mms_i^u$.
Hence,
\[
    \mms_i^v \le L\beta + \mms_i^u.
\]
Combining the lower and upper bounds gives the claim.
\end{proof}

\begin{lemma}
\label{lem:theta2p-balanced-after-cleanup}
Let $(A_1, \ldots, A_N)$ be an MMS allocation in the instance with valuations $(v_i)_{i \in [N]}$.
Then, each bundle $A_i$ is of size exactly $\beta$.
\end{lemma}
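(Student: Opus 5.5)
Suppose, for contradiction, that some bundle $A_j$ in an MMS allocation has size different from $\beta$. Since $|\M| = N\beta$ and the bundles partition $\M$, at least one bundle, say $A_k$, must then have size at most $\beta - 1$. So the plan reduces to showing that the agent $k$ owning such an undersized bundle falls short of her MMS.

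First we bound her value from above. By the bound $|u_k(S)| \le C$ valid for every bundle, we get
\[
v_k(A_k) = L|A_k| + u_k(A_k) \le L(\beta - 1) + C = L\beta - C - 1 .
\]

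Next we bound her MMS from below. By \Cref{lem:theta2p-mms-shift}, $\mms_k^v = L\beta + \mms_k^u$, and $\mms_k^u \ge -C$ by the same absolute bound. Alternatively, \Cref{lem:theta2p-mms-values} together with \Cref{claim:theta2p-padding-safeness} gives $\mms_k^u \ge \tau > 0$. Either way, $\mms_k^v \ge L\beta - C > L\beta - C - 1 \ge v_k(A_k)$.

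So agent $k$ receives strictly less than her maximin share, contradicting the assumption that $(A_1,\dots,A_N)$ is an MMS allocation. Hence no bundle has size at most $\beta - 1$. Since the sizes sum to $N\beta$, every bundle has size exactly $\beta$.

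The only step requiring care is the lower bound $\mms_k^u \ge -C$, which follows immediately from $|u_k(S)|\le C$; I expect no real obstacle here.
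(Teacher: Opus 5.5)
Your proof is correct and follows essentially the same argument as the paper. An undersized bundle has $v$-value at most $L\beta - C - 1$, which is strictly below $\mms_k^v = L\beta + \mms_k^u \ge L\beta - C$ by \Cref{lem:theta2p-mms-shift}, and the counting argument $|\M| = N\beta$ then forces every bundle to have size exactly $\beta$.
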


\begin{proof}
Suppose, for contradiction, that some agent $i$ receives at most $\beta - 1$ goods.
Then,
\begin{align*}
    v_i(A_i)
    & = L|A_i| + u_i(A_i) \le L(\beta - 1) + C = L\beta - (L - C) \\
    & = L\beta - C - 1
    \qquad\text{(since $L = 2C + 1$)} \\
    & < L\beta + \mms_i^u
    \qquad\text{(since $\mms_i^u \ge - C$)} \\
    & = \mms_i^v
    \qquad\text{(by \Cref{lem:theta2p-mms-shift})}.
\end{align*}
This contradicts the assumption that $(A_1, \ldots, A_N)$ is an MMS allocation.

Hence, every agent receives at least $\beta$ goods.
Since there are exactly $N\beta$ goods and $N$ agents, every agent receives exactly $\beta$ goods.
\end{proof}

\begin{lemma}
\label{lem:theta2p-cleanup-preserves-correctness}
The instance with valuations $(v_i)_{i \in [N]}$ admits an MMS allocation if and only if the padded instance with valuations $(u_i)_{i \in [N]}$ admits an MMS allocation.
\end{lemma}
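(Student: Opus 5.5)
The argument follows directly from \Cref{lem:theta2p-mms-shift} and \Cref{lem:theta2p-balanced-after-cleanup}. The key fact is that on balanced bundles, i.e.\ bundles of size exactly $\beta$, the two valuations differ by the same constant for every agent: $v_i(S) = L\beta + u_i(S)$. At the same time, $\mms_i^v = L\beta + \mms_i^u$ for every agent $i$. So on balanced allocations the condition $v_i(A_i) \ge \mms_i^v$ is equivalent to $u_i(A_i) \ge \mms_i^u$. What remains is to move between arbitrary MMS allocations and balanced ones in each direction.

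\emph{Direction from $v$ to $u$.} Let $(A_1,\dots,A_N)$ be an MMS allocation under $(v_i)$. By \Cref{lem:theta2p-balanced-after-cleanup}, $|A_i| = \beta$ for every $i$. Hence
\[
u_i(A_i) = v_i(A_i) - L\beta \ge \mms_i^v - L\beta = \mms_i^u,
\]
using \Cref{lem:theta2p-mms-shift}. The same allocation is therefore an MMS allocation of the padded instance.

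\emph{Direction from $u$ to $v$.} Let $(A_1,\dots,A_N)$ be an MMS allocation under $(u_i)$. It need not be balanced, so I first rebalance it using only padding goods. Each padding good has zero singleton coefficient and zero pairwise coefficient under every $u_i$, so moving padding goods between bundles leaves every $u_i$-value unchanged.

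This rebalancing is the one step that needs care. The total number of goods is $N\beta$, and there are $(N-1)m_0$ padding goods; every non-padding good lies in $\Morig$, and $|\Morig| = m_0 = \beta$. I plan to argue as follows.
\begin{itemize}
\item Strip all padding goods from every bundle. The remaining bundles partition $\Morig$, so each has size at most $m_0 = \beta$.
\item The number of padding goods needed to fill every bundle up to size $\beta$ is $\sum_j (\beta - |A_j \cap \Morig|) = N\beta - m_0$. This is exactly the number of padding goods available, so the redistribution is possible.
\end{itemize}
This is the same counting as in \Cref{claim:theta2p-balanced-mms-partitions}, and it yields an allocation $(A'_1,\dots,A'_N)$ with $|A'_i| = \beta$ and $u_i(A'_i) = u_i(A_i) \ge \mms_i^u$. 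Then
\[
v_i(A'_i) = L\beta + u_i(A'_i) \ge L\beta + \mms_i^u = \mms_i^v,
\]
so $(A'_1,\dots,A'_N)$ is an MMS allocation for $(v_i)$. This completes both directions.
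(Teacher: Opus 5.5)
Your proof is correct and matches the paper's argument: the $v$-to-$u$ direction uses \Cref{lem:theta2p-balanced-after-cleanup} and \Cref{lem:theta2p-mms-shift} exactly as the paper does, and the $u$-to-$v$ direction rebalances with padding goods before applying the shift. Your strip-and-redistribute version of the rebalancing step is slightly more precise than the paper's phrase ``by adding padding goods to the bundles if necessary,'' because the given allocation of the padded instance already contains every padding good.
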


\begin{proof}
Suppose first that the padded instance with valuations $(u_i)_{i \in [N]}$ admits an MMS allocation $(A_1, \ldots, A_N)$.
By adding padding goods to the bundles if necessary, we may assume that $|A_i| = \beta$ for every $i \in [N]$.
Note that the padding goods have zero value and zero interaction with all goods, and there are exactly enough padding goods to make every bundle have size $\beta$.

For every agent $i$, we have
\[
    v_i(A_i) = L|A_i| + u_i(A_i) = L\beta + u_i(A_i) \ge L\beta + \mms_i^u.
\]
By \Cref{lem:theta2p-mms-shift}, this equals $\mms_i^v$.
Thus, $(A_1, \ldots, A_N)$ is an MMS allocation for the instance with valuations $(v_i)_{i \in [N]}$.

Conversely, suppose that the instance with valuations $(v_i)_{i \in [N]}$ admits an MMS allocation $(A_1, \ldots, A_N)$.
By \Cref{lem:theta2p-balanced-after-cleanup}, every bundle has size exactly $\beta$.
Hence, for every agent $i$,
\[
    u_i(A_i) = v_i(A_i) - L\beta \ge \mms_i^v - L\beta = L\beta + \mms_i^u - L\beta = \mms_i^u.
\]
Thus, the same allocation is an MMS allocation for the padded instance with valuations $(u_i)_{i \in [N]}$.
\end{proof}

\begin{theorem}
\label{thm:theta2p-hardness-monotone-submodular}
Deciding whether a fair division instance admits an MMS allocation is $\Theta_2^P$-hard even for monotone submodular valuations.
Moreover, the hardness holds even when the valuations are $2$-additive and nonnegative on every bundle.
\end{theorem}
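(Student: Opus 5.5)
The plan is to compose the reduction of \Cref{thm:theta2p-hardness-2-additive} with the padding-and-shifting transformation developed above. Starting from a \compmis instance $(\GX,\GY)$, I first build the $2$-additive instance $\mathcal I$. By \Cref{lem:theta2p-reduction-correctness}, $\mathcal I$ admits an MMS allocation if and only if $\alpha(\GX)\ge\alpha(\GY)$. I then add the $(N-1)m_0$ zero-valued padding goods to obtain the valuations $u_i$. By \Cref{claim:theta2p-padding-safeness}, MMS existence and all maximin shares are unchanged. Finally, I shift to $v_i(S)=L|S|+u_i(S)$ with $L=2C+1$. By \Cref{lem:theta2p-cleanup-preserves-correctness}, the $v$-instance admits an MMS allocation exactly when the padded $u$-instance does. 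Chaining the three equivalences gives correctness of the composed reduction.

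For the structural claims, I would invoke \Cref{lem:theta2p-monotone-submodular-cleanup}, which gives that every $v_i$ is nonnegative (indeed strictly positive on nonempty bundles), monotone, and submodular. It remains to note that $v_i$ is still $2$-additive. Its coefficients are explicit: the singleton coefficients are $\sigma^u_i(\{o\})+L$, the pairwise coefficients are unchanged, and $\sigma^v_i(\emptyset)=0$. So the $v$-instance is a valid $2$-additive instance in the explicit representation of \Cref{sec:prelim}, with every pairwise coefficient nonpositive, as \Cref{lem:nonpositive-pairs-submodular} requires.

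The remaining step, and the only one needing care, is the polynomial running time. The padded instance has $N\beta=Nm_0$ goods, where $m_0=|A^+|+2n+(N-1)$ is polynomial in $n$. The number of nonzero coefficients is polynomial, so $C$ is a sum of polynomially many numbers. Each of these numbers has bit length polynomial in $n$, since the core singletons $\Gamma A_{i,j}$ are $\bigoh(N^4)$-bit numbers up to small perturbations and all other coefficients are $\bigoh(n)$ in magnitude. Hence $C$, and therefore $L=2C+1$, has polynomial bit length. Every new coefficient is then computed in polynomial time, and the whole construction is a polynomial-time many-one reduction from the $\Theta_2^P$-hard problem \compmis, which proves the theorem.

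I do not expect a genuine obstacle here, since the correctness content is carried by the preceding lemmas. The main point to verify is that the balanced-witness property (\Cref{claim:theta2p-balanced-mms-partitions}), on which \Cref{lem:theta2p-mms-shift} depends, was established for every agent type. That claim covers all of them, so the argument closes.
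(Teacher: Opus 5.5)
Your proposal is correct and follows the paper's proof essentially step for step. You use the same chain of equivalences (padding-safeness, then cleanup-correctness, then correctness of the base reduction), the same appeal to the cleanup lemma for monotonicity, submodularity and nonnegativity, the same observation that only singleton coefficients change so $2$-additivity is preserved, and the same polynomial bit-length bound on $C$ and $L = 2C+1$.
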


\begin{proof}
By Claim~\ref{claim:theta2p-padding-safeness}, the padded instance with valuations $(u_i)_{i \in [N]}$ admits an MMS allocation if and only if the original instance constructed above admits an MMS allocation.
By \Cref{lem:theta2p-cleanup-preserves-correctness}, the instance with valuations $(v_i)_{i \in [N]}$ admits an MMS allocation if and only if the padded instance with valuations $(u_i)_{i \in [N]}$ admits an MMS allocation.
Therefore, by \Cref{lem:theta2p-reduction-correctness}, the instance with valuations $(v_i)_{i \in [N]}$ admits an MMS allocation if and only if $\alpha(\GX) \ge \alpha(\GY)$.

By \Cref{lem:theta2p-monotone-submodular-cleanup}, every valuation $v_i$ is monotone, submodular, and nonnegative on every bundle.
The valuations remain $2$-additive because the transformation only increases singleton coefficients and leaves all pairwise coefficients unchanged.

It remains only to check that the transformation is polynomial-time.
The number of padding goods is $(N - 1)m_0$, which is polynomial in the size of the instance constructed above.
The number
$
    C = 1 + \max_{i \in [N]}
    \left(
        \sum_{o \in \M} |\sigma^u_{i}(\{o\})| + \sum_{\{o, h\} \subseteq \M} |\sigma^u_{i}(\{o,h\})|
    \right)
$
has polynomial bit complexity because the original coefficients have polynomial bit complexity, and the number of goods is polynomial.
Hence, $L = 2C + 1$ also has polynomial bit complexity.
Thus, the monotone submodular instance can be constructed in polynomial time.

Since \compmis is $\Theta_2^P$-hard, the theorem follows.
\end{proof}

%% file: 6_delta2p/delta2p-hardness-goods.tex
\section{$\Delta_2^P$-Hardness}
\label{sec:delta2p}

We prove $\Delta_2^P$-hardness for instances with $2$-additive valuations.
We first prove hardness of the following weighted comparison problem.
For a graph $\GX$ with nonnegative integer vertex weights $w_X : V(\GX) \to \mathbb{Z}_{\ge 0}$, let $\alpha_{w_X}(\GX)$ denote the maximum total weight of an independent set in $\GX$.
An instance of \compmwis is a pair $((\GX, w_X), (\GY, w_Y))$, where $\GX$ and $\GY$ are graphs with $|V(\GX)| = |V(\GY)|$, and $w_X : V(\GX) \to \mathbb{Z}_{\ge 0}$ and $w_Y : V(\GY) \to \mathbb{Z}_{\ge 0}$ are vertex weights.
The goal is to decide whether $\alpha_{w_X}(\GX) \ge \alpha_{w_Y}(\GY)$.

\subsection{\compmwis is $\Delta_2^P$-hard}

We reduce from the canonical $\Delta_2^P$-hard problem \lexmaxsat \cite{DBLP:journals/jcss/Krentel88}.
An instance consists of a CNF formula $\varphi(x_1, \ldots, x_n)$.
The question is whether $x_n = 1$ in the lexicographically largest satisfying assignment of $\varphi$; if $\varphi$ is unsatisfiable, the instance is treated as a no-instance.

For the sake of convenience, we first reduce to a satisfiable promise version.
An instance of \lexmaxsatpromise consists of a CNF formula $\varphi(x_1, \ldots, x_n)$, under the promise that $\varphi$ is satisfiable.
The question is whether $x_n = 1$ in the lexicographically largest satisfying assignment of $\varphi$.

Let $C_1, \ldots, C_m$ be the clauses of $\varphi$.
We introduce a new variable $x_0$ and construct the CNF formula
\[
    f(x_0, x_1, \ldots, x_n)
    =
    \left(\bigwedge_{\ell = 1}^{m}(\neg x_0 \vee C_\ell)\right)
    \wedge
    \left(\bigwedge_{i = 1}^{n}(x_0 \vee \neg x_i)\right).
\]
We order the variables as $x_0, x_1, \ldots, x_n$ when taking the lexicographically largest satisfying assignment.
The formula $f$ is always satisfiable, since the all-zero assignment satisfies it.

If $\varphi$ is satisfiable, then the lexicographically largest satisfying assignment of $f$ sets $x_0 = 1$.
This is because, with $x_0 = 1$, the clauses $(\neg x_0 \vee C_\ell)$ force the original clauses $C_\ell$ to be satisfied, while the clauses $(x_0 \vee \neg x_i)$ are automatically satisfied.
Thus, among assignments with $x_0 = 1$, the variables $x_1, \ldots, x_n$ form the lexicographically largest satisfying assignment of $\varphi$.
Therefore, the value of the variable $x_n$ is preserved in such an assignment.

If $\varphi$ is unsatisfiable, then no satisfying assignment of $f$ can have $x_0 = 1$.
Thus, $x_0 = 0$, and then the clauses $(x_0 \vee \neg x_i)$ force $x_i = 0$ for every $i \in [n]$.
Consequently, the lexicographically largest satisfying assignment of $f$ is the all-zero assignment, and in particular has $x_n = 0$.
Thus, \lexmaxsat reduces to \lexmaxsatpromise in polynomial time.

We now reduce from \lexmaxsatpromise to \compmwis.
For an assignment $a \in \{0, 1\}^n$, we define its rank by
\[
    R(a) = \sum_{i = 1}^{n} 2^{n - i} a_i.
\]
Then, an assignment $a$ is lexicographically larger than $b$ if and only if $R(a) > R(b)$.
Let $M = 2^n + 1$.
Thus, $M$ is larger than the difference between any two assignment ranks.

Let $f(x_1, \ldots, x_n)$ be an instance of \lexmaxsatpromise with clauses $C_1, \ldots, C_m$.
Let $B = (m + 1)M + 2^n + 1$.
We construct a vertex-weighted graph $(G_f, w)$ as follows.
For each variable $x_i$, we create two assignment vertices $T_i$ and $F_i$, representing $x_i = 1$ and $x_i = 0$, respectively.
We add the edge $\{T_i, F_i\}$.
The vertex weights are
\[
    w(T_i) = B + 2^{n - i},
    \qquad
    w(F_i) = B.
\]
For each clause $C_\ell$, we create one clause vertex for each literal in $C_\ell$.
We add edges so that all clause vertices corresponding to the same clause form a clique.
Each clause vertex has weight $M$.
If the literal is $x_i$, we connect its clause vertex to $F_i$.
If the literal is $\neg x_i$, we connect its clause vertex to $T_i$.
Thus, once assignment vertices have been chosen, a clause vertex can also be chosen if and only if its corresponding literal is true.

We first observe the following on the value of a maximum-weight independent set in this graph.
\begin{claim}
    For every CNF formula $f$,
    \[
        \alpha_w(G_f)
        =
        nB
        +
        \max_{a \in \{0, 1\}^n}
        \left(
            M \cdot \#\{\text{clauses of } f \text{ satisfied by } a\}
            + R(a)
        \right).
    \]
\end{claim}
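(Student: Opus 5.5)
We prove the two inequalities separately, using the weight hierarchy $B \gg M \gg 2^n$. Note that $B = (m+1)M + 2^n + 1$ exceeds the total weight of all clause vertices that can appear in an independent set, namely at most $mM$ (one per clause, by the clique edges), plus the maximum rank $2^n - 1$.

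\emph{Lower bound.} Fix an assignment $a \in \{0,1\}^n$. Let $I$ contain $T_i$ if $a_i = 1$ and $F_i$ if $a_i = 0$; these $n$ vertices have total weight $nB + R(a)$. For each clause satisfied by $a$, add the clause vertex of one true literal. This vertex is adjacent only to its clause clique and to the assignment vertex of the opposite polarity, which is not in $I$ because the literal is true. We pick only one vertex per clause, so $I$ is independent. Its weight is $nB + R(a) + M\cdot\#\{\text{clauses satisfied by } a\}$. Maximizing over $a$ gives the $\ge$ direction.

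\emph{Upper bound.} Let $I$ be any independent set. Since $\{T_i,F_i\}$ is an edge, $I$ contains at most one assignment vertex per variable. Each clause clique contributes at most one vertex.

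\textbf{Step 1: reduce to a full assignment.} If some variable has neither $T_i$ nor $F_i$ in $I$, we add $F_i$ and delete any clause vertices adjacent to $F_i$, namely those for the literal $x_i$. This gains $B$ and loses at most $mM$, so the weight strictly increases. The modified set is still independent. Repeating this, we may assume $I$ picks exactly one assignment vertex per variable, which defines an assignment $a$.

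\textbf{Step 2: bound the remaining weight.} The assignment vertices contribute exactly $nB + R(a)$. Any clause vertex in $I$ is not adjacent to the chosen assignment vertices, so its literal is true under $a$, and hence its clause is satisfied by $a$. With at most one vertex per clause, the clause vertices contribute at most $M\cdot\#\{\text{clauses satisfied by } a\}$. This gives the $\le$ direction.

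The main subtlety is Step 1: verifying that the exchange only increases the weight. This is exactly where $B > mM$ is used. No finer use of $M > 2^n$ is needed for this claim; that inequality matters only later, when lexicographic order must be recovered from the weight.
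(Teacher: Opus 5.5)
Your proof is correct and follows essentially the same route as the paper's: an exchange argument using $B > mM$ forces exactly one of $T_i, F_i$ per variable, and then a clause vertex can coexist with the chosen assignment vertices exactly when its literal is true under $a$. The only differences are presentational: you spell out the lower-bound construction explicitly, and you run the exchange on an arbitrary independent set, whereas the paper argues by contradiction on a maximum-weight one.
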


\begin{proof}
    Let $I$ be a maximum-weight independent set of $G_f$.
    Since $\{T_i, F_i\}$ is an edge, the set $I$ contains at most one of $T_i, F_i$ for each variable $x_i$.

    Suppose that $I$ contains neither $T_i$ nor $F_i$ for some $i$.
    Adding one of $T_i, F_i$ may force us to delete selected clause vertices adjacent to it.
    Since the clause vertices corresponding to each clause form a clique, at most one clause vertex is selected from each clause.
    Hence, at most $m$ selected clause vertices can be deleted, for a total loss of at most $mM$.
    On the other hand, adding either $T_i$ or $F_i$ increases the total weight by at least $B$.
    Since $B > mM$, this strictly increases the total weight, contradicting the maximality of $I$.
    Therefore, $I$ contains exactly one of $T_i, F_i$ for every $i$.

    The chosen assignment vertices define an assignment $a \in \{0, 1\}^n$.
    Their total weight is $nB + R(a)$.
    Now fix this assignment $a$.
    For each clause $C_\ell$, at most one of its clause vertices can be chosen.
    Moreover, a clause vertex is compatible with the chosen assignment vertices if and only if its literal is true under $a$.
    Therefore, from clause $C_\ell$, we can choose one clause vertex of weight $M$ if and only if $C_\ell$ is satisfied by $a$.
    Thus, the best independent set consistent with $a$ has total weight
    \[
        nB + R(a) + M \cdot \#\{\text{clauses of } f \text{ satisfied by } a\}.
    \]
    Maximizing over all assignments gives the claim.
\end{proof}

Given an instance $\varphi(x_1, \dots, x_n)$ of \lexmaxsatpromise, we construct the two formulas
\[
    \varphi_1 = \varphi \wedge (x_n),
    \qquad
    \varphi_0 = \varphi \wedge (\neg x_n).
\]
Both formulas have the same number of clauses.
Let $m^\star$ be the number of clauses in each of $\varphi_1$ and $\varphi_0$.
We apply the construction above to both formulas using the common value $B = (m^\star + 1)M + 2^n + 1$.
Let $(G_1, w_1)$ and $(G_0, w_0)$ be the resulting weighted graphs.
They have the same number of vertices.

We claim that $\alpha_{w_1}(G_1) \ge \alpha_{w_0}(G_0)$ if and only if the lexicographically largest satisfying assignment of $\varphi$ has $x_n = 1$.
If $\varphi_b$ is satisfiable, then by the preceding claim,
\[
    \alpha_{w_b}(G_b)
    =
    nB + m^\star M
    +
    \max\{R(a) : a \models \varphi,\ a_n = b\}.
\]
If $\varphi_b$ is not satisfiable, then every assignment satisfies at most $m^\star - 1$ clauses of $\varphi_b$.
Hence,
\[
    \alpha_{w_b}(G_b)
    \le
    nB + (m^\star - 1)M + (2^n - 1).
\]
Since $M = 2^n + 1$, every satisfying assignment of either formula gives a larger independent-set weight than every assignment that does not satisfy all clauses.

Since $\varphi$ is satisfiable, at least one of $\varphi_1$ and $\varphi_0$ is satisfiable.
If exactly one of them is satisfiable, the comparison identifies which value of $x_n$ appears in the lexicographically largest satisfying assignment of $\varphi$.
If both are satisfiable, then both sides receive the same main bonus $nB + m^\star M$, and the comparison becomes
\[
    \max\{R(a) : a \models \varphi,\ a_n = 1\}
    \ge
    \max\{R(a) : a \models \varphi,\ a_n = 0\}.
\]
The two sets of assignments are disjoint, and $R$ is injective, so equality cannot occur.
Therefore, the above inequality holds if and only if the lexicographically largest satisfying assignment of $\varphi$ has $x_n = 1$.
Thus, \compmwis is $\Delta_2^P$-hard.

\subsection{Reduction to MMS Allocation Existence}

We now reduce from \compmwis.
The reduction is the weighted analogue of the preceding $\Theta_2^P$-hardness reduction.
The KPW core and the set of agents are unchanged.
The only changes are that graph goods now carry the input vertex weights, the pairwise penalties dominate the total vertex weight, dummy goods are large enough to compensate a maximum-weight independent set, and the KPW scaling factor is increased accordingly.
For the sake of completeness, we provide the complete reduction here.

\subsubsection{Construction}

Let $((\GX, w_X), (\GY, w_Y))$ be an instance of \compmwis.
Assume without loss of generality that $V(\GX) = V(\GY) = [n]$ and $n \ge 2$.
Let
\[
    W_X = \sum_{i = 1}^{n} w_X(i),
    \qquad
    W_Y = \sum_{i = 1}^{n} w_Y(i),
    \qquad
    W = W_X + W_Y.
\]
Let
\[
    \Omega = W,
    \qquad
    \Lambda = W + 1.
\]
The number $\Omega$ will be the value of each dummy good for the puzzle-solving agents, and $\Lambda$ will be the magnitude of the graph-good penalties.
Let
\[
    \rho = \max\{\alpha_{w_X}(\GX), \alpha_{w_Y}(\GY)\}.
\]
Observe that $\rho \le W = \Omega$.
The choice $\Omega = W$ ensures that one dummy good has value at least $\rho$, while the choice $\Lambda = W + 1$ ensures that every graph-good set containing a forbidden pair has negative value.

We construct a fair division instance with $N = 2n + 5$ agents.
The instance contains four kinds of agents: puzzle-solving agents, one \anchor agent, two \forcing agents, and \Q agents.
There are two puzzle-solving agents, denoted by $p_1$ and $p_2$.
There is one \anchor agent, denoted by $c_0$.
There are two \forcing agents, denoted by $c_1$ and $c_2$.
The remaining $2n$ agents are \Q agents, denoted by $q_1,\ldots,q_{2n}$.

Informally, the intended roles of the agents are as follows: the \Q agents force the KPW core to be allocated by rows, the \anchor agent pins the last row, the \forcing agents absorb the \trigger goods, and the two puzzle-solving agents compare the maximum weight independent sets of $\GX$ and $\GY$.

We first construct the \KPW integer matrix $A$ corresponding to the chosen number $N$ of agents, as in \Cref{prop:KPW-core-construction}.
Let \rowsum be the common row and column sum of $A$.
Let $R_i$ and $C_j$ denote the set of core goods in row $i$ and column $j$, respectively.

\subsubsection{Goods}
The core goods are defined by the positive entries of $A$.
For each $(i,j) \in A^+$, let $g_{i,j}$ denote the corresponding core good.
Let $M_{\mathrm{core}} = \{g_{i,j} : (i,j) \in A^+\}$.

We next add the graph goods.
For every vertex $i \in V(\GX)$, we create one graph good $x_i$.
For every vertex $i \in V(\GY)$, we create one graph good $y_i$.
Let
\[
M_X = \{x_i : i \in [n]\}
\qquad\text{and}\qquad
M_Y = \{y_i : i \in [n]\}.
\]

Finally, we add $N - 1$ dummy goods, denoted by $M_D = \{d_1,\ldots,d_{N - 1}\}$.
The \trigger goods are the $Y$-goods together with all dummy goods except the last one:
\[
M_{\mathrm{res}}
=
M_Y \dot\cup \{d_1,\ldots,d_{N - 2}\}.
\]
Note that $d_{N - 1}$ is not a \trigger good, and it will be treated separately.
The set of all goods is $M = M_{\mathrm{core}} \dot\cup M_X \dot\cup M_Y \dot\cup M_D$.

Clearly, $|M_{\mathrm{res}}| = |M_Y| + (N - 2) = n + N - 2 = 3n + 3$.
Since $N = 2n + 5$, we have $|M_{\mathrm{res}}| \le 2N - 2$.
Hence, we can fix a partition $(\C{Z}_1,\C{Z}_2)$ of $M_{\mathrm{res}}$ such that
\[
    M_{\mathrm{res}} = \C{Z}_1\dot\cup\C{Z}_2
    \qquad\text{and}\qquad
    1 \le |\C{Z}_1|, |\C{Z}_2| \le N - 1.
\]

\subsubsection{Valuations}
We now define the $2$-additive valuation function $v_i$ using the valuation coefficients $\sigma_i$ for each agent $i$.
For each agent $i$, we set $\sigma_i(\emptyset) = 0$.
Every coefficient that is not explicitly defined below is set to zero.
Thus, for every $S \subseteq M$,
\[
    v_i(S)
    =
    \sum_{g \in S} \sigma_i(\{g\})
    +
    \sum_{\{g,h\} \subseteq S} \sigma_i(\{g,h\}).
\]
In what follows, the valuations are additive except for the explicitly specified pairwise coefficients, which penalize graph edges and mixed $X$-$Y$ pairs.
Recall that $\Omega = W$ and $\Lambda = W + 1$.
We choose $\Delta_0 = 1$, $\Delta_1 = |\C{Z}_1|$, $\Delta_2 = |\C{Z}_2|$, and
\[
\Gamma = N(N - 1) + W + (N - 1)\Omega + 1
\qquad\text{and}\qquad
\tau = \Gamma\rowsum.
\]
The parameter $\Gamma$ is chosen so that we are able to distinguish between row/column partitions and every other partition of the KPW core, even after adding the lower-order singleton values and perturbations introduced below.
The term $W + (N - 1)\Omega$ upper bounds the total positive non-core singleton value available to any puzzle-solving agent, while the term $N(N - 1)$ upper bounds the total repair available to the \anchor and \forcing agents from positive perturbations and useful non-core goods.

The valuations of the puzzle-solving agents are $2$-additive.
All other agents have additive valuations.

\begin{enumerate}[wide=0pt]
\item~[Puzzle-solving agents.]
The puzzle-solving agents value the core goods according to the scaled KPW matrix.
For each $p \in \{p_1, p_2\}$, we set
\[
\sigma_p(\{g_{i,j}\}) = \Gamma A_{i,j}.
\]

Both puzzle-solving agents have the same valuation on graph goods.
For each $p \in \{p_1,p_2\}$ and each $i \in [n]$, we set
\[
\sigma_p(\{x_i\}) = w_X(i)
\qquad\text{and}\qquad
\sigma_p(\{y_i\}) = w_Y(i).
\]
We define the following pairwise coefficients for each $p \in \{p_1,p_2\}$:
\[
\sigma_p(\{x_i,x_j\}) = -\Lambda
\qquad\text{for every }\{i,j\} \in E(\GX),
\]
\[
\sigma_p(\{y_i,y_j\}) = -\Lambda
\qquad\text{for every }\{i,j\} \in E(\GY),
\]
\[
\sigma_p(\{x_i,y_j\}) = -\Lambda
\qquad\text{for every }i,j \in [n].
\]
All other pairwise coefficients involving graph goods are zero.
Thus, a positively valued set of graph goods for a puzzle-solving agent must use only $X$-goods or only $Y$-goods, and must correspond to an independent set in the corresponding graph.

Each puzzle-solving agent values every dummy good at $\Omega$.
For each $p \in \{p_1,p_2\}$, we have:
\[
    \sigma_p(\{d_t\}) = \Omega
    \qquad\text{for each }t \in [N - 1].
\]
Dummy goods have no pairwise interaction with any good for the puzzle-solving agents.

\begin{observation}
\label{obs:delta2p-row-column-puzzle}
For the puzzle-solving agents, we can infer the following about the row and column sets.
\begin{align}
v_p(R_i) &= \tau, \text{ for every } i \in [N], \nonumber\\
v_p(C_j) &= \tau, \text{ for every } j \in [N]. \label{eq:delta2p-puzzle-row-column}
\end{align}
Moreover, the total positive singleton value of the non-core goods for $p$ is at most $W + (N - 1)\Omega$.
\end{observation}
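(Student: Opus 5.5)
The plan is to read the two row/column identities straight off the singleton coefficients, and to get the non-core bound by summing the positive singleton coefficients of the non-core goods.

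\emph{Row and column sums.} For a puzzle-solving agent $p$, every core good $g_{i,j}$ has singleton coefficient $\Gamma A_{i,j}$, with no perturbation. Also, no pairwise coefficient involves a core good: the only nonzero pairwise coefficients of $p$ are the penalties $-\Lambda$, and each involves two graph goods. Hence $v_p$ restricted to core goods is additive. Thus
\[
v_p(R_i) = \Gamma\sum_{j} A_{i,j} = \Gamma\rowsum = \tau
\]
by the second item of \Cref{prop:KPW-core-construction}, and likewise $v_p(C_j) = \tau$ for every $j \in [N]$. This gives \eqref{eq:delta2p-puzzle-row-column}.

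\emph{Non-core singleton bound.} The non-core goods are $M_X \cup M_Y \cup M_D$, so I would simply add up their singleton coefficients. The graph goods contribute
\[
\sum_{i} w_X(i) + \sum_i w_Y(i) = W_X + W_Y = W.
\]
These weights are nonnegative, so this sum is exactly the total positive singleton value from graph goods. The $N-1$ dummy goods contribute $(N-1)\Omega$. The total is therefore $W + (N-1)\Omega$, and the bound holds with equality.

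Since every pairwise coefficient is nonpositive, I would also record the consequence used later: $v_p(S) \le W + (N-1)\Omega$ for every $S \subseteq M \setminus M_{\mathrm{core}}$. There is no real obstacle here. The only point needing care is that no pairwise coefficient couples a core good with anything, so that row and column values stay exactly additive.
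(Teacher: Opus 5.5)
Your proposal is correct and takes essentially the same approach as the paper. The paper states this observation without proof, and its $\Theta_2^P$ analogue is proved exactly this way: the row and column identities follow from the unperturbed coefficients $\Gamma A_{i,j}$, with no pairwise terms on core goods, and the non-core bound comes from summing the graph-good weights $W$ and the $N-1$ dummy values $\Omega$.
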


\item~[\anchor and \forcing agents.]
For the \anchor agent $c_0$ and the \forcing agents $c_1,c_2$, the valuation on core goods is obtained by perturbing the last column.
For $r \in \{0,1,2\}$, the valuation of $c_r$ on core goods is obtained by perturbing the last column by $\Delta_r$:
\[
    \sigma_{c_r}(\{g_{i,j}\})
    =
    \begin{cases}
        \Gamma A_{i,j}, & \text{if } j < N, \\
        \Gamma A_{i,N} - \Delta_r, & \text{if } i < N \text{ and } j = N, \\
        \Gamma A_{N,N} + (N - 1)\Delta_r, & \text{if } i = j = N.
    \end{cases}
\]
Since every $\Delta_r$ is at most $N - 1$, all singleton values on core goods are nonnegative.

The \anchor agent $c_0$ values every non-core good at zero.
For $r \in \{1,2\}$, the \forcing agent $c_r$ values precisely the goods in $\C{Z}_r$ at one:
\[
    \sigma_{c_r}(\{g\}) = 1
    \qquad\text{for every } g \in \C{Z}_r,
\]
and values every other non-core good at zero.
All pairwise coefficients involving non-core goods are zero for the \anchor and \forcing agents.

\begin{observation}
\label{obs:delta2p-row-column-forcing}
\label{obs:delta2p-row-column-anchor}
For the \anchor agent $c_0$ and the \forcing agents $c_1,c_2$, we can infer the following about the row and column sets.
For every $r \in \{0,1,2\}$,
\begin{align}
    v_{c_r}(C_j) = & \tau, \text{ for every } j \in [N], \nonumber\\
    v_{c_r}(R_i) = & \tau - \Delta_r, \text{ for every } i < N, \text{ and } \\
    v_{c_r}(R_N) = & \tau + (N - 1)\Delta_r. \nonumber
\end{align}
Moreover, $c_0$ has no positive value for non-core goods, while for every $r \in \{1,2\}$, the total useful non-core value available to $c_r$ is exactly $\Delta_r$.
\end{observation}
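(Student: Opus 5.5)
The observation is a direct computation from the definition of $\sigma_{c_r}$ on core goods, combined with two facts about the KPW matrix from \Cref{prop:KPW-core-construction}: every row and every column of $A$ sums to $\rowsum$, and the goods in the last column are all genuine core goods. For the second fact, $S_{i,N}\neq 0$ for every $i\in[N]$, so property [P1] of \Cref{lem:KPW-eight-properties} gives $A_{i,N}>0$. Hence every perturbation in the definition is applied to a good that actually exists. Since the valuations of $c_0,c_1,c_2$ have no nonzero pairwise coefficients, each $v_{c_r}$ is additive. So every bundle value is just a sum of singleton coefficients, and I will compute the row and column sums one family at a time.

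First, take a column $C_j$ with $j<N$. None of its goods lies in column $N$, so every singleton coefficient is unperturbed and equals $\Gamma A_{i,j}$. The column sum of $A$ then gives $v_{c_r}(C_j)=\Gamma\rowsum=\tau$.

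Next, take the last column $C_N$. It contains the $N-1$ goods $g_{i,N}$ with $i<N$, each shifted by $-\Delta_r$, and the good $g_{N,N}$, shifted by $+(N-1)\Delta_r$. The perturbations cancel exactly, so again $v_{c_r}(C_N)=\tau$.

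For a non-last row $R_i$ with $i<N$, the only good of $R_i$ lying in column $N$ is $g_{i,N}$. Therefore the value is $\Gamma\rowsum-\Delta_r=\tau-\Delta_r$. For the last row $R_N$, the only perturbed good is $g_{N,N}$, which gives $v_{c_r}(R_N)=\tau+(N-1)\Delta_r$.

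The claims about non-core goods are read off directly. The anchor $c_0$ has all non-core singleton coefficients equal to $0$. For $r\in\{1,2\}$, $c_r$ has coefficient $1$ on each of the $|\C{Z}_r|=\Delta_r$ goods of $\C{Z}_r$ and $0$ elsewhere, with no pairwise terms. So the total non-core value available to $c_r$ is exactly $\Delta_r$.

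There is no real obstacle here. The only point needing care is confirming that every $(i,N)$ belongs to $A^+$, since otherwise a "perturbed" good would not exist and the cancellation in $C_N$ would fail. Nonnegativity of the perturbed coefficients, already noted via $\Delta_r\le N-1<\Gamma$ and $A_{i,N}\ge 1$, is not needed for the sums themselves.
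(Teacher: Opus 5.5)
Your proposal is correct and takes the same route as the paper, which treats this observation as immediate from the definition of $\sigma_{c_r}$, the row and column sums of $A$ given by \Cref{prop:KPW-core-construction}, and $|\mathcal{Z}_r| = \Delta_r$, without writing out the computation. Your explicit check that every $(i,N)$ lies in $A^+$ (via $S_{i,N}\neq 0$ and [P1]) is needed for the cancellation in $C_N$; the paper leaves this detail implicit.
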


\item~[\Q agents.]
The valuation of a \Q agent on core goods is obtained by perturbing the last row.
For every \Q agent $q$, the valuation function is defined as follows:
\[
\sigma_q(\{g_{i,j}\})
=
\begin{cases}
\Gamma A_{i,j}, & \text{if } i < N, \\
\Gamma A_{N,j} - 1, & \text{if } i = N \text{ and } j < N, \\
\Gamma A_{N,N} + (N - 1), & \text{if } i = j = N.
\end{cases}
\]
Every \Q agent values every non-core good at $0$:
\[
\sigma_q(\{g\}) = 0, \text{for all } g \in M_X \cup M_Y \cup M_D.
\]

\begin{observation}
\label{obs:delta2p-row-column-Q}
For every \Q agent $q$, we can infer the following about the row and column sets.
\begin{align}
v_{q}(R_i) &= \tau, \text{ for every } i \in [N], \nonumber\\
v_{q}(C_j) &= \tau - 1, \text{ for every } j < N, \nonumber\\
v_{q}(C_N) &= \tau + (N - 1). \label{eq:delta2p-Q-row-column}
\end{align}
Moreover, $v_{q}(M) = N\tau$.
\end{observation}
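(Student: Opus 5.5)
The statement is \Cref{obs:delta2p-row-column-Q}, and it follows directly from the definition of the \Q agent's singleton coefficients, exactly as \Cref{obs:theta2p-row-column-set-Q-agent} did. The plan is to compute row sums, then column sums, then the total value, using the second item of \Cref{prop:KPW-core-construction} (every row and column of $A$ sums to $\rowsum$) together with $\tau = \Gamma\rowsum$. All pairwise coefficients of a \Q agent are zero, so $v_q$ is additive on core goods and these sums are just sums of singleton coefficients.

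\emph{Rows.} For a row $R_i$ with $i<N$, no entry is perturbed, so $v_q(R_i)=\Gamma\sum_j A_{i,j}=\Gamma\rowsum=\tau$. For the last row $R_N$, the perturbation subtracts $1$ from each of the $N-1$ entries $(N,j)$ with $j<N$ and adds $N-1$ to $(N,N)$. These cancel, so $v_q(R_N)=\tau$. One subtlety must be handled here: the construction assigns a perturbed value to $g_{N,j}$ only when $(N,j)\in A^+$. In the KPW matrix the whole last row lies in the support, since $S_{N,j}>0$ for every $j$ and [P1] then gives $A_{N,j}>0$. Hence all $N-1$ decrements are genuinely present and the cancellation is exact.

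\emph{Columns.} A column $C_j$ with $j<N$ contains exactly one last-row entry, namely $(N,j)$, which lies in the support. Its value is therefore $\Gamma\rowsum-1=\tau-1$. The last column contains only the entry $(N,N)$ from the last row, giving $\Gamma\rowsum+(N-1)=\tau+(N-1)$.

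\emph{Total.} Summing the row values gives $N\tau$ on the core goods. Every non-core good has value $0$ for $q$, so $v_q(M)=N\tau$. No step is a real obstacle; the only point needing care is confirming from [P1] that the last row is fully supported, so that the perturbations cancel.
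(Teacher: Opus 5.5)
Your proposal is correct and follows the same route as the paper. The paper states this observation without proof and treats it as immediate from the definitions: rows and columns of $A$ sum to \rowsum, $\tau=\Gamma\rowsum$, the last-row perturbation cancels, and all non-core goods have value zero. Your use of [P1] to confirm that the entire last row lies in $A^+$, so that all $N-1$ decrements are actually present, is a detail the paper leaves implicit.
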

\end{enumerate}

This completes the construction.
The observations above will be used in the analysis to show that every MMS allocation must allocate the KPW core by rows, and that the remaining non-core goods encode the comparison between $\alpha_{w_X}(\GX)$ and $\alpha_{w_Y}(\GY)$.

\subsection{Analysis}

Let $\rho$ be the larger number among the weighted independence numbers of $\GX$ and $\GY$; that is,
\[
    \rho = \max\{\alpha_{w_X}(\GX), \alpha_{w_Y}(\GY)\}.
\]
By the same argument as in the proof of \Cref{lem:theta2p-max-graph-value}, we have the following consequence of the construction.
\begin{claim}
\label{clm:delta2p-graph-goods}
For each puzzle-solving agent $p \in \{p_1,p_2\}$, the maximum value obtainable from graph goods is exactly $\rho$.
\end{claim}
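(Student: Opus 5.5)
The plan is to mirror the proof of \Cref{lem:theta2p-max-graph-value}, replacing unit singleton values by the vertex weights and $2n$ by $W$. I would prove two inequalities: that $\rho$ is attainable by some set of graph goods, and that no set of graph goods has value exceeding $\rho$. Throughout, the value of a set $S \subseteq M_X \cup M_Y$ for $p$ is the sum of its singleton coefficients plus $-\Lambda$ times the number of forbidden pairs it contains. The forbidden pairs are the edges of $\GX$, the edges of $\GY$, and all mixed $X$-$Y$ pairs.

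For attainability, I would take a maximum-weight independent set $I$ in whichever of $\GX$ or $\GY$ realizes $\rho$. If $\rho=\alpha_{w_X}(\GX)$, the set $\{x_i : i\in I\}$ contains no forbidden pair, since it is independent in $\GX$ and contains no $Y$-good. Its value is therefore $\sum_{i\in I} w_X(i)=\rho$. The case $\rho=\alpha_{w_Y}(\GY)$ is symmetric.

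For the upper bound, I would fix any $S\subseteq M_X\cup M_Y$ and split into two cases. If $S$ contains at least one forbidden pair, then its value is at most $\sum_{g\in S}\sigma_p(\{g\}) - \Lambda \le W - (W+1) < 0$. This uses that weights are nonnegative, so the singleton total is at most $W$, and that all pairwise coefficients are nonpositive. Since $\rho \ge 0$ (the empty set is independent), such an $S$ has value below $\rho$. If $S$ contains no forbidden pair, then $S$ lies entirely in $M_X$ or entirely in $M_Y$, and its index set is independent in the corresponding graph. Hence its value is its weight, which is at most $\alpha_{w_X}(\GX)$ or $\alpha_{w_Y}(\GY)$ respectively, and so at most $\rho$.

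There is no real obstacle here. The only points needing care are the choice $\Lambda = W+1$, which makes a single penalty outweigh all positive singleton mass even though weights may be large, and the nonnegativity of the weights, which guarantees $\rho\ge 0$ and that the singleton sum is bounded by $W$.
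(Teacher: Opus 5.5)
Your proof is correct and follows the paper's route. The paper simply invokes the argument of \Cref{lem:theta2p-max-graph-value}: a maximum-weight independent set on the side realizing $\rho$ attains $\rho$, while any set containing a forbidden pair has value at most $W - \Lambda < 0 \le \rho$. Your explicit use of $\rho \ge 0$ and of the nonnegative weights just makes that one-line appeal fully rigorous.
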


\begin{lemma}
\label{lem:delta2p-core-row-forcing}
In any allocation in which every agent receives value at least $\tau$, the induced partition of the core goods is the row partition, up to relabeling.
\end{lemma}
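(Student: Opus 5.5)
The argument follows \Cref{lem:theta2p-core-row-forcing} almost verbatim, with the puzzle-solving bound replaced by the weighted one. We show in turn that (i) any induced core partition other than the row or column partition leaves some agent below $\tau$, and (ii) the column partition also leaves some agent below $\tau$.

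\emph{Step (i).} Suppose the induced core partition is neither the row nor the column partition. By \Cref{prop:only-rows-or-columns-A}, some core bundle has $A$-value at most $\rowsum-1$, so its unperturbed scaled value is at most $\Gamma(\rowsum-1)=\tau-\Gamma$. We then check each agent type against the choice $\Gamma = N(N-1)+W+(N-1)\Omega+1$.
\begin{itemize}
\item \emph{Puzzle-solving agent.} The core valuation is unperturbed. By \Cref{obs:delta2p-row-column-puzzle}, the total positive non-core singleton value is at most $W+(N-1)\Omega$, and all pairwise coefficients are nonpositive, so they only lower the value. The bundle is worth at most $\tau-\Gamma+W+(N-1)\Omega<\tau$.
\item \emph{\Q agent.} The positive perturbation is at most $N-1$, coming from $g_{N,N}$, and non-core goods are worthless to this agent.
\item \emph{\anchor agent $c_0$.} The positive perturbation is $(N-1)\Delta_0=N-1$, with no useful non-core goods.
\item \emph{\forcing agent $c_r$, $r\in\{1,2\}$.} The positive perturbation is at most $(N-1)\Delta_r$ and the useful non-core value is $\Delta_r$ (\Cref{obs:delta2p-row-column-forcing}). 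The total repair is $N\Delta_r\le N(N-1)$.
\end{itemize}
In every case the repair is strictly less than $\Gamma$, so this bundle is worth less than $\tau$ to whichever agent receives it, a contradiction.

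\emph{Step (ii).} Under the column partition, each of the $2n\ge 2$ \Q agents values every non-last column at $\tau-1$ and values all non-core goods at $0$ (\Cref{obs:delta2p-row-column-Q}). Only one \Q agent can receive $C_N$, so some \Q agent gets less than $\tau$. Hence the core is allocated by rows, up to relabeling.

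The only step needing care is the puzzle-solving bound in (i), since the graph goods now carry arbitrary weights. The point to confirm is that nonpositive pairwise coefficients cannot help, so the bound $W+(N-1)\Omega$ on positive singletons suffices, and that $\Gamma$ was chosen to exceed it.
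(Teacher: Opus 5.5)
Your proposal is correct and follows the paper's proof essentially step for step. You use the same four per-agent-type repair bounds against $\Gamma = N(N-1)+W+(N-1)\Omega+1$ to exclude every induced core partition other than rows or columns. You then exclude the column partition by the same counting argument, since at most one of the at least two row-forcing agents can receive the last column.
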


\begin{proof}
Consider the allocation of the core goods induced by such an allocation.
Suppose that this induced core partition is neither the row partition nor the column partition.

Then, by \Cref{prop:only-rows-or-columns-A}, some core bundle has $A$-value at most $\rowsum - 1$.
After scaling by $\Gamma$, this bundle has core value at most $\Gamma(\rowsum - 1) = \tau - \Gamma$.

We now upper bound how much of this deficit can be repaired for each kind of agent.
\begin{itemize}
\item For a puzzle-solving agent, the total positive singleton value of non-core goods is at most $W + (N - 1)\Omega$, by \Cref{obs:delta2p-row-column-puzzle}.
The pairwise coefficients involving graph goods can only decrease the value.
Therefore, such a bundle has value at most $\tau - \Gamma + W + (N - 1)\Omega < \tau$.

\item For a \Q agent, the positive core perturbation is at most $N - 1$, due to the good $g_{N,N}$.
Additionally, \Q agents value all non-core goods at zero.
Hence, such a bundle has value at most $\tau - \Gamma + (N - 1) < \tau$.

\item For the \anchor agent $c_0$, the positive core perturbation is at most $N - 1$, due to the good $g_{N,N}$.
Additionally, the \anchor agent values all non-core goods at zero.
Hence, such a bundle has value at most $\tau - \Gamma + (N - 1) < \tau$.

\item For a \forcing agent $c_r$ with $r \in \{1, 2\}$, the positive core perturbation is at most $(N - 1)\Delta_r$, due to the good $g_{N,N}$.
Additionally, the total useful non-core value is exactly $\Delta_r$, by \Cref{obs:delta2p-row-column-forcing}.
Since $\Delta_r \le N - 1$, such a bundle has value at most
\[
\tau - \Gamma + (N - 1)\Delta_r + \Delta_r
=
\tau - \Gamma + N\Delta_r
\le
\tau - \Gamma + N(N - 1)
<
\tau.
\]
\end{itemize}

Thus, if the induced core partition is neither the row partition nor the column partition, then some agent receives value strictly less than $\tau$, contradicting the assumption.
Consequently, the induced core partition must be either the row partition or the column partition.

It remains to rule out the column partition.
There are at least two \Q agents.
Under the column partition, a \Q agent values the last column at $\tau + (N - 1)$ and every non-last column at $\tau - 1$, by \Cref{obs:delta2p-row-column-Q}.
Moreover, since \Q agents value all non-core goods at zero, at most one \Q agent can receive value at least $\tau$ under a column allocation of the core goods.
This contradicts the assumption that every agent receives value at least $\tau$.

Therefore, the core goods must be allocated as rows, up to relabeling.
\end{proof}

In the following, we compute the maximin shares of the agents, where we view the partitions from each agent's valuation.

\begin{lemma}
\label{lem:delta2p-mms-values}
The following statements hold.
\begin{enumerate}
\item For every puzzle-solving agent $p \in \{p_1,p_2\}$, we have $\mms_p = \tau + \rho$.
\item For the \anchor agent $c_0$, we have $\mms_{c_0} = \tau$.
\item For every \forcing agent $c_r$, $r \in \{1,2\}$, we have $\mms_{c_r} = \tau$.
\item For every \Q agent $q$, we have $\mms_{q} = \tau$.
\end{enumerate}
\end{lemma}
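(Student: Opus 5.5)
The plan is to follow the proof of \Cref{lem:theta2p-mms-values} almost verbatim, replacing unit vertex values by the weights $w_X,w_Y$, the dummy value $n$ by $\Omega=W$, and the non-core bound $(N+1)n$ by $W+(N-1)\Omega$. Items 2--4 do not involve the graph goods at all. Their proofs are identical to the $\Theta_2^P$ case, using \Cref{obs:delta2p-row-column-forcing,obs:delta2p-row-column-Q} and the fact that $\Gamma > N(N-1)$.

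For item 1, the lower bound takes the row partition of the core, where every row is worth $\tau$ to $p$ by \Cref{obs:delta2p-row-column-puzzle}. We place a maximum-weight independent set of the better graph into one row, giving value $\tau+\rho$ by \Cref{clm:delta2p-graph-goods}. Each of the at most $2n \le N-1$ remaining graph goods goes to a distinct other row, and each of those $N-1$ rows also receives one dummy good. Since every such bundle contains at most one graph good, no penalty is triggered. Each of these bundles is therefore worth at least $\tau+\Omega \ge \tau+\rho$, using $\rho\le W=\Omega$ and nonnegative weights.

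For the upper bound, take any $N$-partition. First suppose the induced core partition is neither rows nor columns. Then \Cref{prop:only-rows-or-columns-A} gives a bundle with scaled core value at most $\tau-\Gamma$. The pairwise coefficients are nonpositive and the non-core singletons total at most $W+(N-1)\Omega$, so this bundle has value below $\tau$ by the choice of $\Gamma$. Otherwise every bundle has core value exactly $\tau$. Since there are only $N-1$ dummies, some bundle receives no dummy good. Its non-core part consists only of graph goods, which by \Cref{clm:delta2p-graph-goods} are worth at most $\rho$. We also need that no cross term between graph goods and core or dummy goods exists, which holds by construction. Hence that bundle is worth at most $\tau+\rho$.

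The only point needing care is that \Cref{clm:delta2p-graph-goods} holds in the weighted setting, since it was only asserted "by the same argument". I would verify it directly. Any set of graph goods containing a forbidden pair has value at most $W-\Lambda<0\le\rho$. Any other set is a one-sided independent set, so its value is at most $\rho$.
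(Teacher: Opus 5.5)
Your proposal is correct and follows the paper's proof essentially step for step. It uses the same row-partition witness with one dummy good in each remaining row for the lower bound, and the same ``rows/columns or a bundle of scaled core value at most $\tau-\Gamma$'' dichotomy plus the dummy-counting argument for the upper bound. The column, row and averaging arguments for the anchor, cover and row-forcing agents are also the same. Your direct check of \Cref{clm:delta2p-graph-goods} (any set with a forbidden pair has value at most $W-\Lambda<0\le\rho$) is a worthwhile detail that the paper leaves to ``the same argument.''
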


\begin{proof}
First consider a puzzle-solving agent $p \in \{p_1, p_2\}$.
We show that $\mms_p = \tau + \rho$.

For the lower bound, consider the row partition of the core goods.
Let $I$ be a maximum-weight independent set in whichever among $(\GX,w_X)$ and $(\GY,w_Y)$ has the larger weighted independence number.
If $\rho = \alpha_{w_X}(\GX)$, we put the goods $\{x_i : i \in I\}$ into one row bundle.
If $\rho = \alpha_{w_Y}(\GY)$, we put the goods $\{y_i : i \in I\}$ into one row bundle.
This bundle has value $\tau + \rho$.

There are at most $2n$ remaining graph goods.
Since $N - 1 = 2n + 4 \ge 2n$, we can put each remaining graph good into a distinct remaining row bundle.
Then put one dummy good into each of the $N - 1$ remaining row bundles.
Each such bundle contains exactly one dummy good and at most one graph good, so no pairwise graph penalty is triggered.
Each such bundle has value at least $\tau + \Omega \ge \tau + \rho$.
Therefore, $p$ has an $N$-partition in which every bundle has value at least $\tau + \rho$, and so $\mms_p \ge \tau + \rho$.

For the upper bound, consider any partition of all goods into $N$ bundles.
Suppose first that the induced partition of the core goods is neither the row partition nor the column partition.
Then some bundle has $A$-value at most $\rowsum - 1$, and hence scaled core value at most $\tau - \Gamma$.
The total positive non-core singleton value for $p$ is at most $W + (N - 1)\Omega$, and pairwise coefficients can only decrease the value.
Hence, this bundle has value at most $\tau - \Gamma + W + (N - 1)\Omega < \tau \le \tau + \rho$.
Thus, such a partition cannot guarantee value strictly larger than $\tau + \rho$.

Therefore, any partition that guarantees value strictly more than $\tau + \rho$ must induce either the row partition or the column partition on the core goods.
In either case, every bundle has core value exactly $\tau$, by \Cref{obs:delta2p-row-column-puzzle}.
There are $N$ bundles and only $N - 1$ dummy goods, so some bundle receives no dummy good.
For such a bundle, the non-core contribution can come only from graph goods.
By Claim~\ref{clm:delta2p-graph-goods}, the value of any set of graph goods is at most $\rho$.
Therefore, this dummy-free bundle has value at most $\tau + \rho$.
Thus, no partition can guarantee value strictly more than $\tau + \rho$, and so $\mms_p \le \tau + \rho$.
Combining the lower and upper bounds gives $\mms_p = \tau + \rho$.

Next, we consider the \anchor agent $c_0$.
The column partition gives every bundle core value exactly $\tau$, by \Cref{obs:delta2p-row-column-anchor}, so $\mms_{c_0} \ge \tau$.
Since $c_0$ values every non-core good at zero and the total value of all goods to $c_0$ is $N\tau$, by \Cref{obs:delta2p-row-column-anchor}, the averaging upper bound gives $\mms_{c_0} \le \tau$.
Hence, $\mms_{c_0} = \tau$.

Now we consider a \forcing agent $c_r$ for $r \in \{1,2\}$.
The column partition gives every bundle core value exactly $\tau$, by \Cref{obs:delta2p-row-column-forcing}, so $\mms_{c_r} \ge \tau$.

For the upper bound, consider any partition of all goods into $N$ bundles.
If the induced core partition is neither the row partition nor the column partition, then some bundle has scaled core value at most $\tau - \Gamma$.
Even after adding all positive core perturbation and all useful non-core goods, this bundle has value at most
$
\tau - \Gamma + N\Delta_r
\le
\tau - \Gamma + N(N - 1)
<
\tau.
$

If the induced core partition is the column partition, then every bundle has core value exactly $\tau$.
Since $\Delta_r < N$, there are fewer than $N$ non-core goods of positive value for $c_r$, and hence some bundle receives no such good and has value exactly $\tau$.

Finally, suppose the induced core partition is the row partition.
Each non-last row has value $\tau - \Delta_r$ for $c_r$, by \Cref{obs:delta2p-row-column-forcing}, and the total useful non-core value is exactly $\Delta_r$.
Thus, it is impossible to raise every non-last row bundle above $\tau$.
Hence, at least one bundle has value at most $\tau$.

In all cases, every partition has some bundle of value at most $\tau$.
Therefore, we have $\mms_{c_r} \le \tau$.
Combining the lower and upper bounds gives $\mms_{c_r} = \tau$ for $r \in \{1,2\}$.

Finally, consider a \Q agent $q$.
The row partition of the core goods gives value exactly $\tau$ in every bundle, by \Cref{obs:delta2p-row-column-Q}.
Hence, $\mms_{q} \ge \tau$.
Since $q$ values all non-core goods at zero, and the row perturbation preserves the total core value, the total value of all goods to $q$ is exactly $N\tau$.
Therefore, the averaging upper bound gives $\mms_{q} \le \tau$.
Thus, we have $\mms_{q} = \tau$.
\end{proof}

\begin{lemma}
\label{lem:delta2p-reduction-correctness}
The constructed instance admits an MMS allocation if and only if $\alpha_{w_X}(\GX) \ge \alpha_{w_Y}(\GY)$.
\end{lemma}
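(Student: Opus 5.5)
The plan mirrors the proof of \Cref{lem:theta2p-reduction-correctness}, with the unweighted quantities replaced by their weighted analogues. We rely on \Cref{lem:delta2p-mms-values} for the maximin shares, on \Cref{lem:delta2p-core-row-forcing} for the row structure, and on Claim~\ref{clm:delta2p-graph-goods} for the graph-goods bound.

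\emph{Forward direction.} Assume $\alpha_{w_X}(\GX) \ge \alpha_{w_Y}(\GY)$. Then $\rho = \alpha_{w_X}(\GX)$; fix a maximum-weight independent set $I$ of $\GX$. We allocate the core by rows as follows.
\begin{enumerate}
\item The last row $R_N$ goes to $c_0$, who gets value $\tau + (N-1) \ge \tau$.
\item For $r \in \{1,2\}$, agent $c_r$ receives a non-last row together with all of $\C{Z}_r$. Her value is $\tau - \Delta_r + \Delta_r = \tau$.
\item Agent $p_1$ receives a non-last row and $d_{N-1}$. Her value is $\tau + \Omega$, and $\Omega = W \ge \rho$.
\item Agent $p_2$ receives a non-last row and $\{x_i : i \in I\}$. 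Since $I$ is independent, no penalty is triggered, so her value is $\tau + \rho$.
\item The remaining $2n$ rows go to the \Q agents, one row each. The leftover $X$-goods go arbitrarily to \Q agents, who value them at $0$.
\end{enumerate}
By \Cref{lem:delta2p-mms-values}, every agent reaches her MMS.

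\emph{Converse.} Take an MMS allocation. By \Cref{lem:delta2p-mms-values}, every MMS is at least $\tau$, so \Cref{lem:delta2p-core-row-forcing} makes the core partition the row partition, up to relabeling. Agent $c_0$ values every non-core good at $0$ and values each non-last row at $\tau - 1$. Hence she must receive $R_N$.

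Each $c_r$ with $r \in \{1,2\}$ then holds a non-last row worth $\tau - \Delta_r$. She values exactly the $\Delta_r$ goods of $\C{Z}_r$, each at $1$, so she must take all of $\C{Z}_r$. Together, $c_1$ and $c_2$ therefore take all of $M_Y$ and $d_1, \ldots, d_{N-2}$.

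Both puzzle agents hold non-last rows, each worth exactly $\tau$ to them. Only $d_{N-1}$ remains, so some puzzle agent $p$ receives no dummy good. Her non-core goods lie in $M_X$ together with goods valued by her at $0$. Dummy goods have no pairwise interactions, and core goods interact with nothing. Hence her bundle has value $\tau + v_p(S \cap M_X)$.

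The MMS requirement gives $v_p(S \cap M_X) \ge \rho \ge 0$. When $\rho > 0$, the penalty structure behind Claim~\ref{clm:delta2p-graph-goods} makes $S \cap M_X$ an independent set of $\GX$. So its value is at most $\alpha_{w_X}(\GX)$, and $\alpha_{w_X}(\GX) \ge \rho \ge \alpha_{w_Y}(\GY)$. When $\rho = 0$, the inequality $0 \ge 0$ holds trivially.

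\emph{Main subtlety.} The delicate step is the last one: why $p$'s value equals her core row value plus the value of her $X$-goods alone. This needs two facts. First, the \Q agents' leftover goods and $p$'s own zero-valued goods contribute nothing. Second, the penalty $\Lambda = W+1$ exceeds the total weight $W$, so any set of graph goods containing an edge (or mixing $X$ and $Y$) has negative value and cannot help.
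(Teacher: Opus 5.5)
Your proposal is correct and follows the paper's proof essentially step for step. The forward direction uses the same explicit row allocation. The converse uses the same chain: row forcing, $c_0$ pinned to $R_N$, $c_1,c_2$ absorbing all reserve goods, and one puzzle agent left without a dummy good and hence capped at $\tau+\alpha_{w_X}(\GX)$. Your split into the cases $\rho>0$ and $\rho=0$ is harmless but unnecessary. Since $\Lambda = W+1$ makes every non-independent set of $X$-goods negatively valued, the maximum value over all subsets of $M_X$ is exactly $\alpha_{w_X}(\GX)$, which is how the paper phrases that step.
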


\begin{proof}
Suppose first that $\alpha_{w_X}(\GX) \ge \alpha_{w_Y}(\GY)$.
Then $\rho = \alpha_{w_X}(\GX)$.
Let $I$ be a maximum-weight independent set of $(\GX,w_X)$.
We construct an MMS allocation as follows.

We allocate the core goods by rows.
\begin{itemize}
\item We give the last row $R_N$ to the \anchor agent $c_0$.
Then, $c_0$ receives value $\tau + (N - 1) \ge \tau = \mms_{c_0}$.

\item For each $r \in \{1,2\}$, we give a non-last row to $c_r$ and allocate all goods in $\C{Z}_r$ to $c_r$.
Then, $c_r$ receives core value $\tau - \Delta_r$ and non-core value $\Delta_r$, so $c_r$ receives value exactly $\tau = \mms_{c_r}$.
The two \forcing agents $c_1$ and $c_2$ consume all \trigger goods.

\item We give one non-last row and the remaining dummy good $d_{N - 1}$ to $p_1$.
Since $\sigma_{p_1}(\{d_{N - 1}\}) = \Omega \ge \rho$, the agent $p_1$ receives value at least
\[
v_{p_1}(A_{p_1})
=
\tau + \Omega
\ge
\tau + \rho
=
\mms_{p_1}.
\]

\item We give one non-last row and the goods $\{x_i : i \in I\}$ to $p_2$.
Then, $p_2$ receives value
\[
v_{p_2}(A_{p_2})
=
\tau + w_X(I)
=
\tau + \alpha_{w_X}(\GX)
=
\tau + \rho
=
\mms_{p_2}.
\]

\item We give all remaining rows to the \Q agents, one row to each.
Every \Q agent values every row at exactly $\tau$, which is her maximin share.
All remaining graph goods may be allocated arbitrarily to agents who do not value them, for example to the \Q agents.
\end{itemize}

Therefore, every agent receives value at least her MMS, and an MMS allocation exists.

Conversely, suppose that the constructed instance admits an MMS allocation.
By \Cref{lem:delta2p-mms-values}, every agent has maximin share at least $\tau$.
Hence every agent receives value at least $\tau$ in the MMS allocation.
Therefore, by \Cref{lem:delta2p-core-row-forcing}, the core goods are allocated as rows, up to relabeling.

The \anchor agent $c_0$ has MMS value $\tau$ and values every non-core good at zero.
In the row partition, every non-last row has value $\tau - 1$ for $c_0$, while the last row has value $\tau + (N - 1)$, by \Cref{obs:delta2p-row-column-anchor}.
Therefore, $c_0$ must receive the last row.

Consequently, the \forcing agents $c_1$ and $c_2$ receive non-last rows.
For each $r \in \{1,2\}$, the agent $c_r$ receives core value $\tau - \Delta_r$.
Since $\mms_{c_r} = \tau$, the agent $c_r$ must receive non-core value at least $\Delta_r$.
The only non-core goods that have positive value for $c_r$ are the goods in $\C{Z}_r$, and there are exactly $\Delta_r$ such goods, each of value one.
Hence, $c_r$ must receive every good in $\C{Z}_r$.

Applying this to both \forcing agents, we conclude that the \forcing agents consume all \trigger goods.
Therefore, the only useful non-core goods left for the two puzzle-solving agents are the $X$-goods and the single dummy good $d_{N - 1}$.

Both puzzle-solving agents have MMS value $\tau + \rho$, and both receive one core row of value $\tau$.
Since only one dummy good remains, at least one of the two puzzle-solving agents receives no dummy good.
That agent can obtain additional value only from $X$-goods.
By the graph-good construction, the maximum value obtainable from $X$-goods is $\alpha_{w_X}(\GX)$.
Therefore, this agent receives value at most $\tau + \alpha_{w_X}(\GX)$.
Since the allocation is an MMS allocation, this value must be at least $\tau + \rho$.
Hence, $\tau + \alpha_{w_X}(\GX) \ge \tau + \rho$, and we have $\alpha_{w_X}(\GX) \ge \rho$.
Since $\rho = \max\{\alpha_{w_X}(\GX),\alpha_{w_Y}(\GY)\}$, we conclude that $\alpha_{w_X}(\GX) \ge \alpha_{w_Y}(\GY)$.
\end{proof}

\begin{theorem}
\label{thm:delta2P-hardness-2additive}
Deciding whether a fair division instance admits an MMS allocation is $\Delta_2^P$-complete for $2$-additive valuations.
\end{theorem}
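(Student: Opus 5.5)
Membership comes from \Cref{prop:k-additive-membership-delta2p} with $k = 2$, so the work is hardness. For that I will compose the two reductions already given in this section. The first reduces \lexmaxsat to \lexmaxsatpromise and then to \compmwis via the graphs $(G_1,w_1),(G_0,w_0)$; this shows \compmwis is $\Delta_2^P$-hard. The second is the MMS construction above. Both are polynomial-time many-one reductions, and a composition of such reductions is again one, so the MMS existence problem for $2$-additive valuations is $\Delta_2^P$-hard.

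Correctness of the second reduction is exactly \Cref{lem:delta2p-reduction-correctness}: the instance admits an MMS allocation iff $\alpha_{w_X}(\GX) \ge \alpha_{w_Y}(\GY)$. It remains to check that the construction runs in polynomial time. The key point is that all weights must have polynomial bit length, not polynomial magnitude. In the \compmwis instances produced from \lexmaxsat, the weights involve $B=(m^\star+1)M+2^n+1$ with $M=2^n+1$, so they have $O(n+\log m)$ bits. Hence $W$, $\Omega=W$, $\Lambda=W+1$ and $\Gamma = N(N-1)+W+(N-1)\Omega+1$ all have polynomial bit length.

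By \Cref{prop:KPW-core-construction}, $A$ and $\rowsum$ have $O(N^4)$ bits with $N=2n+5$. So every core coefficient $\Gamma A_{i,j}\pm$ (perturbation) and every pairwise coefficient has polynomially many bits. The number of goods is $|A^+|+2n+N-1=O(n)$, and only $O(n^2)$ pairwise coefficients are nonzero, so the explicit $2$-additive encoding has polynomial size.

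The only step needing care is this bit-length bookkeeping: unlike the $\Theta_2^P$ case, the numbers are exponential in value. Nothing in the analysis needs unary values, though, since the proofs only use inequalities such as $\Gamma$ exceeding the total repair. If the statement is also meant to cover the monotone submodular, nonnegative claim of \Cref{thm:delta2p-hardness-monotone-submodular}, I would reuse the padding-and-shifting argument from \Cref{claim:theta2p-padding-safeness} through \Cref{lem:theta2p-cleanup-preserves-correctness} unchanged. That argument only uses that all pairwise coefficients are nonpositive (here they are $-\Lambda$ or $0$) and that the MMS-witnessing partitions from \Cref{lem:delta2p-mms-values} can be padded to size $\beta$. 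There the constant $C$ again has polynomial bit length.
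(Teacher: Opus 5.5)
Your proposal is correct and follows the paper's proof essentially step for step: membership via \Cref{prop:k-additive-membership-delta2p}, correctness via \Cref{lem:delta2p-reduction-correctness}, hardness inherited from the $\Delta_2^P$-hardness of \textsc{Compare-MWIS}, and a bit-length check on $W,\Omega,\Lambda,\Gamma$ and the KPW core. The only difference is cosmetic. You verify polynomial bit length for the particular weights produced from \textsc{LexMaxSAT}, whereas the paper simply notes that vertex weights are given in binary, so the MMS reduction is polynomial on every \textsc{Compare-MWIS} instance; this also avoids your reuse of $n$ for both the number of variables and the number of vertices.
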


\begin{proof}
The membership in $\Delta_2^P$ follows from \Cref{prop:k-additive-membership-delta2p}.

By \Cref{lem:delta2p-reduction-correctness}, the constructed instance admits an MMS allocation if and only if $\alpha_{w_X}(\GX) \ge \alpha_{w_Y}(\GY)$.

Next, we show that the reduction runs in polynomial time.
The number of agents is $N = 2n + 5$, which is polynomial in the size of the \compmwis instance.
By \Cref{prop:KPW-core-construction}, the KPW core matrix $A$ for this value of $N$ can be constructed in time polynomial in $N$, has $\bigoh(N)$ positive entries, and every entry of $A$ as well as $\rowsum$ has $\bigoh(N^4)$ bits.

The parameters $W$, $\Omega$, $\Lambda$, and $\Gamma$ have bit length polynomial in the input size, since the vertex weights are part of the input in binary.
Every singleton and pairwise coefficient in the constructed instance is obtained from these parameters and the entries of $A$ by polynomially many arithmetic operations.
Therefore, the entire instance has polynomial encoding length and can be constructed in polynomial time.

Since \compmwis is $\Delta_2^P$-hard, this proves $\Delta_2^P$-hardness.
Together with the membership argument above, the problem is $\Delta_2^P$-complete.
\end{proof}

\subsection{Extending to monotone submodular valuations}

We now strengthen the result to monotone submodular valuations.
The argument is the same padding-and-shifting transformation used in \Cref{thm:theta2p-hardness-monotone-submodular}.
We only verify that the present weighted construction satisfies the required properties.

First, all pairwise coefficients in the construction are nonpositive: the only nonzero pairwise coefficients are the graph-good penalties of the puzzle-solving agents, and all of them are equal to $-\Lambda$.
All other pairwise coefficients are zero.
Therefore, by \Cref{lem:nonpositive-pairs-submodular}, the valuations in the constructed instance are submodular, although they need not be monotone or nonnegative on every bundle.

Second, the maximin-share lower-bound partitions used in \Cref{lem:delta2p-mms-values} have the same structure as that in the $\Theta_2^P$ construction.
For a puzzle-solving agent, the witnessing partition is the row partition of the core goods, together with one maximum-weight independent-set bundle, the remaining graph goods placed separately, and one dummy good in each of the remaining bundles.
For the \anchor agent and the \forcing agents, the witnessing partition is the column partition of the core goods, with all non-core goods placed arbitrarily.
For a \Q agent, the witnessing partition is the row partition of the core goods, again with all non-core goods placed arbitrarily.
Thus, after adding sufficiently many zero-valued padding goods, each of these witnessing partitions can be made balanced, with every bundle having the same cardinality.

We now apply the padding-and-shifting transformation of \Cref{thm:theta2p-hardness-monotone-submodular}.
That is, if $\Morig$ is the set of goods in the present weighted construction and $m_0 = |\Morig|$, we set $\beta = m_0$ and add $N\beta - m_0$ padding goods.
Every padding good has singleton coefficient zero for every agent and zero pairwise coefficient with every other good.
Let $\M$ be the resulting set of goods, so that $|\M| = N\beta$, and let $u_i$ denote the valuation of agent $i$ after this padding step.

As in the proof of \Cref{thm:theta2p-hardness-monotone-submodular}, let $\sigma_i^u(\{o\})$ and $\sigma_i^u(\{o,h\})$ denote the singleton and pairwise coefficients of the valuation $u_i$, respectively.
We define
\[
C_i
=
\sum_{o \in \M} |\sigma_i^u(\{o\})|
+
\sum_{\{o, h\} \subseteq \M} |\sigma_i^u(\{o,h\})|,
\]
and let $C = 1 + \max_{i \in [N]} C_i$, and $L = 2C + 1$.
For every agent $i$, we define
\[
v_i(S)
=
L |S| + u_i(S)
\qquad
\text{for every }S \subseteq \M.
\]
Equivalently, if $\sigma_i^v(\{o\})$ and $\sigma_i^v(\{o,h\})$ denote the coefficients of $v_i$, then $\sigma_i^v(\{o\}) = \sigma_i^u(\{o\}) + L$ for every $o \in \M$, and $\sigma_i^v(\{o,h\}) = \sigma_i^u(\{o,h\})$ for every $\{o, h\} \subseteq \M$.

By the same argument as in \Cref{lem:theta2p-monotone-submodular-cleanup}, every valuation $v_i$ is monotone, submodular, and nonnegative on every bundle.
Moreover, the valuations remain $2$-additive because the transformation only changes singleton coefficients.

The same maximin share-shift argument as in \Cref{lem:theta2p-mms-shift} gives
\[
\mms_i^v
=
L\beta + \mms_i^u
\qquad
\text{for every agent }i.
\]
Furthermore, the same balanced-allocation argument as in \Cref{lem:theta2p-balanced-after-cleanup} shows that every MMS allocation in the transformed instance gives every agent exactly $\beta$ goods.
Consequently, subtracting the common term $L\beta$ from every agent's allocated bundle value recovers precisely the MMS condition in the padded instance.
Thus, by the same correctness-preservation argument as in \Cref{lem:theta2p-cleanup-preserves-correctness}, the transformed monotone submodular instance admits an MMS allocation if and only if the padded weighted instance admits one.

Since the padding goods have zero value and zero interaction with all goods, the padded weighted instance admits an MMS allocation if and only if the original weighted instance admits one.
Therefore, by \Cref{lem:delta2p-reduction-correctness}, the transformed instance admits an MMS allocation if and only if
\[
\alpha_{w_X}(\GX)
\ge
\alpha_{w_Y}(\GY).
\]

The transformation is polynomial time.
The number of padding goods is $(N - 1)m_0$, which is polynomial in the size of the constructed instance.
The number $C$ has polynomial bit complexity because the number of goods is polynomial and every singleton and pairwise coefficient in the weighted construction has polynomial bit complexity.
Hence, $L = 2C + 1$ also has polynomial bit complexity.

Overall, we obtain the following theorem.

\deltatwophardnessmonotonesubmodular*

%% file: 7_conp/conp-hardness-goods.tex
\section{\coNP Hardness}
\label{sec:conp}

\subsection{Weak \coNP Hardness when $n=3$}

To obtain the \coNP hardness result, we reduce from the following weakly \NPH problem, which we call \threewaypartition. 
The input is a multiset $W=\{w_1, \dots, w_m\}$ of positive integers and an integer $T$ such that $\sum_{j = 1}^m w_j = 3T$.
The question is whether $W$ can be partitioned into three parts, each of sum exactly $T$. 
This problem is weakly \NPH by a straightforward reduction from $\textsc{Partition}$: given a partition instance with total sum $2T$, add the number $T$ to the multiset.

We now describe the reduction. 
We start with the 3-agent 9-item no-instance discussed in \Cref{subsec:FST-no-instance-3-agents}. 
We call its nine items the \emph{core items}, and denote their set by $E_{\mathrm{core}}$. 
For every number $w_j \in W$, we add one new good $x_j$. 
We call these new goods the \emph{puzzle items}, and denote their set by $E_{\mathrm{puzz}}$.
For a bundle $A$ of goods, we define its \emph{puzzle weight} by
\[
\omega(A) = \sum_{x_j \in A \cap E_{\mathrm{puzz}}} w_j.
\]

All three agents value the puzzle items identically. 
For every agent $i \in \{R, C, U\}$, we define
\[
    v_i(e) = M_i[e] \qquad \text{for every } e \in E_{\mathrm{core}},
\]
where $M_i[e]$ is the entry of the corresponding matrix in \eqref{eqn:FST-no-instance-3-agents}, and
\[
    v_i(x_j) = 121~w_j \qquad \text{for every } x_j \in E_{\mathrm{puzz}}.
\]
The valuations are additive.

\begin{lemma}
    \label{lem:coNP-reduction-correctness}
    The multiset $W$ admits a three-way partition into parts of sum $T$ if and only if the constructed instance does not admit an MMS allocation.
\end{lemma}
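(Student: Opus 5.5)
We prove the two directions separately, working throughout with the puzzle weight $\omega(\cdot)$. The key numerical fact is that the scaling factor $121$ exceeds the total core value $120$ of every agent. Hence a difference of one unit of puzzle weight outweighs any redistribution of core items.

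\emph{Yes-instance $\Rightarrow$ no MMS allocation.} Let $W$ admit a three-way partition into parts of sum $T$.

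First we compute the maximin shares. For each agent $i$, pair the three puzzle parts with the three bundles of an FST core partition witnessing core value $40$ for $i$. Every resulting bundle has value $121T+40$, so $\mms_i\ge 121T+40$. The total value to $i$ is $363T+120$, so averaging gives $\mms_i=121T+40$.

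Now take any allocation. Suppose some agent receives puzzle weight at most $T-1$. Then her value is at most $121(T-1)+120<121T+40$, so she is not satisfied. Otherwise every agent receives puzzle weight exactly $T$. In that case each agent needs core value at least $40$, and \Cref{prop:FST21-3-agent} rules this out. So no MMS allocation exists.

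\emph{No-instance $\Rightarrow$ an MMS allocation exists.} Let $\mu$ be the maximum, over all three-way partitions of the puzzle items, of the minimum part weight. Since $W$ is a no-instance, $\mu<T$. Among the partitions attaining $\mu$, fix one, $P$, that minimizes the number $r$ of parts of weight exactly $\mu$. Then $r\in\{1,2\}$, because $r=3$ would force all three parts to have weight $T$.

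We use the following upper bound on the maximin shares. Let $Q$ be any $3$-partition of all goods, viewed by agent $i$. Its minimum puzzle load is at most $\mu$. If it is at most $\mu-1$, that bundle has value at most $121(\mu-1)+120<121\mu$. Otherwise the minimum load of $Q$ equals $\mu$.
\begin{enumerate}
\item[(a)] In every case, $\mms_i\le 121\mu+120$.
\item[(b)] If $r=2$, then $Q$ cannot have a unique bundle of load $\mu$, by the minimality of $r$. So $Q$ has at least two bundles of load $\mu$, and their core parts are disjoint. The smaller of these two core values is at most agent $i$'s two-agent maximin share $m^{(2)}_i$ on the core items. For two agents, splitting a subset of the core is no better than splitting the whole core. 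Hence $\mms_i\le 121\mu+m^{(2)}_i$.
\end{enumerate}

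Next we build the allocation. If $r=1$, give the bottleneck part of $P$ together with all nine core items to one agent. She gets $121\mu+120\ge \mms_i$ by (a). If $r=2$, give the two bottleneck parts to two agents. Allocate the core items between these two agents by an MMS allocation for two additive agents, which exists by the two-agent guarantee of \cite{DBLP:journals/aamas/BouveretL16}. Each of them then gets at least $121\mu+m^{(2)}_i\ge\mms_i$ by (b). In both cases, every remaining agent receives a part of weight at least $\mu+1$, so her value is at least $121\mu+121>121\mu+120\ge\mms_i$.

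The main obstacle is the case $r=2$. There the upper bound on $\mms_i$ must be matched exactly by the two-agent share. This is why the tie-breaking rule minimizing $r$ is needed: it ensures that any competing partition also has two bottleneck bundles whose core parts must be disjoint.
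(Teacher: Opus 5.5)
Your proposal is correct and follows essentially the same route as the paper. In the forward direction you use the same MMS computation and the same two cases. In the reverse direction you use the same max-min puzzle weight, the same tie-breaking by the number $r$ of bottleneck parts, and the same bounds $121\mu+120$ and $121\mu+m^{(2)}_i$. The only cosmetic differences are these: the paper carries out the two-agent core split explicitly by cut-and-choose rather than citing the two-agent existence guarantee, and it bounds the third agent via $\beta_k\le 60$ rather than your uniform bound (a).
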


\begin{proof}
    For the forward direction, suppose that $W$ has a partition $(P_1, P_2, P_3)$ such that each $P_j$ has sum $T$. 
    We claim that every agent has maximin share exactly $121T + 40$.
    Indeed, for agent $R$, take the row partition of the core items, and append the puzzle bundle $P_j$ to the $j$-th row. 
    Each resulting bundle has value $40 + 121T$. 
    Thus, $\mms_R\ge 121T + 40$. 
    The same argument applies to $C$, using the column partition of the core items, and to $U$, using its MMS partition in the Feige--Sapir--Tauber instance. 
    On the other hand, for every agent the total value of all items is $120 + 121\cdot 3T = 3(121T + 40)$, so no agent can have MMS larger than $121T + 40$. 
    Hence
    \[
        \mms_R = \mms_C = \mms_U = 121T + 40.
    \]

    We now show that no MMS allocation exists. 
    Consider an arbitrary allocation $(A_R, A_C, A_U)$. 
    For every agent $i \in \{R, C, U\}$, let $p_i = \omega(A_i)$ be the puzzle weight received by agent $i$.
    Since $p_R + p_C + p_U = 3T$, either some agent receives puzzle weight at most $T - 1$, or all three agents receive puzzle weight exactly $T$.

    In the first case, suppose agent $i$ receives puzzle weight at most $T - 1$. 
    Even if $i$ received all core items, their total value would be at most
    \[
        121(T - 1) + 120
        =
        121T - 1
        <
        121T + 40
        =
        \mms_i.
    \]
    Thus, agent $i$ does not receive their MMS value and the allocation is not an MMS allocation.

    In the second case, every agent receives puzzle weight exactly $T$. 
    Then, in order to reach their MMS $121T + 40$, every agent must receive core value at least $40$. 
    But the allocation of the core items induced by $(A_R, A_C, A_U)$ is an allocation of the nine-item instance in \eqref{eqn:FST-no-instance-3-agents}, and by \Cref{prop:FST21-3-agent}, some agent receives core value at most $39$. 
    That agent receives total value at most
    \[
        121T + 39
        <
        121T + 40.
    \]
    Therefore, the constructed instance does not admit an MMS allocation.

    For the reverse direction, suppose that $W$ does not admit a three-way partition into parts of sum $T$. 
    The value of the puzzle items in a bundle $A$ is $121\omega(A)$ for every agent.

    Let
    \[
        \alpha
        =
        \max_{(Q_1, Q_2, Q_3) \in \Pi_3(W)}
        \min_{j \in \{1,2,3\}}
        \sum_{w_k \in Q_j} w_k.
    \]
    Since $W$ has total sum $3T$ and has no partition into three parts of sum exactly $T$, we have  $\alpha \le T-1$.

    We choose a partition $(Q_1, Q_2, Q_3)$ attaining the value $\alpha$, and among all such partitions choose one minimizing the number of parts whose sum is exactly $\alpha$. 
    Let this number be $r$. 
    Since the total sum is $3T$ and $\alpha < T$, we have $r \in \{1, 2\}$.
    We now distinguish two cases based on the value of $r$.
    The role of the following case distinction is to identify how many bottleneck puzzle bundles must be supported by core items.

    \paragraph{Case 1: $r=1$.}
    We first show that, for every agent $i\in\{R, C, U\}$,
    \[
        \mms_i \le 121\alpha + 120.
    \]
    Consider an arbitrary three-partition $(A_1, A_2, A_3)$ of all items. 
    The three numbers $\omega(A_1)$, $\omega(A_2)$, and $\omega(A_3)$ form the weights of a three-way partition of the puzzle items. 
    By the definition of $\alpha$, at least one bundle has puzzle weight at most $\alpha$.

    If some bundle has puzzle weight at most $\alpha - 1$, then even after adding all core items to that bundle, its value to any agent is at most
    \[
        121(\alpha - 1) + 120 < 121\alpha \le 121\alpha + 120.
    \]
    Otherwise, the minimum puzzle weight is exactly $\alpha$. 
    Then some bundle with puzzle weight exactly $\alpha$ has core value at most $120$, and hence total value at most $121\alpha + 120$. 
    Thus, $\mms_i \le 121\alpha + 120$ for every agent $i$.

    Now we allocate the items as follows. 
    We give the unique part among $(Q_1, Q_2, Q_3)$ whose sum is $\alpha$, together with all core items, to one arbitrary agent. 
    We give the other two puzzle parts to the other two agents. 
    The first agent receives value exactly $121\alpha + 120$, while each other agent receives puzzle weight at least $\alpha + 1$, and hence value at least
    \[
        121(\alpha+1) = 121\alpha + 121 > 121\alpha + 120.
    \]
    Therefore, every agent receives at least their MMS.

    \paragraph{Case 2: $r=2$.}
    For every agent $i \in \{R, C, U\}$, let $\beta_i$ denote their two-agent maximin share on the core items:
    \[
        \beta_i
        =
        \max_{(B_1, B_2) \in \Pi_2(E_{\mathrm{core}})}
        \min\{v_i(B_1), v_i(B_2)\}.
    \]
    Since the total core value of every agent is $120$, we have $\beta_i \le 60$.

    We claim that, for every agent $i$,
    \[
        \mms_i \le 121\alpha + \beta_i.
    \]
    Consider an arbitrary three-partition $(A_1 ,A_2, A_3)$ of all goods. 
    If some bundle has puzzle weight at most $\alpha - 1$, then that bundle has value at most
    \[
        121(\alpha - 1) + 120 < 121\alpha \le 121\alpha + \beta_i.
    \]
    Otherwise, the minimum puzzle weight is exactly $\alpha$. 
    Since $r=2$, every partition of the puzzle items whose minimum weight is $\alpha$ has at least two parts of weight exactly $\alpha$. 
    Hence at least two of the bundles $A_1, A_2, A_3$ have puzzle weight exactly $\alpha$.

    Consider the core items lying in these two bundles. 
    The smaller core value among the two bundles is at most $\beta_i$. 
    Otherwise, by moving all remaining core items into these two bundles, we would obtain a two-partition of the core items in which both bundles have value more than $\beta_i$, contradicting the definition of $\beta_i$. 
    Therefore, some bundle has total value at most $121\alpha + \beta_i$, and so $\mms_i \le 121\alpha + \beta_i$.

    We now construct an MMS allocation. 
    In the chosen partition $(Q_1, Q_2, Q_3)$, exactly two parts have sum $\alpha$, and the third part has sum at least $\alpha + 1$. 
    We give the two bottleneck puzzle parts to two agents, say $i$ and $j$.
    
    Let $(H_1,H_2)$ be a two-partition of the core goods witnessing $\beta_i$ for agent $i$; that is, $v_i(H_1) \ge \beta_i$ and $v_i(H_2) \ge \beta_i$.
    Agent $j$ chooses their preferred bundle among $H_1$ and $H_2$, and agent $i$ receives the other bundle.
    By the choice of $(H_1, H_2)$, agent $i$ receives core value at least $\beta_i$.
    Moreover, agent $j$ receives a bundle of value at least half of their total value for the core goods.
    Since no two-agent maximin share can exceed half of the total value, this value is at least $\beta_j$.
    Thus, agent $j$ also receives core value at least $\beta_j$.
    
    We give the remaining puzzle part to the third agent, say $k$.
    Agents $i$ and $j$ receive values at least $121\alpha + \beta_i$ and $121\alpha + \beta_j$, respectively. 
    Agent $k$ receives puzzle value at least $121(\alpha + 1) = 121\alpha + 121$.
    Since $\beta_k \le 60$, this is strictly larger than $121\alpha + \beta_k$. 
    Thus, all three agents receive at least their MMS values.

    Therefore, if $W$ has no three-way partition into parts of sum $T$, the constructed instance admits an MMS allocation.
\end{proof}

\weakconphardness*

\begin{proof}
    By \Cref{lem:coNP-reduction-correctness}, the constructed instance admits an MMS allocation if and only if the \threewaypartition instance is a no-instance. 
    Since \threewaypartition is weakly \NPH, deciding existence of an MMS allocation is weakly \coNPH.
\end{proof}

\subsection{Strong \coNP Hardness}

We now prove strong \coNP-hardness using the FST core.

We reduce from the strongly \NPH problem \threepartition.
An instance of \threepartition consists of a multiset $W = \{w_1, \ldots, w_{3n}\}$
of positive integers and an integer $T$ such that $\sum_{j = 1}^{3n} w_j = nT$ and $\frac{T}{4} < w_j < \frac{T}{2}$ for every $j \in [3n]$.
The question is whether $W$ can be partitioned into $n$ triples, each of sum exactly $T$.
We assume that $n \ge 4$.

We construct an additive goods instance as follows.
We start with the FST core instance from \Cref{sec:FST-core}, with agent set $Q = R \uplus C$, where $|R| = n - 2$ and $|C| = 2$.
Recall that $R$ is the set of row perturbation agents and $C$ is the set of column perturbation agents.
For every number $w_j \in W$, add one new good $x_j$.
We call these goods the \emph{puzzle items}, and denote their set by $E_{\mathrm{puzz}}$.
For a bundle $A$ of goods, we define its \emph{puzzle weight} by $\omega(A) = \sum_{x_j \in A \cap E_{\mathrm{puzz}}} w_j$.

All $n$ agents value the puzzle items identically.
Let $L = n\tau + 1$.
For every agent $i \in Q$, we keep their core valuation unchanged, and define
\[
    v_i(x_j) = Lw_j
    \qquad\text{for every } j \in [3n].
\]
The valuations are additive.

\begin{lemma}
    \label{lem:strong-coNP-reduction-correctness}
    The \threepartition instance is a yes-instance if and only if the constructed instance does not admit an MMS allocation.
\end{lemma}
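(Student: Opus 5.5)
The plan mirrors \Cref{lem:coNP-reduction-correctness}, with the FST core of \Cref{sec:FST-core} in place of the nine-item instance. The scaling $L = n\tau+1$ makes one unit of puzzle weight worth more than all core goods together, since every agent has total core value $n\tau$.

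\textbf{Forward direction (yes-instance gives no MMS allocation).} Let $P_1,\dots,P_n$ be triples of sum $T$. For an agent in $R$, append $P_j$ to the $j$-th row of the core; for an agent in $C$, append $P_j$ to the $j$-th column. Under the agent's own valuation every row (resp.\ column) has value $\tau$, so each bundle has value $LT+\tau$. The total value is $n(LT+\tau)$, so $\mms_i = LT+\tau$ for every $i$.

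Now take any allocation. Since the puzzle weights sum to $nT$, either some agent has puzzle weight at most $T-1$, or every agent has weight exactly $T$.
\begin{itemize}
\item In the first case, that agent's value is at most $L(T-1)+n\tau = LT-1 < LT+\tau$.
\item In the second case, every agent needs core value at least $\tau$. This is impossible by \Cref{prop:FST-core-no-instance}.
\end{itemize}

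\textbf{Reverse direction (no-instance gives an MMS allocation).} Let $\alpha$ be the maximin value of partitioning $W$ into $n$ parts, so $\alpha\le T-1$. Among partitions attaining $\alpha$, pick $(Q_1,\dots,Q_n)$ minimising the number $r$ of parts of weight exactly $\alpha$. Since the total is $nT > n\alpha$, we have $1\le r\le n-1$.

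For each agent $i$, let $\mu_i^{(r)}$ be her $r$-agent maximin share on the core goods. The upper-bound step is to show $\mms_i\le L\alpha+\mu_i^{(r)}$. Take any $n$-partition of all goods.
\begin{itemize}
\item If some bundle has puzzle weight at most $\alpha-1$, its value is below $L\alpha$.
\item Otherwise, by minimality of $r$, at least $r$ bundles have puzzle weight exactly $\alpha$. If all of them had core value above $\mu_i^{(r)}$, merging the remaining core goods into them would give an $r$-partition of the core beating $\mu_i^{(r)}$, a contradiction.
\end{itemize}

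For the allocation, assign the $n-r$ heavier puzzle parts to arbitrary agents. Each such part has weight at least $\alpha+1$, hence value at least $L\alpha+L > L\alpha+n\tau\ge L\alpha+\mu_k^{(r)}$, so these agents are satisfied with no core goods. The remaining $r$ agents get the bottleneck parts plus a division of all core goods in which each receives at least her own $\mu^{(r)}$.

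\textbf{Main obstacle.} The crux is that last step: for every $r\in[n-1]$ we must exhibit a set of $r$ agents among whom the core can be split so that each gets her $r$-agent maximin share. The agents are free for us to choose. I would first try taking $r$ agents all of the same type, which makes the problem an identical-valuation one solved by the common MMS-witnessing partition. This works for $r\le n-2$ using agents from $R$, and for $r\le 2$ using agents from $C$.

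The difficult remaining case is $r=n-1$ (with $n\ge 4$), where mixed types are forced. I would handle it by hand, using the explicit row and column structure of $B$ and the small $\pm1$ perturbations. The $n-1$ chosen agents would receive bundles formed by merging two rows, with the other bundles being single rows or near-rows. I would verify directly that each agent's $(n-1)$-agent maximin share is attained, computing the shares via averaging bounds such as $\mu^{(n-1)} \le n\tau/(n-1)$ together with integrality. This case analysis is where I expect the real work to lie.
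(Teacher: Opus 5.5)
Your forward direction, the choice of $\alpha$ and $r$, the bound $\mms_i\le L\alpha+\mu_i^{(r)}$, and the treatment of the $n-r$ heavy parts all match the paper, and you isolate the right crux. The gap is that the case $r=n-1$ of that crux is left unproved. The route you sketch for it also fails as stated. A single row is worth exactly $\tau$ to a row agent, but the $(n-1)$-agent maximin share of $V_R$ can be strictly larger than $\tau$. Already for $n=4$, the thirteen $V_R$-values of the core are $32,32,4,24,16,28,24,16,28,19,19,19,11$, with total $272$ and $\tau=68$. The bundles $\{32,28,19,11\}$, $\{32,24,19,16\}$, $\{28,24,19,16,4\}$ have values $90,91,91$, so the $3$-agent share of $V_R$ is $90>68$. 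Hence any split that hands a row agent a single row, or anything worth close to $\tau$, fails. Pinning these shares down exactly by hand for general $n$ is work you have not done, and it turns out to be unnecessary.

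The missing idea is the cut-and-choose step from Case~2 of \Cref{lem:coNP-reduction-correctness}, the very lemma you cite as your template. Take all $n-2$ row agents and one column agent $c$. Fix an $(n-1)$-partition $(H_1,\dots,H_{n-1})$ of the core attaining the $(n-1)$-agent maximin share of $V_R$, let $c$ take her favorite $H_t$, and give the other $n-2$ bundles to the row agents. Each row agent is satisfied by the choice of the partition. Agent $c$ receives at least $n\tau/(n-1)\ge\mu_c^{(n-1)}$, by exactly the averaging bound you mention, so no share ever has to be computed. A smaller omission: the claim $\alpha\le T-1$ needs the remark that, since $T/4<w_j<T/2$, any $n$-way partition of $W$ into parts of sum $T$ consists of triples.
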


\begin{proof}
    For the forward direction, suppose that $W$ has a partition $(P_1, \ldots, P_n)$ such that each $P_j$ has sum exactly $T$.
    We claim that every agent has maximin share exactly $LT + \tau$.

    Fix an agent $i \in Q$.
    By \Cref{prop:FST-core-no-instance}, agent $i$ has an $n$-partition $(G_1, \ldots, G_n)$ of the core goods such that $v_i(G_j) \ge \tau$ for every $j \in [n]$.
    We append the puzzle bundle $P_j$ to the core bundle $G_j$.
    Since each $P_j$ has puzzle weight $T$, each resulting bundle has value at least $LT + \tau$.
    Thus, we have $\mms_i \ge LT + \tau$.
    On the other hand, the total value of all goods for agent $i$ is
    \[
        n\tau + L\sum_{j = 1}^{3n} w_j
        =
        n\tau + LnT
        =
        n(LT + \tau).
    \]
    Therefore, no agent can have maximin share larger than $LT + \tau$.
    Combining the lower and upper bounds, we have $\mms_i = LT + \tau$ for every $i \in Q$.

    We now show that no MMS allocation exists.
    Consider an arbitrary allocation $(A_1, \ldots, A_n)$ of all goods.
    For every agent $i \in Q$, let $p_i = \omega(A_i)$ be the puzzle weight received by agent $i$.
    Since $\sum_{i = 1}^n p_i = nT$,
    either some agent receives puzzle weight at most $T - 1$, or every agent receives puzzle weight exactly $T$.

    In the first case, suppose agent $i$ receives puzzle weight at most $T - 1$.
    Even if agent $i$ received all core goods, their total value would be at most
    \[
        L(T - 1) + n\tau
        =
        LT - L + n\tau
        =
        LT - 1
        <
        LT + \tau
        =
        \mms_i.
    \]
    Thus, agent $i$ does not receive their MMS value and the allocation is not an MMS allocation.

    In the second case, every agent receives puzzle weight exactly $T$.
    Then, in order to reach their maximin share of $LT + \tau$, every agent must receive core value at least $\tau$.
    But by \Cref{prop:FST-core-no-instance}, no allocation of the core goods gives every agent value at least $\tau$.
    Hence, some agent receives value strictly less than $LT + \tau$.
    Therefore, the constructed instance does not admit an MMS allocation.

    For the reverse direction, suppose that the \threepartition instance is a no-instance.
    The value of the puzzle items in a bundle $A$ is $L\omega(A)$ for every agent. Let
    \[
        \alpha
        =
        \max_{(Y_1, \ldots, Y_n) \in \Pi_n(W)}
        \min_{t \in [n]}
        \sum_{w_j \in Y_t} w_j.
    \]

    Given the range of each number, $T/4 < w_j < T/2$, every bundle of sum exactly $T$ must contain exactly three numbers.
    Hence, any partition into $n$ parts of sum exactly $T$ would be a valid solution to the original \threepartition instance.
    Since the \threepartition instance is a no-instance, no such partition exists.
    As $W$ has total sum $nT$ and the numbers are integral, we have $\alpha \le T - 1$.

    We choose a partition $(Y_1, \ldots, Y_n)$ attaining the value $\alpha$, and among all such partitions choose one minimizing the number of parts whose sum is exactly $\alpha$.
    Let this number be $r$.
    Since the total sum is $nT$ and $\alpha < T$, we have $1 \le r \le n - 1$.
    We relabel the parts so that
    \[
        \sum_{w_j \in Y_t} w_j = \alpha
        \quad\text{for every } t \in [r],
    \qquad
    \text{and}
    \qquad
        \sum_{w_j \in Y_t} w_j \ge \alpha + 1
        \quad\text{for every } t \in \{r + 1, \ldots, n\}.
    \]
    For every agent $i \in Q$, we consider the $r$-agent maximin share on the core items:
    \[
        \beta_i^r
        =
        \max_{(H_1, \ldots, H_r) \in \Pi_r(E_{\mathrm{core}})}
        \min_{t \in [r]} v_i(H_t).
    \]
    Since the total core value of every agent is $n\tau$, we have $\beta_i^r \le n\tau$.

    We claim that, for every agent $i \in Q$,
    \[
        \mms_i \le L\alpha + \beta_i^r.
    \]
    Consider an arbitrary $n$-partition $(A_1, \ldots, A_n)$ of all goods.
    If some bundle has puzzle weight at most $\alpha - 1$, then that bundle has value at most
    \[
        L(\alpha - 1) + n\tau
        =
        L\alpha - L + n\tau
        =
        L\alpha - 1
        <
        L\alpha
        \le
        L\alpha + \beta_i^r.
    \]
    Otherwise, the minimum puzzle weight is exactly $\alpha$.
    By the choice of $r$, every partition of the puzzle items whose minimum weight is $\alpha$ has at least $r$ parts of weight exactly $\alpha$.
    Hence, at least $r$ many bundles among $A_1, \ldots, A_n$ have puzzle weight exactly $\alpha$.

    Consider the core goods lying in these $r$ bundles.
    The minimum core value among these $r$ bundles is at most $\beta_i^r$.
    Otherwise, by moving all remaining core goods into these $r$ bundles, we would obtain an $r$-partition of the core goods in which every part has value strictly larger than $\beta_i^r$, contradicting the definition of $\beta_i^r$.
    Therefore, some bundle has total value at most $L\alpha + \beta_i^r$, and so $\mms_i \le L\alpha + \beta_i^r$ for every $i \in Q$.

    We now show a simple feasibility property of the FST core that will be used to deal with the $r$ bottleneck puzzle bundles.

    \begin{claim-inside-lemma}
        \label{clm:FST-subcore-feasibility}
        For every $r \in [n - 1]$, there exists a set $S_r \subseteq Q$ of size $r$ such that the core goods can be allocated among the agents in $S_r$ so that every agent $i \in S_r$ receives value at least $\beta_i^r$.
    \end{claim-inside-lemma}

    \begin{proof}
        We distinguish two cases.

        \paragraph{Case 1: $r \le n - 2$.}
        We choose $S_r$ to be any set of $r$ row agents.
        All agents in $S_r$ have the same valuation $V_R$.
        Let $(H_1, \ldots, H_r)$ be an $r$-partition of the core goods attaining the $r$-agent maximin share for $V_R$.
        We assign these bundles arbitrarily to the agents in $S_r$.
        Since all agents in $S_r$ have valuation $V_R$, every agent $i \in S_r$ receives value at least $\beta_i^r$.

        \paragraph{Case 2: $r = n - 1$.}
        We choose $S_r$ to be the set of all $n - 2$ row agents and one column agent.
        Let $c$ be the column agent in $S_r$.
        Let $(H_1, \ldots, H_r)$ be an $r$-partition of the core goods attaining the $r$-agent maximin share for the row valuation $V_R$.
        Thus, every bundle $H_t$ has row value at least the $r$-agent maximin share of a row agent.

        We give the column agent $c$ their favorite bundle among $H_1, \ldots, H_r$.
        Since the total core value of $c$ is $n\tau$, their favorite bundle has value at least $\frac{n\tau}{r}$.
        On the other hand, $\beta_c^r \le \frac{n\tau}{r}$ because no $r$-agent maximin share can exceed the average value of the whole set of core goods.
        Therefore, $c$ receives value at least $\beta_c^r$.

        We arbitrarily assign the remaining $r - 1 = n - 2$ bundles to the row agents.
        Each remaining bundle has row value at least the $r$-agent maximin share of a row agent.
        Hence, every row agent in $S_r$ receives value at least their $\beta_i^r$.
    \end{proof}

    We now construct an MMS allocation.
    Let $S_r \subseteq Q$ be the set of $r$ agents promised by Claim~\ref{clm:FST-subcore-feasibility}.
    We give the $r$ bottleneck puzzle bundles $Y_1, \ldots, Y_r$ to the agents in $S_r$, one bundle per agent.
    By Claim~\ref{clm:FST-subcore-feasibility}, the core goods can be allocated among the agents in $S_r$ so that every agent $i \in S_r$ receives core value at least $\beta_i^r$.
    Therefore, every agent $i \in S_r$ receives total value at least $L\alpha + \beta_i^r \ge \mms_i$.

    We give the remaining puzzle bundles $Y_{r + 1}, \ldots, Y_n$ to the agents in $Q \setminus S_r$, one bundle per agent.
    Each such agent receives puzzle weight at least $\alpha + 1$, and hence receives value at least
    \[
        L(\alpha + 1)
        =
        L\alpha + L
        >
        L\alpha + n\tau
        \ge
        L\alpha + \beta_i^r
        \ge
        \mms_i.
    \]
    Thus, every agent receives at least their MMS value.
    Therefore, if the \threepartition instance is a no-instance, then the constructed instance admits an MMS allocation.
    
    This completes the proof.
\end{proof}

\strongconphardness*

\begin{proof}
    By \Cref{lem:strong-coNP-reduction-correctness}, the constructed instance admits an MMS allocation if and only if the \threepartition instance is a no-instance.

    All values in the FST core are polynomially bounded in $n$.
    Moreover, the scaling factor $L$ is polynomially bounded in $n$ because $L = n\tau + 1$, where $\tau = n^2(n - 2)^2 + n$.
    The puzzle-item values are $Lw_j$, and hence the construction only multiplies the input numbers by a polynomially bounded factor.
    Therefore, the reduction is strongly polynomial.
    Since \threepartition is strongly \NPH, deciding existence of an MMS allocation is strongly \coNPH.
\end{proof}

\subsection{Inapproximability}

The strong \coNP-hardness reduction also gives an approximation consequence.
In the no-case of \threepartition, the constructed instance admits an exact MMS allocation.
In the yes-case, every allocation leaves some agent short of their MMS value by at least one unit.
Since the reduction is strongly polynomial, this one-unit gap becomes an inverse-polynomial gap in the optimal MMS approximation ratio.

\nofptas*
\begin{proof}
    For an instance $I$ with agent set $Q$, let $\lambda_I$ be the largest $\lambda$ for which there exists an allocation giving every agent at least a $\lambda$-fraction of their MMS. 
    Equivalently,
    \[
        \lambda_I
        =
        \max_{A \in \Pi_n(M)}
        \min_{i \in Q}
        \frac{v_i(A_i)}{\mms_i}.
    \]
    
    Consider the instances constructed in the proof of \Cref{thm:strong-coNP-hardness}.
    Recall that $\tau = n^2(n - 2)^2 + n$ and $L = n\tau + 1$, from \Cref{thm:strong-coNP-hardness}.

    If the \threepartition instance is a no-instance, then by \Cref{lem:strong-coNP-reduction-correctness}, the constructed instance admits an MMS allocation.
    Hence, $\lambda_I \ge 1$.

    If the \threepartition instance is a yes-instance, then every agent has MMS exactly $LT + \tau$.
    Moreover, the proof of \Cref{lem:strong-coNP-reduction-correctness} shows that in every allocation, some agent receives value at most $LT + \tau - 1$.
    Therefore,
    \[
        \lambda_I
        \le
        \frac{LT + \tau - 1}{LT + \tau}
        =
        1 - \frac{1}{LT + \tau}.
    \]

    Thus, the reduction creates an inverse-polynomial gap between the two cases: $\lambda_I \ge 1$ in the no-case of \threepartition, and $\lambda_I \le 1 - 1/(LT + \tau)$ in the yes-case of \threepartition.
    Since \threepartition is strongly \NPH, it remains hard even when $T$ is polynomially bounded in $n$.
    Also, $\tau$ and $L$ are polynomially bounded in $n$.
    Hence, $LT + \tau$ is polynomially bounded in the size of the constructed instance.

    Suppose there were an FPTAS for computing $\lambda_I$.
    Consider running it with $\varepsilon = {1}/{(2(LT + \tau))}$.
    Since $1/\varepsilon$ is polynomially bounded, the running time would be polynomial.
    
    In the no-case, we have $\lambda_I \ge 1$, and hence a $(1 - \varepsilon)$-approximation returns a value at least $1 - \varepsilon = 1 - {1}/{(2(LT + \tau))}$.
    In the yes-case, every feasible allocation has ratio at most $1 - {1}/{(LT + \tau)} < 1 - {1}/{(2(LT + \tau))}$.
    Therefore, we would be able to distinguish the case $\lambda_I \ge 1$ from the case $\lambda_I \le 1 - {1}/{(LT + \tau)}$, thereby deciding \threepartition in polynomial time.
    This contradicts the strong \NP-hardness of \threepartition unless \P = \NP.
\end{proof}

%% file: 8_goods_to_chores/goods_to_chores.tex
\newcommand{\Cmath}{{\mathrm C}}

\section{Results in the Chores Setting}
\label{sec:goods-to-chores-transfer}

We now present a polynomial-time reduction from the goods setting to the chores setting, using which we can transfer our hardness results from the goods setting to the chores setting.
The idea is to pad the instance with zero-valued items so that size-wise balanced MMS allocations exist, and then subtract a large cardinality-dependent term from the valuations.
The large term forces every MMS allocation in the constructed chores instance to give exactly the same number of items to every agent.
On such balanced allocations, the chore utility differs from the original goods utility only by an agent-independent constant.

\begin{lemma}
\label{lem:goods-to-chores-transfer}
Let $I$ be a goods instance with agent set $N$, item set $G$, and valuations $\{v_i\}_{i \in N}$.
Let $|N| = n$ and $|G| = m$.
Assume that the valuations in $I$ are either additive and nonnegative, or monotone submodular nonnegative $2$-additive.
Then one can construct, in polynomial time, a chores instance $I^C$ with nonpositive utilities $\{u_i^C\}_{i \in N}$ such that
\[
    I \text{ admits an MMS allocation}
    \quad\Longleftrightarrow\quad
    I^C \text{ admits an MMS allocation}.
\]
Moreover, the construction satisfies the following properties:
\begin{enumerate}
    \item if the valuations in $I$ are additive and nonnegative, then the utilities in $I^C$ are additive and nonpositive;
    \item if the valuations in $I$ are monotone submodular nonnegative $2$-additive, then the cost functions $c_i^C = -u_i^C$ are monotone submodular nonnegative $2$-additive.
\end{enumerate}
\end{lemma}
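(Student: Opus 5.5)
We mirror the padding-and-shifting argument of \Cref{lem:theta2p-mms-shift,lem:theta2p-balanced-after-cleanup,lem:theta2p-cleanup-preserves-correctness}, with the shift direction reversed. Set $q = m$ and pad $G$ with $nq - m$ zero-valued dummy items to obtain $G'$ with $|G'| = nq$. As in Claim~\ref{claim:theta2p-padding-safeness}, padding changes neither any maximin share nor the existence of an MMS allocation. As in Claim~\ref{claim:theta2p-balanced-mms-partitions}, every agent has an MMS-witnessing partition in which every bundle has exactly $q$ items: each bundle of a witnessing partition of $G$ has at most $m = q$ items and can be filled up with padding. Likewise, any MMS allocation of the padded goods instance can be made balanced by redistributing padding items.

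Let $V = 1 + \max_i v_i(G')$, so that $0 \le v_i(S) \le V - 1$ for every bundle $S$ (for monotone nonnegative valuations, $v_i(G')$ is the maximum). In the additive case, define $u_i^C(S) = v_i(S) - \alpha_1|S|$ with $\alpha_1 = 2V+1$. Every item then has singleton coefficient $v_i(g) - \alpha_1 < 0$, so $u_i^C$ is additive and nonpositive. In the $2$-additive case, define
\[
u_i^C(S) = v_i(S) - \alpha_1|S| + \alpha_2\binom{|S|}{2},
\]
with $\alpha_2 = 1 + \max_{i,g,h}|\sigma_i(\{g,h\})|$, so every pairwise coefficient of $u_i^C$ is at least $1 > 0$. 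Hence every pairwise coefficient of $c_i^C = -u_i^C$ is negative, and $c_i^C$ is submodular by \Cref{lem:two-additive-submodular}. For monotonicity of $c_i^C$ and nonnegativity, choose $\alpha_1 \ge 2V + 1 + \alpha_2 nq$. Then the marginal of $u_i^C$ is at most $\max(\text{marginal of } v_i) - \alpha_1 + \alpha_2|S| < 0$, so $c_i^C$ is strictly increasing and $c_i^C(\emptyset)=0$. On balanced bundles, $u_i^C(S) = v_i(S) - K$ with the agent-independent constant $K = \alpha_1 q - \alpha_2\binom{q}{2}$.

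The key step is the shift identity $\mms_i^C = \mms_i - K$, together with the fact that every chores MMS allocation is balanced. The lower bound comes from the balanced witnessing partition. For the upper bound, any unbalanced partition has some bundle with at least $q+1$ items. The function $h(s) = \alpha_1 s - \alpha_2\binom{s}{2}$ is increasing on $[0,nq]$ because $\alpha_1 > \alpha_2 nq$. Its increment from $q$ to $q+1$ is $\alpha_1 - \alpha_2 q \ge 2V+1$, which exceeds the total range of $v_i$. Such a bundle therefore has $u_i^C$-value at most $(V-1) - K - (2V+1) < \mms_i - K$. The same inequality shows that an agent receiving at least $q+1$ items in a chores allocation falls below $\mms_i^C$. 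Hence every chores MMS allocation is balanced, and on balanced allocations the MMS conditions of $I^C$ and of the padded $I$ coincide after subtracting $K$. This gives the claimed equivalence.

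I expect the main obstacle to be this parameter bookkeeping in the $2$-additive case. The concave term $-\alpha_2\binom{s}{2}$, which we add to $h$ so that $c_i^C$ becomes submodular, works against the cardinality penalty. One must verify simultaneously that $c_i^C$ is monotone, that oversized bundles are penalized by more than the value range, and that all numbers keep polynomial bit length. Choosing $\alpha_1$ of order $\alpha_2 n^2 m + V$ should settle all three, and polynomial time follows because $V$, $\alpha_1$ and $\alpha_2$ are computed from polynomially many input coefficients.
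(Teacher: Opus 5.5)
Your proposal is correct and follows essentially the same route as the paper. The paper also pads to exactly $nq$ items with $q=m$, subtracts the cardinality term $\alpha_1|S| - \alpha_2\binom{|S|}{2}$ (with $\alpha_2=0$ in the additive case), and shows that the chores MMS is the goods MMS shifted by this term's value at $|S|=q$, while any bundle with more than $q$ items falls strictly below it, which forces every chores MMS allocation to be balanced. Your constants, $\alpha_1 \ge 2V+1+\alpha_2 nq$ and $\alpha_2 = 1+\max|\sigma_i(\{g,h\})|$, are slightly more generous than the paper's $\alpha_1 = \alpha_2 nq + R + 1$ and $\alpha_2 = \max|b_{i,g,h}|$, but they play exactly the same roles.
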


\begin{proof}
We first pad the goods instance.
Let $q = m$.
We add $(n - 1)m$ new dummy items so that the total number of items is exactly $nq$.
Each dummy item has value $0$ for every agent.
In the $2$-additive case, every pair involving a dummy item also has pairwise coefficient $0$.
Let $H$ denote the padded item set.

This padding does not change any agent's maximin share in the goods instance.
Every partition of the original item set can be extended to a partition of $H$ by adding the dummy items, without changing the value of any bundle.
Thus, the MMS value cannot decrease.
Conversely, every partition of $H$ induces a partition of the original item set after discarding the dummy items, again without changing the value of any bundle.
Thus, the MMS value cannot increase.
Let $\mms_i^{\mathrm{goods}}$ denote agent $i$'s MMS value in the original goods instance, which is also the MMS value in the padded goods instance.

Let $R = \max_{i \in N} v_i(H)$.
Since the goods valuations are nonnegative and monotone, every bundle has value at most $R$ for every agent.

We now define the chore utilities using two parameters $\alpha_1$ and $\alpha_2$.
If the valuations are additive, we set $\alpha_1 = R + 1$ and $\alpha_2 = 0$.
For every agent $i$ and every bundle $S \subseteq H$, we define
\[
u_i^C(S) = v_i(S) - \alpha_1 |S|.
\]
This is an additive utility function.
Equivalently, the chore cost is $c_i^C(S) = \alpha_1 |S| - v_i(S)$.
Since $\alpha_1 > R$, the utility $u_i^C$ is nonpositive on every bundle, and the cost $c_i^C$ is nonnegative and monotone.

Now suppose the valuations are monotone submodular nonnegative $2$-additive.
Suppose that
\[
v_i(S)
=
\sum_{g \in S} a_{i,g}
+
\sum_{\{g, h\}\subseteq S} b_{i,g,h}.
\]
Since $v_i$ is $2$-additive and submodular, by \Cref{lem:nonpositive-pairs-submodular}, we have $b_{i,g,h} \le 0$ for all $i \in N$ and all distinct $g,h \in H$.
We set $\alpha_2 = \max_{i,g,h} \lvert b_{i,g,h} \rvert$ and $\alpha_1 = \alpha_2 nq + R + 1$.
For every agent $i$ and every bundle $S \subseteq H$, we define
\[
u_i^C(S)
=
v_i(S) - \alpha_1 |S| + \alpha_2 \binom{|S|}{2}.
\]
Equivalently, the chore cost is
$
c_i^C(S)
=
-u_i^C(S)
=
\alpha_1 |S| - \alpha_2 \binom{|S|}{2} - v_i(S)$.
Expanding this cost function, we get
\[
c_i^C(S)
=
\sum_{g \in S}(\alpha_1 - a_{i,g})
+
\sum_{\{g, h\}\subseteq S}(-\alpha_2 - b_{i,g,h}).
\]
By the choice of $\alpha_2$, $-\alpha_2 - b_{i,g,h} \le 0$.
Hence, by \Cref{lem:nonpositive-pairs-submodular}, $c_i^C$ is submodular.

We also show that $c_i^C$ is monotone.
Equivalently, it is enough to show that $u_i^C$ is monotone non-increasing.
For any $S \subseteq H$ and any $g \notin S$,
\[
\begin{aligned}
u_i^C(S \cup \{g\}) - u_i^C(S)
&=
\bigl(v_i(S \cup \{g\}) - v_i(S)\bigr) - \alpha_1 + \alpha_2 |S|.
\end{aligned}
\]
Since $v_i$ is nonnegative and monotone, every marginal value is at most $R$.
Moreover, $|S| \le nq - 1$.
Therefore,
\[
\begin{aligned}
u_i^C(S \cup \{g\}) - u_i^C(S)
\le
R - \alpha_1 + \alpha_2(nq - 1)
=
-\alpha_2 - 1
<
0.
\end{aligned}
\]
Thus, $u_i^C$ is monotone non-increasing, and so $c_i^C$ is monotone non-decreasing.
Since $u_i^C(\emptyset) = 0$, it follows that $u_i^C(S) \le 0$ for every $S \subseteq H$.
Equivalently, $c_i^C(S) \ge 0$ for every $S \subseteq H$.

For the rest of the proof, we define 
$$
\Gamma_t = \alpha_1 t - \alpha_2 \binom{t}{2}.
$$
For every $0 \le t \le nq-1$, we have
\[
    \Gamma_{t+1} - \Gamma_t
    =
    \alpha_1 - \alpha_2 t
    \ge
    \alpha_1 - \alpha_2(nq - 1)
    =
    R + 1 + \alpha_2
    >
    0.
\]
Thus, $\Gamma_t$ is strictly increasing for $0 \le t \le nq$.
In both settings, we have $u_i^C(S) = v_i(S) - \Gamma_{|S|}$.
In particular, for every bundle $S$ with exactly $q$ items, $u_i^C(S) = v_i(S) - \Gamma_q$.

\begin{claim-inside-lemma}
Agent $i$'s maximin share in the constructed chores instance is $\mms_i^{\mathrm{chores}} = \mms_i^{\mathrm{goods}} - \Gamma_q$.
\end{claim-inside-lemma}

\begin{proof}
First, consider an MMS-witnessing partition for agent $i$ in the padded goods instance.
Since every original bundle has size at most $m = q$, we can add dummy items to the bundles so that every bundle has size exactly $q$.
Let this balanced partition be $(P_1, \ldots, P_n)$.
For every $j \in [n]$, we have $v_i(P_j) \ge \mms_i^{\mathrm{goods}}$.
Therefore,
\[
u_i^C(P_j)
=
v_i(P_j) - \Gamma_q
\ge
\mms_i^{\mathrm{goods}} - \Gamma_q.
\]
Hence, we have $
\mms_i^{\mathrm{chores}}
\ge
\mms_i^{\mathrm{goods}} - \Gamma_q$.

For the reverse inequality, consider any partition $\mathcal{P} = (P_1, \ldots, P_n)$ of $H$.
If some bundle $P_j$ has size at least $q + 1$, then
\[
\begin{aligned}
u_i^C(P_j)
=
v_i(P_j) - \Gamma_{|P_j|} 
\le
R - \Gamma_{q + 1} 
=
R - \left(\Gamma_q + \alpha_1 - \alpha_2 q\right)
<
-\Gamma_q
\le
\mms_i^{\mathrm{goods}} - \Gamma_q.
\end{aligned}
\]
The strict inequality follows from
$
\alpha_1 - \alpha_2 q
=
\alpha_2 q(n - 1) + R + 1
>
R$,
and the last inequality follows from $\mms_i^{\mathrm{goods}} \ge 0$.
Therefore, any partition with a bundle of size at least $q + 1$ has $\min_{j \in [n]} u_i^C(P_j) < \mms_i^{\mathrm{goods}} - \Gamma_q$.

It remains to consider partitions in which no bundle has more than $q$ items.
Since there are exactly $nq$ items and $n$ bundles, every bundle must have exactly $q$ items.
For such a balanced partition,
\[
\begin{aligned}
\min_{j \in [n]} u_i^C(P_j)
&=
\min_{j \in [n]} \bigl(v_i(P_j) - \Gamma_q\bigr) \\
&=
\left(\min_{j \in [n]} v_i(P_j)\right) - \Gamma_q \\
&\le
\mms_i^{\mathrm{goods}} - \Gamma_q.
\end{aligned}
\]
Thus, every partition has minimum chore utility at most $\mms_i^{\mathrm{goods}} - \Gamma_q$.
Hence,
$
\mms_i^{\mathrm{chores}}
\le
\mms_i^{\mathrm{goods}} - \Gamma_q$.
Combining the lower and upper bounds, we obtain $\mms_i^{\mathrm{chores}} = \mms_i^{\mathrm{goods}} - \Gamma_q$.
\end{proof}

We now prove the equivalence of MMS allocations.

\begin{claim-inside-lemma}
$I$ admits an MMS allocation if and only if $I^C$ admits an MMS allocation.
\end{claim-inside-lemma}
\begin{proof}
Suppose first that the original goods instance admits an MMS allocation $A = (A_1, \ldots, A_n)$.
Add dummy items to the bundles so that every bundle has size exactly $q$.
Then, for every agent $i$, we have $v_i(A_i) \ge \mms_i^{\mathrm{goods}}$.
Since $|A_i| = q$,
\[
u_i^C(A_i)
=
v_i(A_i) - \Gamma_q
\ge
\mms_i^{\mathrm{goods}} - \Gamma_q
=
\mms_i^{\mathrm{chores}}.
\]
Thus, the padded allocation is an MMS allocation in the constructed chores instance.

Conversely, suppose that the constructed chores instance admits an MMS allocation $A = (A_1, \ldots, A_n)$.
Then, for every agent $i$,
\[
u_i^C(A_i)
\ge
\mms_i^{\mathrm{chores}}
=
\mms_i^{\mathrm{goods}} - \Gamma_q.
\]
No agent can receive more than $q$ items, since such a bundle has chore utility strictly smaller than $\mms_i^{\mathrm{goods}} - \Gamma_q$.
Since there are exactly $nq$ items and $n$ agents, every agent receives exactly $q$ items.
Therefore,
\[
\begin{aligned}
v_i(A_i)
&=
u_i^C(A_i) + \Gamma_q \\
&\ge
\mms_i^{\mathrm{chores}} + \Gamma_q \\
&=
\mms_i^{\mathrm{goods}}.
\end{aligned}
\]
Hence, after discarding dummy items, this gives an MMS allocation for the original goods instance.

Therefore, $I$ admits an MMS allocation if and only if $I^C$ admits an MMS allocation.
\end{proof}

Finally, we note that transformation has polynomial encoding length. 
We add $nq - m = (n - 1)m$ padding items, and hence the number of items remains polynomial.
The values of $R, \alpha_1, \alpha_2$ have polynomial bit length.
The representation contains only polynomially many singleton and pairwise coefficients, and every resulting coefficient has polynomial bit length.
This completes the proof of the lemma.
\end{proof}

We now discuss the consequences of the transfer lemma.
Since the construction in \Cref{lem:goods-to-chores-transfer} is polynomial time and preserves the existence of MMS allocations, every hardness result proved for goods transfers to the corresponding class of chores instances.

\begin{corollary}
\label{cor:chores-hardness}
Deciding whether an additive chores instance admits an MMS allocation is $D^P$-hard.
Moreover, the problem is weakly \coNP-hard even for three agents, and strongly \coNP-hard in general.

For $2$-additive chores, deciding whether an MMS allocation exists is $\Delta_2^P$-complete.
The hardness holds even when the cost functions are monotone, submodular, and nonnegative.
\end{corollary}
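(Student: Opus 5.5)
The corollary follows by composing each goods-side hardness reduction with the transfer of \Cref{lem:goods-to-chores-transfer}. Since that lemma is a polynomial-time map that preserves MMS existence in both directions, any many-one reduction from a source problem to goods MMS existence, composed with it, yields a many-one reduction to chores MMS existence. The work is to check, for each goods result, that the produced instances fall within the lemma's hypotheses, and that the extra qualifiers (three agents, strong hardness, monotone submodular costs) survive the transformation.

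\textbf{Additive case.} For \DP-hardness, I would use the instances of \Cref{thm:DP-hardness-goods}, whose valuations are additive and nonnegative. Item~1 of the lemma then gives additive nonpositive chores instances, so the claim follows directly.

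For weak \coNP-hardness with three agents, I would use \Cref{thm:weak-coNP-hardness}. The transfer only pads items and never adds agents, so $n=3$ is preserved.

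Strong \coNP-hardness needs more care. The goods instance from \Cref{thm:strong-coNP-hardness} has polynomially bounded numbers, and I must check that the chores instance does too. The quantities involved are $R=\max_i v_i(H)$, which is polynomially bounded, and $\alpha_1=R+1$, also polynomially bounded. The new item values $\alpha_1 - v_i(g)$ are then polynomially bounded, so unary encodability is preserved and strong hardness transfers. This is the main point to verify; it goes through because no coefficient larger than polynomial in $R$ is introduced.

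\textbf{2-additive case.} Membership in \DeltaP for $2$-additive chores needs its own argument, since the lemma only transfers hardness. I would adapt \Cref{prop:k-additive-membership-delta2p}. Computing each chores MMS (as a min-max of costs) can be done by binary search with an \NP oracle on ``is there a partition with all bundle values $\ge t$'', since $k$-additive values are polynomial-time evaluable and the bit lengths are polynomial. A final \NP query then asks whether an allocation meets all thresholds. Sign conventions do not affect this argument.

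For hardness, the instances of \Cref{thm:delta2p-hardness-monotone-submodular} are monotone submodular nonnegative $2$-additive. Item~2 of the lemma then yields cost functions that are monotone, submodular, nonnegative and $2$-additive, so \DeltaP-completeness follows.
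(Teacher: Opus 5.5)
Your proposal is correct and matches the paper's proof in substance. Like the paper, you compose each goods reduction with \Cref{lem:goods-to-chores-transfer}. For the additive results you use part~1 and check that $R$, $\alpha_1=R+1$ and the new coefficients $\alpha_1-v_i(g)$ stay polynomially bounded, so strong hardness survives. For the monotone submodular $2$-additive instances you use part~2. Membership follows from \Cref{prop:k-additive-membership-delta2p}, which already permits coefficients of arbitrary sign, so your "adaptation" is simply that proposition applied to the chore utilities. Your explicit note that the padding never adds agents, so the three-agent restriction is preserved, is a detail the paper leaves implicit.
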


\begin{proof}
The statement on additive instance follows by applying the first part of \Cref{lem:goods-to-chores-transfer} to the goods instances constructed in \Cref{thm:DP-hardness-goods,thm:weak-coNP-hardness,thm:strong-coNP-hardness}.
The reduction preserves the existence of MMS allocations, and the constructed chore costs are additive and nonnegative.
For the strong \coNP-hardness claim, we also need to check that the transformation only results in polynomially bounded numerical values. 
In the construction for goods, the number of goods is polynomial, the restricted \threepartition numbers are polynomially bounded, and the FST-core entries, $\tau$, and $L = n\tau + 1$ are polynomially bounded.
Hence every original bundle value, and in particular $R = \max_i v_i(H)$, is polynomially bounded.
In the transfer from goods to chores, we set $\alpha_1 = R + 1$, so $\alpha_1$ is polynomially bounded, and every resulting chore coefficient $\alpha_1 - v_i(g)$ is polynomially bounded.
The number of padding items is also polynomial.
Overall, we retain strong \coNP-hardness.

For the $2$-additive statement, we apply the second part of \Cref{lem:goods-to-chores-transfer} to the goods instances constructed in \Cref{thm:delta2p-hardness-monotone-submodular}.
The resulting chore instances have monotone, submodular, nonnegative, $2$-additive cost functions, and the existence of MMS allocations is preserved.
The membership in $\Delta_2^P$ follows from the algorithm for $k$-additive valuations, discussed in \Cref{prop:k-additive-membership-delta2p}.
\end{proof}

We also obtain the corresponding inapproximability consequence for chores.
Recall that, following \citet{DBLP:conf/aaai/AzizRSW17}, the optimal MMS ratio of a chores instance is the minimum $\lambda \in [0,\infty)$ for which there exists an allocation $A = (A_1,\dots,A_n)$ satisfying
\[
    v_i(A_i) \ge \lambda ~\mms_i
    \qquad
    \text{for every agent } i.
\]
Let $\mu_i^I$ denote the negative of the maximin share of agent $i$ in instance $I$, that is, 
\[
\mu_i^I = -\mms_i^I = \min_{(P_1,\dots,P_n)\in \Pi_n(M)} \max_{j \in [n]} c_i(P_j).
\]

Observe that for additive chores, if $\mu_i^I = 0$ for some agent $i$, then every chore has cost $0$ for agent $i$.
Therefore, assigning all chores to agent $i$ and the empty bundle to every other agent gives an MMS allocation.
Thus, when considering instances that do not admit an MMS allocation, we may assume that $\mu_i^I > 0$ for every agent $i$.

Under this assumption, the optimal MMS ratio of a chores instance $I$ is
\[
    \lambda^I
    =
    \min_{A \in \Pi_n(M)}
    \max_{i \in N}
    \frac{c_i(A_i)}{\mu_i^I}.
\]
Thus, $I$ admits an MMS allocation if and only if $\lambda^I \le 1$.

\begin{corollary}
\label{cor:no-FPTAS-optimal-MMS-ratio-chores}
Unless $\P = \NP$, there is no FPTAS for computing the optimal MMS ratio of an additive chores instance.
\end{corollary}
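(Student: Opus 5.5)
We follow the proof of \Cref{cor:no-FPTAS-optimal-MMS-ratio}, but start from the chores instances obtained by applying the first part of \Cref{lem:goods-to-chores-transfer} to the strong \coNP-hardness instances of \Cref{thm:strong-coNP-hardness}. Concretely, given a \threepartition instance with $T$ polynomially bounded, we build the goods instance $I$ of \Cref{lem:strong-coNP-reduction-correctness}, pad it, and form the additive chores instance $I^C$ with costs $c_i^C(g)=\alpha_1-v_i(g)$, where $\alpha_1=R+1$. As already noted in the proof of \Cref{cor:chores-hardness}, all of these costs are nonnegative integers bounded by a polynomial in the input size. The same holds for the number of items. Consequently, every $\mu_i^{I^C}$ is an integer bounded by the total cost $c_i^C(H)$, and we denote by $U$ this common polynomial upper bound.

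The two cases are then separated by a gap of $1/U$ in the ratio.

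\textbf{No-case.} If the \threepartition instance is a no-instance, then $I$ admits an MMS allocation by \Cref{lem:strong-coNP-reduction-correctness}. By \Cref{lem:goods-to-chores-transfer}, $I^C$ also admits one, so $\lambda^{I^C}\le 1$. We adopt the convention that the ratio is at most $1$ whenever an MMS allocation exists, so agents with $\mu_i=0$ cause no trouble here.

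\textbf{Yes-case.} If the \threepartition instance is a yes-instance, then $I^C$ admits no MMS allocation. By the observation preceding the corollary, $\mu_i^{I^C}>0$ for every agent $i$, so $\lambda^{I^C}$ is well defined by the displayed min–max formula. Now take any allocation $A$. Some agent $i$ has $c_i^C(A_i)>\mu_i^{I^C}$, and since costs are integral this gives $c_i^C(A_i)\ge \mu_i^{I^C}+1$. Hence
\[
\frac{c_i^C(A_i)}{\mu_i^{I^C}}\ge 1+\frac{1}{\mu_i^{I^C}}\ge 1+\frac{1}{U},
\]
and therefore $\lambda^{I^C}\ge 1+1/U$.

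\textbf{Using an FPTAS.} Suppose an FPTAS existed, returning a value in $[\lambda,(1+\varepsilon)\lambda]$. We run it with $\varepsilon=1/(3U)$.
\begin{itemize}
\item In the no-case it outputs at most $1+1/(3U)$.
\item In the yes-case it outputs at least $1+1/U$.
\end{itemize}
Since $1/\varepsilon$ is polynomial, this decides \threepartition in polynomial time. This contradicts strong \NP-hardness unless $\P=\NP$.

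The main point needing care is that the one-unit deficit survives the transfer. The transfer shifts every balanced bundle by exactly $\Gamma_q$, and $\mms$ shifts by the same constant, so integrality of both costs and $\mu_i$ gives an additive gap of at least $1$ in cost. The other point is that $U$ stays polynomial, which relies on $\alpha_1=R+1$ being polynomially bounded in the strong reduction.
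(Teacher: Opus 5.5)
Your proposal is correct and follows essentially the same route as the paper: you transfer the strong co-NP-hardness instances through the first part of the goods-to-chores lemma, use integrality to get a one-unit cost gap above $\mu_i$, bound $\mu_i$ polynomially, and run the FPTAS with an inverse-polynomial $\varepsilon$. The only difference is a small streamlining: you get $c_i^C(A_i)\ge \mu_i^{I^C}+1$ in one step from the non-existence of an MMS allocation in $I^C$ together with integrality of both the costs and $\mu_i$, whereas the paper splits into balanced and unbalanced allocations and uses the bound $\mu_i^C\le\Gamma_q$ where you use the total-cost bound $U$.
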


\begin{proof}
We apply the first part of \Cref{lem:goods-to-chores-transfer} to the instances used in the proof of \Cref{cor:no-FPTAS-optimal-MMS-ratio}.
Let $I^C$ be the constructed chores instance.
For each agent $i$ and each balanced bundle $S$ of size $q$, we have $c_i^C(S) = \Gamma_q - v_i(S)$, and the corresponding value $\mu_i^C = -\mms_i^C$ is $\mu_i^C = \Gamma_q - \mms_i^{\mathrm{goods}}$.

If the original goods instance admits an exact MMS allocation, then by \Cref{lem:goods-to-chores-transfer}, the constructed chores instance also admits an exact MMS allocation.
Hence, $\lambda^C \le 1$.

Now consider the case where the goods instance constructed in the proof of \Cref{cor:no-FPTAS-optimal-MMS-ratio} admits no MMS allocation.
For every allocation of the goods, some agent $i$ receives value at most $\mms_i^{\mathrm{goods}} - 1$.
Consider any allocation $A = (A_1, \dots, A_n)$ in the constructed chores instance.

If $A$ is balanced, then after discarding the dummy items, it induces an allocation of the original goods.
Thus, for some agent $i$,
\[
\begin{aligned}
    c_i^C(A_i)
    ~=~
    \Gamma_q - v_i(A_i) 
    ~\ge~
    \Gamma_q - \bigl(\mms_i^{\mathrm{goods}} - 1\bigr) 
    ~=~
    \mu_i^C + 1.
\end{aligned}
\]
If $A$ is not balanced, then since there are exactly $nq$ items and $n$ bundles, some agent receives a bundle of size at least $q + 1$.
By the proof of \Cref{lem:goods-to-chores-transfer}, this agent obtains chores of value strictly smaller than their MMS value, or equivalently, of cost strictly larger than $\mu_i^C$.
Since all costs in the additive construction are integral, this cost is at least $\mu_i^C + 1$.

Thus, in the case where MMS allocations do not exist, $\lambda^C \ge 1 + {1}/{(\max_i \mu_i^C)}$.
Moreover, since $\mu_i^C = \Gamma_q - \mms_i^{\mathrm{goods}} \le \Gamma_q$ for every agent $i$, we get $\lambda^C \ge 1 + {1}/({\Gamma_q})$.
The quantity $\Gamma_q$ is polynomially bounded in the construction, and so the gap between the case where an MMS allocation exists and the one where one does not exist is inverse-polynomial.

Suppose there were an FPTAS for computing the optimal MMS ratio of additive chores.
Consider running it with $\varepsilon = {1}/{(2\Gamma_q)}$.
Since $1/\varepsilon$ is polynomially bounded, the running time would be polynomial.

If an MMS allocation exists, then $\lambda^C \le 1$, and hence a $(1 + \varepsilon)$-approximation returns a value that is at most $1 + \varepsilon = 1 + {1}/{(2\Gamma_q)}$.
If no MMS allocation exists, then every allocation has ratio at least $1 + {1}/{(\Gamma_q)} > 1 + {1}/{(2\Gamma_q)}$.
Therefore, we would be able to distinguish the case where an MMS allocation exists from the case where one does not exist, implying $\P = \NP$.
\end{proof}

%% file: 9_backmatter/missing_proofs.tex
\section{Deferred proofs from Section \ref{sec:prelim}}
\label{sec:missingproofs}

\kadditivemembership*

\begin{proof}
We describe a polynomial-time algorithm with access to an $\NP$ oracle, following the same oracle-search idea used for additive valuations by \citet{DBLP:journals/jair/TruszczynskiL20}.

First, by clearing denominators separately for each agent, we may assume that all coefficients are integers.
This does not change the answer, since multiplying all values of an agent by a positive integer preserves the comparison $v_i(A_i) \ge \mms_i$.

Fix an agent $i$, and let
$
    B_i
    =
    \sum_{\substack{T \subseteq M \\ |T| \le k}}
    |\sigma_{i,T}|$.
Then, for every bundle $S \subseteq M$, its value is between $-B_i$ and $B_i$, and therefore $-B_i \le \mms_i \le B_i$.

For an integer $q$, deciding whether $\mms_i \ge q$ is in $\NP$.
This is because a certificate for it is an $n$-partition $(P_1,\dots,P_n)$ of $M$ such that $v_i(P_j) \ge q$ for every $j \in [n]$.
Given the partition, each value $v_i(P_j)$ can be computed in polynomial time from the $k$-additive representation by summing all listed coefficients $\sigma_{i,T}$ with $T \subseteq P_j$ and $|T| \le k$.

Hence, using binary search over the integer interval $[-B_i,B_i]$ and the above $\NP$ oracle query, we can compute $\mms_i$ using $O(\log B_i)$ adaptive $\NP$ queries.
Since the bit length of $B_i$ is bounded by a polynomial in the input size, the total number of oracle queries over all agents is polynomial.

After computing all values $\mms_1,\dots,\mms_n$, we make one final $\NP$ oracle query asking whether there exists an allocation $A = (A_1,\dots,A_n)$ such that $v_i(A_i) \ge \mms_i$ for every $i \in [n]$.
This query is in $\NP$: a certificate is the allocation itself, and the inequalities can be verified in polynomial time by evaluating the $k$-additive valuations.

Thus, the problem is decidable in polynomial time with polynomially many adaptive $\NP$ queries.
Therefore, it belongs to the class $P^\NP = \Delta_2^{p}$.
\end{proof}

\nonpositivepairssubmodular*
\begin{proof}
For the forward direction, it is enough to show that for all $S \subseteq T$ and all $x \notin T$, 
\[
v_i(T \cup \{x\}) - v_i(T) \le v_i(S \cup \{x\}) - v_i(S).
\]

For any set $A$ with $x \notin A$, we have $v_i(A \cup \{x\}) - v_i(A) = \sigma_i(\{x\}) + \sum_{g \in A} \sigma_i(\{g, x\})$.

Therefore, \[v_i(S \cup \{x\}) - v_i(S) = \sigma_i(\{x\}) + \sum_{g \in S} \sigma_i(\{g, x\}),\text{ and }\]
\[v_i(T \cup \{x\}) - v_i(T) = \sigma_i(\{x\}) + \sum_{g \in T} \sigma_i(\{g, x\}).\]
Since $S \subseteq T$, we have $T = S \cup (T \setminus S)$, and consequently,
\[
\begin{aligned}
    v_i(T \cup \{x\}) - v_i(T)
    &=
    \sigma_i(\{x\})
    +
    \sum_{g \in T} \sigma_i(\{g,x\}) \\
    &=
    \sigma_i(\{x\})
    +
    \sum_{g \in S} \sigma_i(\{g,x\})
    +
    \sum_{g \in T \setminus S} \sigma_i(\{g,x\}) \\
    &\le
    \sigma_i(\{x\})
    +
    \sum_{g \in S} \sigma_i(\{g,x\}) \\
    &=
    v_i(S \cup \{x\}) - v_i(S).
\end{aligned}
\]
The inequality holds because all pairwise coefficients are nonpositive.
Therefore, $v_i$ is submodular.

For the reverse direction, we apply the definition of submodularity with $S = \emptyset$, $T = \{g\}$, and $x = h$.
Then,
\[
    v_i(\{g, h\}) - v_i(\{g\})
    \le
    v_i(\{h\}) - v_i(\emptyset).
\]
Expanding the terms, we get $\sigma_i(\{g,h\}) \le 0$.
\end{proof}

\verificationmms*
\begin{proof}
The problem is in \coNP: a certificate that an allocation $A$ is not an MMS allocation is an agent $i$ and an $n$-partition $(P_1,\dots,P_n)$ of the items such that $v_i(P_j) > v_i(A_i)$ for every $j \in [n]$.

For hardness, we reduce from the complement of \partition.
Let $W = \{w_1,\dots,w_m\}$ be an instance of \partition with $\sum_{j = 1}^m w_j = 2T$.

\paragraph{Hardness for goods.}
We create three agents and $m + 2$ goods: number goods $g_1,\dots,g_m$ (one for each number in the input), and two additional goods $a,b$.
Agents $2$ and $3$ have value zero for every good.
Agent $1$ has values $v_1(a) = 2T - 1$, $v_1(b) = 1$, and $v_1(g_j) = 2w_j$ for every $j \in [m]$.

Consider the allocation $A$ in which agent $1$ receives only $a$, and the remaining goods are allocated arbitrarily to agents $2$ and $3$.
Then, $v_1(A_1) = 2T - 1$, and agents $2$ and $3$ are trivially MMS-satisfied.

If the \partition instance is a yes-instance, then the number goods can be split into two bundles of value $2T$ for agent $1$, and the third bundle $\{a,b\}$ also has value $2T$.
Hence, $\mms_1 \ge 2T > v_1(A_1)$, and so $A$ is not an MMS allocation.

Conversely, suppose that $A$ is not an MMS allocation.
Since agents $2$ and $3$ have value zero for every good, agent $1$ must be the violating agent.
Thus, $\mms_1 > 2T - 1$.
Since all values are integral, $\mms_1 \ge 2T$.
Since the total value for agent $1$ is $6T$, any $3$-partition witnessing this must have all three bundles of value exactly $2T$.
The bundle containing $a$ must also contain $b$, because $v_1(a) = 2T - 1$ and all number goods have positive even value.
Since $v_1(a) + v_1(b) = 2T$, this bundle is exactly $\{a,b\}$.
Thus, the remaining two bundles contain only number goods and each has value $2T$, giving a partition of $W$ into two parts of sum $T$.
Therefore, in the goods instance, $A$ is an MMS allocation if and only if the \partition instance is a no-instance.

\paragraph{Hardness for chores.}
We create three agents and $m + 2$ chores: number chores $g_1,\dots,g_m$ (one for each number in the input), and two additional chores $a,b$.
Agents $2$ and $3$ have value zero for every chore.
Agent $1$ has values $v_1(a) = -2T - 1$, $v_1(b) = -1$, and $v_1(g_j) = -2w_j$ for every $j \in [m]$.

Consider the allocation $A$ in which agent $1$ receives both $a$ and $b$, and the remaining chores are allocated arbitrarily to agents $2$ and $3$.
Then, $v_1(A_1) = -2T - 2$, and agents $2$ and $3$ are trivially MMS-satisfied.

If the \partition instance is a yes-instance, then the chores can be split into three bundles of value at least $-2T - 1$ for agent $1$ as follows:
The first bundle contains chores corresponding to a subset of numbers with sum $T$, together with the chore $b$, and thus has value $-2T - 1$.
The second bundle contains the chores corresponding to the remaining subset of numbers with sum $T$, and thus has value $-2T$.
The third bundle is $\{a\}$, which has value $-2T - 1$.
Hence, $\mms_1 \ge -2T - 1 > v_1(A_1)$, and so $A$ is not an MMS allocation.

Conversely, suppose that $A$ is not an MMS allocation.
Since agents $2$ and $3$ have value zero for every chore, agent $1$ must be the violating agent.
Then, $\mms_1 > -2T - 2$.
Since all values are integral, $\mms_1 \ge -2T - 1$.
Thus, there is a $3$-partition in which every bundle has value at least $-2T - 1$ for agent $1$.

Since the total value for agent $1$ is $-6T - 2$, in any such witnessing partition, exactly two bundles have value $-2T - 1$, and one bundle has value $-2T$.
The bundle containing $a$ must be $\{a\}$, because $v_1(a) = -2T - 1$ and all other chores have negative value for agent $1$.

The remaining two bundles contain all number chores and the chore $b$.
Let the bundle containing $b$ contain number chores corresponding to numbers with sum $x$, and let the other bundle contain number chores corresponding to numbers with sum $y$.
Then, $x + y = 2T$.
Since the bundle containing $b$ has value at least $-2T - 1$, we have $-1 - 2x \ge -2T - 1$,
and hence, $x \le T$.
Similarly, since the other bundle has value at least $-2T - 1$, we have $-2y \ge -2T - 1$,
and hence, $y \le T$.
Since $x + y = 2T$, we have $x = y = T$, and so the original \partition instance is a yes-instance.
Therefore, in the chores instance, $A$ is an MMS allocation if and only if the \partition instance is a no-instance.

Thus, in both the goods and the chores setting, the problem is \coNP-hard, and hence \coNP-complete.
\end{proof}

%% file: bibliography.bib
@inproceedings{DBLP:conf/wine/FeigeST21,
  author       = {Uriel Feige and
                  Ariel Sapir and
                  Laliv Tauber},
  editor       = {Michal Feldman and
                  Hu Fu and
                  Inbal Talgam{-}Cohen},
  title        = {A Tight Negative Example for {MMS} Fair Allocations},
  booktitle    = {Web and Internet Economics - 17th International Conference, {WINE}
                  2021},
  series       = {Lecture Notes in Computer Science},
  pages        = {355--372},
  publisher    = {Springer},
  year         = {2021},
  url          = {https://doi.org/10.1007/978-3-030-94676-0_20},
  doi          = {10.1007/978-3-030-94676-0_20},
  bibsource    = {dblp computer science bibliography, https://dblp.org}
}

@book{DBLP:books/fm/GareyJ79,
  author       = {M. R. Garey and
                  David S. Johnson},
  title        = {Computers and Intractability: {A} Guide to the Theory of NP-Completeness},
  publisher    = {W. H. Freeman},
  year         = {1979},
  isbn         = {0-7167-1044-7},
  bibsource    = {dblp computer science bibliography, https://dblp.org}
}

@article{DBLP:journals/jacm/KurokawaPW18,
  author       = {David Kurokawa and
                  Ariel D. Procaccia and
                  Junxing Wang},
  title        = {Fair Enough: Guaranteeing Approximate Maximin Shares},
  journal      = {J. {ACM}},
  volume       = {65},
  number       = {2},
  pages        = {8:1--8:27},
  year         = {2018},
  url          = {https://doi.org/10.1145/3140756},
  doi          = {10.1145/3140756},
  bibsource    = {dblp computer science bibliography, https://dblp.org}
}

@article{DBLP:journals/jcss/PapadimitriouY84,
  author       = {Christos H. Papadimitriou and
                  Mihalis Yannakakis},
  title        = {The Complexity of Facets (and Some Facets of Complexity)},
  journal      = {J. Comput. Syst. Sci.},
  volume       = {28},
  number       = {2},
  pages        = {244--259},
  year         = {1984},
  url          = {https://doi.org/10.1016/0022-0000(84)90068-0},
  doi          = {10.1016/0022-0000(84)90068-0},
  bibsource    = {dblp computer science bibliography, https://dblp.org}
}

@inproceedings{DBLP:conf/fsttcs/SpakowskiV00,
  author       = {Holger Spakowski and
                  J{\"{o}}rg Vogel},
  editor       = {Sanjiv Kapoor and
                  Sanjiva Prasad},
  title        = {Theta\({}_{\mbox{2}}\)\({}^{\mbox{p}}\)-Completeness: {A} Classical
                  Approach for New Results},
  booktitle    = {Foundations of Software Technology and Theoretical Computer Science,
                  20th Conference, {FST} {TCS} 2000},
  series       = {Lecture Notes in Computer Science},
  volume       = {1974},
  pages        = {348--360},
  publisher    = {Springer},
  year         = {2000},
  url          = {https://doi.org/10.1007/3-540-44450-5\_28},
  doi          = {10.1007/3-540-44450-5\_28},
  bibsource    = {dblp computer science bibliography, https://dblp.org}
}

@article{DBLP:journals/jcss/Krentel88,
  author       = {Mark W. Krentel},
  title        = {The Complexity of Optimization Problems},
  journal      = {J. Comput. Syst. Sci.},
  volume       = {36},
  number       = {3},
  pages        = {490--509},
  year         = {1988},
  url          = {https://doi.org/10.1016/0022-0000(88)90039-6},
  doi          = {10.1016/0022-0000(88)90039-6},
  bibsource    = {dblp computer science bibliography, https://dblp.org}
}

@article{DBLP:conf/bqgt/Budish10,
title={{The Combinatorial Assignment Problem: Approximate Competitive Equilibrium from Equal Incomes}},
  author={Budish, Eric},
  journal={Journal of Political Economy},
  volume={119},
  number={6},
  pages={1061--1103},
  year={2011}
}

@article{DBLP:journals/aamas/BouveretL16,
  author       = {Sylvain Bouveret and
                  Michel Lemaître},
  title        = {Characterizing conflicts in fair division of indivisible goods using
                  a scale of criteria},
  journal      = {Auton. Agents Multi Agent Syst.},
  volume       = {30},
  number       = {2},
  pages        = {259--290},
  year         = {2016},
  url          = {https://doi.org/10.1007/s10458-015-9287-3},
  doi          = {10.1007/S10458-015-9287-3},
  bibsource    = {dblp computer science bibliography, https://dblp.org}
}

@article{DBLP:journals/jair/TruszczynskiL20,
  author       = {Zbigniew Lonc and
                  Miroslaw Truszczynski},
  title        = {Maximin Share Allocations on Cycles},
  journal      = {J. Artif. Intell. Res.},
  volume       = {69},
  pages        = {613--655},
  year         = {2020},
  url          = {https://doi.org/10.1613/jair.1.11702},
  doi          = {10.1613/JAIR.1.11702},
  bibsource    = {dblp computer science bibliography, https://dblp.org}
}

@inproceedings{DBLP:conf/soda/HeidariKSS26,
  author       = {Ehsan Heidari and
                  Alireza Kaviani and
                  Masoud Seddighin and
                  AmirMohammad Shahrezaei},
  editor       = {Kasper Green Larsen and
                  Barna Saha},
  title        = {Improved Maximin Share Guarantee for Additive Valuations},
  booktitle    = {Proceedings of the 2026 Annual {ACM-SIAM} Symposium on Discrete Algorithms,
                  {SODA} 2026},
  pages        = {2239--2290},
  publisher    = {{SIAM}},
  year         = {2026},
  url          = {https://doi.org/10.1137/1.9781611978971.81},
  doi          = {10.1137/1.9781611978971.81},
  bibsource    = {dblp computer science bibliography, https://dblp.org}
}

@article{DBLP:journals/corr/abs-2511-13056,
  author       = {Xin Huang and
                  Shengwei Zhou},
  title        = {An {FPTAS} for 7/9-Approximation to Maximin Share Allocations},
  journal      = {CoRR},
  volume       = {abs/2511.13056},
  year         = {2025},
  url          = {https://doi.org/10.48550/arXiv.2511.13056},
  doi          = {10.48550/ARXIV.2511.13056},
  eprinttype   = {arXiv},
  eprint       = {2511.13056},
  bibsource    = {dblp computer science bibliography, https://dblp.org}
}

@inproceedings{DBLP:conf/soda/AkramiG24,
  author       = {Hannaneh Akrami and
                  Jugal Garg},
  editor       = {David P. Woodruff},
  title        = {Breaking the 3/4 Barrier for Approximate Maximin Share},
  booktitle    = {Proceedings of the 2024 {ACM-SIAM} Symposium on Discrete Algorithms,
                  {SODA} 2024},
  pages        = {74--91},
  publisher    = {{SIAM}},
  year         = {2024},
  url          = {https://doi.org/10.1137/1.9781611977912.4},
  doi          = {10.1137/1.9781611977912.4},
  bibsource    = {dblp computer science bibliography, https://dblp.org}
}

@inproceedings{DBLP:conf/sigecom/ProcacciaW14,
  author       = {Ariel D. Procaccia and
                  Junxing Wang},
  editor       = {Moshe Babaioff and
                  Vincent Conitzer and
                  David A. Easley},
  title        = {Fair enough: guaranteeing approximate maximin shares},
  booktitle    = {{ACM} Conference on Economics and Computation, {EC} '14},
  pages        = {675--692},
  publisher    = {{ACM}},
  year         = {2014},
  url          = {https://doi.org/10.1145/2600057.2602835},
  doi          = {10.1145/2600057.2602835},
  bibsource    = {dblp computer science bibliography, https://dblp.org}
}

@inproceedings{DBLP:conf/atal/BouveretL14,
  author       = {Sylvain Bouveret and
                  Michel Lemaître},
  editor       = {Ana L. C. Bazzan and
                  Michael N. Huhns and
                  Alessio Lomuscio and
                  Paul Scerri},
  title        = {Characterizing conflicts in fair division of indivisible goods using
                  a scale of criteria},
  booktitle    = {International conference on Autonomous Agents and Multi-Agent Systems,
                  {AAMAS} '14},
  pages        = {1321--1328},
  publisher    = {{IFAAMAS/ACM}},
  year         = {2014},
  url          = {http://dl.acm.org/citation.cfm?id=2617458},
  bibsource    = {dblp computer science bibliography, https://dblp.org}
}

@inproceedings{DBLP:conf/aaai/AzizRSW17,
  author       = {Haris Aziz and
                  Gerhard Rauchecker and
                  Guido Schryen and
                  Toby Walsh},
  editor       = {Satinder Singh and
                  Shaul Markovitch},
  title        = {Algorithms for Max-Min Share Fair Allocation of Indivisible Chores},
  booktitle    = {Proceedings of the Thirty-First {AAAI} Conference on Artificial Intelligence,
                  February 4-9, 2017},
  pages        = {335--341},
  publisher    = {{AAAI} Press},
  year         = {2017},
  url          = {https://doi.org/10.1609/aaai.v31i1.10582},
  doi          = {10.1609/AAAI.V31I1.10582},
  bibsource    = {dblp computer science bibliography, https://dblp.org}
}

@article{DBLP:journals/aamas/HeinenNNR18,
  author       = {Tobias Heinen and
                  Nhan{-}Tam Nguyen and
                  Trung Thanh Nguyen and
                  J{\"{o}}rg Rothe},
  title        = {Approximation and complexity of the optimization and existence problems
                  for maximin share, proportional share, and minimax share allocation
                  of indivisible goods},
  journal      = {Auton. Agents Multi Agent Syst.},
  volume       = {32},
  number       = {6},
  pages        = {741--778},
  year         = {2018},
  url          = {https://doi.org/10.1007/s10458-018-9393-0},
  doi          = {10.1007/S10458-018-9393-0},
  bibsource    = {dblp computer science bibliography, https://dblp.org}
}

@inproceedings{DBLP:conf/sigecom/SeddighinS25,
  author       = {Masoud Seddighin and
                  Saeed Seddighin},
  editor       = {Itai Ashlagi and
                  Aaron Roth},
  title        = {Beating the Logarithmic Barrier for the Subadditive Maximin Share
                  Problem},
  booktitle    = {Proceedings of the 26th {ACM} Conference on Economics and Computation,
                  {EC} 2025},
  pages        = {764--782},
  publisher    = {{ACM}},
  year         = {2025},
  url          = {https://doi.org/10.1145/3736252.3742621},
  doi          = {10.1145/3736252.3742621},
  bibsource    = {dblp computer science bibliography, https://dblp.org}
}

@article{DBLP:journals/corr/abs-2605-08859,
  author       = {Uriel Feige and
                  Vadim Grinberg},
  title        = {On MMS, {APS} and {XOS}},
  journal      = {CoRR},
  volume       = {abs/2605.08859},
  year         = {2026},
  url          = {https://doi.org/10.48550/arXiv.2605.08859},
  doi          = {10.48550/ARXIV.2605.08859},
  eprinttype   = {arXiv},
  eprint       = {2605.08859},
  bibsource    = {dblp computer science bibliography, https://dblp.org}
}

@inproceedings{DBLP:conf/sigecom/HuangS23,
  author       = {Xin Huang and
                  Erel Segal{-}Halevi},
  editor       = {Kevin Leyton{-}Brown and
                  Jason D. Hartline and
                  Larry Samuelson},
  title        = {A Reduction from Chores Allocation to Job Scheduling},
  booktitle    = {Proceedings of the 24th {ACM} Conference on Economics and Computation,
                  {EC} 2023},
  pages        = {908},
  publisher    = {{ACM}},
  year         = {2023},
  url          = {https://doi.org/10.1145/3580507.3597676},
  doi          = {10.1145/3580507.3597676},
  bibsource    = {dblp computer science bibliography, https://dblp.org}
}

@inproceedings{DBLP:conf/sigecom/HuangL21,
  author       = {Xin Huang and
                  Pinyan Lu},
  editor       = {P{\'{e}}ter Bir{\'{o}} and
                  Shuchi Chawla and
                  Federico Echenique},
  title        = {An Algorithmic Framework for Approximating Maximin Share Allocation
                  of Chores},
  booktitle    = {{EC} '21: The 22nd {ACM} Conference on Economics and Computation, 2021},
  pages        = {630--631},
  publisher    = {{ACM}},
  year         = {2021},
  url          = {https://doi.org/10.1145/3465456.3467555},
  doi          = {10.1145/3465456.3467555},
  bibsource    = {dblp computer science bibliography, https://dblp.org}
}

@article{DBLP:journals/ai/AmanatidisABFLMVW23,
  author       = {Georgios Amanatidis and
                  Haris Aziz and
                  Georgios Birmpas and
                  Aris Filos{-}Ratsikas and
                  Bo Li and
                  Herv{\'{e}} Moulin and
                  Alexandros A. Voudouris and
                  Xiaowei Wu},
  title        = {Fair division of indivisible goods: Recent progress and open questions},
  journal      = {Artif. Intell.},
  volume       = {322},
  pages        = {103965},
  year         = {2023},
  url          = {https://doi.org/10.1016/j.artint.2023.103965},
  doi          = {10.1016/J.ARTINT.2023.103965},
  bibsource    = {dblp computer science bibliography, https://dblp.org}
}

@article{DBLP:journals/jair/LiuLSW24,
  author       = {Shengxin Liu and
                  Xinhang Lu and
                  Mashbat Suzuki and
                  Toby Walsh},
  title        = {Mixed Fair Division: {A} Survey},
  journal      = {J. Artif. Intell. Res.},
  volume       = {80},
  pages        = {1373--1406},
  year         = {2024},
  url          = {https://doi.org/10.1613/jair.1.15800},
  doi          = {10.1613/JAIR.1.15800},
  bibsource    = {dblp computer science bibliography, https://dblp.org}
}

@article{DBLP:journals/teco/BarmanK20,
  author       = {Siddharth Barman and
                  Sanath Kumar Krishnamurthy},
  title        = {Approximation Algorithms for Maximin Fair Division},
  journal      = {{ACM} Trans. Economics and Comput.},
  volume       = {8},
  number       = {1},
  pages        = {5:1--5:28},
  year         = {2020},
  url          = {https://doi.org/10.1145/3381525},
  doi          = {10.1145/3381525},
  bibsource    = {dblp computer science bibliography, https://dblp.org}
}

@inproceedings{DBLP:conf/sigecom/GhodsiHSSY18,
  author       = {Mohammad Ghodsi and
                  Mohammad Taghi Hajiaghayi and
                  Masoud Seddighin and
                  Saeed Seddighin and
                  Hadi Yami},
  editor       = {{\'{E}}va Tardos and
                  Edith Elkind and
                  Rakesh Vohra},
  title        = {Fair Allocation of Indivisible Goods: Improvements and Generalizations},
  booktitle    = {Proceedings of the 2018 {ACM} Conference on Economics and Computation, 2018},
  pages        = {539--556},
  publisher    = {{ACM}},
  year         = {2018},
  url          = {https://doi.org/10.1145/3219166.3219238},
  doi          = {10.1145/3219166.3219238},
  bibsource    = {dblp computer science bibliography, https://dblp.org}
}

@article{DBLP:journals/anor/ChevaleyreEEM08,
  author       = {Yann Chevaleyre and
                  Ulle Endriss and
                  Sylvia Estivie and
                  Nicolas Maudet},
  title        = {Multiagent resource allocation in \emph{k} -additive domains: preference
                  representation and complexity},
  journal      = {Ann. Oper. Res.},
  volume       = {163},
  number       = {1},
  pages        = {49--62},
  year         = {2008},
  url          = {https://doi.org/10.1007/s10479-008-0335-0},
  doi          = {10.1007/S10479-008-0335-0},
  bibsource    = {dblp computer science bibliography, https://dblp.org}
}

@inproceedings{DBLP:conf/sigecom/KulkarniMT21,
  author       = {Rucha Kulkarni and
                  Ruta Mehta and
                  Setareh Taki},
  editor       = {P{\'{e}}ter Bir{\'{o}} and
                  Shuchi Chawla and
                  Federico Echenique},
  title        = {Indivisible Mixed Manna: On the Computability of {MMS+PO} Allocations},
  booktitle    = {{EC} '21: The 22nd {ACM} Conference on Economics and Computation, 2021},
  pages        = {683--684},
  publisher    = {{ACM}},
  year         = {2021},
  url          = {https://doi.org/10.1145/3465456.3467553},
  doi          = {10.1145/3465456.3467553},
  bibsource    = {dblp computer science bibliography, https://dblp.org}
}

@inproceedings{DBLP:conf/aaai/ChristodoulouM26,
  author       = {George Christodoulou and
                  Symeon Mastrakoulis},
  editor       = {Sven Koenig and
                  Chad Jenkins and
                  Matthew E. Taylor},
  title        = {Exact and Approximate Maximin Share Allocations in Multi-Graphs},
  booktitle    = {Fortieth {AAAI} Conference on Artificial Intelligence, Thirty-Eighth
                  Conference on Innovative Applications of Artificial Intelligence,
                  Sixteenth Symposium on Educational Advances in Artificial Intelligence,
                  {AAAI} 2026},
  pages        = {16761--16769},
  publisher    = {{AAAI} Press},
  year         = {2026},
  url          = {https://doi.org/10.1609/aaai.v40i20.38719},
  doi          = {10.1609/AAAI.V40I20.38719},
  bibsource    = {dblp computer science bibliography, https://dblp.org}
}

@inproceedings{DBLP:conf/ijcai/BouveretCEIP17,
  author       = {Sylvain Bouveret and
                  Katar{\'{\i}}na Cechl{\'{a}}rov{\'{a}} and
                  Edith Elkind and
                  Ayumi Igarashi and
                  Dominik Peters},
  editor       = {Carles Sierra},
  title        = {Fair Division of a Graph},
  booktitle    = {Proceedings of the Twenty-Sixth International Joint Conference on
                  Artificial Intelligence, {IJCAI} 2017},
  pages        = {135--141},
  publisher    = {ijcai.org},
  year         = {2017},
  url          = {https://doi.org/10.24963/ijcai.2017/20},
  doi          = {10.24963/IJCAI.2017/20},
  bibsource    = {dblp computer science bibliography, https://dblp.org}
}

@inproceedings{DBLP:conf/aaai/IgarashiP19,
  author       = {Ayumi Igarashi and
                  Dominik Peters},
  title        = {Pareto-Optimal Allocation of Indivisible Goods with Connectivity Constraints},
  booktitle    = {The Thirty-Third {AAAI} Conference on Artificial Intelligence, {AAAI}
                  2019, The Thirty-First Innovative Applications of Artificial Intelligence
                  Conference, {IAAI} 2019, The Ninth {AAAI} Symposium on Educational
                  Advances in Artificial Intelligence, {EAAI} 2019},
  pages        = {2045--2052},
  publisher    = {{AAAI} Press},
  year         = {2019},
  url          = {https://doi.org/10.1609/aaai.v33i01.33012045},
  doi          = {10.1609/AAAI.V33I01.33012045},
  bibsource    = {dblp computer science bibliography, https://dblp.org}
}

@article{DBLP:journals/corr/abs-2303-12444,
  author       = {Gilad Ben Uziahu and
                  Uriel Feige},
  title        = {On Fair Allocation of Indivisible Goods to Submodular Agents},
  journal      = {CoRR},
  volume       = {abs/2303.12444},
  year         = {2023},
  url          = {https://doi.org/10.48550/arXiv.2303.12444},
  doi          = {10.48550/ARXIV.2303.12444},
  eprinttype   = {arXiv},
  eprint       = {2303.12444},
  bibsource    = {dblp computer science bibliography, https://dblp.org}
}

@article{DBLP:journals/tcs/GoldbergHH25,
  author       = {Paul W. Goldberg and
                  Kasper H{\o}gh and
                  Alexandros Hollender},
  title        = {The frontier of intractability for {EFX} with two agents},
  journal      = {Theor. Comput. Sci.},
  volume       = {1052},
  pages        = {115367},
  year         = {2025},
  url          = {https://doi.org/10.1016/j.tcs.2025.115367},
  doi          = {10.1016/J.TCS.2025.115367},
  bibsource    = {dblp computer science bibliography, https://dblp.org}
}
